\newtheorem{Thm}{Theorem}[section]
\newtheorem{Lem}[Thm]{Lemma}
\newtheorem{Def}[Thm]{Definition}
\newtheorem{Not}[Thm]{Notation}
\newtheorem{Pro}[Thm]{Proposition}
\newtheorem{Cor}[Thm]{Corollary}
\newtheorem{Rem}[Thm]{Remark}
\numberwithin{equation}{section}
\definecolor{darkgreen}{rgb}{0.1,0.5,0.1}
\newcommand{\beqa}{\begin{eqnarray}}
\newcommand{\eeqa}[1]{\label{#1}\end{eqnarray}}
\newcommand{\beq}{\begin{equation}}
\newcommand{\eeq}[1]{\label{#1}\end{equation}}
\newcommand{\rmd}{{\mathrm{ d}}}
\newcommand{\rme}{{\mathrm{ e}}}
\newcommand{\rmi}{{\mathrm{ i}}}
\newcommand{\R}{{\mathbb{ R}}}
\newcommand{\bbR}{{\mathbb{R}}}
\newcommand{\bbC}{{\mathbb{C}}}
\newcommand{\bbA}{{\mathbb{A}}}
\newcommand{\N}{{\mathbb{N}}}
\newcommand{\bk}{\mathbf{k}}
\newcommand{\ba}{\mathbf{a}}
\newcommand{\bb}{\mathbf{b}}
\newcommand{\be}{\mathbf{e}}
\newcommand{\bh}{\mathbf{h}}
\newcommand{\bp}{\mathbf{p}}
\newcommand{\bm}{\mathbf{m}}
\newcommand{\bx}{\mathbf{x}}
\newcommand{\bU}{\mathbf{U}}
\newcommand{\bu}{\mathbf{u}}
\newcommand{\bv}{\mathbf{v}}
\newcommand{\bC}{\mathbf{C}}
\newcommand{\bE}{\mathbf{E}}
\newcommand{\bV}{\mathbf{V}}
\newcommand{\bH}{\mathbf{H}}
\newcommand{\bP}{\mathbf{P}}
\newcommand{\bM}{\mathbf{M}}
\newcommand{\bL}{\mathbf{L}}
\newcommand{\bG}{\mathbf{G}}
\newcommand{\bbE}{\mathbb{E}}
\newcommand{\bbF}{\mathbb{F}}
\newcommand{\bbH}{\mathbb{H}}
\newcommand{\bbP}{\mathbb{P}}
\newcommand{\bbM}{\mathbb{M}}
\newcommand{\bbU}{\mathbb{U}}
\newcommand{\bbX}{\mathbb{X}}
\newcommand{\calP}{{\mathcal{P}}}
\newcommand{\calZ}{{\mathcal{Z}}}
\newcommand{\ds}{\displaystyle}
\begin{document}
\vspace{-1in}

\title{Long time behaviour of the solution of Maxwell's equations in dissipative generalized Lorentz materials (II) A  modal approach}

\author{Maxence Cassier$^{a}$, Patrick Joly$^{b}$ and Luis Alejandro Rosas Mart\'inez$^{b}$  \\ \ \\
{
\footnotesize $^a$ Aix Marseille Univ, CNRS,  Centrale Med, Institut Fresnel, Marseille, France }\\ 
{\footnotesize $^b$ POEMS$^1$, CNRS, Inria, ENSTA Paris, Institut Polytechnique de Paris, 91120 Palaiseau, France}\\ 
{\footnotesize (maxence.cassier@fresnel.fr, patrick.joly@inria.fr, alejandro.rosas@ensta-paris.fr)}}

	\maketitle

	\begin{abstract}
This work concerns the analysis of  electromagnetic dispersive media modelled by generalized Lorentz models. More precisely, this paper is the second  of two articles dedicated to the long time behaviour of solutions of Maxwell's equations in dissipative Lorentz media, via  the long time decay rate of the electromagnetic energy for the corresponding Cauchy problem. In opposition to the frequency dependent Lyapunov functions  approach  used in \cite{cas-jol-ros-22}, we develop a method based on the spectral analysis of the underlying non selfadjoint operator of the model. Although more involved, this approach is closer to physics, as it uses the dispersion relation of the model, and has the advantage to provide more precise and more optimal results, leading to distinguish the notion of weak and strong dissipation. 
\end{abstract}

{\noindent \bf Keywords:} Maxwell's equations, passive electromagnetic media,  weakly dissipative generalized Lorentz models, long time electromagnetic energy decay rate, spectral theory.
\section{Introduction}
\footnotetext[1]{POEMS (Propagation d'Ondes: Etude Math\'ematique et Simulation) is a mixed research team (UMR 7231) between CNRS (Centre National de la Recherche Scientifique), ENSTA Paris (Ecole Nationale Sup\'erieure de Techniques Avanc\'ees) and INRIA (Institut National de Recherche en Informatique et en Automatique).}
\subsection{Motivation} \label{sec_motivation}
This paper is devoted to the study of  the long time behaviour, of solutions of the Cauchy problem for Maxwell's equations in a large class dissipative and dispersive materials: the generalized Lorentz media. We refer for instance to \cite{Ziol-01,Pad-07,Li-2013,cas-jol-kach-17,cas-mil-17,cas-jol-haz-22} for the interest of considering such media with respect to applications, in particular in the context of metamaterials. We are more precisely interested in the decay of the electromagnetic energy. The evolution problem we consider is the following 
\begin{align*}
	\mbox{Find} \quad \quad \left\{
	\begin{array}{ll}
		\textbf{E}(\bx,t):\R^3\times\R^+\longrightarrow\R^3 &\textbf{H}(\bx,t):\R^3\times\R^+\longrightarrow\R^3\\ [6pt]
		\mathbf{P}_j(\bx,t):\R^3\times\R^+\longrightarrow\R^3, \ 1 \leq j \leq N_e, \quad  &\mathbf{M}_\ell(\bx,t):\R^3\times\R^+\longrightarrow\R^3,  \ 1 \leq \ell \leq N_m,
	\end{array}
	\right.
\end{align*}
such that (for all $1 \leq j \leq N_e, 1 \leq \ell \leq N_m$)
\begin{subequations}\label{planteamiento Lorentz} 
	\begin{empheq}[left=\empheqlbrace]{align}
		&\varepsilon_0\,\partial_t\,\textbf{E}-\nabla\times\textbf{H}+\varepsilon_0\,\sum_{j=1}^{N_e}\,\Omega_{e,j}^2\,\partial_t\,\mathbf{P}_j=0, &(\bx,t)\in\R^3\times\R^{+,*},\label{E Lorentz}\\
		&\mu_0\,\partial_t\,\textbf{H}+\nabla\times\textbf{E}+\mu_0\,\sum_{\ell=1}^{N_m}\,\Omega_{m,\ell}^2\,\partial_t\,\mathbf{M}_\ell=0, & (\bx,t)\in\R^3\times\R^{+,*},\label{H Lorentz}\\[6pt]
		&\partial_t^2\,\mathbf{P}_j+\alpha_{e,j}\,\partial_t\,\mathbf{P}_j+\omega_{e,j}^2\,\mathbf{P}_j=\mathbf{E}, & (\bx,t)\in\R^3\times\R^{+,*},\label{P Lorentz}\\[12pt]
		&\partial_t^2\,\mathbf{M}_\ell+\alpha_{m,\ell}\,\partial_t\,\mathbf{M}_\ell+\omega_{m,\ell}^2\,\mathbf{M}_\ell=\mathbf{H}, & (\bx,t)\in\R^3\times\R^{+,*},\label{M Lorentz}
	\end{empheq}
\end{subequations}
completed by the divergence free initial conditions
\begin{equation} \label{CI}
	\left\{\begin{array}{l}
		\mathbf{E}(\cdot, 0) =  \mathbf{E}_0, \  \mathbf{H}(\cdot, 0) =  \mathbf{H}_0, \ \bP(\cdot, 0)=\bP_0,  \ \partial_t \mathbf{P}(\cdot, 0) =\dot{\bP}_0,   \ \bM(\cdot, 0)=\bM_0,  \ \partial_t \mathbf{M}(\cdot, 0) =\dot{\bM}_0  \\ [8pt] %\mbox{ with } \quad \nabla \cdot \mathbf{E}_0=\nabla \cdot \mathbf{H}_0=0, \quad 
		\mbox{ with }  \mathbf{E}_0, \,  \bH_0 \in \bL^2(\mathbb{R}^3),   \quad  \bP_0,  \ \dot{\bP}_0 \in \bL^2(\mathbb{R}^3)^{N_e}  \mbox{ and }  \  \bM_0,   \ \dot{\bM}_0 \in \bL^2(\mathbb{R}^3)^{N_m},  
\end{array} \right.
\end{equation}
where  $\bL^2(\mathbb{R}^3)=L^2(\mathbb{R}^3)^3$ and, for simplicity, we  have used the notations
\begin{equation*} \label{notPM} {\mathbf{P}}=(\mathbf{P}_j)\quad \mbox{and} \quad {\mathbf{M}}=(\mathbf{M}_\ell) \quad \mbox{ with } \quad  (\mathbf{P}_j):=(\mathbf{P}_j)_{j=1}^{N_e} \ \mbox{ and } \ (\mathbf{M}_\ell):=(\mathbf{M}_\ell)_{\ell=1}^{N_m}.
\end{equation*}
In  \eqref{planteamiento Lorentz}, the fields $\mathbf{E}$ and $\mathbf{H}$  are respectively the electric and magnetic fields while the $\mathbf{P}_j$ and the $\mathbf{M}_\ell$ are auxiliary unknowns (polarization and magnetization fields respectively). The coefficients $(\omega_{e,j},\omega_{m,\ell})$ and $( \Omega_{e,j},\Omega_{m,\ell}) $ are supposed to satisfy 
\begin{equation} \label{hypomegabis}
	\begin{array}{l}
		0 < \Omega_{e,1} \leq \cdots \leq \Omega_{e,N_e}, \quad 0 < \Omega_{m,1} \leq \cdots \leq \Omega_{m,N_m},  \\ [12pt]
		\omega_{e,j} > 0, \quad 1 \leq j \leq N_e, \quad  \omega_{m,\ell} > 0,\quad 1 \leq \ell \leq N_m.
	\end{array}
\end{equation}
These coefficients are responsible for dispersion effects. Finally, the coefficients $(\alpha_{e,\ell}, \alpha_{m,\ell})$ must be non negative
\begin{equation} \label{hypalpha}
	\alpha_{e,j} \geq 0, \quad 1 \leq j \leq N_e,  \quad \alpha_{m,\ell} \geq 0\quad 1 \leq \ell \leq N_m.
\end{equation}
The presence of positive coefficients $(\alpha_{e,\ell}, \alpha_{m,\ell})$ is responsible for dissipation effects. The reader will notice that one can assume without any loss of generality that the couples $(\alpha_{e,j}, \omega_{e,j})$ (resp. $(\alpha_{m,\ell}, \omega_{m,\ell})$) are all distinct the ones from the others. 
\begin{Rem}
Usually in physical applications, one assumes also that the intial conditions  \eqref{CI} of the evolution system  \eqref{planteamiento Lorentz} satisfy also 
\begin{equation} \label{CI-Phys}
 \nabla \cdot \mathbf{E}_0=\nabla \cdot \mathbf{H}_0=0, \quad  \mathbf{P}_0 =\dot{\mathbf{P}}_0= 0 \ \mbox{ and } \ \mathbf{M}_0 =\dot{\mathbf{M}}_0= 0.
\end{equation}
In other words, one considers  that the initial fields are  divergence free   and  that the medium has no polarization, no magnetization and induced currents at the initial time.
\end{Rem}

\noindent Let us introduce the two rational functions 
\begin{equation}\label{eq.permmitivity-permeabiity}
	\varepsilon(\omega) = \varepsilon_0 \, \Big( 1 - \sum_{j=1}^{N_e} \frac{\Omega_{e,j}^2}{q_{e,j}(\omega)} \Big) \quad \mbox{ and } \quad \mu(\omega) = \mu_0 \, \Big( 1 - \sum_{\ell=1}^{N_m} \frac{\Omega_{m,\ell}^2}{q_{m,\ell}(\omega)}\Big),
\end{equation}
where the second order polynomials  $q_{e,j}$ and $q_{m,\ell}$ are defined by
\begin{equation}\label{eq.polynom}
	q_{e,j}(\omega)=\omega^2+ \rmi\, \alpha_{e,j} \, \omega- \omega_{e, j}^2 \  \mbox{ and } \ q_{m,\ell}(\omega)=\omega^2+ \rmi \, \alpha_{m, \ell} \, \omega- \omega_{m, \ell}^2 .
\end{equation}
As functions of the frequency $\omega$, $\varepsilon$ and $\mu$  are respectively the permittivity and permeability of the generalized Lorentz media  associated to the Maxwell's system \eqref{planteamiento Lorentz} (see Section 1.1 of \cite{cas-jol-ros-22} for more details about their physical interpretation). 
We denote   by $\mathcal{P}_e$ (resp. $\mathcal{P}_m$)  the set of poles of the function $\varepsilon$ (resp. $\mu$). We also introduce   $\mathcal{Z}_e$   (resp. $\mathcal{Z}_m$) the set of zeros of  $ \varepsilon$ (resp. $ \mu$).
Generalized Lorentz media are causal and passive  electromagnetic materials, as defined in \cite{cas-jol-ros-22} for instance. The counterpart of causality and  passivity  in the frequency domain is that  $\omega \mapsto \omega  \,\varepsilon(\omega)$ and $\omega \mapsto \omega  \,\mu(\omega)$  are Herglotz  functions of the frequency.  Herglotz functions are analytic functions $f$  in the upper half-plane  $\bbC^+:=\{\omega \in \mathbb{C}\mid \operatorname{Im}\omega>0\}$ whose imaginary part $\operatorname{Im}f$ is non-negative on $\bbC^+$. For more details about passive electromagnetic media and Herglotz functions, we refer  to \cite{cas-jol-kach-17}, \cite{cas-mil-17} and \cite{cas-jol-ros-22}. 
\\ [12pt]
\noindent We make the two following assumptions which are linked to the irreducibility of the dispersion relation associated to the system \eqref{planteamiento Lorentz} that involves the product $\varepsilon(\omega) \mu(\omega)$ (see Section \ref{sec_dispersion-1}):
\begin{itemize}
	\item $(\mathrm{H}_1)$:  the electric polynomials  $q_{e,j}$ (see \eqref{eq.polynom})  with  distinct indices $j$  do not have common roots. The same holds for the magnetic  polynomials $q_{m,\ell}$  with  distinct indices $\ell$. 
	\item  $(\mathrm{H}_2)$:  The zeros of the permittivity  $\varepsilon$ are not poles of the permittivity $\mu$ and vice versa. Namely, one assumes that  $\calP_e \cap \calZ_m= \varnothing$ and $\calP_m\cap \calZ_e=\varnothing $. 
\end{itemize}
\begin{Rem}\label{Rem-roots}
	It is easy to see that, as $\alpha_{e,j} \geq 0$,   the roots of $q_{e,j}$  have a non positive imaginary part. More  precisely, as the discriminant of $q_{e,j}$ is $\delta_{e,j} = 4 \,\omega_{e,j}^2 - \alpha_{e,j}^2$ \\ [-12pt]
	\begin{itemize}
		\item If $\alpha_{e,j} <  2 \, \omega_{e,j}$, the two roots of $q_{e,j}$ are $-\rmi \, \alpha_{e,j}/2 \pm \, \delta_{e,j}^{\frac{1}{2}} /2\notin \rmi \, \R^-$,
		\item  If $\alpha_{e,j}\geq 2 \, \omega_{e,j} $, the two roots of $q_{e,j}$ are $-\rmi \, \big(\alpha_{e,j} \pm \, |\delta_{e,j}|^{\frac{1}{2}}\big) /2 \in \rmi \, \R^-$.
	\end{itemize}
Thus, when $\alpha_{e,j} <  2 \, \omega_{e,j}$, the two roots of $q_{e,j}$, $\omega_*$ and $-\overline{\omega_*}\notin \rmi \R^-$, are distinct. Moreover, as  $\alpha_{e,j}=-2\operatorname{Im}(\omega_*)$ and $\omega_{e,j}= |\omega_*|$, $q_{e,j}$  can not share a common root with an other electric polynomial $q_{e,j'}$ since by assumption $(\alpha_{e,j},\omega_{e,j})\neq (\alpha_{e,j'},\omega_{e,j'})$ for $j\neq j'$. Therefore, one only needs to assume $(\mathrm{H}_1)$ for electric polynomials $q_{e,j'}$ for which $\alpha_{e,j'}\geq 2 \, \omega_{e,j'} $.
The same properties hold for the magnetic polynomials $q_{m,\ell}$ with obvious changes.  
\end{Rem}
The electromagnetic energy is defined as 
		\begin{equation} \label{def_energyEM0}
	\mathcal{E}(t)\equiv \mathcal{E}(\mathbf{E}, \mathbf{H}, t):=\frac{1}{2}\Big(\varepsilon_0\, \int_{\R^3} |{\mathbf{E}(\bx,t)}|^2  \rmd \bx+\mu_0\, \int_{\R^3}|{\mathbf{H}(\bx,t)}|^2 \rmd \bx\Big).
\end{equation}
The present paper is the sequel of \cite{cas-jol-ros-22} in which we addressed the question of long time decay of $\mathcal{E}(t)$ with a (frequency dependent) Lyapunov function approach. 
We were able to prove some {\it polynomial stability} results under the so-called {\it strong dissipation assumption}: 
\begin{Def} [Strong Dissipation for Lorentz models] \label{StrongDissipation}
	\begin{equation} \label{SD} \forall \; 1 \leq j \leq N_e, \quad \alpha_{e,j} >0, \quad  
		\forall \; 1 \leq \ell \leq N_m, \quad \alpha_{m,\ell} >0. \end{equation} 
\end{Def}
\noindent By polynomial stability, we mean the energy decays more rapidly than a negative power of $t$:
\begin{equation} \label{Polynomial-Stability}
	\mathcal{E}(t) \leq C \; t^{-s} \;  \mbox{ for some } s >0.
\end{equation}
In this paper, we address the same question as in \cite{cas-jol-ros-22}  via a {\it modal approach}. Admittedly, this approach is technically more involved
than the one in \cite{cas-jol-ros-22}, but it presents many advantages:
\begin{itemize}
	\item it is by many aspects more {\it physical} (in particular it refers explicitly to the dispersion relation of the medium)
	thus more easily understandable by physicists,
	\item it leads to {\it optimal} results in the sense that upper bounds of the type \eqref{Polynomial-Stability} can be completed by corresponding lower bounds,
	\item it allows us to obtain polynomial stability results under a condition on the dissipation coefficients which is strictly weaker than \eqref{SD},  namely the 
	{\it weak dissipation} assumption:	
\begin{Def} [Weak Dissipation for Lorentz models] \label{WeakDissipation}
		\begin{equation} \label{WD} 
			\sum_{j=1}^{N_e} \alpha_{e,j}  + \sum_{\ell=1}^{N_m} \alpha_{m,\ell} > 0.
		\end{equation}
	\end{Def}
The reader will notice that the {\it strong dissipation} assumption imposes that all the coefficients $(\alpha_{e,j},\alpha_{m,\ell})$ are positive while 
the {\it weak dissipation} assumption means only that at least one of them is  positive.
\end{itemize}
In \cite{cas-jol-ros-22}, we made in Section 1.2 a rather extensive analysis of the literature addressing problems similar to the one we consider here, with application in viscoelasticity \cite{Daf-70,FabrizioMorro87,Dan-2015} or electromagnetism \cite{FabrizioMorro97,Riv-04}.  We discussed in particular various existing techniques for obtaining {\it polynomial} or {\it exponential stability} results (Sections 1.2 and 1.3 ). The closest works  to ours are the paper by Figotin-Schechter \cite{Fig-05} and  the papers  \cite{Nicaise2012,Nicaise2020,Nicaise2021}  by S. Nicaise and her collaborator C. Pignotti.  In \cite{Fig-05}, the authors  prove that the  electromagnetic energy of the solutions of a large class of linear dissipative models tends to zero but without addressing the question of the decay rate.  In  \cite{Nicaise2012,Nicaise2020,Nicaise2021}, the question of the decay rate of the solution is studied in detail in bounded domains for perfectly conducting materials. However, when specified for Maxwell's equations in generalized Lorentz media, the results of \cite{Fig-05} and \cite{Nicaise2012,Nicaise2020,Nicaise2021} only applied when the strong dissipation assumption \eqref{SD} is satisfied. Furthermore, under the weak dissipation assumption \eqref{WD}, we enlighten for some cases (when \eqref{SD} does not hold)  new  decay rates which are still  polynomial but with a smaller  exponent  in $1/t$  than the ones observed in the literature  \cite{Nicaise2012,Nicaise2020,Nicaise2021,cas-jol-ros-22}. \\ [12pt] 
The outline of the rest of this paper is as follows. First in Section  \ref{sec-schr-evolution-problem}, we rewrite the Maxwell's time-dependent system   as an evolution problem which involves a  maximal dissipative operator  $\bbA$.
Then in Section \ref{sec-main-results}, we state the main results of the article. In Section \ref{sec-reduction}, we show how to reduce the analysis of the evolution problem to the study of an infinity of Ordinary Differential Equations (ODE) systems in time parametrized by one real parameter. This is done by exploiting the fact that we work with constant coefficients which
allows us to use the Fourier transform in space and the isotropy of the model.  The equations are then parametrized  by $|{\bf k}|$, the modulus of the wave vector $\bf k$, namely the dual variable of the space variable. Each of these ODE systems involves a finite dimensional dissipative operator $\bbA_{|\bk|,\perp}$ and the rest of the analysis is devoted to estimates, for large $t$ of the exponentials
\begin{equation} \label{exponentials}
\rme^{-\rmi \bbA_{\bk,\perp }\, t} \mbox{ for } |\bk| \in \R^+,
\end{equation} 
which obviously depends on the eigenvalues of $\bbA_{|\bk|,\perp}$ and more precisely on their imaginary parts.
Section  \ref{sec-modal-analysis} is devoted to general spectral properties of $\bbA_{|\bk|,\perp}$ at fixed $|\bk|$: we  characterise its eigenvalues  as the   solutions of a rational equation parametrised by $|\bk|$, namely the dispersion relation of the medium,  whose analysis allows us 
 to provide a  diagonalizability criterion  for $\bbA_{|\bk|,\perp}$. Then, we need to distinguish is Sections \ref{sec.HF},   \ref{sec.LF} and  \ref{sec_mid-frequencies}  respectively the large, small and intermediate values of $|\bk|$. In Section \ref{sec.HF}, devoted to large $|\bk|$, we provide  asymptotic expansions (when $|\bk| \rightarrow + \infty $) of the solutions of dispersion relation, which shows the diagonalisability of  $\bbA_{|\bk|,\perp}$ for $|\bk|$ large enough.  After diagonalization of the exponential  \eqref{exponentials},  we obtain  optimal decay estimates in time of the high spatial frequency components (i. e. large values of $|\bk|$) of the solution of the evolution problem, based on  the behaviour  of the  imaginary part of the eigenvalues of $\bbA_{|\bk|,\perp}$ and uniform bounds (in $|\bk|$)  of the associated (non-orthogonal)  spectral projectors.  A similar strategy is adopted in Section \ref{sec.LF} to derive  optimal estimates for the low frequency components (i.e. small values of $|\bk|$) of the solution.  Section \ref{sec_mid-frequencies} deals with  estimates for intermediate values of $|\bk|$. The method  is quite different and somewhat less technical and precise than the method of Sections \ref{sec.HF} and \ref{sec.LF} but it is sufficient to show (in Section \ref{proof-general-cases}) that, the corresponding part of the solution decays exponentially fast in time. Sections  \ref{sec-final-step} and  \ref{sec-optim} are devoted to the proofs of our main results, after regrouping  the estimates of the three previous sections and deriving estimates in space variable norms via Plancherel's theorem: this is where the polynomial decay due to large and small values of $|\bk|$ is put in evidence. Moreover, we show that the corresponding  polynomial decay  estimates are optimal. In Section \ref{sec-extension-results},  we detail some extensions of these results. Finally, the appendix Section \ref{sec-appendix} contains technical results used for the spectral analysis of $\bbA_{|\bk|,\perp}$. In particular, in Section \ref{sec-asymptotic},  Lemma \ref{Lem-implicte-function} provides a useful  implicit function-like  result for some (scalar) nonlinear equations in the complex plane.   This lemma is used for the analysis of the  dispersion relation.  

\subsection{Maxwell's equations in dissipative generalized Lorentz media}\label{sec-schr-evolution-problem}
The (Cauchy) problem (\ref{planteamiento Lorentz},\ref{CI})  can be rewritten as an evolution problem
\begin{equation}\label{eq.schro}
\frac{\rmd \, \bU}{\rmd\, t} + \rmi\, \bbA \, \bU=0 \ \mbox{ with }  \ \bU(0)=\bU_0 \, \in \mathcal{H},
\end{equation}
where $\bbA$ is an unbounded operator on the Hilbert-space
\begin{equation}\label{eqdefHilbertspace}
\mathcal{H}:=\bL^2(\mathbb{R}^3)^N  =\bL^2(\mathbb{R}^3)\times \bL^2(\mathbb{R}^3) \times \bL^2(\mathbb{R}^3)^{N_e}
	\times \bL^2(\mathbb{R}^3)^{N_e} \times \bL^2(\mathbb{R}^3)^{N_m}  \times \bL^2(\mathbb{R}^3)^{N_m}
\end{equation}
where  $N=2+2\, (N_e+ N_m)$. $\mathcal{H}$ is
endowed by the following inner product: \\ [12pt]
for any $\bU=(\bE, \bH, \bP, \dot{\bP},\bM , \dot{\bM})\in \mathcal{H}$ and $\bU^{\prime}=(\bE^{\prime}, \bH^{\prime}, \bP^{\prime},\bM^{\prime}, \dot{\bP}^{\prime}, \dot{\bM}^{\prime})\in \mathcal{H}$, where $(\bP,\bM)$ is defined as in  \eqref{notPM}, (and the same for $(\bP^{\prime},\bM^{\prime})$, $(\dot{\bP}, \dot{\bM})$ and $(\dot{\bP}^{\prime}, \dot{\bM}^{\prime})$) 
\begin{equation} \label{defPS}
\left| \; \begin{array}{lllll}
(\bU, \bU')_{\mathcal{H}} &   =   &    \ds \frac{\varepsilon_0}{2}\ (\bE,\bE^{\prime})_{\bL^2}\,  + \, \frac{\mu_0}{2}\ (\bH,\bH^{\prime})_{\bL^2} \\ [12pt]  &  +  &   \displaystyle\frac{ \varepsilon_0}{2}  \sum_{j=1}^{N_e} \omega_{e,j}^2\, \Omega_{e,j}^2\, ( \bP_{j},\bP_{j}^{\prime})_{\bL^2} +
\frac{\varepsilon_0}{2}  \sum_{j=1}^{N_e} \Omega_{e,j}^2\, (\dot{\bP}_{j},\dot{\bP}_{j}^{\prime})_{\bL^2}  \\ [18pt]
&  +  &  \ds \frac{\mu_0}{2} \sum_{\ell=1}^{N_m} \omega_{m,\ell}^2\, \Omega_{m,\ell}^2\, (\bM_{\ell},\bM_{\ell}^{\prime})_{\bL^2}+ \frac{\mu_0}{2} \sum_{\ell=1}^{N_m}  \Omega_{m,\ell}^2\, ( \dot{\bM}_{\ell},\dot{\bM}_{\ell}^{\prime})_{\bL^2}\ ,
\end{array} \right.
\end{equation}\
where $\ds (\bu,\bv)_{\bL^2}=\int_{\mathbb{R}^3} \bu \cdot \overline{\bv} \,  \rmd \bx$ denotes the standard inner product of $\bL^2(\mathbb{R}^3)$ with $\cdot$  the operation \\[8pt]   defined on $\bbC^3 \times \bbC^3$ by  $\ba \cdot \bb :=\ba_1 \bb_1+\ba_2\bb_2+ \ba_3 \bb_3$.
\\ [12pt]
The operator  $\bbA: D(\bbA)\subset \mathcal{H}\to \mathcal{H}$  is defined by 
\begin{equation}\label{eq.defHamil}
\forall \; \bU =(\bE, \bH, \bP, \dot{\bP},\bM , \dot{\bM})\in D(\bbA), \quad 	\bbA \bU := -\rmi \; 	\begin{pmatrix}- \varepsilon_0^{-1} \nabla \times \bH+\sum \; \Omega_{e,j}^2\dot {\bP}_j \\[10pt] 
 \mu_0^{-1} \nabla \times \bE+ \sum \; \Omega_{m,\ell}^2 \dot{\bM}_{\ell}  \\[8pt]- \dot{\bP} \\[6pt]
  \alpha_{e,j} \dot{\bP}_j+\omega_{e,j}^2\,\bP_j-\bE\\[6pt] 
 - \dot{\bM} \\[6pt]  
 \alpha_{m,\ell}\, \dot{\bM}_\ell+\omega_{m,\ell}^2\,\bM_\ell-\bH
	\end{pmatrix} 
\end{equation}
with $D(\bbA)$ (dense in $\mathcal{H}$) being given by 
\begin{equation}\label{eq.defDomain}
D(\bbA):=H(\mbox{rot};\R^3) \times H(\mbox{rot};\R^3) \times \bL^2(\mathbb{R}^3)^{N_e} \times \bL^2(\mathbb{R}^3)^{N_m}  \times \bL^2(\mathbb{R}^3)^{N_m} , \end{equation}
where $H(\mbox{rot};\R^3):=\{\bu \in \bL^2(\mathbb{R}^3)\mid \nabla \times \bu \in  \bL^2(\mathbb{R}^3)  \}$. 
\begin{Rem}\label{Rem-not} Let us  point out  that, in the sequel, as in the formula \eqref{eq.defHamil}, we shall often   omit the summation bounds  $1$ and $N_e$ (resp. $1$ and $N_m$)  for sum on $j$ (resp. $\ell$) indices.\end{Rem}
\begin{Rem}\label{Rem-IC-Phys}
\noindent We notice that one can rewrite the evolution problem (\ref{planteamiento Lorentz}) with initial conditions satisfying (\ref{CI},\ref{CI-Phys}) as \eqref{eq.schro}  with  
\begin{equation}\label{eq.CI2}\bU_0=(\bE_0, \bH_0, 0, 0, 0,0)\in  \mathcal{H} \quad \mbox{ satisfying} \quad \nabla \cdot \bE_0=\nabla \cdot \bH_0=0.
\end{equation}
\end{Rem}

 \noindent We prove in Lemma  A.2 of \cite{cas-jol-ros-22} that  the operator $-\rmi \,\bbA$ is maximal dissipative.  Thus, it generates a contraction semi-group $\{\mathcal{S}(t)\}_{t\geq 0}$ of class $\mathcal{C}^0$ (see e.g. Theorem 4.3 page 14 of  \cite{Paz-83}). Hence, the Cauchy problem  \eqref{eq.schro} is well-posed and stable.
More precisely
(see propositions A.3 of \cite{cas-jol-ros-22}), it admits a unique mild solution $$\bU=\mathcal{S}(t)\bU_0 \in C^{0}(\R^+, \mathcal{H}),$$ which is a strong solution, 
$$\bU \in C^{1}(\R^+, \mathcal{H})\cap C^{0}(\R^+, \mathrm{D}(\bbA)),$$ 
as soon as $\bU_0\in D(\bbA)$, with  $D(\bbA)$ is endowed with the usual graph norm $\|\cdot\|_{D(\bbA)}$ defined by 
\begin{equation}\label{eq.normdomain}
\|\bU\|_{D(\bbA)}^2=\|\bU\|^2_{\mathcal{H}}+\|\bbA \bU\|^2_{\mathcal{H}}, \   \forall \; \bU \in D(\bbA).
\end{equation} 
\noindent
We introduce now the  well-known Hodge  decomposition in electromagnetism  which consists in writing any  $\bL^2(\mathbb{R}^3)$-fields as the  orthogonal sum of a gradient (i.e. a curl free field)  and a divergence free field (see  e.g. \cite{Dau-1990}).
Namely, one has
\begin{equation}\label{eq.hodgedecomp}
\bL^2(\mathbb{R}^3)= \nabla W^1(\mathbb{R}^3)  \overset{\perp}{\oplus} \bL^2_{\operatorname{div}, \, 0},
 \end{equation}
where $\nabla W^1(\mathbb{R}^3)  =\{ \nabla u \mid u \in  W^1(\mathbb{R}^3) \}$ with
$$W^1(\mathbb{R}^3)=\big\{u \in L^2_{loc}(\mathbb{R}^3) \mid (1+|\bx|^2)^{-1/2} u \in L^2(\mathbb{R}^3) \mbox{ and } \nabla u \in  \bL^2(\mathbb{R}^3) \big \}$$ the standard Beppo-Levy space on $\mathbb{R}^3$, and $\bL^2_{\operatorname{div}, \, 0} =\{ \bu \in  \bL^2(\mathbb{R}^3) \mid  \nabla \cdot \bu =0\}$ the space of $3D$ free divergence fields in $\mathbb{R}^3$. \\ [12pt]
 Following the decomposition \eqref{eq.hodgedecomp} for each copy $\bL^2(\mathbb{R}^3)$ of   $\mathcal{H}=\bL^2(\mathbb{R}^3)^N$,  one deduces immediately that the Hilbert $\mathcal{H}$ admits the following orthogonal decomposition:
\begin{equation}\label{eq.decompositionHilb}
 \mathcal{H} =\mathcal{H}_{\parallel} \overset{\perp}{\oplus}  \mathcal{H}_{\perp}
\mbox{ where }  \mathcal{H}_{\parallel} =\nabla W^1(\mathbb{R}^3)^N \mbox{ and }   \mathcal{H}_{\perp}=(\bL^2_{\operatorname{div},\, 0})^N.
\end{equation}
Let  us now introduce $D(\bbA_{\parallel})=D(\bbA)\cap \mathcal{H}_{\parallel} $ and $D(\bbA_{\perp})=D(\bbA)\cap \mathcal{H}_{\perp}$. 
As an immediate consequence of \eqref{eq.defHamil},  \eqref{eq.defDomain}, \eqref{eq.normdomain} and  \eqref{eq.decompositionHilb}, one has 
\begin{equation}\label{eq.stabAperp}
\bbA\big(D(\bbA_{\parallel})\big)\subset  \mathcal{H}_{\parallel},  \  \bbA\big(D(\bbA_{\perp}\big)\subset  \mathcal{H}_{\perp} \mbox{ and } D(\bbA)=D(\bbA_{\parallel})  \overset{\perp}{\oplus} D(\bbA_{\perp}).
\end{equation}
Thus one can reduce the operator $\mathbb{A}$ to a sum of two closed and densely defined operators $ \bbA_{\parallel}: D(\bbA_{\parallel}) \subset  \mathcal{H}_{\parallel} \mapsto   \mathcal{H}_{\parallel}$ and   $ \bbA_{\perp}: D(\bbA_{\perp}) \subset  \mathcal{H}_{\perp} \mapsto   \mathcal{H}_{\perp}$ in the sense that
\begin{equation}\label{eq.reducA}
\bbA= \bbA_{\parallel} \oplus  \bbA_{\perp} \   \mbox{ with } \,  \bbA_{\parallel} \bU= \bbA \bU, \, \forall \; \bU\in D(\bbA_{\parallel}) \mbox{ and } \, \bbA_{\perp} \bU= \bbA \bU, \,  \forall \; \bU\in D(\bbA_{\perp}) .
\end{equation}
Moreover, since any element of $\mathcal{H}_{\parallel}$ is made of curl free vector fields, it is readily seen on \eqref{eq.defHamil}  and \eqref{eq.defDomain} that $D(\bbA_{\parallel})= \mathcal{H}_{\parallel}$ and that $\bbA_{\parallel}$ is bounded. \\ [12pt]
\noindent From \eqref{eq.reducA}, we deduce  using  the Hille-Yosida approximation of $\mathbb{A}$ (see Corollary 3.5 page 11 of  \cite{Paz-83}) that  a similar reduction holds for the semi-group $\{\mathcal{S}(t)\}_{t\geq 0}$. Namely, $ \bbA_{\parallel}$ and $ \bbA_{\perp}$ generate two contraction semi-groups $\{\mathcal{S}_{\parallel}(t)\}_{t\geq 0}$  and $\{\mathcal{S}_{\perp}(t)\}_{t\geq 0}$ on the Hilbert spaces $\mathcal{H}_{\parallel}$ and $\mathcal{H}_{\perp}$ such that for all $t\geq 0$:
\begin{equation}\label{eq.reducsemigroup}
S(t)=S_{\parallel}(t) \oplus  \mathcal{S}_{\perp}(t), \ S_{\parallel}(t)  \bU_0=S(t) \bU_0 , \, \forall \; \bU_0 \in  \mathcal{H}_{\parallel}, \  S_{\perp}(t)  \bU_0=S(t) \bU_0 , \, \forall \; \bU_0 \in  \mathcal{H}_{\perp}  .
\end{equation}
In more physical terms, it means that if the components of the initial conditions $\bU_0\in \mathcal{H}$ are curl free (i.e. if $\bU_0\in \mathcal{H}_{\parallel}$),  then the solution $\bU(t)$ of \eqref{eq.schro} remains curl free (i.e.  $\bU(t)\in \mathcal{H}_{\parallel}$ for all $t\geq 0$). In this case, the corresponding dynamics is trivial 
since it corresponds to a damped harmonic oscillator in finite dimension (these are  non propagative solutions of the Maxwell's equations \eqref{planteamiento Lorentz}).  \\ [12pt]
Oppositely, the interesting case is when $\bU_0\in   \mathcal{H}_{\perp} $, in other words when the  initial conditions are divergence free. Then, the solution $\bU(t)$ of \eqref{eq.schro} is propagative and  by \eqref{eq.reducsemigroup}, it satisfies  \begin{equation*}\bU(t)=S(t)\bU_0= S_{\perp}(t)  \bU_0 \in   \mathcal{H}_{\perp} , \quad \forall \; t\geq 0.
\end{equation*}
In other words, the solution $\bU(t)$ remains  divergence free for all positive time.
\subsection{Statement of the main results}\label{sec-main-results}
\subsubsection{The long-time energy decay rate}
To expose our main result (Theorem \ref{thm_Lorentz}), we first need to define  an adapted functional framework:
\begin{enumerate}
\item  For each  $p\geq0$, we introduce the Banach space
\begin{equation} \label{delLcalp}
	{\cal  L}_p (\R^3) =  \big \{ v \in \mathcal{S}'(\R^3) \; /  \; |\bk|^{-p} (1+|\bk|^{p}) \; \widehat v \in L^\infty(\R^3) \big \}
\end{equation} 
 (where $\bk \mapsto \widehat v(\bk)$ denotes here the Fourier transform of a scalar tempered distributions  $v\in\mathcal{S}'(\R^3)$)  equipped  with the norm
\begin{equation} \label{normcalp}
		\|v\|_{{\cal  L}_p} := \big\| \,  |\bk|^{-p} (1+|\bk|^{p}) \; \widehat v  \, \big\|_{L^\infty(\R^3)} .
\end{equation}
We point out that if $v\in{\cal  L}_p (\R^3)$, then one has  $|\widehat{v}(\bk)|\leq  \|v\|_{{\cal  L}_p}  |\bk|^p$ for a.e. $\bk\in \R^3$. The reader will also note that  for $q > p$, ${\cal  L}_q(\R^3) \subset {\cal  L}_p(\R^3)$.\\[12pt]
 Identifying a function in ${\cal  L}_p (\R^3)$ is not  trivial since it requires to look at the same time at the function and its Fourier transform. See however Remark \ref{remcalLp} for some more concrete examples of spaces of functions included in ${\cal  L}_p (\R^3)$. \\ [6pt]
We shall introduce $\boldsymbol{\cal  L}_p (\R^3)={\cal  L}_p (\R^3)^3$ the vector valued  version of ${\cal  L}_p (\R^3)$  endowed with the norm	$\| \cdot \|_{\boldsymbol{\cal  L}_p}$     defined by \\
$$
\forall \; \bu=(u_1,u_2,u_3)\in \boldsymbol{\cal  L}_p (\R^3), \quad \| \bu \|_{\boldsymbol{\cal  L}_p}^2=\|u_1\|_{{\cal  L}_p}^2 + \|u_2\|_{{\cal  L}_p}^2 +\|u_3\|_{{\cal  L}_p} ^2.
$$
\\
\noindent Then, we define on  $\boldsymbol{\cal  L}_p (\R^3)^N$   the  norm  $\| \cdot \|_{\boldsymbol{\cal  L}_p^N}$ by:  $ \; \forall \; \bU=(\bE, \bH, \bP, \dot{\bP},  \bM, \dot{\bM}) \in  \boldsymbol{\cal  L}_p (\R^3)^N$,
\begin{equation*} \label{Lnorm} 
\|\mathbf{U} \big\|_{\boldsymbol{\cal  L}_p^N}^2=\ds \|\bE\|_{\boldsymbol{\cal  L}_p}^2+\|\bH\|_{\boldsymbol{\cal  L}_p}^2 +  \ds \sum_{j} \big(\|\bP_{j}\|_{\boldsymbol{\cal  L}_p}^2+ \|\dot{\bP}_{j}\|^2_{\boldsymbol{\cal  L}_p}\big)
 +  \ds \sum_{\ell}\big(\|\bM_{ \ell} \|^2_{\boldsymbol{\cal  L}_p}+ \|\dot{\bM}_{\ell}\|^2_{\boldsymbol{\cal  L}_p}\big ).
\end{equation*}
\item For  $m \geq 0$, we introduce the space $\bH^m(\R^3)=H^m(\R^3)^3$ where $H^m(\R^3)$ is the standard Sobolev space defined on scalar functions, endowed with the norm $\|\cdot\|_{\bH^m}$ given by
$$ \forall \; \bu=(u_1,u_2,u_3)\in \bH^m(\mathbb{R}^3), \quad \|\bu\|_{\bH^m(\R^3)}^2=\|u_1\|_{H^m(\R^3)}^2+ \|u_2\|_{H^m(\R^3)}^2+\|u_3\|_{H^m(\R^3)}^2,$$ with $\|\cdot\|_{H^m(\R^3)}$  the usual Sobolev norm on $H^m(\R^3)$ (see e.g.  \cite{Dau-1990} page 500).
Then, one defines  the norm $\|\cdot\|_{\bH^m(\bbR^3)^N}$ for any $\bU=(\bE, \bH, \bP, \dot{\bP},  \bM, \dot{\bM}) \in \bH^m(\bbR^3)^N$  by
\begin{equation} \label{Hmnorm} \left| \begin{array}{lll}
\|\mathbf{U} \big\|_{\bH^m(\R^3)^N}^2&=&\ds \|\bE\|^2_{\bH^m(\bbR^3)}+\|\bH\|^2_{\bH^m(\bbR^3)} \\ [12pt] & + & \ds \sum_{j} \big(\|\bP_{j}\|^2_{\bH^m(\bbR^3)}+ \|\dot{\bP}_{j}\|^2_{\bH^m(\bbR^3)}\big)\\ [15pt]
& + & \ds \sum_{\ell}\big(\|\bM_{ \ell} \|^2_{\bH^m(\bbR^3)}+ \|\dot{\bM}_{\ell}\|^2_{\bH^m(\bbR^3)} \big ).
\end{array} \right. 
\end{equation}
\end{enumerate}
\begin{Rem} \label{remcalLp} [~On the space ${\cal  L}_p (\R^3)$~]
	A first trivial observation is that for any $p\geq 0$, ${\cal  L}_p (\R^3)$ contains all integrable functions whose Fourier transform is supported outside a ball centered at the origin.\\ [12pt]
	Moreover, for $p \in \N$,  let us introduce the weighted $L^1$ space (already used in \cite{cas-jol-ros-22}):
		\begin{equation} \label{spacesL1weighted} 
		\displaystyle L^1_{p}(\R^3) := \big\{ v \in L^1(\R^3) \; / \; (1+|\bx|)^p \, v \in L^1(\R^3) \big\} \quad ( L^1_{0}(\R^3)= L^1(\R^3)),
	\end{equation}
endowed with the norm
\begin{equation} \label{normL1p} 
		\|v\big\|_{L^1_{p}(\R^3)} :=  \big \|(1+|\bx|)^p \,v  \big\|_{L^1(\R^3)}.
\end{equation}
In particular, functions of $L^1_{p}(\R^3)$ have existing moments up to order $p$. Let us introduce, for $p \geq 1$,   the closed subspace of $L^1_{p}(\R^3)$ of functions whose moments up to order $p-1$ vanish:
	\begin{equation} \label{spacesL1weighted2} 
		\displaystyle L^1_{p,0}(\R^3) := \big\{ u \in L^1_{p}(\R^3) \; / \; \forall \; |\alpha| \leq p-1, \; \int \bx^\alpha \, u \, d{\bx} = 0  \big\}, \
	\end{equation}
where $\alpha = (\alpha_1, \alpha_2, \alpha_3)\in \mathbb{N}^3$ denotes a multi-index with ``length" $|\alpha| = \alpha_1+ \alpha_2+ \alpha_3$ and  where $\bx^\alpha:=\bx_1^{\alpha_1}\, \bx_2^{\alpha_1}\, \bx_3^{\alpha_1}$ for all $\bx=(x_1,x_2,x_3)\in \R^3$. Moreover, by convention, $ L^1_{0,0}(\R^3) := L^1(\R^3)$.\\[6pt]
\noindent Note that since any function $v$ in $L^1_{p}(\R^3)$ belongs in particular to $L^1(\R^3)$, its Fourier transform  $\widehat v(\bk)$ is well defined and belongs to $C^0_0(\R^3)$, namely the closed subspace of $L^\infty(\R^3)$ made of continuous functions that tend to $0$ at infinity. We claim that 
	\begin{equation} \label{inclusion} 
\forall \; p \in \N, 	\quad \displaystyle L^1_{p,0}(\R^3) \subset {\cal  L}_p .
\end{equation}
Indeed, from well-known properties of the Fourier transform, for all $v \in L^1_{p,0}(\R^3)$, one has 
	\begin{equation} \label{L1pFourier}
  \partial^\alpha \widehat v(\bk) \in C^0_0(\R^3) \mbox{ for }  |\alpha| \leq p \quad \mbox{ and, if $p>0$, } \quad  \partial^\alpha \widehat v(0) = 0, \; \; \mbox{for }  |\alpha| \leq p-1.
\end{equation} 
Moreover, from  the Taylor expansion of $\widehat v(\bk)$ at $0$, one  gets (by the Taylor-Lagrange formula):
\begin{equation}\label{eq.taylorlagrange}
\exists\; C(p)>0 \ \mid  \  |\widehat v(\bk)| \leq C(p) \, \|v\big\|_{L^1_{p}(\R^3)} \; |\bk|^p, \quad \, \forall \; \bk\in \R^3 \ \mbox{ and } \  \forall \; v \in L^1_{p,0}(\R^3) .
\end{equation}
Thus, using the fact that $ \|\widehat v\|_{L^{\infty}(\R^3)}\leq \|v\|_{L^1(\R^3)}\leq \|v\|_{L^1_p(\R^3)} $, one deduces with \eqref{eq.taylorlagrange} that
\begin{equation}\label{eq.estimcontinjec}
      |\widehat v(\bk)| \lesssim    \min(1,  |\bk|^{p})  \, \|v\big\|_{L^1_{p}(\R^3)}, \quad \, \forall\; \bk\in \R^3  \ \mbox{ and } \  \forall \; v \in L^1_{p,0}(\R^3) .
\end{equation}
Then, it follows from \eqref{normcalp}  and \eqref{eq.estimcontinjec} that the embedding \eqref{inclusion}  is continuous, more precisely:
\begin{equation} \label{eqnorm}
 \exists\; C(p)>0,\ \forall \; v \in L^1_{p,0}(\R^3),\quad 	 \| v \|_{{\cal  L}_p} \leq C(p) \,  \|v\big\|_{L^1_{p}(\R^3)} .
	\end{equation}
It also follows from \eqref{inclusion} that, in particular, ${\cal  L}_p$ contains all compactly supported integrable functions (in the $\bx$ variable) whose moments up to order $p-1$ vanish. 
\end{Rem}	
Next, we introduce the notion  of critical configuration  associated to the Maxwell's system \eqref{planteamiento Lorentz}, which will influence the long-time decay rate of the electromagnetic energy. It will be enlightened in our proof via the analysis  of the dispersion curves   associated to this system in the high frequency regime (see Section \ref{sec.HFdispcurv}).
\begin{Def} \label{Critical_cases}
	We say that   the Maxwell's system \eqref{planteamiento Lorentz} is in a critical  configuration if  the weak dissipation condition \eqref{WD}  holds and if one of the following conditions is satisfied:
	\begin{enumerate}
		\item   $\forall \,  \ell \in \{ 1, \ldots, N_m \} , \alpha_{m,\ell}=0$ and  $\exists \,j \in \{ 1, \ldots, N_e\} \mid \alpha_{e,j}=0\,   \mbox{  and } \omega_{e,j} \notin \big\{ \omega_{m,\ell} \big\}.$
		\item $\forall \,  j \in \{ 1, \ldots, N_e \} , \alpha_{e,j}=0$ and  $\exists \, \ell \in \{ 1, \ldots, N_m\} \mid \alpha_{m,\ell}=0\,   \mbox{  and } \omega_{m,\ell}  \notin \big\{ \omega_{e,j} \big\}.$
	\end{enumerate}
When none of these condition holds,  the system \eqref{planteamiento Lorentz} is in a  non-critical configuration.
\end{Def}
\noindent We point out that, as the reader will easily check, under the weak dissipation assumption \eqref{WD}, the critical configuration can occur only if $N_e\geq 2$ or $N_m \geq 2$.
\\ [12pt]
 We are now able to state our main result  under the weak dissipation assumption concerning the decay of the total $\|\bU(t)\|^2_{\mathcal{H}}$ associated to the   evolution problem \eqref{eq.schro} with divergence-free initial conditions $\bU_0\in\mathcal{H}_{\perp}$ which contain  in particular  the initial conditions given by (\ref{CI},\ref{CI-Phys}). Thus, this result applies  to the Maxwell's system  \eqref{planteamiento Lorentz} with initial conditions  \eqref{CI} in $\mathcal{H}_{\perp}$. \\ [12pt]
We shall use below, as well as in the rest of this paper,  the following 
\begin{Not}\noindent To compare two non-negative functions $f(y)$ and $g(y)$,  where $y \in Y$ and $y = \bx, \bk,t$ or any combination of the variables, we introduce the notation:
\begin{equation*} \label{notation2}
	f \lesssim g  \ \Longleftrightarrow \ \exists\;  C > 0  \quad  \mid \  f(y) \leq C \;  g(y), \quad  \forall \; y \in Y,
\end{equation*} 
where the constant  $C$ depends only on $N_e$, $N_m$ and  the coefficients  of  the system \eqref{planteamiento Lorentz}. 
\end{Not} 
\begin{Thm} \label{thm_Lorentz}
Let assume  that  the Maxwell's system  \eqref{planteamiento Lorentz} satisfies  the weak dissipation assumption \eqref{WD} and the irreducibility assumptions $(\mathrm{H}_1)$ and $(\mathrm{H}_2)$. Then, for any initial condition $\bU_0\in  \mathcal{H}_{\perp}$, one has 
	\begin{equation} \label{convergenceL}
	\lim_{t \rightarrow + \infty} \|\bU(t)\|^2_{\mathcal{H}}= 0.
	\end{equation}
	Moreover if for some real numbers $m >0$ and $p \geq 0$,  $\bU_0 \in  \bH^m(\R^3)^N \cap   \boldsymbol{\cal L}_{p}^N  \cap  \mathcal{H}_{\perp}$
then the decay rate of $\|\bU(t)\|^2$ is polynomial. However,  the associated  exponents in $1/t$ depend  on   the configuration (see definition \ref{Critical_cases}) of the  corresponding Maxwell's system  \eqref{planteamiento Lorentz}. More precisely, 
\begin{enumerate}
\item If the Maxwell system is in a non-critical configuration, then one has 
	\begin{equation} \label{polynomial_decayncr}
	\|\bU(t)\|^2_{\mathcal{H}} \leq \frac{C_{\mathrm{HF}}^m(\bU_0)}{t^m} + \frac{C_{\mathrm{LF}}^p(\bU_0)}{t^{p+\frac{3}{2}}},  \quad  \forall \; t>0,
	\end{equation}
	\item If the Maxwell system is in a critical configuration, then one has
	\begin{equation} \label{polynomial_decaycr}
	\|\bU(t)\|^2_{\mathcal{H}} \leq \frac{C_{\mathrm{HF}}^m(\bU_0)}{t^\frac{m}{2}} + \frac{C_{\mathrm{LF}}^p(\bU_0)}{t^{p+\frac{3}{2}}} ,  \quad  \forall \; t>0,
	\end{equation}
\end{enumerate}
	where in \eqref{polynomial_decayncr} and \eqref{polynomial_decaycr} the constants satisfy
	\begin{equation}\label{eq.constantHF-LF}
	C_{\mathrm{HF}}^m(\bU_0)\lesssim \|\bU_0 \|_{ \bH^m(\R^3)^N}^2  \quad \mbox{ and } \quad  C_{\mathrm{LF}}^p(\bU_0)\lesssim \|\bU_0 \|_{ \boldsymbol{\cal L}_{p}^N}^2.
	\end{equation}
\end{Thm}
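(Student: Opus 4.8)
\emph{Overall strategy.} The plan is to pass to the space Fourier variable and reduce the decay of $\|\bU(t)\|_{\mathcal H}$ to uniform‑in‑$|\bk|$ estimates of the finite‑dimensional propagators $\rme^{-\rmi\,t\,\bbA_{|\bk|,\perp}}$ of Section~\ref{sec-reduction}. Since the coefficients of \eqref{planteamiento Lorentz} are constant and the medium isotropic, for $\bU_0\in\mathcal H_{\perp}$ one has $\widehat\bU(\bk,t)=\rme^{-\rmi\,t\,\bbA_{|\bk|,\perp}}\widehat{\bU_0}(\bk)$ on each transverse polarization, and as the weights in \eqref{defPS} are $\bk$-independent, Plancherel's theorem gives
\[
\|\bU(t)\|_{\mathcal H}^2 \;\lesssim\; \int_{\R^3} \big\|\rme^{-\rmi\,t\,\bbA_{|\bk|,\perp}}\big\|^2 \,\big|\widehat{\bU_0}(\bk)\big|^2 \,\rmd\bk .
\]
I split $\R^3$ into the high frequencies $|\bk|\ge R_0$, the low frequencies $|\bk|\le\rho_0$, and the compact band $\rho_0\le|\bk|\le R_0$; on each zone the size of $\|\rme^{-\rmi\,t\,\bbA_{|\bk|,\perp}}\|$ is governed by $\max_i\operatorname{Im}\omega_i(|\bk|)$, the $\omega_i(|\bk|)$ being the eigenvalues of $\bbA_{|\bk|,\perp}$, i.e.\ the branches of the dispersion relation $|\bk|^2=\omega^2\varepsilon(\omega)\mu(\omega)$ (the roots of an associated degree‑$N$ polynomial, none of which is $0$ or a pole of $\varepsilon$ or $\mu$, thanks to $(\mathrm{H}_1)$ and $(\mathrm{H}_2)$). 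On the intermediate band the argument is soft: under \eqref{WD}, for $|\bk|>0$ the operator $\bbA_{|\bk|,\perp}$ has no real eigenvalue — a real $\omega\neq0$ would force $\operatorname{Im}(\varepsilon(\omega)\mu(\omega))=0$ while $\omega^2\varepsilon(\omega)\mu(\omega)=|\bk|^2>0$, which, $\omega\varepsilon$ and $\omega\mu$ being Herglotz, would need all $\alpha_{e,j}$ and $\alpha_{m,\ell}$ to vanish, while $\omega=0$ is not a root — so, $\bbA_{|\bk|,\perp}$ being dissipative, $\max_i\operatorname{Im}\omega_i(|\bk|)\le-\delta_0<0$ uniformly on $\{\rho_0\le|\bk|\le R_0\}$; controlling the possible Jordan blocks by a $t$-polynomial absorbed into the exponential (Sections~\ref{sec_mid-frequencies} and~\ref{proof-general-cases}) yields $\|\rme^{-\rmi\,t\,\bbA_{|\bk|,\perp}}\|\lesssim\rme^{-\delta_0 t/2}$, so this zone contributes $\lesssim\rme^{-\delta_0 t/2}\|\bU_0\|_{\mathcal H}^2$, dominated by both the $t^{-m}$ (or $t^{-m/2}$) and the $t^{-p-3/2}$ term.

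\emph{High frequencies (the crux).} This is where Definition~\ref{Critical_cases} enters. Following Section~\ref{sec.HF} I expand the $N$ branches as $|\bk|\to+\infty$ via the implicit‑function‑type Lemma~\ref{Lem-implicte-function}: two propagative branches $\omega\sim\pm c|\bk|$ (with $c=(\varepsilon_0\mu_0)^{-1/2}$) whose imaginary part behaves like $-\tfrac{1}{2c^2|\bk|^2}\big(\sum_j\alpha_{e,j}\Omega_{e,j}^2+\sum_\ell\alpha_{m,\ell}\Omega_{m,\ell}^2\big)$, and $2(N_e+N_m)$ quasi‑static branches converging to the poles of $\varepsilon$ and $\mu$; by $(\mathrm{H}_1)$ and $(\mathrm{H}_2)$ these limits are simple and pairwise distinct, so $\bbA_{|\bk|,\perp}$ is diagonalizable for $|\bk|\ge R_0$ with uniformly bounded spectral projectors. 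In a \emph{non‑critical} configuration every branch then satisfies $\operatorname{Im}\omega_i(|\bk|)\lesssim-|\bk|^{-2}$ (the slowest being the propagative branch, or a branch approaching a real pole of $\varepsilon$, resp.\ of $\mu$, which still sees a nonzero $\operatorname{Im}\mu$, resp.\ $\operatorname{Im}\varepsilon$); in a \emph{critical} configuration, say all $\alpha_{m,\ell}=0$ and $\alpha_{e,j_0}=0$ with $\omega_{e,j_0}\notin\{\omega_{m,\ell}\}$, the branch approaching the (real) pole $\omega_{e,j_0}$ of $\varepsilon$ is real up to order $|\bk|^{-2}$ and receives its first imaginary contribution only from the other, damped, electric resonators, at order $|\bk|^{-4}$: $\operatorname{Im}\omega_{j_0}(|\bk|)\sim-c\,|\bk|^{-4}$, $c>0$. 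Diagonalizing $\rme^{-\rmi\,t\,\bbA_{|\bk|,\perp}}$ then gives $\|\rme^{-\rmi\,t\,\bbA_{|\bk|,\perp}}\|\lesssim\rme^{-c|\bk|^{-\alpha}t}$ for $|\bk|\ge R_0$, with $\alpha=2$ (non‑critical) or $\alpha=4$ (critical), and, using $(1+|\bk|^2)^{-m}\le|\bk|^{-2m}$ and $\sup_{r\ge R_0}\rme^{-c\,r^{-\alpha}t}\,r^{-2m}\lesssim t^{-2m/\alpha}$,
\[
\int_{|\bk|\ge R_0} \big\|\rme^{-\rmi\,t\,\bbA_{|\bk|,\perp}}\big\|^2\big|\widehat{\bU_0}(\bk)\big|^2\,\rmd\bk \;\lesssim\; t^{-2m/\alpha}\,\|\bU_0\|_{\bH^m(\R^3)^N}^2 ,
\]
i.e.\ the $t^{-m}$ term of \eqref{polynomial_decayncr} and the $t^{-m/2}$ term of \eqref{polynomial_decaycr}, with constants as in \eqref{eq.constantHF-LF}. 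I expect this step to be the main obstacle: one must identify, in each branch approaching a pole or going to infinity, the exact order in $1/|\bk|$ at which a nonzero imaginary part first appears, distinguish the critical from the non‑critical case accordingly, and keep the non‑orthogonal spectral projectors uniformly bounded — which is precisely what the irreducibility hypotheses $(\mathrm{H}_1)$ and $(\mathrm{H}_2)$ are for.

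\emph{Low frequencies and assembly.} Near $\bk=0$ (Section~\ref{sec.LF}) two branches satisfy $\omega_\pm(|\bk|)\sim\pm|\bk|/\sqrt{\varepsilon(0)\mu(0)}$ with $\operatorname{Im}\omega_\pm(|\bk|)\sim-\gamma_0|\bk|^2$, where $\gamma_0>0$ under \eqref{WD} since $\gamma_0$ is proportional to $\varepsilon(0)\sum_\ell\Omega_{m,\ell}^2\alpha_{m,\ell}/\omega_{m,\ell}^4+\mu(0)\sum_j\Omega_{e,j}^2\alpha_{e,j}/\omega_{e,j}^4$; the remaining branches tend to the zeros of $\varepsilon$ and $\mu$ (none equal to $0$), which either lie in $\{\operatorname{Im}z<0\}$ — so $\operatorname{Im}\omega_i(|\bk|)\to$ a negative constant — or, when real, are left at speed $-\gamma|\bk|^2$ with $\gamma>0$ (again by \eqref{WD}, since $\omega\varepsilon$ and $\omega\mu$ are nondecreasing on the real intervals on which they are real). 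Hence $\|\rme^{-\rmi\,t\,\bbA_{|\bk|,\perp}}\|\lesssim\rme^{-c|\bk|^2t}$ for $|\bk|\le\rho_0$, and since $\bU_0\in\boldsymbol{\cal L}_p^N$ gives $|\widehat{\bU_0}(\bk)|\lesssim\|\bU_0\|_{\boldsymbol{\cal L}_p^N}|\bk|^p$,
\[
\int_{|\bk|\le\rho_0}\rme^{-2c|\bk|^2t}\big|\widehat{\bU_0}(\bk)\big|^2\,\rmd\bk \;\lesssim\; \|\bU_0\|_{\boldsymbol{\cal L}_p^N}^2\int_0^{\rho_0}\rme^{-2cr^2t}\,r^{2p+2}\,\rmd r \;\lesssim\; t^{-p-3/2}\,\|\bU_0\|_{\boldsymbol{\cal L}_p^N}^2 ,
\]
the $3/2$ coming from the three‑dimensional volume element; this is the $t^{-p-3/2}$ term. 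Adding the three zones proves \eqref{polynomial_decayncr} and \eqref{polynomial_decaycr} with \eqref{eq.constantHF-LF}.

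\emph{Bare convergence.} Finally, \eqref{convergenceL} for an arbitrary $\bU_0\in\mathcal H_{\perp}$ follows by density: smooth compactly supported divergence‑free fields — which automatically have vanishing mean, hence belong to $\bH^m(\R^3)^N\cap\boldsymbol{\cal L}_1^N$ for every $m\ge0$ — are dense in $\mathcal H_{\perp}$, they obey the polynomial decay just established, and $\{\mathcal S_{\perp}(t)\}_{t\ge0}$ being a contraction semigroup, $\limsup_{t\to+\infty}\|\bU(t)\|_{\mathcal H}\le\|\bU_0-\bU_0^\varepsilon\|_{\mathcal H}$ can be made arbitrarily small by choosing the approximation $\bU_0^\varepsilon$ close enough to $\bU_0$.
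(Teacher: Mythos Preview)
Your approach is essentially the paper's: Fourier reduction and Plancherel, a three–zone split $\{|\bk|\le\rho_0\}\cup\{\rho_0\le|\bk|\le R_0\}\cup\{|\bk|\ge R_0\}$, asymptotic analysis of the dispersion branches on the outer zones, and a compactness argument on the middle band. Your identification of the exponent mechanism (imaginary part $\sim -|\bk|^{-2}$ versus $\sim -|\bk|^{-4}$ at high frequency, $\sim -|\bk|^{2}$ at low frequency) matches Sections~\ref{sec.HF}--\ref{sec.LF}, and your final integrations are the same as in Section~\ref{sec-final-step}.

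Two small points are worth flagging. First, your sentence ``by $(\mathrm H_1)$ and $(\mathrm H_2)$ these limits are simple and pairwise distinct'' is not quite right: $(\mathrm H_1)$--$(\mathrm H_2)$ do \emph{not} forbid a real pole shared by $\varepsilon$ and $\mu$ (the set $\mathcal P_d=\mathcal P_e\cap\mathcal P_m\cap\R$ in the paper), so two branches may converge to the \emph{same} limit; what is true, and what the paper proves, is that the \emph{eigenvalues} $\omega_{p,r}(|\bk|)$ remain distinct for large $|\bk|$ (Proposition~\ref{prop.dispersioncurves}) and still satisfy $\operatorname{Im}\omega_{p,r}(|\bk|)\lesssim -|\bk|^{-2}$ (Lemma~\ref{LemEigend}), so your conclusion survives but the justification needs this extra case. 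The uniform boundedness of the spectral projectors when two branches coalesce at a common pole is exactly where the Riesz--Dunford contour estimates of Lemma~\ref{LemProd} are needed; you rightly flag this as the main obstacle. Second, for \eqref{convergenceL} you argue by density and contraction, whereas the paper simply applies dominated convergence to the pointwise bound $|\bbU(\bk,t)|\lesssim\rme^{-C|\bk|^{\pm 2}t}|\bbU_0(\bk)|$ already established on each zone; both arguments are valid, the paper's being slightly shorter since the needed $\bk$-pointwise estimate is already in hand for every $\bU_0\in\mathcal H_\perp$.
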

\begin{Rem}\label{Rem-IC-Phys2} \noindent When  the initial data $\bU_0$ satisfy also \eqref{CI-Phys}, the estimates \eqref{polynomial_decayncr} and\eqref{polynomial_decaycr} give  estimates involving the electromagnetic energy only since $\mathcal{E}(t) \leq  \|\bU(t) \|_{\cal H}^2$ while $\mathcal{E}(0) =  \|\bU_0 \|_{ \cal H}^2$.
\end{Rem}
\begin{Rem}\label{Rem.HF-LF}
The energy $\|\bU(t)\|^2_{\mathcal{H}}$  is dominated in \eqref{polynomial_decayncr} and \eqref{polynomial_decaycr}  by a sum of two terms:
\begin{itemize}
\item
The first one involves the Sobolev regularity of the initial condition and is  linked, as it will appear in the proof, to  the decay of the high spatial frequency components of the energy. 
\item The second is related to the condition $\bU_0\in  \boldsymbol{\cal L}_{p}^N$ and thus to the decay of the low spatial frequency  components of the energy.
\end{itemize} The presence of the second term is directly  related to the fact that the domain of propagation  is unbounded and  does not appear in the control of the energy  in bounded domains  for perfectly conducting materials  (see, \cite{Nicaise2012,Nicaise2020,Nicaise2021}). 
\end{Rem}
\begin{Rem} For generalized Lorentz media,  this Theorem   generalizes under the weak  dissipation assumption \eqref{WD} the results  obtained under  the more restrictive strong dissipation assumption \eqref{SD} in \cite{Nicaise2012,Nicaise2020,Nicaise2021} for bounded domains  and in  \cite{cas-jol-ros-22} for the whole space $\mathbb{R}^3$.\\ [8pt] As \eqref{SD} excludes the  critical configuration, the weaker decay   \eqref{polynomial_decaycr} could not be observed in \cite{cas-jol-ros-22,Nicaise2012,Nicaise2020,Nicaise2021}.
\end{Rem}

\subsubsection{Optimality of the bounds}
In Theorem \ref{thm_Lorentz}, we prove  a  long-time polynomial decay rate for $\|\bU(t)\|$ (cf. \eqref{polynomial_decayncr}   and \eqref{polynomial_decaycr})  for solutions  $\bU(t)$  of the Cauchy problem  \eqref{eq.schro}  which satisfies divergence free initial conditions 
$
\bU_0 \in \mathcal{H}_{\perp}.
$
In this section, we analyse the question of the optimality of this  decay rate.\\

\noindent The estimates \eqref{polynomial_decayncr} and \eqref{polynomial_decaycr} involved a sum of two terms which are related to the decay of the   low and high spatial frequency components of the solution (see Remark \ref{Rem.HF-LF}).  Thus,  to speak about optimal polynomial exponents, we need to decouple these two terms by separating the high and low frequency behaviors of the solution $\bU$.  Therefore,   we introduce  the 3D spatial Fourier transform defined  by
$$
\mathbb{G}(\bk)=\mathcal{F}(\bG)(\bk)=\frac{1}{(2\pi)^{\frac{3}{2}} } \int_{\R^3}\bG(\bx)\,  \mathrm{e}^{-\rmi \bk \cdot \bx} \,\rmd \bx  \quad \forall \; \bG \in \bL^1(\R^3) \cap  \bL^2(\R^3),
$$
which extends to a unitary transformation from $\bL^2(\bbR^3_{\bx})$ to $\bL^2(\bbR^3_{\bk}).$ Applying $\mathcal{F}$ to each  copy $\bL^2(\mathbb{R}^3)$ of   $\mathcal{H}=\bL^2(\mathbb{R}^3)^N$ defines a unitary transform, still denoted $\mathcal{F}$, in $\mathcal{H}$.\\[6pt]
 \noindent We now  introduce  for $(m,p)\in  \mathbb{R}^{+,*}\times  \mathbb{R}^+ $ the spaces 
\begin{eqnarray}
\mathcal{H}^{m}_{\perp,\operatorname{HF}}&=&\{\bV\in  \mathcal{H}_{\perp} \cap \bH^m(\bbR^3)^N \mid \exists \; k_+>0 \mid   \operatorname{supp}(\mathcal{F}(\bV))\subset \bbR^3 \setminus B(0, k_+)  \}, \label{eq.HperpHF}\\[4pt]
\mathcal{H}^{p}_{\perp,\operatorname{LF}}&=&\{\bV\in  \mathcal{H}_{\perp}  \cap \boldsymbol{\cal  L}_p (\R^3)^N   \mid \exists \; k_->0 \mid   \operatorname{supp}(\mathcal{F}(\bV))\subset \overline{  B(0, k_-)}  \} . \label{eq.HperpLF}
 \end{eqnarray}
In the spirit of \cite{Nicaise2012}, we define  for any $m>0$  the high frequency optimal  polynomial exponent  of  solutions $\bU(t)$  of the Cauchy problem  \eqref{eq.schro}  with initial conditions $\bU_0\in \mathcal{H}^{m}_{\perp,\operatorname{HF}} $ as follows:
\begin{equation}\label{eq.gammaHF}
\gamma_m^{\operatorname{HF}}=\sup \{\gamma \in \bbR^+ \mid   \forall \; \bU_0\in \mathcal{H}^{m}_{\perp,\mathrm{HF}},\, \exists \; C(\bU_0)>0 \mid \|\bU(t)\|^2_{\mathcal{H}} \ \lesssim \frac{C(\bU_0)}{t^{\gamma}}, \quad \forall t\geq1 \}.
\end{equation}
Similarly, we define for any  $p\geq 0$  the low frequency optimal  polynomial  exponent:
\begin{equation}\label{eq.gammaLF}
\gamma_p^{LF}=\sup \{\gamma \in \mathbb{R}^+ \mid   \forall \; \bU_0\in \mathcal{H}^{p}_{\perp,\mathrm{LF}},\, \exists  \; C(\bU_0)>0 \mid  \|\bU(t)\|^2_{\mathcal{H}} \ \lesssim \frac{C(\bU_0)}{t^{\gamma}}, \quad \forall t\geq1 \}.
\end{equation}
We are now able to state our  result concerning the low frequency and high  frequency optimal polynomial  decay rates.
\begin{Thm}\label{thm-optm}
Let $ (m,p)\in   \mathbb{R}^{+,*}\times  \mathbb{R}^+$.  The  exponents $\gamma_m^{\operatorname{HF}}$ and  $\gamma_p^{LF}$  are given as follows:
\begin{enumerate}
\item If the Maxwell system is in a non-critical configuration, $\gamma_m^{\operatorname{HF}}=m$ and $\gamma_p^{LF}=p+3/2$.
\item If the Maxwell system is in a critical configuration, $\gamma_m^{\operatorname{HF}}=m/2$ and $\gamma_p^{LF}=p+3/2$.
\end{enumerate}
\end{Thm}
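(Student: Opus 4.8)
\noindent\emph{Lower bounds and reduction.} The inequalities ``$\ge$'' should follow at once from Theorem \ref{thm_Lorentz}. Indeed, an element of $\mathcal{H}^{m}_{\perp,\operatorname{HF}}$ has no low-frequency content, so the high-frequency estimate underlying \eqref{polynomial_decayncr}--\eqref{polynomial_decaycr} already gives $\|\bU(t)\|^2_{\mathcal H}\lesssim\|\bU_0\|^2_{\bH^m(\R^3)^N}\,t^{-\gamma_0}$ with $\gamma_0=m$ (non-critical), $\gamma_0=m/2$ (critical), whence $\gamma_m^{\operatorname{HF}}\ge\gamma_0$; dually, an element of $\mathcal{H}^{p}_{\perp,\operatorname{LF}}$ has compactly supported Fourier transform, hence lies in $\bH^q(\R^3)^N$ for every $q$, and applying \eqref{polynomial_decayncr}--\eqref{polynomial_decaycr} with a large $q$ in the role of $m$ makes the high-frequency term negligible for $t\ge1$, whence $\gamma_p^{LF}\ge p+3/2$. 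The real work is the matching upper bounds, for which I would exhibit, for every $\gamma$ strictly above the announced exponent, an initial datum in the relevant space whose energy is not $O(t^{-\gamma})$. The tool is the Fourier reduction of Section \ref{sec-reduction}: by Plancherel's theorem (and the equivalence of \eqref{defPS} with the canonical inner product),
\begin{equation*}
\|\bU(t)\|^2_{\mathcal H}\ \sim\ \int_{\R^3}\big|\,\rme^{-\rmi\,\bbA_{|\bk|,\perp}\,t}\,\widehat{\bU_0}(\bk)\,\big|^2\,\rmd\bk
\end{equation*}
for divergence-free data, so it suffices to concentrate $\widehat{\bU_0}(\bk)$ in the eigenspace of $\bbA_{|\bk|,\perp}$ for an eigenvalue branch $\omega(|\bk|)$ with small imaginary part, on which $\rme^{-\rmi\bbA_{|\bk|,\perp}t}$ acts by the scalar $\rme^{-\rmi\omega(|\bk|)t}$.

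\noindent\emph{Optimality at high frequencies.} From the asymptotic analysis of Section \ref{sec.HF} I would extract $k_+>0$, a simple eigenvalue branch $|\bk|\mapsto\omega_\star(|\bk|)$ of $\bbA_{|\bk|,\perp}$ on $\{|\bk|>k_+\}$ and a continuous normalized eigenvector field $\bv_\star(\bk)\in(\bk^\perp)^N$ on a thin cone $\mathcal C$, with
\begin{equation*}
\operatorname{Im}\omega_\star(|\bk|)=-c_\star\,|\bk|^{-\kappa}+o(|\bk|^{-\kappa})\quad(|\bk|\to\infty),\qquad c_\star>0,
\end{equation*}
where $\kappa=2$ in the non-critical case (the propagative branch $\omega\sim\pm c\,|\bk|$ already behaves so, and no finite branch does worse) and $\kappa=4$ in the critical case, $\omega_\star$ being then the branch converging to the real pole $\pm\omega_{e,j}$ (resp. $\pm\omega_{m,\ell}$) carried by the lossless resonance isolated in Definition \ref{Critical_cases} --- the two extra powers of $|\bk|^{-1}$ coming from the fact that $\mu$ (resp. $\varepsilon$) is real there. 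Set $\gamma_0=2m/\kappa\in\{m,m/2\}$. For $\gamma>\gamma_0$ I would take a bump $\chi$ supported in a fixed annulus, $n_0$ large, and
\begin{equation*}
\widehat{\bU_0}(\bk)=\sum_{n\ge n_0}c_n\,2^{-3n/2}\,\chi(2^{-n}\bk)\,\bv_\star(\bk),\qquad c_n=n^{-1}\,2^{-mn}.
\end{equation*}
Then $\bU_0$ is divergence-free, $\mathcal{F}(\bU_0)$ is supported away from $0$, and $\|\bU_0\|^2_{\bH^m(\R^3)^N}\sim\sum_n2^{2mn}|c_n|^2=\sum_n n^{-2}<\infty$, so $\bU_0\in\mathcal{H}^m_{\perp,\operatorname{HF}}$; on the $n$-th shell $\rme^{-\rmi\bbA_{|\bk|,\perp}t}\widehat{\bU_0}(\bk)=\rme^{-\rmi\omega_\star(|\bk|)t}\widehat{\bU_0}(\bk)$ with $|\operatorname{Im}\omega_\star(|\bk|)|\lesssim2^{-\kappa n}$, and keeping only the shell $n(t)$ with $2^{\kappa n(t)}\sim t$ should give
\begin{equation*}
\|\bU(t)\|^2_{\mathcal H}\ \gtrsim\ |c_{n(t)}|^2\,\rme^{-C\,t\,2^{-\kappa n(t)}}\ \gtrsim\ (\log t)^{-2}\,t^{-2m/\kappa},
\end{equation*}
so $\limsup_{t\to\infty}t^\gamma\|\bU(t)\|^2_{\mathcal H}=+\infty$, i.e. $\gamma_m^{\operatorname{HF}}\le2m/\kappa$; with the lower bound, $\gamma_m^{\operatorname{HF}}=m$ (non-critical) and $=m/2$ (critical).

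\noindent\emph{Optimality at low frequencies.} Similarly, Section \ref{sec.LF} should provide $k_->0$, a simple propagative branch $\omega_\sharp(|\bk|)=c_0\,|\bk|-\rmi\,c_1\,|\bk|^2+o(|\bk|^2)$ of $\bbA_{|\bk|,\perp}$ on $\{0<|\bk|<k_-\}$ with $c_0,c_1>0$ under the weak dissipation assumption \eqref{WD}, and a continuous normalized eigenvector field $\bv_\sharp(\bk)\in(\bk^\perp)^N$ on a thin cone $\mathcal C'$. With $\psi$ a bump supported in $\{|\bk|<k_-\}\cap\mathcal C'$ and $\psi(0)\ne0$, I would take $\widehat{\bU_0}(\bk)=|\bk|^{p}\,\psi(\bk)\,\bv_\sharp(\bk)$; then $\bU_0$ is divergence-free, $\operatorname{supp}\mathcal{F}(\bU_0)\subset\overline{B(0,k_-)}$, and $|\bk|^{-p}(1+|\bk|^{p})\,\widehat{\bU_0}$ is bounded, so $\bU_0\in\mathcal{H}^p_{\perp,\operatorname{LF}}$, while (using $|\operatorname{Im}\omega_\sharp(|\bk|)|\le\tfrac32c_1|\bk|^2$ near $0$ and rescaling the integration variable by $t^{-1/2}$)
\begin{equation*}
\|\bU(t)\|^2_{\mathcal H}\ \gtrsim\ \int_{\mathcal C'\cap\{|\bk|\le\delta\}}|\bk|^{2p}\,|\psi(\bk)|^2\,\rme^{-3c_1t|\bk|^2}\,\rmd\bk\ \gtrsim\ t^{-(p+3/2)},\qquad t\ge1,
\end{equation*}
so no $\gamma>p+3/2$ is admissible and $\gamma_p^{LF}=p+3/2$ in both configurations.

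\noindent\emph{Main obstacle.} The substance of the proof is not the bookkeeping above but the three ingredients borrowed from Sections \ref{sec.HF}--\ref{sec.LF}: (i) the sharp leading term of $\operatorname{Im}\omega_\star(|\bk|)$ as $|\bk|\to\infty$ --- precisely here the critical/non-critical dichotomy, and hence the exponent $m$ versus $m/2$, is generated, through the implicit-function lemma \ref{Lem-implicte-function} applied to the dispersion relation near the real poles of $\varepsilon$ and $\mu$; (ii) the simplicity of the branches $\omega_\star$, $\omega_\sharp$ and the uniform control of their eigenvectors on the relevant frequency ranges, so that a datum aligned with $\bv_\star$ (resp. $\bv_\sharp$) genuinely excites the weakly-damped mode rather than being annihilated; and (iii) the compatibility with the divergence-free constraint, automatic since $\bv_\star,\bv_\sharp\in(\bk^\perp)^N$. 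Granted these, the rest is just the assembly of the two extremal examples.
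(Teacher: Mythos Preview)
Your proposal is correct and follows the same strategy as the paper: align the initial Fourier datum with an eigenvector of $\bbA_{|\bk|,\perp}$ on the slowest-decaying branch (the propagative branch $\omega_{\pm\infty}$ in the non-critical case, the branch $\omega_p$ converging to the critical real pole in the critical case, and the branch $\omega_{0,r}$ at low frequency), then read off a pointwise lower bound on $|\bbU(\bk,t)|$ and integrate. Your identification of the relevant branches and of $\kappa\in\{2,4\}$ matches exactly Lemmas~\ref{LemEigeninfty}, Corollary~\ref{CoroIm} and Lemma~\ref{LemEigenzo}.

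The one genuine difference is in the high-frequency optimality construction. The paper (Section~\ref{sec-optim-HF}) builds, for each $\varepsilon>0$, a datum $\bU_{0,\varepsilon}$ with radial profile $\phi_\varepsilon(|\bk|)=(1+|\bk|^2)^{-(3/4+m/2+\varepsilon/2)}$ times the normalized eigenvector $\mathcal{R}_\bk^*\,\mathcal{V}_{|\bk|}(\omega_\star(|\bk|))\mathbf{e}_1/|\cdot|$, and shows by a direct radial change of variable that $\|\bU_\varepsilon(t)\|^2_{\mathcal H}\gtrsim t^{-(\gamma_0+\varepsilon)}$, whence $\gamma_m^{\operatorname{HF}}\le\gamma_0+\varepsilon$ for every $\varepsilon$. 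You instead build a \emph{single} dyadic datum with coefficients $c_n=n^{-1}2^{-mn}$ and extract the shell $n(t)\sim(\log_2 t)/\kappa$ to get $\|\bU(t)\|^2_{\mathcal H}\gtrsim(\log t)^{-2}t^{-\gamma_0}$. Both work; your construction is a bit more elegant (one datum rules out every $\gamma>\gamma_0$ at once) at the price of a slightly more delicate bookkeeping, while the paper's is more pedestrian but closer to a direct inversion of the upper-bound computation in Step~3 of Section~\ref{sec-final-step}. A second, minor difference: you restrict to a thin angular cone to obtain a continuous eigenvector field, whereas the paper uses the rotation $\mathcal{R}_\bk$ of Section~\ref{section-istrop} to manufacture a (merely measurable, which suffices) global eigenvector via $\mathcal{V}_{|\bk|}(\omega_\star(|\bk|))\mathbf{e}_1$; either device handles the divergence-free constraint. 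For the low-frequency exponent your argument is essentially identical to the paper's (Section~\ref{sec-optim-LF}), which also uses a single datum $\phi(|\bk|)=|\bk|^p$ and the rescaling $\xi=\sqrt{t}\,|\bk|$.
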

\noindent  We point out that in Section \ref{sec-final-step}, we  show that the upperbounds \eqref{polynomial_decayncr} and \eqref{polynomial_decaycr} give the right lower bounds on the exponents $\gamma_m^{\operatorname{HF}}$ and  $\gamma_p^{LF}$. In Section \ref{sec-optim}, we show that these lower-bounds are  also upper-bounds and thus deduce the value of these exponents.
\section{Fourier reduction}\label{sec-reduction}
\subsection{The reduced operator $\bbA_{\bk}$}
The homogeneity of the propagation medium allows us to reduce the spectral  analysis of the operator $\bbA$ defined in (\ref{eq.defHamil}) to the spectral analysis  of a family  (non-self-adjoint) operators $(\bbA_{\bk})_{\bk\in \bbR^3}$  on a finite dimensional space.\\

\noindent We introduce the space $\bC^N$ with $\bC=\bbC^{3}$ endowed  with the inner product  $(\cdot ,\cdot)_{\bC^N}$ defined via  the expression \eqref{defPS}  by  replacing $\bU\in \mathcal{H}$
by    $\bbU=(\bbE,\bbH,\bbP,\dot{\bbP}, \bbM,\dot{\bbM})\in \bC^N$ with  
$$\bbP=(\bbP_j),  \dot{\bbP}=(\dot{\bbP}_j),\bbM=(\bbM_\ell),\dot{\bbM}=(\dot{\bbM}_\ell),$$  
(the same for $\bU'$) and the ${\bL^2}$ inner product $(\cdot,\cdot)_{\bL^2}$ by the usual one in  $\bC$: $(\bbE,\bbE')=\bbE\cdot \overline{\bbE}'$. \\ [10pt] The corresponding norms in $\bC$ or  $\bC^N$ are both denoted $|\cdot|$ for simplicity.\\[6pt]
\noindent In this functional framework,   $\bbA$ is unitarily equivalent via $\mathcal{F}$ to a direct  integral of operators $\bbA_{\bk}$ defined  on the finite dimension space $\bC^N$. 
Namely, one has:
\begin{equation}\label{eq.intdirecA}
\bbA=\mathcal{F}^{*}\, \Big( \int_{\bbR^3}^{\oplus} \bbA_{\bk} \, \rmd \bk \Big)  \mathcal{F} \   \mbox{ i.e. } \  \mathcal{F}  \big(\bbA\, \bU\big)(\bk)= \bbA_{\bk} \, \mathcal{F} \bU(\bk), \ \forall \, \bU\in D(\bbA)\mbox{ and a.e. } \bk \in \bbR^3,
\end{equation}
where the (bounded) linear operator $\bbA_{\bk}:  \bC^N \mapsto  \bC^N $ is  given by 
\begin{equation}\label{eq.defAk}
\forall \; \bbU =(\bbE, \bbH, \bbP, \dot{\bbP},\bbM , \dot{\bbM})\in \bC^N , \quad  \bbA_\bk\bbU=
\begin{pmatrix}
\displaystyle -\frac{\bk \times \bbH}{\varepsilon_0}- \rmi \sum \Omega_{e,j}^2 \,\dot{\bbP}_j \\[8pt]
\displaystyle \frac{\bk \times \bbE}{\mu_0}- \rmi \sum\Omega_{m,\ell}^2\, \dot{\bbM}_\ell\\[10pt]
\rmi \,  \dot {\bbP}\\[6pt]
- \rmi \, \alpha_{e,j} \dot{\bbP}_j - \rmi \omega_{e,j}^2 \bbP_j+\rmi \, \bbE 
\\[7pt]
\rmi \, \dot{ \bbM} \\[6pt]
- \rmi \, \alpha_{m,l} \dot{\bbM}_\ell - \rmi \omega_{m,l}^2 \bbM_\ell+\rmi\, \bbH\quad 
\end{pmatrix}.
\end{equation}
We point out that $\bbA_\bk$ is deduced  from   the definition \eqref{eq.defHamil} of $\bbA$ by replacing  the curl operator $\nabla \times$ operator by its spatial Fourier counterpart $\rmi\, \bk \times$. The  usual cross product on $\mathbb{R}^3\times \mathbb{R}^3$ has been here extended to  $\bC\times \bC$ via the formula $\ba\times \bb=(\ba_2 \bb_3-\ba_3 \bb_2,  \ba_3 \, \bb_1- \ba_1 \,\bb_3,\ba_1 \, \bb_2-\ba_2\bb_1 )$.

\begin{Rem}\label{rem.Disp}
A simple computation shows that for any $\bbU =(\bbE, \bbH, \bbP, \dot{\bbP},\bbM , \dot{\bbM})\in \bC^N$:
$$\operatorname{Im}(\bbA_{\bk}\bbU,\bbU)_{\bC^N}= -\sum_{j=1}^{N_e} \frac{ \varepsilon_0}{2} \, \alpha_{e,j} \, \Omega_{e,j}^2  \, |\dot{\bbP}_j|^2 - \sum_{\ell=1}^{N_m} \frac{\mu_0}{2} \, \alpha_{m,\ell} \, \Omega_{m,\ell}^2  \, |\dot{\bbM}_{\ell}|^2\leq 0.$$
Thus, for all $\bk \in \bbR^3$, the spectrum of  $\bbA_{\bk}$, $\sigma(\bbA_{\bk})$ is included in $ \overline{\bbC^-}=\{\omega \in \mathbb{C}\mid \operatorname{Im}(\omega)\leq0\}$. \\ [10pt] Hence, 
  $- \, \rmi \, \bbA(\bk)$ is maximal dissipative and the  resolvant  of $\bbA(\bk)$ , $R_{\bbA_{\bk}}(\omega)=(\bbA_{\bk}-\omega \mathrm{I}d)^{-1}$ is well-defined in $\bbC^+$  and  satisfies: 
$$\| R_{\bbA_{\bk}}(\omega) \| \leq \operatorname{Im}(\omega)^{-1}, \ \forall \, \omega \in \bbC^+ \mbox{ and } \forall \, \bk \in \bbR^3.$$ This estimate   justifies the existence of the direct integral   $\ds \int_{\bk\in \mathbb{R}^3}^{\oplus} \bbA_{\bk} \, \rmd \bk $ (see  \cite{Ng-20}  page 5).
\end{Rem}

\noindent From an operator point of view, one deduces  from \eqref{eq.intdirecA} (see e.g. Theorem  4.2 of  \cite{Ng-20}) that the contractions  $S(t)$ of the semigroup $\{\mathcal{S}(t)\}_{t\geq 0}$  are also unitarily equivalent via $\mathcal{F}$ to a direct  integral of contractions  $\rme^{-\rmi \bbA_{\bk}t}$  on  $\bC^N$. More precisely, one has for any  $t\geq 0$:
\begin{equation}\label{eq.intdirecS}
 \hspace{-0.37cm} S(t)=\mathcal{F}^{*}\, \Big( \int_{\bbR^3}^{\oplus}\rme^{-\rmi \bbA_{\bk}t} \, \rmd \bk \Big)  \mathcal{F},    \mbox{ i.e. }    \mathcal{F}  \big(S(t) \bU\big)(\bk)=\rme^{-\rmi \bbA_{\bk}t} \, \mathcal{F} \bU(\bk), \, \forall \, \bU\in \mathcal{H}  \mbox{  a.e.}\, \bk \in \bbR^3. \hspace{-0.23cm} 
\end{equation}
From a more practical point of view, it means that the Fourier transform $\bbU=\mathcal{F}\bU$ of the solution $\bU$ of  the  evolution problem  \eqref{eq.schro} with initial condition $\bU_0\in \mathcal{H}$ satisfies a family of Ordinary Differential Equations (ODE)  in $\bC^N$ parametrized by  the wave vector $\bk$. Namely, one has for a.e. $\bk\in \bbR^3$:
\begin{equation}\label{eq.cauchypb}
 \frac{ \rmd \bbU(\bk,t)}{\rmd  t} + \rmi \, \bbA_\bk \bbU(\bk,t)=0 \quad  \mbox{ for } t\geq 0    \ \mbox{ with }\
\bbU(\bk,0)=\bbU_0(\bk)=(\mathcal{F} \bU_0)(\bk).
\end{equation}
The solution of each these ODE's is given by
$
\bbU(\bk,t)=\rme^{-\rmi \bbA_{\bk}t} \bbU_0(\bk),  \ \, \forall \, t\geq 0.
$
We can now introduce the counterpart of the Hodge decomposition  \eqref{eq.decompositionHilb} in the  (spatial) frequency domain. In this perspective, one decomposes orthogonally the space $\bC^N$,  for all wave vector  $\bk \neq 0$, as
\begin{equation}\label{eq.space}
\bC^N= \bC_{ \bk, \parallel}^N \overset{\perp}{\oplus}  \bC_{ \bk, \perp}^N \mbox{ where } \ \bC_{ \bk, \parallel}:=\operatorname{span} \bk  \ \mbox{ and } \ \bC_{ \bk, \perp}:=\{ \bk \}^{\perp}.
\end{equation}
One shows easily that $ \bC_{ \bk, \parallel}^N$ and $\bC_{\perp, \bk}^N$ are stable by $\bbA_{\bk}$.  Thus, one can reduce $\bbA_{\bk}$ as a sum of two operators  $$\bbA_{\bk,\parallel}: \bC_{ \bk, \parallel}^N \to \bC_{ \bk, \parallel}^N \ \mbox{  and } \ \bbA_{\bk, \perp}: \bC_{ \bk, \perp}^N  \to \bC_{ \bk, \perp}^N$$
in such way that
\begin{equation}\label{eq.reducAk}
\bbA_\bk= \bbA_{\bk,\parallel} \oplus  \bbA_{\bk,\perp} \   \mbox{ with } \,  \bbA_{\bk,\parallel} \bU= \bbA_{\bk} \bU, \, \forall \; \bU\in \bC_{ \bk, \parallel}^N \mbox{ and } \, \bbA_{\bk,\perp} \bU= \bbA_{\bk} \bU, \,  \forall \; \bU\in \bC_{ \bk, \perp}^N.
\end{equation}
 The above relation is nothing but  the counterpart of \eqref{eq.reducA} for Fourier components. Indeed, following the decompositions  \eqref{eq.reducA} and \eqref{eq.reducAk}, one proves easily that the operators $\bbA_{\perp}$ and the contractions  $S_{\perp}(t)$ of its associated semigroup $\{\mathcal{S}_{\perp}(t)\}_{t\geq 0}$  are also unitarily equivalent via $\mathcal{F}$  to direct integral of operators  $ \bbA_{\bk,\perp}$ and  $\rme^{-\rmi \bbA_{\bk,\perp}t} $:
\begin{eqnarray*}
\bbA_{\perp}&=&\mathcal{F}^{*}\, \Big( \int_{\bbR^3}^{\oplus} \bbA_{\bk, \perp} \, \rmd \bk \Big)  \mathcal{F} \   \mbox{ i.e. } \  \mathcal{F}  \big(\bbA_{\perp} \bU\big)(\bk)= \bbA_{\bk,\perp} \, \mathcal{F}\, \bU(\bk), \, \forall \, \bU\in D(\bbA_{\perp}),\mbox{ a.e. } \bk \in \bbR^3, \\
S_{\perp}(t)&=&\mathcal{F}^{*}\, \Big( \int_{\bbR^3}^{\oplus}\rme^{-\rmi \bbA_{\bk,\perp}t} \, \rmd \bk \Big)  \mathcal{F}     \mbox{ i.e. }    \mathcal{F}  \big(S_{\perp}(t) \bU\big)(\bk)=\rme^{-\rmi \bbA_{\bk,\perp}t} \, \mathcal{F} \, \bU(\bk), \, \forall \, \bU\in \mathcal{H}_{\perp}, \,  \mbox{ a.e.}\, \bk \in \bbR^3. \nonumber
\end{eqnarray*}
Similarly, the operators $\bbA_{\parallel}$ and  $\mathcal{S}_{\parallel}(t)$ for $t\geq 0$ are also unitarily  equivalent via $\mathcal{F}$  to direct integral of operators  and the latter relations hold also if one replaces all $\perp$-symbols by $\parallel$-symbols. \\

\noindent The direct integral decomposition of the operator $S_{\perp}(t)$ implies  that the Fourier transform $\bbU=\mathcal{F}\bU$ of the solution $\bU$ of  the evolution problem  \eqref{eq.schro} with initial divergence free conditions $\bU_0\in\mathcal{H}_{\perp} $ (as e.g. in  \eqref{CI}) satisfies  for a.e. $\bk \in \mathbb{R}^3$:
\begin{equation}\label{eq.refsolutionfourier}
\bbU(\bk,t)=\rme^{-\rmi \bbA_{\bk,\perp }\, t}\  \bbU_0(\bk),  \ \forall\, t\geq 0  \ \mbox{ where } \  \bbU_0(\bk)=\big(\mathcal{F} \,\bU_0\big)(\bk)\in \bC_{\bk ,\perp}^N.
\end{equation}
\subsection{From $\bbA_{\bk}$ to $\bbA_{|\bk|}$}\label{section-istrop}
In the expression of the solution  \eqref{eq.refsolutionfourier},  the  space $\bC_{\perp,\bk}$ depends on $\bk$ which complicates slightly the analysis of the decay of the Fourier components $\bbU(\bk,t)$.
To remediate this point, one can use the isotropic character of the  medium. \\

\noindent We introduce $(\be_1,\be_2,\be_3)$ the canonical orthonormal basis of $\bC$ and for $\bk\neq 0$ the unit vector $\widehat{\bk}=\bk/|\bk|$. We now construct a rotation  $\mathrm{R}_{\bk}:\bC\to \bC$ that maps the vector  $\widehat{\bk}$ into $\be_3$ as follows
\begin{itemize}
\item [-]If $\widehat{\bk}\neq \pm \be_3$, we set  $\mathbf{w}_{\bk}:= (\widehat{\bk}\times \be_3) /|\widehat{\bk}\times \be_3|\in \bC_{\bk, \perp}$ so that $(\widehat{\bk}, \mathbf{w}_{\bk}, \widehat{\bk} \times \mathbf{w}_{\bk} )$ is an orthonormal basis of $\bC$.  We then define $\mathrm{R}_{\bk}$  in this basis by 
$$\mathrm{R}_{\bk}(\widehat{\bk})=\be_3, \quad \mathrm{R}_{\bk}(\mathbf{w}_{\bk})=\be_1, \quad \mathrm{R}_{\bk}( \widehat{\bk} \times \mathbf{w}_{\bk})=\be_2.$$
\item[-] If $\widehat{\bk} = \be_3$, we simply set $\mathrm{R}_{\bk}=\mathrm{Id}_{\bC}$.
\item[-] If $\widehat{\bk}=- \be_3$, we define   $\mathrm{R}_{\bk}$  by $\mathrm{R}_{\bk}(\be_1)=-\be_2, \, \mathrm{R}_{\bk}(\be_2)=-\be_1 \ \mbox{ and } \  \mathrm{R}_{\bk}(\be_3)=-\be_3$.\end{itemize} 
Finally, we define the operator $\mathcal{R}_{\bk}: \bC^{N}\to \bC^N$  defined for all $  \bbU =(\bbE, \bbH, \bbP, \dot{\bbP},\bbM , \dot{\bbM})\in \bC^N$ by
\begin{equation}\label{eq.defRkvect}
\mathcal{R}_{\bk} \bbU=\big(\mathrm{R}_{\bk} \bbE, \mathrm{R}_{\bk} \bbH, (\mathrm{R}_{\bk}\bbP_j), (\mathrm{R}_{\bk}\dot{\bbP}_j),(\mathrm{R}_{\bk}\bbM_\ell), (\mathrm{R}_{\bk}\dot{\bbM}_\ell) \big)  
.\end{equation}
Using the identity (we leave its proof to the reader):$$\mathrm{R}_{\bk}^*\big(\mathrm{R}_{\bk}(\bu)\times  \mathrm{R}_{\bk}(\bv)\big)=\bu\times \bv, \quad \forall\,  \bu, \, \bv \in \bC$$
applied to $\bu=\bk$ (and thus with $R_{\bk}(\bu)=|\bk| \; e_3$), one observes with  \eqref{eq.defAk} that
\begin{equation}\label{eq.unit}
\bbA_{\bk}=\mathcal{R}_{\bk}^* \bbA_{|\bk|} \mathcal{R}_{\bk}  \quad \mbox{ where } \quad \bbA_{|\bk|} := \bbA_{|\bk| \be_3}.
\end{equation}
Thus, $\bbA_{\bk}$ and $\bbA_{|\bk|}$ are unitarily equivalent.
Furthermore, as 
\begin{equation} \label{trucmuche}
\mathcal{R}_{\bk}( \bC_{ \bk, \parallel}^N)=\bC_{\parallel}^N  \mbox{ and } \mathcal{R}_{\bk}(\bC_{\perp,\bk}^N)=\bC_{\perp}^N \mbox{ where } \bC_{\parallel}:=\bC_{\parallel, e_3}\mbox{ and } \bC_{\perp}:=\bC_{\perp,e_3},  
\end{equation}
one has by \eqref{eq.reducAk}:
\begin{equation}\label{eq.unit2}
 \bbA_{\bk, \parallel}=\mathcal{R}_{\bk}^*\,  \bbA_{|\bk|,\parallel}  \, \mathcal{R}_{\bk}  \quad \mbox{ and } \quad
 \bbA_{\bk, \perp}=\mathcal{R}_{\bk}^*\,  \bbA_{|\bk|,\perp}  \, \mathcal{R}_{\bk} .
\end{equation}
In particular, it shows  with \eqref{eq.refsolutionfourier} that  the Fourier transform $\bbU=\mathcal{F}\bU$ of the solution $\bU$ of  \eqref{eq.schro} with initial divergence free conditions $\bU_0\in\mathcal{H}_{\perp} $ satisfies  for a.e. $\bk \in \mathbb{R}^3$:
\begin{equation}\label{eq.refsolutionfourier2}
\bbU(\bk,t)= \mathcal{R}_{\bk}^* \ \rme^{-\rmi  \bbA_{|\bk|,\perp} \, t} \ \mathcal{R}_{\bk}\, \bbU_0(\bk),  \ \, \forall \; t\geq 0  \ \mbox{ where } \  \bbU_0(\bk)=\big(\mathcal{F}\bU_0\big)(\bk)\in \bC_{\bk,\perp}^N.
\end{equation}
 Thus, to estimate the  norm of $|\bbU(\bk,t)|$,  one only needs to analyse  the spectral properties of  $ \bbA_{|\bk|,\perp}$ on the  space $\bC_{\perp}^N$ which  is  now independent of $\bk$.
\section{Modal analysis}\label{sec-modal-analysis}
\noindent The proofs of Theorems \ref{thm_Lorentz} and  Theorem \ref{thm-optm} are based on the spectral  properties of $\bbA_{|\bk|}$. These theorems  are proved in Section \ref{proof-general-cases} based on results established in Sections \ref{sec-modal-analysis} to \ref{sec_mid-frequencies}. 
\subsection{Spectrum and resolvent of the finite dimensional  operators $\bbA_{|\bk|}$}\label{sec-spec-prop}
Using  the expression \eqref{eq.permmitivity-permeabiity}, the rational functions  $\varepsilon$ and $\mu$ can be rewritten as 
\begin{equation}\label{eq.representationirreduc}
\varepsilon(\omega)= \varepsilon_0 \, \frac{P_e(\omega)}{Q_e(\omega)}  \ \mbox { and } \ \mu(\omega)= \mu_0 \, \frac{P_m(\omega)}{Q_m(\omega)},  
\end{equation}
where the unitary polynomials $P_e$ and $Q_e$ (resp.  $P_m$ and $Q_m$) are  of degree $2 N_e$ (resp. $2N_m$). More precisely, $Q_e$  and  $P_e$  are explicitly  given in term of the $q_{e,j}$  by 
\begin{equation}\label{eq.formulaQ_e}
  Q_e(\omega)=\prod_{j=1}^{N_e}q_{e,j}(\omega) \quad  \mbox{ and }  \quad P_e(\omega)=Q_e(\omega)-\sum_{j=1}^{N_e}  \Omega_{e,j}^2\prod_{k=1,k\neq j}^{N_e}  \, q_{e,k}(\omega) .
\end{equation}
The  reader will verify that, with \eqref{eq.formulaQ_e}, the assumption $(\mathrm{H}_1)$  implies that   $P_e$ and $Q_e$ do not share common zeroes. Therefore the representation   \eqref{eq.representationirreduc} of the rational function $\varepsilon$ is irreducible.\\ [10pt]
Thus, $\calP_e$, the set of poles of $\varepsilon$, is exactly  the set of roots of   $Q_e$ which is the union over $j$ of the roots of $q_{e,j}$ and is therefore  included in $\overline{\mathbb{C}^-}$ (see remark \ref{Rem-roots}). \\ [10pt]  Moreover, by  $(\mathrm{H}_1)$ and remark \ref{Rem-roots}, the multiplicity of
a root  $\omega_*$ of $Q_e$ is either $1$ or $2$. 
\\ [10pt]
Obviously, similar observations  hold true for the polynomials $P_m$ and $Q_m$.\\

\noindent 
We introduce the sets $\mathcal{P}:=\mathcal{P}_e\cup \mathcal{P}_m$ and $\mathcal{Z}:=\mathcal{Z}_e\cup \mathcal{Z}_m$,
where we recall that $\mathcal{P}_e$ (resp. $\mathcal{P}_m$)  is the set of poles of the function $\varepsilon$ (resp. $\mu$) and   $\mathcal{Z}_e$   (resp. $\mathcal{Z}_m$) the set of zeros of  $ \varepsilon$ (resp. $ \mu$) and the rational function $\mathcal{D}: \mathbb{C}\setminus \mathcal{P}  \to \bbC$ defined  by
\begin{equation}\label{eq.ireductibleF}
\mathcal{D}(\omega)=\omega^2 \, \varepsilon(\omega) \, \mu (\omega),  \quad  \forall \; \omega \in  \mathbb{C}\setminus \mathcal{P}.
\end{equation}
Using $(\mathrm{H}_2)$, it is easy to see  that $\mathcal{D}(\omega)$ is irreducible. From \eqref{eq.representationirreduc} and \eqref{eq.formulaQ_e}, one deduces immediately that its numerator is of degree $N=2+2N_e+2N_m$ and its denominator is of degree $2N_e+2N_m$. Moreover, its poles  $p$  and zeros  $z$ are respectively the elements $ \mathcal{P}$ and $\mathcal{Z}\cup \{0 \}$. 
\\ [10pt] 
\noindent We introduce for any $\bk\in \bbR^3$, the set 
$$
S(|\bk|):=\{\omega \in \mathbb{C}\setminus \mathcal{P} \mid  \mathcal{D}(\omega) =|\bk|^2\}.
$$
The following proposition establishes the link  between $S(|\bk|)$ and the spectrum of the operator $\bbA_{|\bk|, \perp}$ for $\bk\neq 0$. 

\begin{Pro}\label{Prop.spec}
Let assume  $(\mathrm{H}_1)$ and $(\mathrm{H}_2)$. Then for any $\bk\in \mathbb{R}^3\setminus \{0 \}$, one has
\begin{equation}\label{eq.spec}
  \sigma(\bbA_{|\bk|,\perp})= S(|\bk|).
\end{equation}
Moreover, each eigenvalue  $\omega \in \sigma(\bbA_{|\bk|,\perp})$ has a geometric multiplicity of $2$. 
\end{Pro}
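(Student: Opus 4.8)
The plan is to compute $\sigma(\bbA_{|\bk|,\perp})$ by solving the eigenvalue equation $\bbA_{|\bk|,\perp}\,\bbU=\omega\,\bbU$ directly, which is legitimate since $\bbA_{|\bk|,\perp}$ acts on the finite-dimensional space $\bC_\perp^N$, so $\sigma(\bbA_{|\bk|,\perp})$ is exactly its set of eigenvalues. Fix $\bk\in\R^3\setminus\{0\}$ and set $k:=|\bk|>0$. The space $\bC_\perp=\{e_3\}^\perp$ is two-dimensional and stable under $J:=e_3\times(\cdot)$, with $J^2=-\mathrm{Id}_{\bC_\perp}$, hence $J$ is invertible; this is the only place where $\bk$ enters \eqref{eq.defAk} once we set $\bk=k\,e_3$. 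First I would use the ``matter'' blocks of \eqref{eq.defAk}: the $\bbP_j$ and $\dot{\bbP}_j$ equations give $\dot{\bbP}_j=-\rmi\,\omega\,\bbP_j$ and $q_{e,j}(\omega)\,\bbP_j=-\bbE$, and symmetrically $\dot{\bbM}_\ell=-\rmi\,\omega\,\bbM_\ell$, $q_{m,\ell}(\omega)\,\bbM_\ell=-\bbH$. The analysis then splits according to whether $\omega$ is a pole of $\varepsilon$ or $\mu$ (i.e.\ whether some $q_{e,j}$ or $q_{m,\ell}$ vanishes at $\omega$) or not.

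If $\omega\notin\mathcal{P}$, all $q_{e,j}(\omega),q_{m,\ell}(\omega)$ are nonzero, so $\bbP_j,\dot{\bbP}_j$ are determined by $\bbE$ and $\bbM_\ell,\dot{\bbM}_\ell$ by $\bbH$. Substituting into the $\bbE$- and $\bbH$-blocks and recognizing the sums that define $\varepsilon(\omega)$ and $\mu(\omega)$ in \eqref{eq.permmitivity-permeabiity}, the system collapses to the pair $-k\,J\bbH=\omega\,\varepsilon(\omega)\,\bbE$ and $k\,J\bbE=\omega\,\mu(\omega)\,\bbH$ in $\bC_\perp$. A nonzero eigenvector forces $\bbE\neq0$ (otherwise $\bbH=0$ by invertibility of $J$, and then $\bbU=0$), which in turn forces $\omega\,\mu(\omega)\neq0$ (else the second relation gives $\bbE=0$); hence $\bbH=\tfrac{k}{\omega\mu(\omega)}J\bbE$, and plugging this into the first relation with $J^2=-\mathrm{Id}$ yields $\big(k^2-\mathcal{D}(\omega)\big)\bbE=0$, i.e.\ the dispersion relation $\mathcal{D}(\omega)=k^2$. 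Conversely, if $\mathcal{D}(\omega)=k^2$ then $\omega,\varepsilon(\omega),\mu(\omega)$ are all nonzero (since $k^2\neq0$), and for an arbitrary $\bbE\in\bC_\perp$ the vector built from $\bbE$ by the formulas above solves the eigenvalue equation; as $\bbE\mapsto\bbU$ is linear and is a right inverse of the projection onto the $\bbE$-component, it is injective and $\ker(\bbA_{|\bk|,\perp}-\omega)\cong\bC_\perp$ has dimension exactly $2$. This gives $\sigma(\bbA_{|\bk|,\perp})\setminus\mathcal{P}=S(|\bk|)$, with geometric multiplicity $2$ for each such eigenvalue.

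The remaining point — and the one I expect to be the main obstacle, since it is where $(\mathrm{H}_1)$ and $(\mathrm{H}_2)$ must be used precisely — is to rule out any $\omega\in\mathcal{P}$ as an eigenvalue. Assume $\omega\in\mathcal{P}_e$ (the case $\omega\in\mathcal{P}_m$ is symmetric). By $(\mathrm{H}_1)$ there is a unique $j_0$ with $q_{e,j_0}(\omega)=0$, and $q_{e,j_0}(\omega)\,\bbP_{j_0}=-\bbE$ forces $\bbE=0$, while $\bbP_j=\dot{\bbP}_j=0$ for $j\neq j_0$; note $\omega\neq0$ since $\omega_{e,j_0}>0$ gives $q_{e,j_0}(0)\neq0$. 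If $\omega\notin\mathcal{P}_m$, the $\bbM_\ell$ are determined by $\bbH$ and the $\bbH$-block with $\bbE=0$ reduces to $\omega\,\mu(\omega)\,\bbH=0$; here $(\mathrm{H}_2)$ gives $\mathcal{P}_e\cap\mathcal{Z}_m=\varnothing$, so $\mu(\omega)\neq0$, whence $\bbH=0$, then $\bbM_\ell=\dot{\bbM}_\ell=0$, and finally the $\bbE$-block collapses to $\omega\,\Omega_{e,j_0}^2\,\bbP_{j_0}=0$, giving $\bbP_{j_0}=0$ (as $\Omega_{e,j_0}>0$) and $\bbU=0$, a contradiction. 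If instead $\omega\in\mathcal{P}_m$ as well, $(\mathrm{H}_1)$ provides a unique $\ell_0$ with $q_{m,\ell_0}(\omega)=0$, which forces $\bbH=0$ and $\bbM_\ell=\dot{\bbM}_\ell=0$ for $\ell\neq\ell_0$; the $\bbE$- and $\bbH$-blocks then reduce to $\omega\,\Omega_{e,j_0}^2\,\bbP_{j_0}=0$ and $\omega\,\Omega_{m,\ell_0}^2\,\bbM_{\ell_0}=0$, again forcing $\bbU=0$. Combining the two cases yields $\sigma(\bbA_{|\bk|,\perp})=S(|\bk|)$ with every eigenvalue of geometric multiplicity $2$. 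The computations in the non-pole case are routine linear algebra; essentially all the care is in the bookkeeping of the pole case and in invoking $(\mathrm{H}_1)$, $(\mathrm{H}_2)$ and the strict positivity of the $\omega_{e,j},\omega_{m,\ell},\Omega_{e,j},\Omega_{m,\ell}$ at the right moments.
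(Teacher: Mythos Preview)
Your proof is correct and follows essentially the same approach as the paper: both analyse the eigenvalue equation block by block, reduce the ``matter'' variables to $(\bbE,\bbH)$ via the relations $q_{e,j}(\omega)\bbP_j=-\bbE$, $q_{m,\ell}(\omega)\bbM_\ell=-\bbH$, and then split into the case $\omega\notin\mathcal{P}$ (leading to the $2\times2$ system on $(\bbE,\bbH)$ and the dispersion relation) and the case $\omega\in\mathcal{P}$ (where $(\mathrm{H}_1)$, $(\mathrm{H}_2)$ and the positivity of the coefficients force $\bbU=0$). The only organisational difference is that the paper first invokes the resolvent formula of Proposition~\ref{Prop.res} to obtain the a~priori inclusion $\sigma(\bbA_{|\bk|,\perp})\subset S(|\bk|)\cup\mathcal{S}_{\mathcal{T}}$ and must then separately exclude the artificial set $\mathcal{Z}_m\cup\{0\}$, whereas your direct argument handles this automatically (your observation that a nonzero eigenvector forces $\omega\mu(\omega)\neq0$ dispatches $\mathcal{Z}_m\cup\{0\}$ inside the non-pole case).
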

\noindent Finally, we conclude this section  with the expression of the resolvent \begin{equation} \label{defres} R_{|\bk|}(\omega):=(\bbA_{|\bk|,\perp}- \omega \mathrm{I})^{-1}.\end{equation} To obtain a readable expression, it appears useful to introduce some intermediate operators. \\ [12pt]
We first define four linear operators in ${\cal L}(\bC_{\perp}^N, \bC_{\perp})$: given $\bbF := \big( \be, \bh, \bp_j, \dot\bp_j,  \bm_\ell, \dot\bm_\ell \big) \in \bC_{\perp}^N$
\begin{equation}\label{operatorsApAm}
	\begin{array}{lll}
\ds 	\bbA_{e,j}(\omega)\,  \bbF :=\frac{( -\rmi \alpha_{e,j}-\omega ) \, \bp_j-\rmi  \, \dot{\bp}_j}{q_{e,j}(\omega)}, \quad & \ds \bbA_{m,\ell}(\omega)  \, \bbF :=\frac{( -\rmi \alpha_{m,\ell}-\omega ) \bm_{\ell}-\rmi \, \dot{\bm}_{\ell}}{q_{m,\ell}(\omega)}, \\ [12pt]
	\ds 	\dot \bbA_{e,j}(\omega)\,  \bbF :=\frac{\rmi  \, \omega_{e,j}^2\,  \bp_{ j}  -\omega \, \dot{\bp}_{ j}}{q_{e,j}(\omega)}, \quad & \ds 
	\dot \bbA_{m,\ell}(\omega)  \, \bbF :=\frac{\rmi  \, \omega_{m,\ell}^2\,  \bm_{\ell}  -\omega \, \dot{\bm}_{\ell}}{q_{m,\ell}(\omega)},
		\end{array}
\end{equation}
from which we define two more  operators  in ${\cal L}(\bC_{\perp}^N, \bC_{\perp})$:
\begin{equation}\label{operatorsAeAh}
		\ds \bbA_e(\omega) \,  \bbF =-\varepsilon_0 \, \Big( \, \be+ \rmi\sum  \Omega_{e,j}^2 \, 	\dot \bbA_{e,j}(\omega)\,  \bbF\Big) , \quad  \bbA_m(\omega) \,  \bbF =-\mu_0\, \Big( \,\bh+ \rmi  \sum\  \Omega_{m,\ell}^2 \, 	\dot \bbA_{m,\ell}(\omega)  \, \bbF \Big).
\end{equation}
Finally, we shall also define the finite subset of $\bbC$ :
\begin{equation} \label{defST}
	{\cal S}_ {\cal T} := \mathcal{P} \cup \mathcal{Z}_m\cup \{ 0\}.
	\end{equation}
\begin{Pro}\label{Prop.res} Let $\bk\in \bbR^3$. For any $\omega \in  \bbC\setminus \big(S(|\bk|)\cup {\cal S}_ {\cal T}\big) $, the resolvent is given by 
\begin{equation}\label{eq.expressresolv}
R_{|\bk|}(\omega)=\mathcal{V}_{|\bk|}(\omega) \, \mathcal{S}_{|\bk|}(\omega)+\mathcal{T}(\omega)
\end{equation} 
with  $ \mathcal{S}_{|\bk|}(\omega) \in {\cal L}(\bC_{\perp}^N, \bC_{\perp})$ defined by:
given $\bbF := \big( \be, \bh, \bp_j, \dot\bp_j,  \bm_\ell, \dot\bm_\ell \big) \in  \bC_{\perp}^N$
 \begin{equation} \label{defS} 
 \mathcal{S}_{|\bk|}(\omega) \, \bbF:= \displaystyle \frac{\omega \mu(\omega ) \, \bbA_e(\omega) \,  \bbF  -|\bk| \; {\bf e_3} \times
 \bbA_m(\omega) \,  \bbF}{\mathcal{D}(\omega)-|\bk|^2}	,
\end{equation} 
$\mathcal{V}_{|\bk|}(\omega) \in {\cal L}(\bC_{\perp}, \bC_{\perp}^N)$ defined by: given $\bbX \in \bC^{\perp}$
 \begin{equation} \label{defV} 	
	\left| \; 	\begin{array}{lll}	\mathcal{V}_{|\bk|}(\omega) \, \bbX& := & \ds \bigg(\bbX,  0\,  ,  - \Big(\frac{\bbX}{q_{e,j}(\omega)}\Big),  \Big(\frac{ \rmi \, \omega  \,\bbX}{q_{e,j}(\omega)}\Big), 0 \, , 0 \, \bigg) \\ [18pt]
		&+ &  \ds \frac{|\bk| }{ \omega \mu(\omega)} \;  \bigg(0 \, , {\bf e_3}  \times \bbX ,  0 \, , 0 \,, - \Big(\frac{{\bf e_3}  \times \bbX}{ q_{m,\ell}(\omega)}\Big), \Big(\frac{\rmi \,\omega \,{\bf e_3}  \times \bbX}{ q_{m,\ell}(\omega)} \Big) \bigg) \quad 
	\end{array} \right.
\end{equation} 
and finally $\mathcal{T}(\omega) \in {\cal L}(\bC_{\perp}^N)$ defined by :  given $\bbF := \big( \be, \bh, \bp_j, \dot\bp_j,  \bm_\ell, \dot\bm_\ell \big) \in \bC_{\perp}^N$
 \begin{equation} \label{defT} 	
 \left| \; 	\begin{array}{lll}
 		\mathcal{T}(\omega) \bbF & :=  & \bigg(0 \, , \displaystyle \frac{\bbA_m(\omega) \,  \bbF}{\omega \mu(\omega)},0 \, ,0  \, , - \Big(\frac{\bbA_m(\omega) \,  \bbF}{\omega \mu(\omega) \, q_{m,\ell}(\omega)}\Big), \Big( \frac{\rmi \,  \bbA_m(\omega) \,  \bbF}{ \mu(\omega) q_{m,\ell}(\omega)}\Big)\bigg)\\ [18pt]
 		&+ &  \Big(0 \, , 0\,,\big(\bbA_{e,j}(\omega) \bbF \big), \big(\dot{\bbA}_{e,j}(\omega) \bbF \big)  , \big(\bbA_{m,\ell}(\omega)  \bbF \big) , \big( \dot{\bbA}_{\dot{m},\ell}(\omega) \bbF \big)\Big).
\end{array} \right.
\end{equation} 
\end{Pro}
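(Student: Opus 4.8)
The plan is to prove \eqref{eq.expressresolv} by \emph{directly solving} the linear system $(\bbA_{|\bk|,\perp}-\omega\,\mathrm{I})\bbU=\bbF$ for an arbitrary right-hand side $\bbF=\big(\be,\bh,\bp_j,\dot\bp_j,\bm_\ell,\dot\bm_\ell\big)\in\bC_\perp^N$, while checking that each manipulation is an \emph{equivalence} on the prescribed set $\bbC\setminus\big(S(|\bk|)\cup{\cal S}_{\cal T}\big)$. This will simultaneously show that $\bbA_{|\bk|,\perp}-\omega\,\mathrm{I}$ is a bijection of $\bC_\perp^N$ on that set and identify its inverse with the right-hand side of \eqref{eq.expressresolv} (and it covers the degenerate case $\bk=0$ as well, without recourse to Proposition~\ref{Prop.spec}). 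Writing $\bbU=\big(\bbE,\bbH,\bbP_j,\dot\bbP_j,\bbM_\ell,\dot\bbM_\ell\big)$, I would first spell out the six groups of component equations coming from \eqref{eq.defAk} with $\bk=|\bk|\,\be_3$.

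\emph{Step 1: elimination of the auxiliary unknowns.} For each $j$, the equations sitting in the $\bbP_j$ and $\dot\bbP_j$ blocks form a $2\times2$ linear system in $(\bbP_j,\dot\bbP_j)$ whose determinant is exactly $q_{e,j}(\omega)$; since $\omega\notin\mathcal{P}_e\subset\mathcal{P}$ this is non-zero, so Cramer's rule gives $\bbP_j=\bbA_{e,j}(\omega)\,\bbF-\bbE/q_{e,j}(\omega)$ and $\dot\bbP_j=\dot\bbA_{e,j}(\omega)\,\bbF+\rmi\,\omega\,\bbE/q_{e,j}(\omega)$, with the operators of \eqref{operatorsApAm} appearing on their own. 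Symmetrically, $\bbM_\ell=\bbA_{m,\ell}(\omega)\,\bbF-\bbH/q_{m,\ell}(\omega)$ and $\dot\bbM_\ell=\dot\bbA_{m,\ell}(\omega)\,\bbF+\rmi\,\omega\,\bbH/q_{m,\ell}(\omega)$.

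\emph{Step 2: reduction to a $2\times2$ system for $(\bbE,\bbH)$ and its resolution.} Substituting the expression for $\dot\bbP_j$ into the $\bbE$-equation and using the form of $\varepsilon$ in \eqref{eq.permmitivity-permeabiity} together with \eqref{operatorsAeAh}, the sum over $j$ collapses and one obtains $\omega\,\varepsilon(\omega)\,\bbE+|\bk|\,\be_3\times\bbH=\bbA_e(\omega)\,\bbF$; likewise, from the $\bbH$-equation, $\omega\,\mu(\omega)\,\bbH-|\bk|\,\be_3\times\bbE=\bbA_m(\omega)\,\bbF$. Since $\omega\notin\mathcal{Z}_m\cup\{0\}$ we have $\omega\,\mu(\omega)\ne0$, so I would solve the second relation for $\bbH$, substitute it into the first, and use $\be_3\times(\be_3\times\bbE)=-\bbE$ (valid because $\bbE\in\bC_\perp$) to get $\big(\mathcal{D}(\omega)-|\bk|^2\big)\bbE=\omega\,\mu(\omega)\,\bbA_e(\omega)\,\bbF-|\bk|\,\be_3\times\bbA_m(\omega)\,\bbF$, with $\mathcal{D}(\omega)=\omega^2\varepsilon(\omega)\mu(\omega)$ as in \eqref{eq.ireductibleF}. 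As $\omega\notin S(|\bk|)$, $\mathcal{D}(\omega)-|\bk|^2\ne0$, hence $\bbE=\mathcal{S}_{|\bk|}(\omega)\,\bbF$, which is precisely \eqref{defS}. Back-substituting, $\bbH$ and then, via Step~1, all the $\bbP_j,\dot\bbP_j,\bbM_\ell,\dot\bbM_\ell$ become explicit affine functions of $\bbX:=\bbE=\mathcal{S}_{|\bk|}(\omega)\,\bbF$ and of $\bbF$; collecting the $\bbX$-linear contributions into $\mathcal{V}_{|\bk|}(\omega)\,\bbX$ (formula \eqref{defV}) and the $\bbF$-direct contributions into $\mathcal{T}(\omega)\,\bbF$ (formula \eqref{defT}) yields exactly $\bbU=\mathcal{V}_{|\bk|}(\omega)\,\mathcal{S}_{|\bk|}(\omega)\,\bbF+\mathcal{T}(\omega)\,\bbF$. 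Since every step is an equivalence under the stated non-vanishing hypotheses and the right-hand side is a well-defined element of $\bC_\perp^N$ depending linearly on $\bbF$, the map $\bbF\mapsto\bbU$ is inverse to $\bbA_{|\bk|,\perp}-\omega\,\mathrm{I}$, which proves \eqref{eq.expressresolv}.

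\emph{Main obstacle.} There is no conceptual difficulty here: the proof is an organized Gaussian elimination, and the only genuine work is the bookkeeping of signs and of the factors of $\rmi$ when verifying that the relevant linear combinations of the components of $\bbF$ reconstitute precisely $\bbA_e(\omega)\,\bbF$ and $\bbA_m(\omega)\,\bbF$, and that the final regrouping reproduces the displayed formulas for $\mathcal{V}_{|\bk|}(\omega)$ and $\mathcal{T}(\omega)$ block by block. One must also record, at each division, which exceptional set is used: $\mathcal{P}$ to invert the polynomials $q_{e,j}(\omega)$ and $q_{m,\ell}(\omega)$ in Step~1, $\mathcal{Z}_m\cup\{0\}$ to invert $\omega\,\mu(\omega)$ when expressing $\bbH$, and $S(|\bk|)$ to invert $\mathcal{D}(\omega)-|\bk|^2$; this is exactly why the excluded set in the statement is $S(|\bk|)\cup{\cal S}_{\cal T}$ with ${\cal S}_{\cal T}=\mathcal{P}\cup\mathcal{Z}_m\cup\{0\}$, and why $\mathcal{Z}_e$ need \emph{not} be excluded, since $\varepsilon(\omega)$ is never inverted, only $\mu(\omega)$.
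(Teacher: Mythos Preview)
Your proposal is correct and follows essentially the same approach as the paper: a direct Gaussian elimination on the system $(\bbA_{|\bk|,\perp}-\omega\,\mathrm{I})\bbU=\bbF$, first solving the $2\times 2$ blocks in $(\bbP_j,\dot\bbP_j)$ and $(\bbM_\ell,\dot\bbM_\ell)$ using $\omega\notin\mathcal{P}$, then reducing to and solving the $2\times 2$ system in $(\bbE,\bbH)$ using $\omega\notin S(|\bk|)\cup\mathcal{Z}_m\cup\{0\}$, and finally back-substituting. Your explicit tracking of which exclusion is used at each division, and the remark that $\mathcal{Z}_e$ never enters because $\varepsilon(\omega)$ is not inverted, are accurate and match the paper's computation.
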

\noindent The (purely computational) proofs of  Propositions \ref{Prop.spec} and \ref{Prop.res} are delayed to the Appendix.
	\begin{Rem}
The function $\omega \mapsto R_{|\bk|}(\omega)$ is  well-defined  and analytic for $\omega \in \bbC \setminus  \sigma(\bbA_{|\bk|,\perp})$. The set ${\cal S}_{\cal T}$ is the set of singularities for $\omega \mapsto {\cal T}(\omega)$ but they are removable singularities in the expression \eqref{eq.expressresolv}.
	\end{Rem}
\subsection{The dispersion relation}  \label{sec_dispersion}
\subsubsection{General properties of the dispersion relation} \label{sec_dispersion-1}
By Proposition \ref{Prop.spec}, for a fixed wave number $\bk\neq 0$, the eigenvalues of $\bbA_{|\bk|,\perp}$ are the solutions $\omega\in \mathbb{C}\setminus \mathcal{P}$  of the equation
\begin{equation}\label{eq.disp}
{\cal D}(\omega)=|\bk|^2,
\end{equation}
(with ${\cal D}(\omega)$ defined by \eqref{eq.ireductibleF}) referred in physics as the dispersion relation. In this section, we provide several  results and useful remarks  about  this equation.\\[6pt]
 \noindent As a consequence of  Remark \ref{rem.Disp}, one has $ \sigma(\bbA_{|\bk|})\subset \overline{\bbC^-}$, thus  $\sigma(\bbA_{|\bk|, \perp})\subset \sigma(\bbA_{|\bk|})\subset \overline{\bbC^-}$. We give here an elementary proof  that  for $\bk\neq 0$, the spectrum  $ \sigma(\bbA_{|\bk|,\perp})$ is included in the lower open complex half plane 
 \begin{equation} \label{loc_spectrum} 
 	 \sigma(\bbA_{|\bk|,\perp}) \subset \bbC^-,
 	\end{equation} 
 	based on the dispersion relation \eqref{eq.disp}.  Assume by contradiction that there exists $\omega \in \overline{\bbC^+}\cap \sigma(\bbA_{|\bk|, \perp})$. From Proposition  \ref{Prop.spec}, $\omega$ satisfies \eqref{eq.disp} and  therefore $\omega\neq 0$ and $\omega\notin \mathcal{P}$. 
 Taking the real part and the imaginary part of \eqref{eq.disp} leads to
\begin{eqnarray}
&& \operatorname{Re}\big(\omega\, \varepsilon(\omega)\big) \, \operatorname{Re}\big(\omega \mu(\omega)\big) =|\bk|^2+ \operatorname{Im}\big(\omega\,  \varepsilon(\omega)\big) \, \operatorname{Im}\big(\omega \mu(\omega)\big)> 0 \label{eq.samesign} \\&&  \operatorname{Re}(\omega \,\varepsilon(\omega)) \, \operatorname{Im}(\omega \mu(\omega))+ \operatorname{Re}(\omega\, \mu(\omega)) \, \operatorname{Im}(\omega \varepsilon(\omega))=0  \label{eq.oppositesign}
\end{eqnarray}
where the  rational Herglotz functions $\omega \mapsto  \omega \varepsilon(\omega)$ and  $\omega \mapsto \omega \mu(\omega)$ are analytic and have a  positive imaginary part  on $\bbC^+$. Furthermore, one has for  $\omega \in \R \setminus \calP$:
\begin{equation}\label{eq.positvity}
\operatorname{Im}(\omega \varepsilon(\omega))=\varepsilon_0 |\omega|^2\,  \sum_{j=1}^{N_e} \frac{\Omega^2_{e,j} \,\alpha_{e,j}} {|q_{e,j}(\omega)|^2}\geq 0 \ \mbox{ and } \ \operatorname{Im}(\omega \mu(\omega))=\mu_0 |\omega|^2\,  \sum_{\ell=1}^{N_m} \frac{\Omega^2_{m,\ell} \,\alpha_{m,\ell}} {|q_{m,\ell}(\omega)|^2}\geq 0.
 \end{equation}
Thus, by the weak dissipation assumption \eqref{WD},  at least one of the coefficients $\alpha_{e,j}$ or $\alpha_{m,\ell}$ is positive and it  follows from \eqref{eq.positvity} that 
 \begin{equation}\label{eq.consequenceWD}
 \operatorname{Im}(\omega \varepsilon(\omega))>0  \  \mbox{ or }  \ \operatorname{Im}(\omega \mu(\omega))>0 \  \mbox{ on }  \ \R^* \setminus \mathcal{P}.
 \end{equation}
 Hence, \eqref{eq.samesign} implies that $\operatorname{Re}\big(\omega \varepsilon(\omega)\big)$ and $\operatorname{Re}\big(\omega \mu(\omega)\big)$  have the same sign (and do not vanish), whereas  \eqref{eq.oppositesign} implies that they have opposite sign, which leads to a contradiction. Thus,  for $\bk \neq 0$, the solutions of the dispersion relation ${\cal D}(\omega)=|\bk|^2$ all lie in $\bbC^-$.
\\[6pt]
\noindent Using the fact that $\varepsilon(-\overline{\omega})=\overline{\varepsilon(\omega)}$ and $\mu(-\overline{\omega})=\overline{\mu(\omega)}$, one observes that if $\omega$ is a solution of the dispersion relation then $-\overline{\omega}$ is also a solution of this equation. Thus for a fixed $\bk\neq 0$, the  set $\sigma(\bbA_{|\bk|, \perp})$  is invariant by  the transformation $\omega \to -\overline{\omega}$.\\[6pt]
 \noindent From the irreducible form \eqref{eq.ireductibleF} of ${\cal D}$, one deduces that
  the dispersion relation is equivalent to a polynomial equation of degree $N$.  Namely,  for a fixed $\bk \in \R^{3}\setminus \{ 0\}$,  one has:
\begin{equation}\label{eq.disppolynom}
{\cal D}(\omega)=|\bk|^2  \Longleftrightarrow   D_{|\bk|}(\omega) =0, \quad D_{|\bk|}(\omega):= \varepsilon_0 \, \mu_0 \, \omega^2 \, P_e(\omega)  P_m(\omega) - |\bk|^2  Q_e(\omega) Q_m(\omega) \, \end{equation}
is a polynomial  of degree $N = 2 + 2N_e + 2 N_m$  with dominant coefficient $\varepsilon_0 \, \mu_0$. Thus, one has
\begin{equation}\label{equivD}
	 D_{|\bk|}(\omega)\sim \varepsilon_0 \, \mu_0 \, \omega^{2 + 2N_e + 2 N_m} , \quad (|\omega| \rightarrow + \infty). 
 \end{equation}
This  leads to the following corollary on the diagonalizability of  $\bbA_{|\bk|,\perp}$.
\begin{Cor} \label{eq.crit-diag}
For $\bk\in \R^3\setminus \{ 0\}$, $\bbA_{|\bk|,\perp}$ is diagonalizable on $\bC^N_{\perp}$ if and only if the roots of the polynomials $D_{|\bk|}$ defined in \eqref{eq.disppolynom} are simple.
\end{Cor}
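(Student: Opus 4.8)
The plan is to deduce the corollary directly from Proposition~\ref{Prop.spec} by a dimension count. First I would record the two elementary facts that drive the bookkeeping. Since $\bC = \bbC^3$ and $\bC_{\perp} = \bC_{\perp,\be_3} = \{\be_3\}^{\perp}$ is two-dimensional, the space $\bC_{\perp}^N$ on which $\bbA_{|\bk|,\perp}$ acts has complex dimension $2N$. On the other hand, by \eqref{eq.disppolynom}--\eqref{equivD} the polynomial $D_{|\bk|}$ has degree exactly $N$ (its leading coefficient $\varepsilon_0\,\mu_0$ never vanishes), so it has $N$ roots counted with multiplicity, and hence its number of \emph{distinct} roots equals $N$ if and only if all of its roots are simple.

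Second, I would identify $\sigma(\bbA_{|\bk|,\perp})$ with the set of roots of $D_{|\bk|}$. Proposition~\ref{Prop.spec} gives $\sigma(\bbA_{|\bk|,\perp}) = S(|\bk|)$, and by \eqref{eq.disppolynom} one has $S(|\bk|) = \{\omega \in \bbC \setminus \mathcal{P} \mid D_{|\bk|}(\omega) = 0\}$. The one point needing a short check is that no root of $D_{|\bk|}$ lies in $\mathcal{P}$: if $\omega_0 \in \mathcal{P}_e$ then $Q_e(\omega_0) = 0$, while $\omega_0 \neq 0$ (since $q_{e,j}(0) = -\omega_{e,j}^2 \neq 0$) and $P_e(\omega_0) \neq 0$ (as $(\mathrm{H}_1)$ makes \eqref{eq.representationirreduc} irreducible), so $D_{|\bk|}(\omega_0) = \varepsilon_0\,\mu_0\,\omega_0^2\,P_e(\omega_0)\,P_m(\omega_0)$ vanishes only if $\omega_0 \in \mathcal{Z}_m \cap \mathcal{P}_e$, which is excluded by $(\mathrm{H}_2)$; the case $\omega_0 \in \mathcal{P}_m$ is symmetric. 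Thus $\sigma(\bbA_{|\bk|,\perp})$ equals the set of distinct roots of $D_{|\bk|}$.

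Finally I would invoke the standard criterion that a linear map on a finite-dimensional space is diagonalizable if and only if the geometric multiplicities of its eigenvalues sum to the dimension of the space. By the second assertion of Proposition~\ref{Prop.spec} every eigenvalue of $\bbA_{|\bk|,\perp}$ has geometric multiplicity exactly $2$, so if $r$ denotes the number of distinct eigenvalues the geometric multiplicities sum to $2r$. Hence $\bbA_{|\bk|,\perp}$ is diagonalizable on $\bC_{\perp}^N$ if and only if $2r = 2N$, i.e. $r = N$; by the second paragraph $r$ is the number of distinct roots of $D_{|\bk|}$, and by the first paragraph $r = N$ is equivalent to all roots of $D_{|\bk|}$ being simple, which is exactly the claim.

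There is no real obstacle here, since the heavy lifting is already contained in Proposition~\ref{Prop.spec}, whose purely computational proof is deferred to the appendix. The only step requiring a little attention is the verification that the roots of $D_{|\bk|}$ genuinely exhaust $\sigma(\bbA_{|\bk|,\perp})$, i.e. that none of them is absorbed by a pole of $\varepsilon$ or $\mu$ --- precisely the place where the irreducibility hypotheses $(\mathrm{H}_1)$ and $(\mathrm{H}_2)$ enter.
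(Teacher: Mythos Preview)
Your proof is correct and follows the same dimension-counting argument as the paper, invoking Proposition~\ref{Prop.spec} for the fact that every eigenvalue has geometric multiplicity $2$ and comparing $2r$ with $\dim \bC_{\perp}^N = 2N$. You are in fact more careful than the paper's own write-up, since you spell out explicitly (using $(\mathrm{H}_1)$ and $(\mathrm{H}_2)$) why no root of $D_{|\bk|}$ can lie in $\mathcal{P}$, a point the paper leaves implicit in its appeal to \eqref{eq.disppolynom}.
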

\begin{proof}
By Proposition \ref{Prop.spec} and relation \eqref{eq.disppolynom},  the eigenvalues of $\bbA_{|\bk|,\perp}$ are the solutions the polynomial equation $D_{|\bk|}(\omega)=0$  of degree $N$. Furthermore,  each distinct solution of this equation is an eigenvalue of geometric multiplicity $2$.  As the space $\bC^N_{\perp}$  is of dimension $2N$, see \eqref{trucmuche} the result follows immediately from a simple argument  of dimension.
\end{proof}
\noindent We end this section with two paragraphs: one  on the poles $\mathcal{P}$ and one on the zeros $\mathcal{Z}\cup \{ 0\}$ of the rational function $\mathcal{D}$ associated to the dispersion relation. Indeed (as we will see in the Section \ref{sec-main-lines-analysis} and in the parts \ref{sec.HF} and \ref{sec.LF}) poles and zeros play a key roles for the asymptotics of eigenvalues of $\bbA _{|\bk|}$ respectively for $|\bk|\gg 1$ and $|\bk|\ll 1$.
\subsubsection{Properties of the poles of  the rational function $\mathcal{D}$}\label{sec_dispersion-poles}
As we saw in Section \ref{sec-spec-prop}, the elements of $\mathcal{P}$  are exactly the zeros of the denominator of the rational function $\mathcal{D}$, that is the polynomial of $Q_e Q_m$ of degree $2(N_e+N_m)$.
Thus, $\mathcal{P}$ contains $2(N_e+N_m)$ elements (counted with multiplicity) and this set is invariant by the transformation $\omega \mapsto -\overline{\omega}$  (see Remark \ref{Rem-roots}).\\ [12pt]
The multiplicity $\mathfrak{m}_p$  of a pole  $p\in \mathcal{P}$ can not take any arbitrary value. Indeed,  the elements of $\mathcal{P}$ are precisely  (see Section \ref{sec-spec-prop}) the  roots of the $2$-nd order polynomials $q_{e,j}$ or $q_{m,\ell}$ for $j\in \{ 1,\ldots, N_e\}$ and $\ell\in \{ 1,\ldots, N_m\}$. Thus, $\mathcal{P}\subset \overline{\bbC^-}$ (see Remark \ref{Rem-roots}) and  one has:
\begin{itemize}
\item  If  $p\notin  \rmi \, \R^{-,*}$ then $\mathfrak{m}_p=1$ if $p\notin \calP_e\cap \calP_m$ or  $\mathfrak{m}_p=2$  if $\omega\in \calP_e\cap \calP_m$,
\item Else  if $p \in \rmi \, \R^{-,*}$, then by $(\mathrm{H}_1)$,  $\mathfrak{m}_p\in \{1,2\}$ if $p\notin \calP_e\cap \calP_m$ or $\mathfrak{m}_p\in \{2,3,4\}$ if $p \in \calP_e\cap \calP_m$.
\end{itemize}
We will see  in Sections \ref{sec-main-lines-analysis} and  \ref{sec.estmHF} that the important poles for our  analysis for $|\bk|\gg1 $ are the ones that lie on  the real axis. These poles are associated to polynomials $q_{e,j}$ or $q_{m,\ell}$ for which  $\alpha_{e,j}=0$   or $\alpha_{m,\ell}=0$.  Thus, they are of the form $p=\pm \, \omega_{e,\ell}\in \calP_e$  or $p=\pm \, \omega_{m,\ell} \in \calP_m$ with $\omega_{e,j}, \, \omega_{m,\ell}>0$ and have  multiplicity $\mathfrak{m}_p=1$ if $p\notin \calP_e\cap \calP_m$ or $\mathfrak{m}_p=2$ if $p\in \calP_e\cap \calP_m$.

\subsubsection{Properties of the zeros of the rational function  $\mathcal{D}$}\label{sec_dispersion-zeros}
As we saw in Section \ref{sec-spec-prop}, the elements of $\mathcal{Z}\cup \{0 \} $  are the zeros   of the rational function $\mathcal{D}$, that is the zeros of the polynomial $\omega^2 P_e P_m$ of degree $N=2(N_e+N_m)+2$.
Hence $\mathcal{Z}\cup \{0 \} $ contains $N$ elements (counted with multiplicity).   $\mathcal{D}$ is defined as the product of the two (non-constant) rational  Herglotz functions $\omega \mapsto \omega \varepsilon(\omega)$ and $\omega \mapsto \omega \mu(\omega)$  which satisfy  (using \eqref{eq.permmitivity-permeabiity})  $\operatorname{Im}(\omega \varepsilon(\omega))>0$ and  $\operatorname{Im}(\omega \mu(\omega))>0$  on $\bbC^+$.  Thus, these functions  could not vanish in the upper-half plane $\bbC^+$ (indeed this property holds more generally for any non-constant Herglotz function  as a consequence of the open mapping theorem for analytic functions, see for e.g.  \cite{cas-mil-17})). Therefore, $\mathcal{Z}\cup \{0 \} $ is included in $\overline{\bbC^-}$. Furthermore, as $\varepsilon(-\overline{\omega})=\overline{\varepsilon(\omega)}$ and $\mu(-\overline{\omega})=\overline{\mu(\omega)}$, this set  is  also invariant by the transformation $\omega\mapsto -\overline{\omega}$.\\
Concerning the multiplicity  $\mathfrak{m}_{z}$ of a zero $z\in \mathcal{Z}\cup \{0 \} $, one observes that:
\begin{itemize}
\item  $0$ has a multiplicity $\mathfrak{m}_{0}=2$  since (by   \eqref{hypomegabis} and  \eqref{eq.permmitivity-permeabiity})  in the vicinity of $0$: 
\begin{equation}\label{eq.equivalentzero}
\mathcal{D}(\omega)\sim \omega^2  \varepsilon(0) \mu(0), \;  \varepsilon(0)=\varepsilon_0 \Big( 1+ \sum_{j=1}^{N_e} \frac{\Omega_{e,j}^2}{\omega_{e,j}^2}\Big) >0, \; \mu(0)=\mu_0 \Big( 1+ \sum_{\ell=1}^{N_m} \frac{\Omega_{m,\ell}^2}{\omega_{m,\ell}^2}\Big)> 0 .
\end{equation}
\item If $z\notin \rmi \R^{-}$, then $-\overline{z}$ is  a distinct zero with the same multiplicity $\mathfrak{m}_z$. 
\item 
From the property \eqref{eq.consequenceWD} of $\varepsilon(\omega)$ and $\mu(\omega)$, one immediately sees that  
\begin{enumerate}
\item If there exist two indices $j_0$ and $\ell_0$ such that $\alpha_{e,j_0}>0$ and $\alpha_{m,\ell_0}>0$ then $\mathcal{D}$ has no non-zero real zeros, i.e. $\R^*\cap \mathcal{Z}=\varnothing$.
\item  If not, either all $\alpha_{e,j}$ vanish in which case $\R^*\cap \mathcal{Z}=\mathcal{Z}_e$, either all $\alpha_{m,\ell}$ vanish in which case $\R^*\cap \mathcal{Z}=\mathcal{Z}_m$. Moreover,
these zeros  $z\in \mathcal{Z}_e$ (resp. $\mathcal{Z}_m$) have multiplicity $\mathfrak{m}_{z}=1$ (this is easily deduced from the graph  of the function $\omega \in \R\setminus \calP_e \mapsto \varepsilon(\omega)$ or $\omega \in \R\setminus \calP_m \mapsto \mu(\omega)$ given by \eqref{eq.permmitivity-permeabiity}).
\end{enumerate}
We point out that the second scenario occurs in particular in the critical configurations  described by  Definition  \ref{Critical_cases}.
\end{itemize}
We will see  in Sections \ref{sec-main-lines-analysis} and  \ref{sec.LF} that only the real zeros $\R\cap (\mathcal{Z}\cup \{ 0\})$ can contribute  at the main order to the  asymptotic  of  the Fourier-components  of $|\bU(k,t)|$ for $|\bk|\ll 1$.

\subsection{Main lines of the analysis}\label{sec-main-lines-analysis}
The strategy for proving Theorem \ref{thm_Lorentz} is quite clear and simple:
\begin{itemize}
	\item[(i)] For each $\bk \in \R^3\setminus \{ 0\}$, one estimates individually $\bbU(\bk,t)$ using formula \eqref{eq.refsolutionfourier}, or more precisely \eqref{eq.refsolutionfourier2}, via the estimation of the exponential $\rme^{-\rmi  \bbA_{|\bk|,\perp} \, t}$.
	\item [(ii)]  One gathers the above estimates to estimate the $L^2$-norms of the various fields ${\bf E}, \bH, ...$ via the norm $\|\bU(\cdot,t)\|_{\cal H}$ thanks to Plancherel's theorem.
\end{itemize}
For the first step, the key property is the fact that, as emphasized in Section \ref{sec_dispersion-1} (see \eqref{loc_spectrum}), the spectrum of $\bbA_{|\bk|,\perp}$ is included if the complex half-plane $\bbC^-$. As a consequence, each $|\bbU(\bk,t)|$  will decay exponentially to $0$ for large $t$. \\ [12pt]
	The reason why the exponential decay is lost for $\|\bU(\cdot,t)\|_{\cal H}$ and degenerates into a polynomial decay, is linked to the fact that the rate of decay of $|\bbU(\bk,t)|$   depends on $\bk$ and degenerates when $|\bk|$ tends to $0$ or $+\infty$. This decay rate is of course linked to the distance 
of $\sigma( \bbA_{|\bk|,\perp})$ to the real axis that can be deduced from the analysis of the dispersion relation ${\cal D}(\omega) = |\bk|^2, $ where ${\cal D}(\omega)=\omega^2 \, \varepsilon(\omega) \, \mu (\omega)$, that characterizes the eigenvalues of $\bbA_{|\bk|,\perp}$ (see Proposition \ref{Prop.spec}).
\begin{itemize}
	\item When $|\bk|$ is bounded from below and above, this distance is uniformly bounded from below by a  positive number which results into a uniform exponential decay of the corresponding $|\bbU(\bk,t)|$'s. 
	\item When $|\bk|$ tends to $0$ or $+\infty$, this distance tends to $0$ and obtaining sharp estimates of $|\bbU(\bk,t)|$ requires to analyse the asymptotic behaviour of the imaginary parts  $\operatorname{ Im }\, \omega(|\bk|)$ of those eigenvalues  $\omega(|\bk|)  \in  \sigma( \bbA_{|\bk|,\perp})$  whose distance to the real axis tends to 0. 
\end{itemize}
From this observation, it is natural to split the analysis into three steps depending on the values of the ``spatial frequency" $|\bk|$~: 
\begin{itemize}
	\item[(a)] For ``mid-range frequencies", typically  $k_- \leq |\bk| \leq k_+$ (with $k_->0$),  one will essentially abandon the spectral approach to the profit of standard techniques for ODE's combined with compactness arguments (the region $k_- \leq |\bk| \leq k_+ $ is compact). This will be detailed in Section \ref{sec_mid-frequencies}.
	\item[(b)] For ``high frequencies'', $|\bk|\geq k_+$, some eigenvalues can be arbitrarily close to the real axis when $|\bk|  \rightarrow + \infty$: considering the limit of the equation ${\cal D}(\omega) = |\bk|^2$ when $|\bk|  \rightarrow + \infty$ it is natural to look at where, for real $\omega$, the function ${\cal D}(\omega)$ tends to $+\infty$. This  only occurs in one of the following situations when 
	\begin{itemize} \item $\omega \rightarrow \pm \infty$ in which case ${\cal D}(\omega) \sim \omega^2/c^2$, so that one expects the existence of two branches of eigenvalues $\omega_\infty (|\bk|)$ and $- \, \overline{\omega}_\infty (|\bk|)$ where 
		$$
		\omega_\infty (|\bk|) = c \, |\bk| + o\big(|\bk|\big), \quad  |\bk|  \rightarrow + \infty,
		$$
		where $c:=\sqrt{\varepsilon_0\,\mu_0}^{-1}$ is the speed of light in the vacuum. 		
	\item $\omega \rightarrow \pm \, \omega_{\nu,q}$  with $\alpha_{\nu,q}= 0$, $\nu = e$ or $m$ where $\omega_{\nu,q}$ is a real pole of $\mathcal{D}$ of multiplicity 1 if $\omega_{\nu,q}\notin \calP_e\cap \calP_m$ or 2 if  $\omega_{\nu,q}\in \calP_e\cap \calP_m$ (see Section  \ref{sec_dispersion-poles}).\\ [10pt]  If  $\omega_{\nu,q}\notin \calP_e\cap \calP_m$, one  expects the existence of two branches of eigenvalues 
	$\omega_{\nu,q}(|\bk|)$ and $- \, \overline{\omega}_{\nu,q}(|\bk|)$ such that 
		$$
	\omega_{\nu,q}(|\bk|) = \omega_{\nu,q}+ o(1), \quad  |\bk|  \rightarrow + \infty.
	$$
	 Oppositely, if  $\omega_{e,q_1}=\omega_{m,q_2}\in \calP_e\cap \calP_m$, one expects  the existence of four branches of eigenvalues $\omega_{e,q_1}(|\bk|) $, $-\overline{\omega}_{e,q_1}(|\bk|) $ , $\omega_{m,q_2}(|\bk|) $, $-\overline{\omega}_{m,q_2}(|\bk|) $  such that   $\omega_{e,q_1}(|\bk|) $ and $\omega_{m,q_2}(|\bk|)$ tends to   $\omega_{e,q_1}$ as $|\bk| \rightarrow + \infty$.
\end{itemize}
In the above cases, a more precise analysis of the asymptotic behaviour $\operatorname{Im}\, \omega_\infty (|\bk|)$ and  $\operatorname{Im}\,\omega_{\nu,q} (|\bk|)$ when $|\bk|  \rightarrow + \infty$ plays a crucial role.
	\item [(c)]  For ``low frequencies", $0< |\bk| \leq k_- $, again some eigenvalues can be arbitrarily close to the real axis when $|\bk|  \rightarrow 0$ : considering the limit of the equation ${\cal D}(\omega) = |\bk|^2$ when $|\bk|  \rightarrow 0$ it is natural to look at where, for real $\omega$, the function ${\cal D}(\omega)$ vanishes. This  only occurs in one of the following situations when 
	\begin{itemize} \item $\omega \rightarrow 0$ in which case by \eqref {eq.equivalentzero}:  ${\cal D}(\omega) \sim \omega^2/c_0^2$ with $c_0:=(\varepsilon(0)\, \mu(0))^{-1/2}$. Thus, one expects the existence of two branches of eigenvalues $\omega_0 (|\bk|)$  and $- \, \overline{\omega}_0 (|\bk|)$ such that 
		$$
		\omega_0 (|\bk|) =c_0 \,|\bk|+ o(|\bk|), \quad  |\bk|  \rightarrow 0.
		$$
		\item  If we are in the second scenario  described in the Section  \ref{sec_dispersion-zeros},  non-zero real zeros  exist and are of multiplicity $\mathfrak{m}_z=1$. Thus, in this case, one has  $\omega \rightarrow  \pm z_{\nu}$, for  $\nu = e$ or $m$, where $z_{\nu}$ is a zero of ${\cal D}(\omega)$. One then expects the existence of two branches of eigenvalues 
		$\omega_{z_\nu}(|\bk|)$ and $-\overline{\omega}_{z_\nu}(|\bk|)$ such that 
		$$
		\omega_{z_\nu} (|\bk|) = z_{\nu} + o(1), \quad  |\bk|  \rightarrow 0.
		$$
		\end{itemize}
In the above cases, a more precise  analysis of the asymptotic behaviour of $\operatorname{Im} \, \omega_0(|\bk|)$ and  $\operatorname{Im}\,	\omega_{z_\nu} (|\bk|)$ when $|\bk|  \rightarrow 0$ plays a crucial role. \\ [12pt]
The analysis of point $(b)$ will explain why the polynomial stability is limited by the Sobolev regularity of the initial data with the first  term in the right hand sides of the estimates \eqref{polynomial_decayncr} and \eqref{polynomial_decaycr}, while the analysis of point $(c)$ will explain  why this polynomial stability in the  second term in the right hand sides of  \eqref{polynomial_decayncr} and \eqref{polynomial_decaycr} is related to the low frequency behavior of the Fourier components of the solution (and thus involved  naturally the spaces $\boldsymbol{\cal  L}_p (\R^3)^N$.)
\end{itemize}
Let us mention that, since the matrices $ \bbA_{|\bk|,\perp}$ are not normal, the estimate of $\rme^{-\rmi  \bbA_{|\bk|,\perp} \, t}$ in the regions $|\bk| \geq k_+$ and $|\bk| \leq k_-$ cannot be reduced to the study of their eigenvalues. That is why we shall complete the analysis by proving that
\begin{itemize} 
	\item for $k_-$ and $k_+$ well chosen, the matrices $ \bbA_{|\bk|,\perp}$ are diagonalizable in the above regions,
	\item the associated spectral projectors can be bounded uniformly in  $|\bk|$ in each region.
	\end{itemize}

\section{Asymptotic analysis for large spatial frequencies $|\bk|\gg 1$}\label{sec.HF}
As this section is the longest of the article, its seems useful to describe its structure. It is made of three main subsections:
\begin{itemize} 
	\item Section \ref{sec.HFdispcurv}: Asymptotics of dispersion curves for $|\bk| \gg 1$.
	\item  Section  \ref{sec-decomp-sol-inf}: Spectral decomposition of the solution for $|\bk| \gg 1$.
	\item  Section \ref{sec.estmHF}: Large time estimate of $\bbU(\bk,t )$ for $|\bk| \gg 1$.
\end{itemize}
Section \ref{sec.estmHF} puts together the results of Sections  \ref{sec.HFdispcurv} and \ref{sec-decomp-sol-inf}. It is decomposed into subsections 
dedicated to the estimate each of the components of the solution issued from Section \ref{sec-decomp-sol-inf}.\\ [12pt]
From the technical point of view, the proof of the asymptotic expansions will be based on a lemma proved in  the appendix \ref{sec-asymptotic}, namely the Lemma \ref{Lem-implicte-function}, that can be seen as a kind of implicit function theorem for functions in the complex plane.  This will also be the case in section \ref{sec.estmLF}.
\subsection{Asymptotics of dispersion curves  for $|\bk|\gg 1$}\label{sec.HFdispcurv}
In this section, we focus on long time estimates of the high (spatial) frequency components  $\bbU(\bk,t)$ (see \eqref{eq.refsolutionfourier}) of the solution. As explained in the above paragraph, the decay of $\bbU(\bk,t)$ is related to the analysis of the solutions of the dispersion relation \eqref{eq.disp} for $|\bk|\gg 1$. Roughly speaking, as $|\bk|^2\to +\infty$ when $|\bk|\to+\infty$, the solutions of   \eqref{eq.disp} must satisfy $|\mathcal{D}(\omega)|\to +\infty$ as $|\bk|\to+\infty$. Thus, we observe two scenari:
either they diverge to $\infty$ or they converge to a pole $p\in \mathcal{P}_e\cup \mathcal{P}_m$ around which, If $p\in \mathcal{P}$ is a pole of multiplicity $\mathfrak{m}_p$, $\mathcal{D}$ can be rewritten as 
\begin{equation}\label{eq.polevoisinage}
	\mathcal{D}(\omega)=(\omega-p)^{-\mathfrak{m}_p} f(\omega) \mbox{ with $f$ analytic in a vicinity of $p$ and }  f(p)=A_{p}\neq 0.
\end{equation}
This leads to the following proposition.
\begin{Pro}\label{prop.dispersioncurves}
There exists $k_+>0$  such that for $|\bk|\geq k_+$, the solutions of the rational dispersion relation  \eqref{eq.disp} (or of its equivalent polynomial form \eqref{eq.disppolynom})  are all simple. These solutions form $N$ distinct  branches which  are $C^{\infty}$-smooth functions (with respect to  $|\bk|$) characterized by their asymptotic  expansion for large $|\bk|$. More precisely: 
\begin{itemize}
	\item For any $p \in {\cal P}$ with multiplicity $\mathfrak{m}_p$, for large enough $|\bk|$, there exists $\mathfrak{m}_p$ distinct branches  of solutions $ \omega_{p,n},\, n=1,\ldots, \mathfrak{m}_p$  of  \eqref{eq.disp}  satisfying
	\begin{equation}\label{eq.pole}
		\omega_{p,n}(|\bk|)=p+  a_{p,n} \;  |\bk|^{-\frac{2}{\mathfrak{m}_p}} \, (1+o(1)), \quad a_{p,n}=|A_p|^{1/\mathfrak{m}_p} \; \rme^{\rmi \frac{\theta_p}{\mathfrak{m}_p}}  \, \rme^{\frac{2 \, \rmi n\pi}{\mathfrak{m}_p}} \quad (|\bk| \to +\infty )
	\end{equation}
	where $A_p$ is defined in  \eqref{eq.polevoisinage} and $\theta_p\in (-\pi, \pi]$ is the  principal argument of $A_p$. 
\item There are $2$  distinct branches of solutions $\omega_{\pm \infty}$ of  \eqref{eq.disp}  that diverge to $\infty$ as
\begin{equation}\label{eq.pminfty}
\omega_{\pm \infty}(|\bk|)=\pm \, c \, |\bk| \, (1+o(1)) \quad  \mbox{ with } c=(\varepsilon_0\,\mu_0)^{-\frac{1}{2}}  \quad (|\bk| \to +\infty). 
\end{equation}
\end{itemize}
Moreover, there are no other solutions of \eqref{eq.disp} (for $|\bk|\geq k_+$) than those described above. 
\end{Pro}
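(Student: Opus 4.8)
\emph{Strategy.} The plan is to work with the equivalent polynomial form $D_{|\bk|}(\omega)=0$ of \eqref{eq.disppolynom}, a polynomial of degree exactly $N=2+2N_e+2N_m$ with dominant coefficient $\varepsilon_0\,\mu_0$, so that for every $|\bk|$ it has exactly $N$ roots counted with multiplicity. I would then construct $N$ pairwise distinct smooth branches carrying the announced asymptotics; once this is done, the degree count immediately forces each of them to be a simple root and forbids any further solution, so that it only remains to pick $k_+$ large enough for all the constructions and the distinctness to be simultaneously valid.

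\emph{The divergent branches.} Since $\varepsilon(\omega)\to\varepsilon_0$ and $\mu(\omega)\to\mu_0$ as $\omega\to\infty$, the function $\widetilde h(u):=c^2\,\varepsilon(1/u)\,\mu(1/u)$, with $c=(\varepsilon_0\mu_0)^{-1/2}$, is analytic near $u=0$ and satisfies $\widetilde h(0)=1$. Setting $\omega=\pm\, c\,|\bk|\,g$ and $\eta=1/|\bk|$, the dispersion relation \eqref{eq.disp} is equivalent to $g^2\,\widetilde h\big(\pm\,\eta/(c\,g)\big)=1$; at $\eta=0$ this reads $g^2=1$, and the left-hand side has non-vanishing $g$-derivative at $(g,\eta)=(1,0)$. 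Lemma \ref{Lem-implicte-function} then provides, for $\eta$ small, a branch $g=g_\pm(\eta)$ with $g_\pm(0)=1$, hence the two solutions $\omega_{\pm\infty}(|\bk|)=\pm\,c\,|\bk|\,g_\pm(1/|\bk|)=\pm\,c\,|\bk|\,(1+o(1))$, which depend on $|\bk|$ in a $C^\infty$ way for $|\bk|$ large.

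\emph{The pole branches.} Fix $p\in\mathcal{P}$ of multiplicity $\mathfrak{m}_p$ and use \eqref{eq.polevoisinage}: near $p$ the relation $\mathcal{D}(\omega)=|\bk|^2$ reads $(\omega-p)^{\mathfrak{m}_p}=f(\omega)/|\bk|^2$ with $f$ analytic near $p$ and $f(p)=A_p\neq 0$. The rescaling $\omega=p+|\bk|^{-2/\mathfrak{m}_p}\,\xi$ turns this into $\xi^{\mathfrak{m}_p}=f\big(p+|\bk|^{-2/\mathfrak{m}_p}\,\xi\big)$, whose limit equation $\xi^{\mathfrak{m}_p}=A_p$ has the $\mathfrak{m}_p$ simple roots $a_{p,n}=|A_p|^{1/\mathfrak{m}_p}\,\rme^{\rmi\theta_p/\mathfrak{m}_p}\,\rme^{2\rmi n\pi/\mathfrak{m}_p}$, $n=1,\dots,\mathfrak{m}_p$, at each of which the $\xi$-derivative $\mathfrak{m}_p\,\xi^{\mathfrak{m}_p-1}$ is non-zero. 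Applying Lemma \ref{Lem-implicte-function} at each $(a_{p,n},0)$ gives, for $|\bk|$ large, $\mathfrak{m}_p$ branches $\xi_{p,n}$ depending analytically on $|\bk|^{-2/\mathfrak{m}_p}$ with $\xi_{p,n}(0)=a_{p,n}$, i.e. $\omega_{p,n}(|\bk|)=p+a_{p,n}\,|\bk|^{-2/\mathfrak{m}_p}\,(1+o(1))$; composing with the smooth map $|\bk|\mapsto|\bk|^{-2/\mathfrak{m}_p}$ (for $|\bk|>0$) shows these branches are $C^\infty$ in $|\bk|$.

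\emph{Counting and conclusion.} This produces $2+\sum_{p\in\mathcal{P}}\mathfrak{m}_p=2+2(N_e+N_m)=N$ branches, using that $\mathcal{P}$ has $2(N_e+N_m)$ elements counted with multiplicity (Section \ref{sec_dispersion-poles}). For $|\bk|$ large they are pairwise distinct: the two divergent branches leave every compact set and behave like $\pm\,c\,|\bk|$, hence differ from each other and from the bounded pole branches; branches attached to $p\neq p'$ are separated because $|p-p'|>0$; and branches attached to the same $p$ are separated because $a_{p,n}\neq a_{p,n'}$ for $n\neq n'$. Choosing $k_+$ so that all these (finitely many) constructions and separations hold simultaneously, we obtain $N$ distinct roots of the degree-$N$ polynomial $D_{|\bk|}$, whence all its roots are simple and there is no other solution of \eqref{eq.disp}. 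The delicate point is precisely the uniform treatment of the rescalings — in particular the fractional scaling $|\bk|^{-2/\mathfrak{m}_p}$ and the extraction of exactly $\mathfrak{m}_p$ branches from the multivalued $\mathfrak{m}_p$-th root — together with checking that a single $k_+$ makes the local constructions and the separations effective; Lemma \ref{Lem-implicte-function} is tailored to handle equations of this form.
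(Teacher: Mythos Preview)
Your proof is correct and follows the same overall strategy as the paper: local analysis near each pole and near infinity via implicit-function techniques, followed by the degree count on $D_{|\bk|}$. The organisational difference lies in how Lemma~\ref{Lem-implicte-function} is invoked. The paper first transforms the problem so that the lemma applies \emph{verbatim}: for the pole branches it passes to $\mathcal{D}(\omega)^{-1}=|\bk|^{-2}$, turning $p$ into a zero of multiplicity~$\mathfrak{m}_p$ of an analytic function, and for the divergent branches it sets $\xi=1/\omega$ so that $\xi=0$ becomes a double zero of $\xi^2\,\varepsilon(1/\xi)^{-1}\mu(1/\xi)^{-1}$; in both cases the lemma then delivers all $\mathfrak{m}$ branches at once with $\zeta=|\bk|^{-2/\mathfrak{m}_p}$ or $\zeta=|\bk|^{-1}$. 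Your parameter-dependent rescalings $\omega=p+|\bk|^{-2/\mathfrak{m}_p}\xi$ and $\omega=\pm c|\bk|\,g$ instead produce two-variable implicit equations ($\xi^{\mathfrak{m}_p}=f(p+s\xi)$ with $s=|\bk|^{-2/\mathfrak{m}_p}$, and $g^2\widetilde h(\pm\eta/(cg))=1$) that do not match the hypothesis of Lemma~\ref{Lem-implicte-function} as stated; what you actually need at each simple root $a_{p,n}$ or at $g=1$ is the ordinary analytic implicit function theorem --- which is precisely what the \emph{proof} of Lemma~\ref{Lem-implicte-function} uses internally. So you are essentially inlining that lemma rather than citing it. Either replace your references to Lemma~\ref{Lem-implicte-function} by the implicit function theorem, or recast your equations into the form $(\omega-z)^{\mathfrak{m}}g(\omega)=\zeta^{\mathfrak{m}}$ as the paper does.
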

\begin{proof}
\noindent {\bf Step 1: construction of the branches of solutions $\omega_{p,n}(|\bk|)$.} \\[4pt]
Let $p\in\mathcal{P}$ a pole of multiplicity $\mathfrak{m}_p$. Then, the rational function $\mathcal{D}$ can be factorized as in 
\eqref{eq.polevoisinage}. Then $p$ is a zero of multiplicity $\mathfrak{m}_p$ of the  function $\omega\mapsto \mathcal{D}(\omega)^{-1}= (\omega-p)^{\mathfrak{m}_p} f(\omega)^{-1}$ which is analytic in an open neighbourhood of $p$ and satisfies $f(p)^{-1}=A_p^{-1}\neq 0$.  Moreover, for $ |\bk|\neq 0$, solving  the dispersion relation: 
$$
  \mathcal{D}(\omega)=|\bk|^2 \quad  \mbox{ is equivalent to solving the equation} \quad   \mathcal{D}(\omega)^{-1}=|\bk|^{-2}=\big(|\bk|^{-\frac{2}{\mathfrak{m}_p}}\big)^{\mathfrak{m}_p}.
$$ 
Then, by applying the Lemma \ref{Lem-implicte-function} of the appendix \ref{sec-asymptotic} with
$$
\mathcal{G}(\omega) =  \mathcal{D}(\omega)^{-1}, \quad z=p , \quad g(\omega)  =f(\omega)^{-1}, \quad \mathfrak{m}=\mathfrak{m}_p, \quad  A=A_p^{-1} \mbox{ and  } \zeta=|\bk|^{-2/\mathfrak{m}_p},$$
we deduce for $|\bk|$ large enough the existence of $\mathfrak{m}_p$  distinct  branches of solutions: $|\bk|\mapsto \omega_{p,n}(|\bk|)$  of the equation $\mathcal{D}(\omega)=|\bk|^2$ which are  $C^{\infty}$ functions with respect to $|\bk|$ and satisfy \eqref{eq.pole}.\\ [12pt]  
\noindent {\bf Step 2: construction of the $2$ distinct branches of solutions $\omega_{\pm \infty}(|\bk|)$.} \\[4pt]
To reduce ourselves to the application of Lemma \ref{Lem-implicte-function} as in step 1, the trick consists in 
saying that $\mathcal{D}(\omega)=\omega^2 \varepsilon(\omega)\, \mu(\omega)=|\bk|^2$ is equivalent to  $\mathcal{D}(\omega)^{-1}=\omega^{-2}\varepsilon(\omega)^{-1}\, \mu(\omega)^{-1}=|\bk|^{-2}$. Then we introduce the new unknown  $\xi = 1/\omega$,  so that $|\omega| \rightarrow + \infty \Leftrightarrow \xi \rightarrow 0$ and 
\begin{equation}\label{eq.dispers}
  \mathcal{D}(\omega)=\omega^2 \varepsilon(\omega)\, \mu(\omega)=|\bk|^2 \quad \Longleftrightarrow  \quad \xi^2 \, \varepsilon(1/\xi)^{-1} \, \mu(1/\xi)^{-1}=|\bk|^{-2},   \mbox{ modulo } \xi=1/\omega.
\end{equation}
Then we introduce the rational function $\mathcal{G}(\xi)$: 
$$\mathcal{G}(\xi) :=  \xi^2 \, g(\xi), \quad g(\xi) := \varepsilon(1/\xi)^{-1} \, \mu(1/\xi)^{-1} \quad \mbox{ well defined for } \xi \in  \bbC \setminus \big\{ 1/z, z \in {\cal Z} \big\}$$
after having remarked that  $g(\xi)$ could be extended analytically at $\xi = 0$ via
$$  g(0)=\displaystyle \lim_{\xi\to 0} \varepsilon(\xi^{-1})^{-1} \mu(\xi^{-1})^{-1}= (\varepsilon_0 \mu_0)^{-1} =c^2\neq 0.$$ 
Thus,  we can now apply the Lemma \ref{Lem-implicte-function}, replacing $\omega$ par $\xi$,  with 
$$
z=0 , \quad g(\xi) := \varepsilon(1/\xi)^{-1} \, \mu(1/\xi)^{-1}, \quad \mathfrak{m}=2, \quad  A=g(0) = c^2 \mbox{ and  } \zeta=|\bk|^{-1}.$$
From Lemma \ref{Lem-implicte-function}, we deduce the existence of  two distinct analytic functions $\zeta \mapsto \xi_{\pm}(\zeta)$, defined in the vicinity of $0$, such that ${\cal G}(\xi_{\pm}(\zeta))=\zeta^2$ and such that
\begin{equation}\label{eq.asymptinvbranchinfty}
\xi_{\pm}(\zeta)=\pm \, c^{-1} \zeta \, \big(1+o(1)\big), \ \mbox{ as }\zeta \to 0.
\end{equation}
Thus, setting $\omega_{\pm \infty}(|\bk|) = \xi_{\pm}\big(|\bk|^{-1}\big)^{-1}$, we construct two branches of solutions of  $\mathcal{D}(\omega)=|\bk|^2$ which admit  by \eqref{eq.asymptinvbranchinfty} the asymptotic expansion \eqref{eq.pminfty}.\\[12pt] 
\noindent 
\noindent {\bf Step 3: conclusion.} In step 1, since the sum of the $\mathfrak{m}_p$'s over $p \in {\cal P}$ is equal to $2N_e+2N_m$, we have constructed for  $|\bk|$ large enough, $2N_e+2N_m$ solutions, namely $\big\{ \omega_{p,n}(|\bk|), p \in {\cal P},   n \leq \mathfrak{m}_p\}$ which are all distinct due to the asymptotics \eqref{eq.pole}. \\ [12pt]
In step 2, we have constructed for $|\bk|$ large enough, 2 additional solutions $\omega_{\pm \infty}(|\bk|)$ that are distinct thanks to  \eqref{eq.pminfty}. \\  [12pt] 
From both asymptotics \eqref{eq.pole}  and \eqref{eq.pminfty}, none of this two solutions can coincide with any of those from step 1. Therefore, with $\big\{ \omega_{p,n}(|\bk|), p \in {\cal P}, n \leq \mathfrak{m}_p\} \cup \big\{ \omega_{\pm \infty}(|\bk|\big\}$, we have constructed $2N_e+2N_m +2$  distinct solutions of \eqref{eq.disp}. Since \eqref{eq.disp}  is equivalent to a  polynomial equation of degree $2N_e+2N_m +2$, cf. \eqref{eq.disppolynom},  there are no other solutions. 
\end{proof}
\subsection{Spectral decomposition of the solution for $|\bk| \gg 1$ }\label{sec-decomp-sol-inf}
 In the physics literature, the solutions  of  the dispersion relation \eqref{eq.disp}     $|\bk| \to \omega_{p,n}(|\bk|) $ and $|\bk| \to \omega_{\pm \infty}(|\bk|)$  (given here  for large $|\bk|$ by Proposition \ref{prop.dispersioncurves}) are referred to as the dispersion curves. The asymptotics  \eqref{eq.pole} and \eqref{eq.pminfty} of these curves show that  they do not  cross each other for $|\bk| \gg 1$. Thus, combining Proposition \ref{prop.dispersioncurves} and Corollary \ref{eq.crit-diag} immediately yields the following property of the operator $ \bbA_{|\bk|,\perp}$.
\begin{Cor}\label{eq.crit-diag2}
There exists $k_+>0$  such that for $|\bk|\geq k_+$, $\bbA_{|\bk|,\perp}$ is diagonalizable on $\bC^N_{\perp}$.
\end{Cor}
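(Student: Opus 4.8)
The plan is to derive Corollary \ref{eq.crit-diag2} directly from Proposition \ref{prop.dispersioncurves} together with the diagonalizability criterion of Corollary \ref{eq.crit-diag}. By Corollary \ref{eq.crit-diag}, for a fixed $\bk \in \R^3 \setminus \{0\}$, the operator $\bbA_{|\bk|,\perp}$ is diagonalizable on $\bC^N_\perp$ precisely when the polynomial $D_{|\bk|}$ defined in \eqref{eq.disppolynom} has only simple roots. So the whole task reduces to exhibiting a threshold $k_+ > 0$ such that $D_{|\bk|}$ has $N$ distinct roots for every $|\bk| \geq k_+$.

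\medskip

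\noindent First I would invoke Proposition \ref{prop.dispersioncurves}, which already gives exactly this: it asserts the existence of $k_+ > 0$ such that for all $|\bk| \geq k_+$, the solutions of the rational dispersion relation \eqref{eq.disp} — equivalently, by the equivalence recorded in \eqref{eq.disppolynom}, the roots of the polynomial $D_{|\bk|}$ — are all simple, and moreover that there are exactly $N = 2 + 2N_e + 2N_m$ of them, organized in the $N$ distinct branches $\omega_{p,n}(|\bk|)$ and $\omega_{\pm\infty}(|\bk|)$. The only subtlety to address is that Proposition \ref{prop.dispersioncurves} speaks of solutions $\omega \in \bbC \setminus \mathcal{P}$ of the \emph{rational} equation ${\cal D}(\omega) = |\bk|^2$, whereas Corollary \ref{eq.crit-diag} is phrased in terms of roots of the \emph{polynomial} $D_{|\bk|}$; but the equivalence \eqref{eq.disppolynom} together with the fact (noted after \eqref{eq.formulaQ_e} and using $(\mathrm{H}_1)$, $(\mathrm{H}_2)$) that the representation of ${\cal D}$ is irreducible means no root of $D_{|\bk|}$ can lie in $\mathcal{P}$, so the two sets of roots coincide and their multiplicities agree.

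\medskip

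\noindent Hence the argument is: take the $k_+$ furnished by Proposition \ref{prop.dispersioncurves}; for $|\bk| \geq k_+$ the polynomial $D_{|\bk|}$ of degree $N$ has $N$ distinct roots, hence only simple roots; Corollary \ref{eq.crit-diag} then gives that $\bbA_{|\bk|,\perp}$ is diagonalizable on $\bC^N_\perp$. There is essentially no obstacle here — the real work was done in proving Proposition \ref{prop.dispersioncurves} (the asymptotic expansions \eqref{eq.pole} and \eqref{eq.pminfty}, which via the pairwise distinct leading behaviours guarantee the branches do not cross for $|\bk| \gg 1$). One may, if desired, enlarge $k_+$ slightly so that the same value of $k_+$ serves in later sections; but as stated the claim is an immediate corollary, and I would simply write the one-line deduction.
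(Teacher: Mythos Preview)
Your proposal is correct and matches the paper's own argument exactly: the paper simply states that combining Proposition \ref{prop.dispersioncurves} with Corollary \ref{eq.crit-diag} immediately yields the result, since the asymptotics \eqref{eq.pole} and \eqref{eq.pminfty} ensure the $N$ branches are distinct (hence the roots of $D_{|\bk|}$ are simple) for $|\bk|$ large. Your extra remark reconciling the rational and polynomial formulations via irreducibility is a nice clarification but not strictly needed, as Proposition \ref{prop.dispersioncurves} already phrases its conclusion in terms of both \eqref{eq.disp} and \eqref{eq.disppolynom}.
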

\noindent To express  the diagonal decomposition of the  solution $\bbU(\bk,t)$, we split (see Section \ref{sec_dispersion-poles}) the sets of poles  $\mathcal{P}\subset \overline{\bbC^-}$ in three disjoint subsets : $\mathcal{P}=\mathcal{P}_-\cup \mathcal{P}_s\cup \mathcal{P}_d$ with
$$
\mathcal{P}_-:=\mathcal{P} \cap \bbC^-, \quad   \mathcal{P}_s:=\{ p \in \mathcal{P} \cap \bbR \mid \mathfrak{m}_p=1\}, \ \mbox{ and } \   \mathcal{P}_d:=\{ p \in \mathcal{P} \cap \bbR \mid \mathfrak{m}_p=2\}.
$$
We point out that the weak dissipation condition  \eqref{WD} implies that at least one pole of $\mathcal{P}$ lies in $\bbC^-$, thus $\mathcal{P}_-\neq \varnothing$ whereas the strong dissipation condition \eqref{SD} implies that  
all poles lie in $\bbC^-$, that is $\mathcal{P}_-=\mathcal{P}$. From Section \ref{sec_dispersion-poles}, one  also  has that  $\mathcal{P}_d=\calP_e\cap \calP_m\cap \bbR.$ \\ [12pt]
Using  Propositions \ref{Prop.spec}  and \ref{prop.dispersioncurves}, we introduce the following partition of the spectrum of $\bbA_{|\bk|,\perp}$ for $|\bk|\geq k_+$:
\begin{equation} \label{eq.specdisp} 
\left| \begin{array}{lll}
\sigma(\bbA_{|\bk|, \perp})&=&\big\{ \omega_{\pm \infty}(|\bk|) \big\} \cup \big\{ \omega_{p}(|\bk|), p\in \mathcal{P}_s \big\} \cup \big\{ \omega_{p,r}(|\bk|), p\in \mathcal{P}_d, \, r\in \{1,2\}  \big\} \nonumber \\[12pt]
&\cup &  \big\{\omega_{p,n}(|\bk|),  \ p \in \mathcal{P}_-, \,  n =1,\ldots, \mathfrak{m}_p  \big \} ,
\end{array} \right. 
\end{equation} 
(where  we set $\omega_{p}(|\bk|):=\omega_{p,1}(|\bk|)$ for a real  pole $p\in \mathcal{P}_s$ of multiplicity $\mathfrak{m}_p=1$.) We refer to the figure  \ref{fig-disp-curv} for an illustration of the behavior of the dispersion curves for $|\bk|\gg 1$.
\begin{figure}[h!]
		\begin{center}
			\includegraphics[scale=0.32]{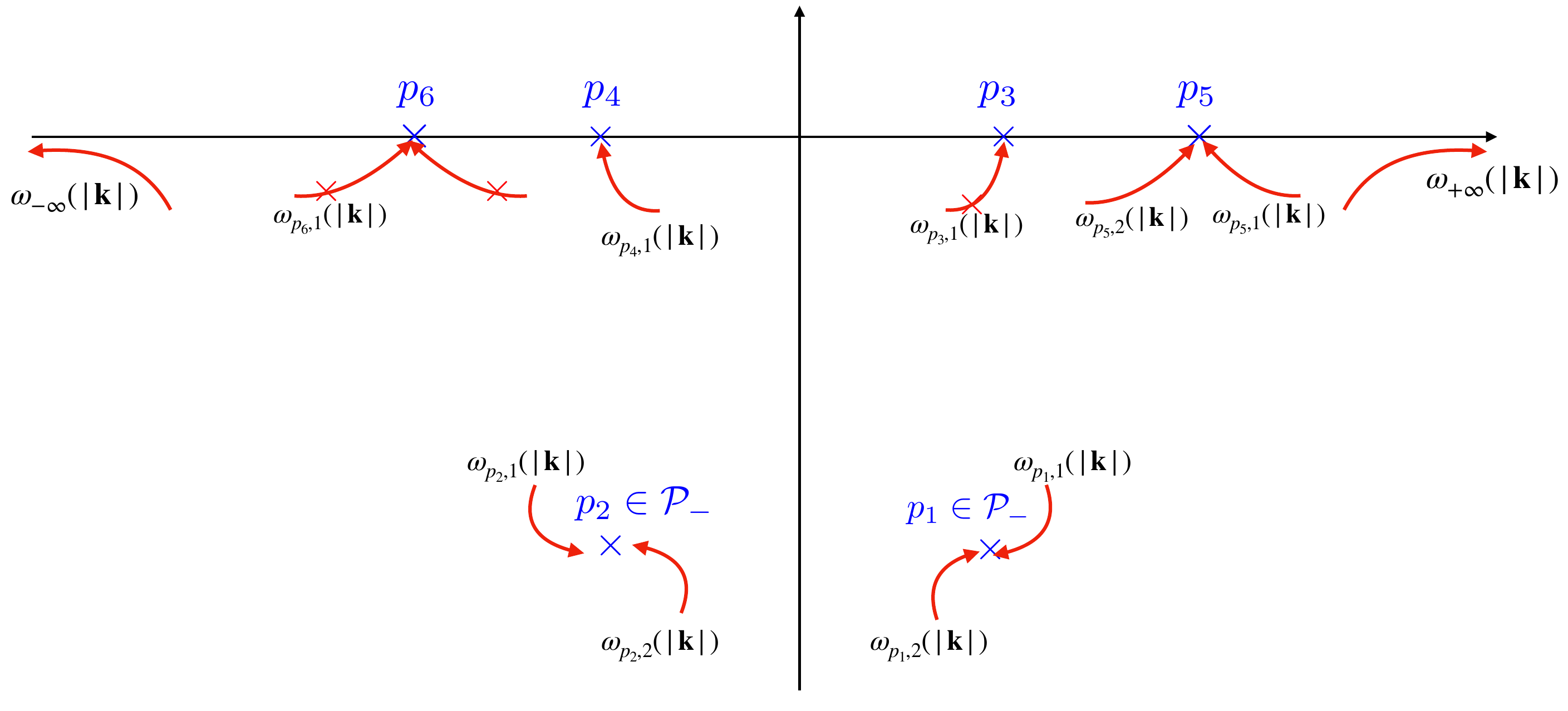}
		\end{center}
		\caption{Sketch of a configuration of  the dispersion curves   for large $|\bk|$ large in the case where $\mathcal{P}=\mathcal{P}_-\cup \mathcal{P}_s\cup \mathcal{P}_d$ with $\mathcal{P}_-=\{ p_1, p_2\}$, $\mathcal{P}_s=\{ p_3, p_4\}$ and $\mathcal{P}_d=\{ p_5, p_6\}$.  }
		\label{fig-disp-curv}	
\end{figure}
\noindent For $|\bk|\geq k_+$,  as $\bbA_{|\bk|, \perp}$ is diagonalizable by Corollary \ref{eq.crit-diag},  $\bC_{\perp}^N$ can be decomposed as
\begin{equation}\label{deq.decompCNperp}
 \bC_{\perp}^N=\bigoplus_{\pm} V_{|\bk|,\pm \infty}\oplus  \bigoplus_{p\in \mathcal{P}_{s}}V_{|\bk|,p} \oplus \bigoplus_{p\in \mathcal{P}_{d}} \mathop{\oplus}_{\substack {r=1}}^{2}V_{|\bk|,p,r}  \oplus \bigoplus_{p\in \mathcal{P}_{-}} \mathop{\oplus}_{\substack {n=1}}^{\mathfrak{m}_p}V_{|\bk|,p,n}
\end{equation}
\begin{equation}
\left\{ \begin{array}{llll}
V_{|\bk|,\pm \infty}=\operatorname{ker}\big( \bbA_{|\bk|, \perp} - \omega_{\pm \infty}(|\bk|)\, \mathrm{I}d\big), \quad & V_{|\bk|, p,n}=\operatorname{ker}\big( \bbA_{|\bk|, \perp} - \omega_{p,n}(|\bk|) \, \mathrm{I}d \big), \; n \leq \mathfrak{m}_p, \, \\[10pt]
 V_{|\bk|,p}= \operatorname{ker}\big( \bbA_{|\bk|, \perp} - \omega_{p}(|\bk|)\, \mathrm{I}d\big), &     V_{|\bk|, p,r}= \operatorname{ker}\big( \bbA_{|\bk|, \perp} - \omega_{p,r}(|\bk|)\, \mathrm{I}d \big),  \;  r=1, 2,
\end{array} \right.
\end{equation}
where  the above direct sums are (in general) non-orthogonal. Following these direct sums, one decomposes uniquely any vector $x\in  \bC_{\perp}^N$ as
 $$
\left| \;  \begin{array} {ll}
 x= \ds \sum_{\pm \infty } x_{ |\bk|, \pm \infty}+\sum_{p\in \mathcal{P}_{s}} x_{ |\bk|, p}+\sum_{p\in \mathcal{P}_{d}}\sum_{r=1}^2 x_{ |\bk|, p,r}  + \sum_{p\in  \mathcal{P}_{-}}  \sum_{n=1}^{\mathfrak{m}_p}x_{|\bk|, p, n}
\\ [18pt]
 x_{ |\bk|, \pm \infty} \in V_{|\bk|,\pm \infty}, \ x_{ |\bk|, p}\in V_{|\bk|,p}, \ x_{ |\bk|, p, r}\in V_{|\bk|,p,r}, \mbox{ and }\ x_{ |\bk|, p, n}\in V_{|\bk|,p,n}.
\end{array} \right. $$
Then, we  define  the spectral projectors   $\Pi_{p,n}(|\bk|)$, $p\in\mathcal{P}_-$ and $n\in \{ 1, \ldots, \mathfrak{m}_p\}$,  $\Pi_{\pm \infty}(|\bk|)$,   $\Pi_p(|\bk|)$ for $p\in \mathcal{P}_s$ and $\Pi_{p,r}(|\bk|)$ for $p\in \mathcal{P}_d$ and $r\in \{ 1,2\}$   associated respectively to the eigenvalues  $\omega_{p,n}(|\bk|),$ $\omega_{\pm \infty}(|\bk|)$, $\omega_p(|\bk|)$ and $\omega_{p,r}(|\bk|)$ by:
\begin{equation}\label{eq.projdef}
\left\{  \begin{array}{lll}\Pi_{\pm \infty}(|\bk|)(x)= x_{ |\bk|, \pm \infty}, \quad \Pi_{p,n}(|\bk|)(x)=x_{|\bk|,p,n},
\\ [10pt] \Pi_{p}(|\bk|)(x)=x_{|\bk|, p}, \quad \Pi_{p,r}(|\bk|)(x)=x_{|\bk|, p,r}. 
 \end{array}  \right. 
 \end{equation}
 From Proposition \ref{Prop.spec},  the geometric  multiplicity of each  eigenvalues in $\sigma(\bbA_{|\bk|,\perp})$ is two. Thus,  all the $\Pi_{p,n}(|\bk|)$, $\Pi_{\pm \infty}(|\bk|)$,   $\Pi_p(|\bk|)$ and $\Pi_{p,r}(|\bk|)$ are rank two projectors.  We emphasize  that the  dissipative operator $\bbA_{|\bk|,\perp}$ is not normal, thus its spectral projectors are not  orthogonal.
 
\noindent For $|\bk|\geq k_+$,  as $\bbA_{|\bk|, \perp}$ is diagonalizable (by Corollary \ref{eq.crit-diag}), one has
\begin{eqnarray*}
\bbA_{|\bk|,\perp}&=&\sum_{\pm }\omega_{\pm \infty}(|\bk|)  \Pi_{\pm \infty}(|\bk|) + \sum_{p\in \mathcal{P}_s}   \omega_{p}(|\bk|) \Pi_{p}(|\bk|)+\sum_{p\in \mathcal{P}_d} \sum_{r=1}^{2} \omega_{p,r}(|\bk|) \Pi_{p,r}(|\bk|)  \\
&& + \sum_{p\in \mathcal{P}_-}\sum_{n=1}^{\mathfrak{m}_p} \omega_{p,n}(|\bk|) \Pi_{p,n}(|\bk|).
\end{eqnarray*}
Thus,  for $|\bk|> k_+$, the solution $\bbU(\bk,t)$  given by \eqref{eq.refsolutionfourier2} can be expressed for all $ t \geq 0$    as

\begin{equation}\label{eq.decompositionfourterm}
\bbU(\bk,t)= \bbU_{\infty}(\bk,t)+\bbU_{s}(\bk,t)+\bbU_{d}(\bk,t) + \bbU_-(\bk,t)
\end{equation}
where
\begin{equation} \label{decompU}
\left\{ \begin{array}{llll}  
\bbU_{\infty}(\bk,t)&=& \ds \sum_{\pm } \rme^{-\rmi \, \omega_{\pm \infty}(|\bk|) \,t}  \mathcal{R}_{\bk}^* \, \Pi_{\pm \infty}(|\bk|)\, \mathcal{R}_{\bk} \,\bbU_0(\bk), & \quad(i)\\[19pt]
\bbU_{s}(\bk,t)&=& \ds \sum_{p\in \mathcal{P}_s}  \rme^{-\rmi \, \omega_{p}(|\bk|) \, t} \, \mathcal{R}_{\bk}^* \, \Pi_{p}(|\bk|) \mathcal{R}_{\bk}\,\bbU_0(\bk), &\quad(ii)\\[15pt]
\bbU_{d}(\bk,t)&=& \ds \sum_{p\in \mathcal{P}_d} \sum_{r=1}^{2}\rme^{-\rmi \omega_{p,r}(|\bk|) \, t}  \mathcal{R}_{\bk}^*\, \Pi_{p,r}(|\bk|) \,\mathcal{R}_{\bk} \,\bbU_0(\bk), &\quad(iii)\\[18pt]
\bbU_-(\bk,t)&= & \ds \sum_{p\in \mathcal{P}_-} \sum_{n=1}^{\mathfrak{m}_p} \rme^{-\rmi \, \omega_{p,n}(|\bk|) \,t} \, \mathcal{R}_{\bk}^* \, \Pi_{p,n}(|\bk|)\, \mathcal{R}_{\bk} \bbU_0(\bk) . &\quad (iv)
\end{array} \right. 
\end{equation}
\subsection{Estimates of $\bbU(\bk,t)$ for $|\bk| \gg 1$}\label{sec.estmHF}
In each subsection of this section, we shall provide a lower bound $k_+> 0$ for giving a sense to $|\bk| \gg 1$ via $|\bk| \geq k_+$. A priori, the value of $k_+$ will change from one section to the other but we can always choose a value for $k_+$ that is larger than its previous values. This convention will be adopted systematically without being explicitly mentioned.
\subsubsection{Orientation} \label{orientation} In what follows we are going to estimate successively, in Lemmas \ref{LemEstiinfty}, \ref{LemEstis} , \ref{LemEstid} and \ref{Estiminus}, each of the terms appearing in the decomposition \eqref{eq.decompositionfourterm}. \\ [12pt]
For  $\bbU_{\infty}(\bk,t), \bbU_{s}(\bk,t)$ and $\bbU_{d}(\bk,t)$, which will  be treated in Sections \ref{estiUinfty} to \ref{estiUd},
 we shall bound separately each of the terms of the sums  in \eqref{decompU}(i) to  \eqref{decompU}(iii). For each  eigenvalue  $\omega(|\bk|) \in \big\{  \omega_{\pm \infty}(|\bk|), \omega_{p}(|\bk|), \omega_{p,r}(|\bk|) \}$, we shall first 
 estimate in Lemmas \ref{LemProinfty}, \ref{LemPros}  and \ref{LemProd} the corresponding spectral projector $\Pi (|\bk|)$. Many approaches are possible. It appeared useful to use here  the expression of $\Pi (|\bk|)$ provided by the Riesz-Dunford functional calculus  (see  e.g.  \cite{Dun-88}, sections VII.1 and VII.3), in terms of a contour integral in the complex plane whose integrand involves the resolvent  $R_{|\bk|}(\omega)$ studied in Section \ref{sec-spec-prop}. This contour will be taken as a (positively oriented) circle ${\cal C}_{|\bk|}$ centered at  $\omega(|\bk|)$ 
 \begin{equation} \label{defcont}
 	{\cal C}_{|\bk|} := \big \{ \omega \in \bbC \; / \; \big| \, \omega - \omega(|\bk|)\,  \big| = \rho_{|\bk|} \big \},
 \end{equation} 
 	whose radius $\rho_{|\bk|}$ must be chosen in such a way that
 \begin{equation} \label{hypcont}
 \left\{ 	\begin{array}{l}
 		\mbox{(i) ${\cal C}_{|\bk|}$ does not enclose or intersect any other eigenvalue,}  \\ [8pt]
 	\mbox{(ii) ${\cal C}_{|\bk|}$ does not enclose or intersect any point of the set ${\cal S}_{\cal T}$}.
 		\end{array} \right. 
 	\end{equation} 
 This will be automatically achieved if we take $\rho_{|\bk|}$ equal to half of the distance of $\omega(|\bk|)$ to all the points that one wants to avoid, namely 
 \begin{equation} \label{choicerhok}
 	\rho_{|\bk|} = 1/2 \;\mbox{dist}\big(\omega(|\bk|), \Omega(|\bk|)\big), \quad  \Omega(|\bk|) =  \Big(\sigma(\bbA_{|\bk|,\perp}) \cup {\cal S}_{\cal T}\Big)  \setminus \{\omega(|\bk|)\}.
 	\end{equation}
We then have the formula
 \begin{equation}\label{eq.RieszDunford}
 	\Pi(|\bk|)=- \frac{1}{2\rmi \pi}\int_{\mathcal{C}_{|\bk|}} R_{|\bk|}(\omega) \, \rmd \omega = - \frac{1}{2\rmi \pi}\int_{\mathcal{C}_{|\bk|}} \mathcal{V}_{|\bk|}(\omega) \mathcal{S}_{|\bk|}(\omega) \, \rmd \omega, 
 \end{equation}
where the first equality is justified by \eqref{hypcont}(i) and the second by  the expression  of the resolvent: $R_{|\bk|}(\omega)=\mathcal{V}_{|\bk|}(\omega) \mathcal{S}_{|\bk|}(\omega)+\mathcal{T}(\omega)$  (given in Proposition \ref{Prop.res}) and by  \eqref{hypcont}(ii) since as  ${\cal S}_{\cal T}$ is defined as the set of singularities of $\omega \mapsto {\cal T}(\omega)$, this function is analytic in $\bbC\setminus {\cal S}_{\cal T}$. \\ [12pt]
From \eqref{eq.RieszDunford}, we deduce the inequality that we shall use systematically in the following, namely 
\begin{equation}\label{estiproj}
	\|\Pi(|\bk|)\| \leq \rho_{|\bk|} \, \sup_{\omega \in \mathcal{C}_{ |\bk| }}  \big(\|  \mathcal{V}_{|\bk|}(\omega) \|  \, \| \mathcal{S}_{|\bk|}(\omega)  \|\big).
\end{equation} 
where $\| \cdot\|$ denotes the operator norm of $\mathcal{L}(\bC_{\perp}^N)$.\\[12pt]
\noindent In a second step, we shall concentrate on the exponentials appearing in each factor whose estimate for large $|\bk|$ will rely on the asymptotic expansion of the eigenvalues (and more particularly their imaginary parts), see Lemmas \ref{LemEigeninfty}, \ref{LemEigend} and \ref{LemEigens}. 
These results will be transformed into sharp exponential decay estimates for  $|\bbU_{\infty}(\bk,t)|, |\bbU_{s}(\bk,t)|$ and $|\bbU_{d}(\bk,t)|$, in which the rate of decay degenerates for $|\bk| \rightarrow + \infty$, see Lemmas \ref{LemEstiinfty}, \ref{LemEstis} and \ref{LemEstid}. \\ [12pt]
Finally, the last term of  \eqref{decompU},  $\bbU_-(\bk,t)$, will not be treated by using \eqref{decompU}(iv) but an alternative expression directly issued from the Riesz-Dunford functional calculus  (Section \ref{estiUminus}). In fact, the exponential decay of $|\bbU_-(\bk,t)|$ will be, contrarily to the previous terms, uniform with respect to $|\bk|$, so that it will not contribute at the end to the large time equivalent of $\bU(\cdot,t)$. This is the reason why we can be satisfied with rough estimates.

\subsubsection{Estimates of $\bbU_{\pm \infty}(\bk,t)$ for $|\bk| \gg 1$} \label{estiUinfty}
In the following Lemma, we estimate the spectral projectors $\Pi_{\pm \infty} (|\bk|)$.
\begin{Lem} \label{LemProinfty}
The spectral projectors  $\Pi_{\pm \infty} (|\bk|)$ are uniformly bounded for large $|\bk|$.  
\end{Lem}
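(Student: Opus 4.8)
The plan is to estimate $\Pi_{\pm\infty}(|\bk|)$ through the Riesz--Dunford representation \eqref{eq.RieszDunford} and the resulting bound \eqref{estiproj}, applied with $\omega(|\bk|)=\omega_{\pm\infty}(|\bk|)$ as in \eqref{eq.pminfty} (the two signs are handled identically, and are exchanged by $\omega\mapsto -\overline{\omega}$). The first task is to describe the geometry of the contour $\mathcal{C}_{|\bk|}$. By Proposition~\ref{prop.dispersioncurves}, for $|\bk|$ large one has $\omega_{\pm\infty}(|\bk|)=\pm c\,|\bk|\,(1+o(1))$, the second diverging branch lies near $\mp c\,|\bk|$, all remaining eigenvalues stay within a fixed bounded set (they converge to the poles $p\in\mathcal{P}$), and ${\cal S}_{\cal T}$ is a fixed finite set; hence $\operatorname{dist}\big(\omega_{\pm\infty}(|\bk|),\Omega(|\bk|)\big)\asymp|\bk|$, so \eqref{choicerhok} gives $\rho_{|\bk|}\asymp|\bk|$. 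Moreover, since $0\in{\cal S}_{\cal T}\subset\Omega(|\bk|)$ forces $\rho_{|\bk|}\le\tfrac12|\omega_{\pm\infty}(|\bk|)|$, every $\omega\in\mathcal{C}_{|\bk|}$ satisfies $|\omega|\asymp|\bk|$.

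Next I would bound the two factors in \eqref{estiproj} on $\mathcal{C}_{|\bk|}$, using that for $|\omega|\asymp|\bk|\to+\infty$ one has $q_{e,j}(\omega),q_{m,\ell}(\omega)\sim\omega^2$, $\varepsilon(\omega)\to\varepsilon_0$ and $\mu(\omega)\to\mu_0\neq 0$. Inserting these orders into \eqref{defV}, the only a priori dangerous term, the prefactor $|\bk|/(\omega\mu(\omega))$, is in fact $\lesssim 1$ precisely because $|\omega|\asymp|\bk|$, while every other entry of $\mathcal{V}_{|\bk|}(\omega)$ carries an extra negative power of $\omega$; hence $\|\mathcal{V}_{|\bk|}(\omega)\|\lesssim 1$. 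The same asymptotics applied to \eqref{operatorsApAm}--\eqref{operatorsAeAh} give $\|\bbA_e(\omega)\|\lesssim 1$ and $\|\bbA_m(\omega)\|\lesssim 1$, so by \eqref{defS} the numerator $\omega\mu(\omega)\bbA_e(\omega)\bbF-|\bk|\,{\bf e_3}\times\bbA_m(\omega)\bbF$ of $\mathcal{S}_{|\bk|}(\omega)$ has norm $\lesssim |\omega\mu(\omega)|\,\|\bbA_e(\omega)\|+|\bk|\,\|\bbA_m(\omega)\|\lesssim|\bk|$.

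The crucial point is then the lower bound on the denominator of \eqref{defS}. Writing $\mathcal{D}(\omega)-|\bk|^2=D_{|\bk|}(\omega)/\big(Q_e(\omega)Q_m(\omega)\big)$ and factoring $D_{|\bk|}(\omega)=\varepsilon_0\mu_0\prod_{\lambda\in\sigma(\bbA_{|\bk|,\perp})}(\omega-\lambda)$ (all roots simple for $|\bk|\ge k_+$ by Proposition~\ref{prop.dispersioncurves} and \eqref{eq.disppolynom}), the very choice \eqref{choicerhok} of $\rho_{|\bk|}$ guarantees $|\omega-\lambda|\ge\rho_{|\bk|}$ for \emph{every} eigenvalue $\lambda$ and every $\omega\in\mathcal{C}_{|\bk|}$, so $|D_{|\bk|}(\omega)|\gtrsim\rho_{|\bk|}^{N}\asymp|\bk|^{N}$; since $Q_eQ_m$ has degree $N-2$ and $|\omega|\lesssim|\bk|$ on $\mathcal{C}_{|\bk|}$, also $|Q_e(\omega)Q_m(\omega)|\lesssim|\bk|^{N-2}$, whence $|\mathcal{D}(\omega)-|\bk|^2|\gtrsim|\bk|^{2}$. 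Combining the three estimates yields $\|\mathcal{S}_{|\bk|}(\omega)\|\lesssim|\bk|^{-1}$ uniformly on $\mathcal{C}_{|\bk|}$, and \eqref{estiproj} gives $\|\Pi_{\pm\infty}(|\bk|)\|\lesssim\rho_{|\bk|}\cdot 1\cdot|\bk|^{-1}\asymp 1$ for $|\bk|$ large, which is the assertion.

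The only non-routine ingredient is the two-sided estimate $\rho_{|\bk|}\asymp|\bk|$ --- equivalently the lower bound $|\mathcal{D}(\omega)-|\bk|^2|\gtrsim|\bk|^2$ on the contour. It rests entirely on the separation of the dispersion curves encoded in the asymptotics \eqref{eq.pole}--\eqref{eq.pminfty} together with the fact that ${\cal S}_{\cal T}$ is a fixed finite set, which together ensure that $\omega_{\pm\infty}(|\bk|)$ is isolated at the scale $|\bk|$ from all competing eigenvalues and from the (removable) singularities of $\mathcal{T}$ in \eqref{defT}. Everything else amounts to tracking orders of magnitude in $|\bk|$ inside the explicit formulas \eqref{defV}, \eqref{defS}, \eqref{operatorsApAm} and \eqref{operatorsAeAh}.
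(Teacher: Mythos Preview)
Your proof is correct and follows essentially the same Riesz--Dunford strategy as the paper: choose the contour via \eqref{choicerhok}, show $\rho_{|\bk|}\asymp|\bk|$ and $|\omega|\asymp|\bk|$ on $\mathcal{C}_{|\bk|}$, then bound $\|\mathcal{V}_{|\bk|}(\omega)\|\lesssim 1$ and $\|\mathcal{S}_{|\bk|}(\omega)\|\lesssim|\bk|^{-1}$ separately before concluding with \eqref{estiproj}. The one place where your presentation is slightly cleaner than the paper's is the lower bound on $|D_{|\bk|}(\omega)|$: the paper splits the factorization into the product over the bounded roots (giving $|D_{\mathrm{b},|\bk|}(\omega)|\gtrsim|\bk|^{2(N_e+N_m)}$) and the two diverging factors treated individually, whereas you observe directly that the choice \eqref{choicerhok} forces $|\omega-\lambda|\ge\rho_{|\bk|}$ for \emph{every} root $\lambda$ at once, yielding $|D_{|\bk|}(\omega)|\gtrsim\rho_{|\bk|}^N\asymp|\bk|^N$ in one line. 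Both routes produce the same bound $|\mathcal{D}(\omega)-|\bk|^2|^{-1}\lesssim|\bk|^{-2}$ on the contour, so this is a cosmetic difference rather than a genuinely different argument.
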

\begin{figure}[h!]
		\begin{center}
			\includegraphics[scale=0.33]{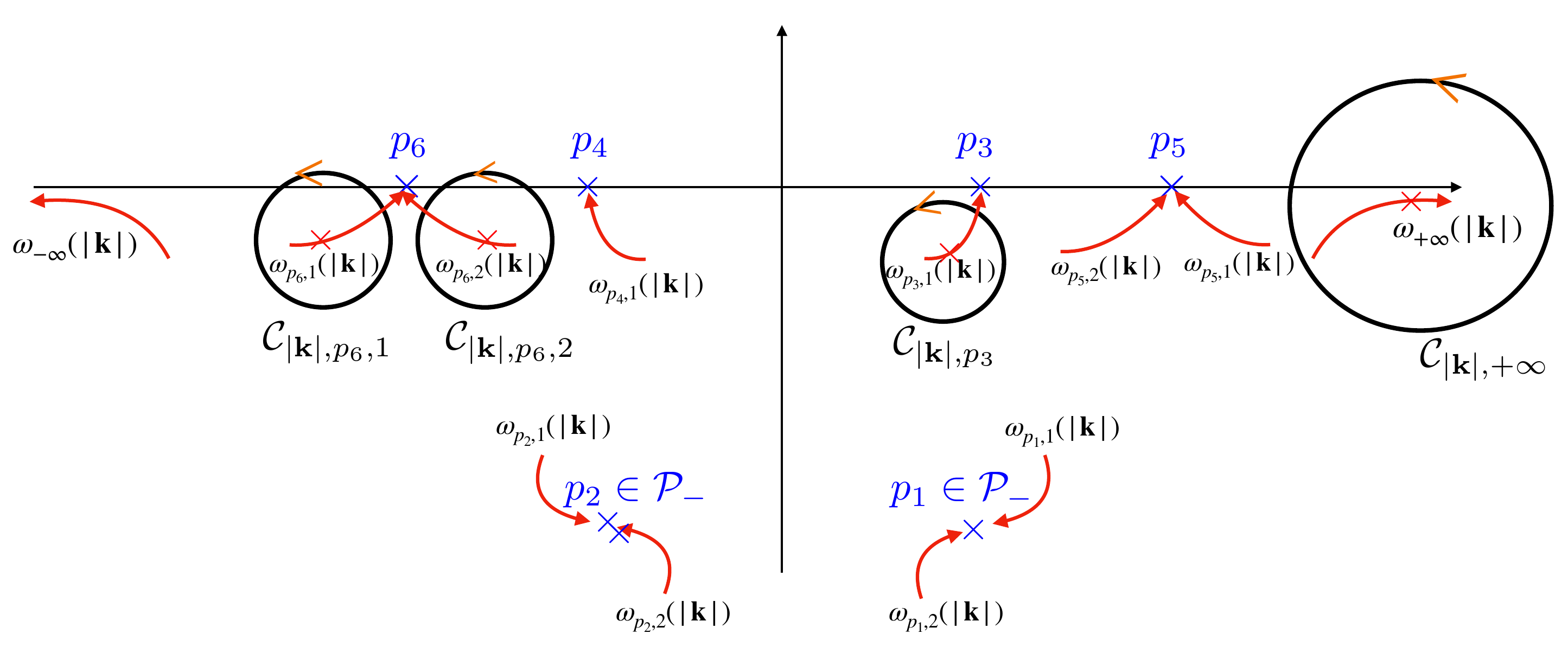}
		\end{center}
		\caption{Contours of integration $\mathcal{C}_{ |\bk|, +\infty,}$, $\mathcal{C}_{|\bk|, p_3}$, $\mathcal{C}_{|\bk|, p_6,1}$  and $\mathcal{C}_{|\bk|, p_6,2}$  used  for the estimate of   $\Pi_{+ \infty} (|\bk|)$, $\Pi_{p_3} (|\bk|)$ for $p_3\in \mathcal{P}_s$, $\Pi_{p_6,1} (|\bk|)$ and  $\Pi_{p_6,2} (|\bk|)$ for $p_6\in \mathcal{P}_d$ (corresponding to the figure \ref{fig-disp-curv}).}
\label{fig-contour-proj}	
\end{figure}
\begin{proof} We follow the approach described in Section \ref{orientation} for $\omega(|\bk|) = \omega_{\pm\infty}(|\bk|)$ and denote  $\mathcal{C}_{\pm\infty, |\bk|}$ the corresponding contour, see \eqref{defcont} and figure \ref{fig-contour-proj}. \\ [12pt]
		 {\bf Step 1: Estimate of $ \rho_{|\bk|}$}. Defining $ \rho_{|\bk|}$ by \eqref{choicerhok} for $\omega(|\bk|) = \omega_{\pm\infty}(|\bk|)$ and using the asymptotic behaviour  \eqref{eq.pminfty}, it is  clear that 
	\begin{equation} \label{estirhokinfty}
	\rho_{|\bk|}  \sim \mbox{$\frac{1}{2}$} \;  c |\bk|, \quad (|\bk| \rightarrow + \infty), \quad \mbox{thus} \quad \rho_{|\bk|}  \lesssim  |\bk| \quad \mbox{ for $|\bk| $ large enough}. 
\end{equation}  
\noindent {\bf Step 2: Estimate of $(\mathcal{D}(\omega)-|\bk|^2)^{-1}$.} This term is of interest because it appears in the expression \eqref{defS} of $\mathcal{S}_{|\bk|}(\omega)$. Using \eqref{eq.representationirreduc}, \eqref{eq.ireductibleF}, we compute that
\begin{equation}\label{eq.expressioncalDinv}
	(\mathcal{D}(\omega)-|\bk|^2)^{-1}=\frac{Q_e(\omega)\, Q_m(\omega)}{D_{|\bk|}(\omega)}, \quad \mbox{with $D_{|\bk|}(\omega)$ given in \eqref{eq.disppolynom}} .
\end{equation}
According to  Proposition  \ref{prop.dispersioncurves}, we know that $D_{|\bk|}(\omega)$ is a polynomial of degree $N$ with simple roots so that, according to \eqref{equivD}, it can be factorized as (note that we distinguish below $\omega_{\pm \infty}(|\bk|)$ that go to $\infty$ with $|\bk|$, from the other roots which remain bounded) 
\begin{equation}\label{eq.Dfactor}
	D_{|\bk|}(\omega)=  \varepsilon_0 \, \mu_0\,  D_{\mathrm{b},|\bk|}(\omega) \, \big(\omega-\omega_{+ \infty}(|\bk|)\big)\big(\omega-\omega_{-\infty}(|\bk|)\big) 
\end{equation}
with $D_{\mathrm{b},|\bk|}(\omega)$ given by
\begin{equation}\label{eq.Dbound} D_{\mathrm{b},|\bk|}(\omega)= \prod_{p\in \calP_-} \prod_{n=1}^{\mathfrak{m}_p}\big(\omega-\omega_{p,n}(|\bk|)\big) \, \prod_{p\in \calP_s} \big(\omega-\omega_{p}(|\bk|)\big)\, \prod_{p\in \calP_d} \prod_{r=1}^{2}\big(\omega-\omega_{p,r}(|\bk|)\big).
\end{equation}
\\ [12pt]
In the same way, thanks to  \eqref{eq.pole}, for any $ \omega(|\bk|) \in \big\{ \omega_{p,n}(|\bk|),  \omega_{p}(|\bk|), \omega_{p,r}(|\bk|) \big\}$, 
\begin{equation}\label{eq.boundomegamp2}
|\omega- \omega(|\bk|)| \geq   |\omega_{\pm \infty}(|\bk|) - \omega(|\bk|)| - |\omega-\omega_{\pm \infty}(|\bk|)| \sim  c \, |\bk|/2, \quad (|\bk| \rightarrow + \infty) .
 \end{equation}	
Thus, by definition \eqref{eq.Dbound} of $D_{\mathrm{b},|\bk|}(\omega)$, we deduce from \eqref{eq.boundomegamp} and \eqref{eq.boundomegamp2}, that 
 \begin{equation}\label{eq.boundDbdbis}
 |D_{\mathrm{b},|\bk|}(\omega)|  \geq C \;  |\bk|^{2(N_e+N_m)},  \quad  \forall \; \omega\in \mathcal{C}_{\pm \infty, |\bk|}, \quad \mbox{for some } C > 0.
\end{equation}
Next, if $ \omega \in \mathcal{C}_{\pm \infty, |\bk|}$, by the reverse triangular inequality
\begin{equation}\label{eq.boundomegamp}
	|\omega-\omega_{\mp \infty}(|\bk|)| \geq |\omega_{\pm \infty}(|\bk|) -\omega_{\mp \infty}(|\bk|)| - |\omega-\omega_{\pm \infty}(|\bk|)|  \sim 3 \, c \, |\bk|/2, \quad (|\bk| \rightarrow + \infty) 
\end{equation}	
since, by \eqref{eq.pminfty}, $|\omega_{\pm \infty}(|\bk|) -\omega_{\mp \infty}(|\bk|)| \sim 2 \, c \, |\bk|$ and $|\omega-\omega_{\pm \infty}(|\bk|)| = \rho_{|\bk|} \sim c \, |\bk| / 2$. 
Thus, using  \eqref{eq.boundomegamp} and \eqref{eq.boundDbdbis} in \eqref{eq.Dfactor} yields, as $|\omega-\omega_{\pm \infty}(|\bk|)| = \rho_{|\bk|}$ on $\mathcal{C}_{\pm \infty, |\bk|}$,
\begin{equation}\label{eq.Dboundinvbis}
|D_{|\bk|}(\omega)|^{-1}\lesssim  \;  \rho_{|\bk|}^{-1}  \;   |\bk|^{-(2(N_e+N_m)+1)},  \quad \forall \; \omega\in \mathcal{C}_{\pm \infty, |\bk|}.
\end{equation}
For bounding $Q_e(\omega) Q_m(\omega)$, that appears in \eqref{eq.expressioncalDinv}, we use an upper bound for $\omega \in \mathcal{C}_{\pm \infty, |\bk|}$
\begin{equation}\label{eq.upperboundomega}
 |\omega|\leq|\omega_{\pm\infty}(|\bk|)| + \rho_{|\bk|} \sim  3 \,  {c|\bk|}/{2} , \quad |\bk| \rightarrow + \infty,  \quad \forall \; \omega\in \mathcal{C}_{\pm \infty, |\bk|},
\end{equation}
to deduce that, as $Q_e Q_m$ is a polynomial of degree $2(N_e+N_m)$, 
\begin{equation}\label{boundQeQm}
|Q_e(\omega) \,Q_m(\omega)| \lesssim |\omega|^{2(N_e+N_m)} \lesssim |\bk|^{2(N_e+N_m)},  \quad \forall \, \omega\in \mathcal{C}_{\pm \infty, |\bk|}.
\end{equation}
Thus, using \eqref{boundQeQm} and  \eqref{eq.Dboundinvbis} in  \eqref{eq.expressioncalDinv}, we get, for $|\bk|$ large enough,
\begin{equation}\label{boundDcalpm}
|(\mathcal{D}(\omega)-|\bk|^2)^{-1}| \lesssim   \rho_{|\bk|}^{-1}  \; |\bk|^{-1},  \quad \forall \; \omega\in \mathcal{C}_{\pm \infty, |\bk|}. 
\end{equation}
\noindent {\bf Step 3: Estimates of $\mathcal{S}_{|\bk|}(\omega)$ and $\mathcal{V}_{|\bk|}(\omega)$.} 
Owing to \eqref{defS}, to estimate $\mathcal{S}_{|\bk|}(\omega)$, it remains to estimate $\omega \mu(\omega ) \bbA_e(\omega) -|\bk| \, {\bf e_3} \times \bbA_m(\omega)$. We claim that,
 \begin{equation}\label{eq.estimSoppartpminf2}
	\mbox{ for $|\bk|$ large enough, } \quad \big\|   \omega \mu(\omega ) \bbA_e(\omega) -|\bk| \, {\bf e_3} \times \bbA_m(\omega) \big\| \lesssim |\bk| , \quad  \forall\,  \omega \in     \mathcal{C}_{\pm \infty, |\bk|},
\end{equation}
which, combined with  \eqref{boundDcalpm} provides, via \eqref{defS}, 
\begin{equation}\label{eq.estimSpm}
	\|\mathcal{S}_{|\bk|}(\omega)  \|\lesssim    \rho_{|\bk|}^{-1},  \quad  \forall \; \omega\in \mathcal{C}_{\pm \infty, |\bk|} .
\end{equation}
To prove \eqref{eq.estimSoppartpminf2}, notice first that along $\mathcal{C}_{\pm \infty, |\bk|}$, 
\begin{equation}\label{eq.omegainflim}
|\omega| \geq |\omega_{\pm \infty}(|\bk|)| - \rho_{|\bk|} \sim c \, |\bk|/2 \ \mbox{ when } \ |\bk| \rightarrow + \infty.
\end{equation}
Thus, as $\mu(\omega)=\mu_0+o(1)$ as $|\omega|\to \infty$, with  the upper 
bound \eqref{eq.upperboundomega}, one gets
$$
\big\| \, \omega \mu(\omega ) \bbA_e(\omega) -|\bk| \, {\bf e_3} \times \bbA_m(\omega) \big\| \lesssim |\bk| \; \big( \, \big\| \bbA_e(\omega) \big\| + \big \|  \bbA_m(\omega) \big\|  \, \big), \quad  \forall\,  \omega \in     \mathcal{C}_{\pm \infty, |\bk|}.
$$
Finally, we prove $  \bbA_e(\omega) $ and $\bbA_m(\omega) $ are uniformly bounded (in $\bk$) on $\mathcal{C}_{\pm \infty, |\bk|}$. Towards this goal, we first estimate  the operators $(\bbA_{e,j}(\omega), \dot \bbA_{e,j}(\omega),  \bbA_{m,\ell}(\omega), \dot  \bbA_{m,\ell}(\omega))$. From \eqref{operatorsApAm}, 
\begin{equation} \label{boundoperators} 
\|\bbA_{e,j}(\omega)\|, \|\dot \bbA_{e,j}(\omega)\| \lesssim \frac{1 + |\omega|}{|q_{e,j}(\omega)|}, \quad \|\bbA_{m,\ell}(\omega)\|, \|\dot \bbA_{m,\ell}(\omega)\| \lesssim \frac{1 + |\omega|}{|q_{m,\ell}(\omega|)}, 
\end{equation} 
 Thus, using  \eqref{eq.omegainflim}, as $q_{e,j}(\omega), q_{m,\ell}(\omega) \sim \omega^2$  when $|\omega| \rightarrow + \infty$ (see \eqref{eq.polynom}), one has
\begin{equation}\label{estimq}
	|q_{e,j} (\omega)|^{-1}\lesssim |\bk|^{-2}, \quad |q_{m,\ell} (\omega)|^{-1}\lesssim |\bk|^{-2} \quad \mbox{ for } \omega \in \mathcal{C}_{\pm \infty, |\bk|}.\end{equation}
Thus, using the above inequalities and the upper bound \eqref{eq.upperboundomega} in \eqref{boundoperators}, we get
$$
\|\bbA_{e,j}(\omega)\|, \|\dot \bbA_{e,j}(\omega)\| \lesssim  |\bk|^{-1}, \quad \|\bbA_{m,\ell}(\omega)\|, \|\dot \bbA_{m,\ell}(\omega)\| \lesssim  |\bk|^{-1}.
$$
The formulas  \eqref{operatorsAeAh} show that $ \bbA_e(\omega) $ and $\bbA_m(\omega)$ are the sums of a fixed (independent of $\omega$) operator with a (fixed) linear combination of 
$({\bf e}, {\bf h}, \bbA_{e,j}(\omega), \dot \bbA_{e,j}(\omega), \bbA_{m,\ell}(\omega), \dot \bbA_{m,\ell}(\omega))$. 
They are thus uniformly bounded (in $\bk$) on $\mathcal{C}_{\pm \infty, |\bk|}$, proving \eqref{eq.estimSoppartpminf2}.
 \\[12pt]
 Finally, using  \eqref{eq.upperboundomega},  \eqref{estimq} and  that, on $ \mathcal{C}_{\pm \infty, |\bk|}$, $|\omega \mu(\omega)|^{-1} \lesssim  \mu_0^{-1} \,| \bk|^{-1}$, when $ |\bk| \rightarrow + \infty$, one sees on the  definition  \eqref{defV} of the operator $\mathcal{V}_{|\bk|}(\omega)$ that, for $|\bk|$ large enough, 
 \begin{equation}\label{eq.estimVpm}
 \|\mathcal{V}_{|\bk|}(\omega)  \|\lesssim    1, \quad  \forall \; \omega\in \mathcal{C}_{\pm \infty, |\bk|}.
 \end{equation}
 \noindent {\bf Conclusion.} It suffices to use \eqref{eq.estimSpm}  and  \eqref{eq.estimVpm} in \eqref{estiproj} for $\Pi(|\bk|) = \Pi_{\pm \infty} (|\bk|)$, to conclude that  $\|\Pi_{\pm \infty} (|\bk|)\|\lesssim 1$ for $|\bk|$ large enough. \end{proof}
\noindent The next lemma is about the asymptotic expansion (in powers of $|\bk|$)) of  $\omega_{\pm \infty}(|\bk|)$ when $|\bk|\to +\infty$. It is important  to push the expansion up to the first apparition of a negative imaginary part, since it will govern the decay  of $\bbU_{\infty}(|\bk|,t)$ for large $|\bk|$. 
\begin{Lem}  \label{LemEigeninfty}
	The eigenvalues $\omega_{\pm \infty}(|\bk|)$ satisfy the following asymptotic expansion
	\begin{equation}\label{eq.asymptinf}
	\omega_{\pm \infty}(|\bk|)=\pm \, c \,  |\bk|\pm \frac{A_{1,\infty}}{2\,c}\,  |\bk|^{-1}-\rmi \, \frac{A_{2,\infty}}{2\, c^2}\, |\bk|^{-2}+o(|\bk|^{-2}),\ \mbox{ as } |\bk|\to +\infty,
	\end{equation} 
where  $A_{1,\infty}$ and $A_{2,\infty}$ are two real constants  given by
\begin{equation}\label{eq.constantinf}
A_{1,\infty}= \sum \; \Omega_{e,j}^2+ \sum \; \Omega_{m,\ell}^2 \quad \mbox{and}\quad  A_{2,\infty}= \sum \; \alpha_{e,j}\, \Omega_{e,j}^2+ \sum \; \alpha_{m,\ell}\,\Omega_{m,\ell}^2,
\end{equation}
with  $A_{1,\infty} > 0$ and $ A_{2,\infty} > 0$ thanks to the weak dissipation condition \eqref{WD}. 
In particular 
\begin{equation} \label{eq.asymptinfIm}
	\operatorname{Im}(\omega_{\pm \infty}(|\bk|))=-\frac { A_{2,\infty}} {2\,c^{2}}\, |\bk|^{-2}  +o(|\bk|^{-2}), \ \mbox{ as } |\bk | \to +\infty .
\end{equation} 
\end{Lem}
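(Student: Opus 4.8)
The plan is to derive the expansion directly from the dispersion relation $\mathcal{D}(\omega)=|\bk|^2$, using as sole external input the leading-order asymptotics $\omega_{\pm\infty}(|\bk|)=\pm\,c\,|\bk|\,(1+o(1))$ already established in Proposition~\ref{prop.dispersioncurves}, and pushing the computation two orders further. The first step is purely algebraic. From the explicit formulas \eqref{eq.permmitivity-permeabiity}--\eqref{eq.polynom}, writing $q_{e,j}(\omega)^{-1}=\omega^{-2}\bigl(1-\rmi\,\alpha_{e,j}\,\omega^{-1}+O(\omega^{-2})\bigr)$ and similarly for $q_{m,\ell}$, one gets as $|\omega|\to+\infty$
\[
\varepsilon(\omega)=\varepsilon_0\Bigl(1-\frac{\sum_j\Omega_{e,j}^2}{\omega^2}+\rmi\,\frac{\sum_j\alpha_{e,j}\Omega_{e,j}^2}{\omega^3}+O(\omega^{-4})\Bigr),
\]
and the analogous identity for $\mu$. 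Multiplying and recalling $c^{-2}=\varepsilon_0\mu_0$ yields
\[
\mathcal{D}(\omega)=\frac{1}{c^{2}}\Bigl(\omega^{2}-A_{1,\infty}+\rmi\,\frac{A_{2,\infty}}{\omega}+O(\omega^{-2})\Bigr),\qquad |\omega|\to+\infty,
\]
with $A_{1,\infty}$, $A_{2,\infty}$ as in \eqref{eq.constantinf}; their strict positivity is immediate, since all $\Omega_{e,j},\Omega_{m,\ell}>0$ by \eqref{hypomegabis}, while all $\alpha_{e,j},\alpha_{m,\ell}\ge 0$ with at least one positive by the weak dissipation assumption \eqref{WD}.

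Next I would insert $\omega=\omega_{\pm\infty}(|\bk|)$ into $\mathcal{D}(\omega)=|\bk|^2$ and solve by bootstrapping. The displayed expansion gives $\omega^{2}=c^{2}|\bk|^{2}+A_{1,\infty}-\rmi\,A_{2,\infty}/\omega+O(|\bk|^{-2})$; feeding the zeroth-order information $\omega_{\pm\infty}(|\bk|)=\pm c|\bk|(1+o(1))$ back into the last term turns this into $\omega^{2}=c^{2}|\bk|^{2}+A_{1,\infty}\mp\rmi\,A_{2,\infty}/(c|\bk|)+o(|\bk|^{-1})$. Taking the square root on the branch close to $\pm c|\bk|$ and expanding $\sqrt{1+u}=1+u/2+O(u^2)$ produces \eqref{eq.asymptinf} for the $+$ branch; the $-$ branch follows either identically or, more quickly, from the symmetry $\omega\mapsto-\overline{\omega}$ of the solution set, which forces $\omega_{-\infty}(|\bk|)=-\overline{\omega_{+\infty}(|\bk|)}$. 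Finally \eqref{eq.asymptinfIm} is read off \eqref{eq.asymptinf}, since the $O(1)$ and $O(|\bk|^{-1})$ terms are real, so the first nonzero contribution to the imaginary part is $-A_{2,\infty}|\bk|^{-2}/(2c^2)$.

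To make the $o(|\bk|^{-2})$ remainder rigorous I would reuse the change of variable $\xi=1/\omega$ from Step~2 of the proof of Proposition~\ref{prop.dispersioncurves}: there $\omega_{\pm\infty}(|\bk|)=\xi_\pm\bigl(|\bk|^{-1}\bigr)^{-1}$ with $\xi_\pm$ \emph{analytic} near $0$. The computation above is then nothing but the third-order Taylor expansion of $\xi_\pm$ at $0$ — obtained by inverting $\mathcal{G}(\xi)=\xi^2\varepsilon(1/\xi)^{-1}\mu(1/\xi)^{-1}=c^2\xi^2\bigl(1+A_{1,\infty}\xi^2-\rmi A_{2,\infty}\xi^3+O(\xi^4)\bigr)$ — re-expressed through $\omega=1/\xi$; Taylor's theorem then even upgrades the error to $O(|\bk|^{-3})$. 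No genuine obstacle is expected: the whole argument is a finite, elementary Taylor computation, entirely controlled by Proposition~\ref{prop.dispersioncurves}. The only point demanding a little care is the bookkeeping — one must carry $\varepsilon$ and $\mu$ up to order $\omega^{-3}$ (not merely $\omega^{-2}$), so as to retain the first dissipative, i.e. non-real, term $\rmi A_{2,\infty}\omega^{-3}$ of $\mathcal{D}$, which is precisely the term responsible for $\operatorname{Im}\omega_{\pm\infty}(|\bk|)<0$ and hence for the decay of $\bbU_{\infty}(\bk,t)$; stopping one order earlier would lose it.
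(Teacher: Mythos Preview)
Your proposal is correct and takes essentially the same approach as the paper: both expand $\varepsilon$ and $\mu$ to order $\omega^{-3}$ (equivalently $\xi^3$) and exploit the analyticity of $\xi_\pm$ near $0$ established in Proposition~\ref{prop.dispersioncurves} to control the remainder. The only cosmetic difference is that the paper works entirely in the $\xi$-variable via an ansatz $\xi_+(\zeta)=c^{-1}\zeta(1+A_1\zeta^2+A_2\zeta^3+O(\zeta^4))$ and identifies $A_1,A_2$ by substitution, whereas you bootstrap directly in $\omega$ first and then invoke the $\xi$-picture for the remainder; the computations are equivalent.
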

\begin{proof}
Since $\omega_{-\infty}(|\bk|)=-\overline{\omega_{+\infty}(|\bk|)}$ (see Section \ref{sec_dispersion-1}),  we only prove \eqref{eq.asymptinf} for $$\omega_{+\infty}(|\bk|)= \xi_{+}\big(|\bk|^{-1}\big)^{-1},  \quad \mbox{(cf. the proof of Proposition \ref{prop.dispersioncurves}, step 2). }$$
\noindent {\bf Step 1: Proof of $\omega_{+ \infty}(|\bk|)= c \,  |\bk|+ O(|\bk|^{-1})$}\\[4pt]
The proof of Proposition \ref{prop.dispersioncurves} defines the functions $\zeta \mapsto \xi_{\pm}(\zeta)$  as the branches of solutions  of the equation $G(\xi)=\zeta^2$ near  $\xi=0$, where $G(\xi)=\xi^2 g(\xi)$ and $ g(\xi)=  \varepsilon(1/\xi)^{-1} \mu(1/\xi)^{-1}$  is analytic near $0$. In particular $g'(0) = 0$ and, by Lemma   \ref{Lem-implicte-function} (with  $z=0, g(0)=c^2$ and $\mathfrak{m}=2$), and more precidely \eqref{eq.asympexpansion}, we deduce that
\begin{equation} \label{expxi0}
\xi_{+}(\zeta)=c^{-1} \, \zeta  \big(1 + O(\zeta^2)\big)  \mbox{ near 0},  \quad \mbox{thus} \; \; \omega_{+ \infty}(|\bk|)= c \,  |\bk|+ O(|\bk|^{-1}) \mbox{ at $\infty$. } 
\end{equation} 
\noindent {\bf Step 2: Computations of the terms of order $|\bk|^{-1}$ and $|\bk|^{-2}$}\\[4pt]
By analyticity of $\xi_+(\zeta)$, and using the step 1, we know that, for some coefficients $(A_1, A_2)$, 
\begin{equation} \label{expxi}
\xi_{+}(\zeta)=c^{-1} \zeta \,  \big(1 +  A_1	 \, \zeta^2 + A_2 \, \zeta^3 \, + O(\zeta^4)\big).
\end{equation}
In order to compute $A_1$ and $A_2$, we are going to proceed by identification in the equation
\begin{equation} \label{identif}
\xi_{+}(\zeta)^2 \, g \big( \xi_{+}(\zeta) \big) = \zeta^2.
\end{equation}
Let us compute the Taylor expansion of the left hand side of \eqref{identif}. By 
(\ref{eq.permmitivity-permeabiity},\ref{eq.polynom}), one has
\begin{equation*}
\begin{array}{llll}
 \varepsilon(1/\xi)&=&  \displaystyle \varepsilon_0 \; \Big[ \, 1 -\Big(\sum\Omega_{e,j}^2\Big)\, \xi^2 \; \, +\rmi  \,\Big(\sum \alpha_{e,j} \, \Omega_{e,j}^2 \Big)\, \xi^3 \; \, +O(\xi^4) \, \Big],& \quad \mbox{ as } \xi \to 0,\\[6pt]
 \mu(1/\xi) &=& \displaystyle  \mu_0 \; \Big[ \, 1 -\Big(\sum \Omega_{m,\ell}^2\Big)\, \xi^2+\rmi  \,\Big(\sum \alpha_{m,\ell}\, \Omega_{m,\ell}^2\Big)  \, \xi^3+O(\xi^4) \, \Big], &\quad \mbox{ as } \xi \to 0.
\end{array}
\end{equation*}
It follows that, with $(A_{1,\infty}, A_{2,\infty})$ the coefficients defined by \eqref{eq.constantinf}, 
\begin{equation}\label{eq.taylorg2}
g(\xi)=c^{2} \, \big( 1+A_{1,\infty} \, \xi^2-\rmi\, \,A_{2,\infty}\,  \xi^3+O(\xi^4) \big), \quad \mbox{ as } \xi \to 0,
\end{equation}
that is to say, as $\xi(\zeta) = O\big(\zeta)$,
\begin{equation}\label{eq.taylorg}
	\displaystyle \xi_{+}(\zeta)^2 \, g \big( \xi_{+}(\zeta) \big) =c^{2} \, \big(  \xi_{+}(\zeta)^2 +A_{1,\infty}\, \xi_{+}(\zeta)^4-\rmi\, \, A_{2,\infty}\, \xi_{+}(\zeta)^5+O(\zeta^6) \big), \quad \mbox{ as } \zeta \to 0.
\end{equation}
On the other hand, using \eqref{expxi}, we have, truncating the expansions at the order 5 in $\zeta$, 
$$
\begin{array}{lll}
\xi_{+}(\zeta)^2  = c^{-2} \, \zeta^2 \,  \big(1 +  2 \,  A_1 \, \zeta^2  + 2 \,  A_2 \, \zeta^3+ O(\zeta^4)\big), \\ [8pt]
\xi_{+}(\zeta)^4  = c^{-4} \, \zeta^4 \,  \big(1 + O(\zeta^2)\big),  \quad
\xi_{+}(\zeta)^5  = c^{-5}\,  \zeta^5 \,  \big(1 +  O(\zeta^2)\big), 
\end{array}
$$
which we can substitute into \eqref{eq.taylorg} to obtain
\begin{equation}\label{eq.taylor3}
	\displaystyle \xi_{+}(\zeta)^2 \, g \big( \xi_{+}(\zeta) \big)  = \zeta^2 \, \big( 1 + 2 \,  A_1 \, \zeta^2  + 2 \,  A_2 \, \zeta^3 \big) 
	+ A_{1,\infty}\, c^{-2} \, \zeta^4 -  \rmi\, \,  A_{2,\infty} \, c^{-3} \, \zeta^5 + O (\zeta^6).
\end{equation}
Using the above in  \eqref{identif} leads to the equations
$$
2 \, A_1 + A_{1,\infty} \, c^{-2} = 0, \quad 2 \, A_2 -  \rmi\, \, A_{2,\infty}\, c^{-3} = 0.
$$
from which we conclude easily. The remaining details are left to the reader.
\end{proof}
\noindent We conclude this section by estimating the term $ \bbU_{\infty}(\bk,t)$ in  the decomposition \eqref{eq.decompositionfourterm} .
\begin{Lem}\label{LemEstiinfty}
For any constant  $C\in(0, A_{2,\infty}/(2 c^2))$  with $ A_{2,\infty}$ given by  \eqref{eq.constantinf}, there exists $k_+>0$  such that for $|\bk|\geq k_+$, the functions $\bbU_{ \infty}(\bk,t)$ defined by \eqref{eq.decompositionfourterm} and \eqref{decompU}  satisfy
	\begin{equation}\label{eq.estimateUm0}
		| \bbU_{\infty}(\bk,t)| \lesssim \rme^{-\frac{C}{|\bk|^2}\,  t} \, | \bbU_0(\bk)|, \quad \, \forall \; t\geq 0.
	\end{equation}
\end{Lem}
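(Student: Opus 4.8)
The plan is to combine the three ingredients already in place: the spectral decomposition \eqref{decompU}(i) of $\bbU_{\infty}(\bk,t)$, the uniform boundedness of the projectors $\Pi_{\pm\infty}(|\bk|)$ from Lemma \ref{LemProinfty}, and the asymptotic expansion of the imaginary parts $\operatorname{Im}\omega_{\pm\infty}(|\bk|)$ from Lemma \ref{LemEigeninfty}. First I would use that the operator $\mathcal{R}_{\bk}$ (and hence $\mathcal{R}_{\bk}^*$) is unitary on $\bC^N$, so that for each sign
\[
\big|\mathcal{R}_{\bk}^*\,\Pi_{\pm\infty}(|\bk|)\,\mathcal{R}_{\bk}\,\bbU_0(\bk)\big|
= \big|\Pi_{\pm\infty}(|\bk|)\,\mathcal{R}_{\bk}\,\bbU_0(\bk)\big|
\leq \|\Pi_{\pm\infty}(|\bk|)\|\,\big|\bbU_0(\bk)\big|,
\]
and, by Lemma \ref{LemProinfty}, there exist $k_+^{(1)}>0$ and $C_\Pi>0$ such that $\|\Pi_{\pm\infty}(|\bk|)\|\leq C_\Pi$ for all $|\bk|\geq k_+^{(1)}$.

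Next I would control the two scalar exponential factors. Since $|\rme^{-\rmi\omega t}| = \rme^{t\,\operatorname{Im}\omega}$ for $t\geq 0$, it suffices to bound $\operatorname{Im}\omega_{\pm\infty}(|\bk|)$ from above by $-C\,|\bk|^{-2}$. By \eqref{eq.asymptinfIm}, one has $\operatorname{Im}\omega_{\pm\infty}(|\bk|)=-\tfrac{A_{2,\infty}}{2c^2}\,|\bk|^{-2}+o(|\bk|^{-2})$, which reads equivalently as $|\bk|^{2}\operatorname{Im}\omega_{\pm\infty}(|\bk|)\to -\tfrac{A_{2,\infty}}{2c^2}$ as $|\bk|\to+\infty$. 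Given $C\in\big(0,A_{2,\infty}/(2c^2)\big)$, the number $A_{2,\infty}/(2c^2)-C$ is strictly positive, so there is $k_+^{(2)}>0$ with $|\bk|^{2}\operatorname{Im}\omega_{\pm\infty}(|\bk|)\leq -C$, i.e. $\operatorname{Im}\omega_{\pm\infty}(|\bk|)\leq -C/|\bk|^{2}$, for all $|\bk|\geq k_+^{(2)}$; hence $|\rme^{-\rmi\omega_{\pm\infty}(|\bk|)t}|\leq \rme^{-\frac{C}{|\bk|^{2}}t}$ for all $t\geq 0$ and $|\bk|\geq k_+^{(2)}$.

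Finally I would take $k_+ := \max\big(k_+^{(1)},k_+^{(2)}\big)$ (consistently with the convention of Section \ref{sec.estmHF} that $k_+$ may be enlarged when needed) and apply the triangle inequality to \eqref{decompU}(i): for $|\bk|\geq k_+$ and $t\geq 0$,
\[
|\bbU_{\infty}(\bk,t)|
\leq \sum_{\pm}\big|\rme^{-\rmi\,\omega_{\pm\infty}(|\bk|)t}\big|\;\|\Pi_{\pm\infty}(|\bk|)\|\;|\bbU_0(\bk)|
\leq 2\,C_\Pi\,\rme^{-\frac{C}{|\bk|^{2}}t}\,|\bbU_0(\bk)|,
\]
which is exactly \eqref{eq.estimateUm0}. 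I do not expect a genuine obstacle here: the proof is a straightforward assembly of prior results, and the only mild point of care is that the bound on $\operatorname{Im}\omega_{\pm\infty}(|\bk|)$ must hold uniformly over the unbounded range $|\bk|\geq k_+$ — which is precisely what reading the $o(|\bk|^{-2})$ statement as the convergence of $|\bk|^{2}\operatorname{Im}\omega_{\pm\infty}(|\bk|)$ guarantees — together with merging the finitely many thresholds $k_+$ coming from Lemmas \ref{LemProinfty} and \ref{LemEigeninfty} into a single one.
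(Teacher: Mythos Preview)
Your proof is correct and follows essentially the same approach as the paper: use the unitarity of $\mathcal{R}_{\bk}$, the uniform bound on $\Pi_{\pm\infty}(|\bk|)$ from Lemma~\ref{LemProinfty}, and the asymptotic \eqref{eq.asymptinfIm} from Lemma~\ref{LemEigeninfty} to control the exponential factor. The paper's proof is more terse but the logical structure is identical.
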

\begin{proof}
The functions $\bbU_{ \infty}(\bk,t)$ are defined  by \eqref{eq.decompositionfourterm} and \eqref{decompU} for $|\bk|$ large enough. As the operator $\mathcal{R}_{\bk}$ and its inverse $\mathcal{R}_{\bk}^*$ are unitary, one has 
$$
|\bbU_{ \infty}(\bk,t)| \leq   \sum_{\pm} \rme^{\operatorname{Im}(\omega_{\pm \infty}(|\bk|))\,  t} \ \|\Pi_{\pm \infty} (|\bk|)\| \ | \bbU_0(\bk)|, \quad  \forall \; t\geq 0.
$$
Thus, the upper bound \eqref{eq.estimateUm0}  follows  immediately from \eqref{eq.asymptinfIm} (Lemma \ref{LemEigeninfty}) and the inequality $\|\Pi_{\pm \infty} (|\bk|)\|\lesssim 1 $ (Lemma \ref{LemProinfty}).
\end{proof}
\subsubsection{Estimates of $\bbU_{s}(\bk,t)$ for $|\bk| \gg 1$} \label{estiUs}
\noindent We estimate now the term $\bbU_{s}(\bk,t)$ in   \eqref{eq.decompositionfourterm}  which involves in particular the projectors $\Pi_{p} (|\bk|)$.
\begin{Lem} \label{LemPros}
	The spectral projectors  $\Pi_{p} (|\bk|)$, $p \in {\cal P}_s$ are uniformly bounded for large $|\bk|$.  
\end{Lem}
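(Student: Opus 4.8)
The plan is to reproduce, for a fixed $p\in\mathcal{P}_s$, the Riesz--Dunford argument of Lemma~\ref{LemProinfty}, now with $\omega(|\bk|)=\omega_{p}(|\bk|)$ and the associated contour $\mathcal{C}_{|\bk|,p}$ of radius $\rho_{|\bk|}$ given by \eqref{defcont}--\eqref{choicerhok} (see figure~\ref{fig-contour-proj}). By \eqref{estiproj} it is enough to prove that $\|\mathcal{V}_{|\bk|}(\omega)\|\,\|\mathcal{S}_{|\bk|}(\omega)\|\lesssim\rho_{|\bk|}^{-1}$ uniformly in $\omega\in\mathcal{C}_{|\bk|,p}$ for $|\bk|$ large; since $\mathcal{P}_s$ is finite, uniformity in $p$ is then automatic.

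First I would fix the geometry of the contour. As $\mathfrak{m}_p=1$, the expansion \eqref{eq.pole} reads $\omega_{p}(|\bk|)=p+a_{p,1}\,|\bk|^{-2}(1+o(1))$, and $p$ itself lies in ${\cal S}_{\cal T}$; moreover the other branches $\omega_{p'}(|\bk|),\omega_{p',r}(|\bk|),\omega_{p',n}(|\bk|)$ converge to points $p'\neq p$ while $\omega_{\pm\infty}(|\bk|)\sim\pm c|\bk|\to\infty$, so that every point of $\big(\sigma(\bbA_{|\bk|,\perp})\cup{\cal S}_{\cal T}\big)\setminus\{\omega_p(|\bk|)\}$ other than $p$ stays at distance bounded below from $p$ for $|\bk|$ large. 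Hence \eqref{choicerhok} gives $\rho_{|\bk|}=\tfrac12|a_{p,1}|\,|\bk|^{-2}(1+o(1))$, so $\rho_{|\bk|}\asymp|\bk|^{-2}$, and on $\mathcal{C}_{|\bk|,p}$ one has $|\omega-p|\asymp|\bk|^{-2}$ while $|\omega-\omega_{\pm\infty}(|\bk|)|\sim c|\bk|$.

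Next I would estimate the scalar factor $(\mathcal{D}(\omega)-|\bk|^2)^{-1}=Q_e(\omega)Q_m(\omega)/D_{|\bk|}(\omega)$ from \eqref{eq.expressioncalDinv}. Since $\mathfrak{m}_p=1$ makes $p$ a simple root of $Q_eQ_m$, one gets $|Q_e(\omega)Q_m(\omega)|\asymp|\omega-p|\asymp\rho_{|\bk|}$ on the contour; on the other hand, in the factorisation \eqref{eq.Dfactor}--\eqref{eq.Dbound} the only bounded root of $D_{\mathrm{b},|\bk|}$ close to $p$ is the centre $\omega_p(|\bk|)$, so $|D_{\mathrm{b},|\bk|}(\omega)|\asymp\rho_{|\bk|}$ and $|D_{|\bk|}(\omega)|\asymp\rho_{|\bk|}\,|\bk|^2$, whence $|(\mathcal{D}(\omega)-|\bk|^2)^{-1}|\lesssim|\bk|^{-2}$ on $\mathcal{C}_{|\bk|,p}$.

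The step I expect to be the main obstacle — the heart of the argument — is the bookkeeping of the singularities of $\bbA_e(\omega)$, $\bbA_m(\omega)$, $\mu(\omega)$ and of the factors $q_{e,j}(\omega)^{-1}$, $q_{m,\ell}(\omega)^{-1}$ entering \eqref{defS}--\eqref{defV} as $\omega$ runs along $\mathcal{C}_{|\bk|,p}$ and approaches the real pole $p$. Here the hypothesis $\mathfrak{m}_p=1$ is essential: $p$ is a root of \emph{exactly one} of the second order polynomials $q_{e,j}$ or $q_{m,\ell}$, so at most one of these factors is singular, and only like $|\omega-p|^{-1}\asymp\rho_{|\bk|}^{-1}\asymp|\bk|^2$. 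I would then split into two cases. If $p\in\mathcal{P}_e$ (a root of $q_{e,j_0}$ with $\alpha_{e,j_0}=0$), then $\mu(\omega)$, the $q_{m,\ell}(\omega)^{-1}$ and $\bbA_m(\omega)$ stay bounded on the contour, while $\|\bbA_e(\omega)\|$, the first block of $\mathcal{V}_{|\bk|}(\omega)$ and the numerator of $\mathcal{S}_{|\bk|}(\omega)$ are $\lesssim|\bk|^2$; combined with $|(\mathcal{D}(\omega)-|\bk|^2)^{-1}|\lesssim|\bk|^{-2}$ this yields $\|\mathcal{S}_{|\bk|}(\omega)\|\lesssim1$ and $\|\mathcal{V}_{|\bk|}(\omega)\|\lesssim|\bk|^2$. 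If instead $p\in\mathcal{P}_m$, then $\mu(\omega)$ itself blows up like $\rho_{|\bk|}^{-1}$ and some $q_{m,\ell_0}(\omega)^{-1}$ is singular, but the factor $(\omega\mu(\omega))^{-1}$ occurring in $\mathcal{V}_{|\bk|}(\omega)$ (and not in $\mathcal{S}_{|\bk|}$) \emph{vanishes} like $\rho_{|\bk|}$, exactly compensating $q_{m,\ell_0}(\omega)^{-1}$: a short computation then gives $\|\mathcal{V}_{|\bk|}(\omega)\|\lesssim|\bk|$, while the numerator of $\mathcal{S}_{|\bk|}(\omega)$ is now $\lesssim|\bk|^3$, so $\|\mathcal{S}_{|\bk|}(\omega)\|\lesssim|\bk|$. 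In both cases $\|\mathcal{V}_{|\bk|}(\omega)\|\,\|\mathcal{S}_{|\bk|}(\omega)\|\lesssim|\bk|^2\asymp\rho_{|\bk|}^{-1}$ on $\mathcal{C}_{|\bk|,p}$, and inserting this into \eqref{estiproj} gives $\|\Pi_p(|\bk|)\|\lesssim1$ for $|\bk|$ large enough, which proves the claim.
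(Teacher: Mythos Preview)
Your proposal is correct and follows essentially the same route as the paper's own proof: the same Riesz--Dunford contour with $\rho_{|\bk|}\asymp|\bk|^{-2}$, the same estimate $|(\mathcal{D}(\omega)-|\bk|^2)^{-1}|\lesssim|\bk|^{-2}$ via the factorisation \eqref{eq.Dfactor}--\eqref{eq.Dbound}, and the same case split $p\in\mathcal{P}_e$ versus $p\in\mathcal{P}_m$ yielding exactly the pairs of bounds you found (the paper writes them as $\|\mathcal{S}_{|\bk|}\|\lesssim\rho_{|\bk|}^{-1}|\bk|^{-2},\ \|\mathcal{V}_{|\bk|}\|\lesssim|\bk|^2$ and $\|\mathcal{S}_{|\bk|}\|\lesssim\rho_{|\bk|}^{-1}|\bk|^{-1},\ \|\mathcal{V}_{|\bk|}\|\lesssim|\bk|$ respectively, which become yours upon substituting $\rho_{|\bk|}^{-1}\asymp|\bk|^{2}$). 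The only difference is cosmetic: the paper keeps the symbol $\rho_{|\bk|}^{-1}$ as a factor throughout and substitutes at the very end, whereas you substitute $\rho_{|\bk|}\asymp|\bk|^{-2}$ immediately.
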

 \begin{proof}   We follow the approach described in Section \ref{orientation} for $\omega(|\bk|) = \omega_{p}(|\bk|)$, $p\in \mathcal{P}_s$ and denote  $\mathcal{C}_{p, |\bk|}$ the corresponding contour, see \eqref{defcont} and figure \ref{fig-contour-proj}.\\ [12pt]
 	{\bf Step 1: estimate of $\rho_{|\bk|}$}. Obviously, for $|\bk|$ large enough, since $\omega_{p}(|\bk|)$ tends to $p$ when $|\bk|\to+\infty$, the infimum defining $\rho_{|\bk|}$ in \eqref{choicerhok} is attained at the point $p$, i.e.:
 		\begin{equation} \label{estirhokp}
 		\rho_{|\bk|} =   \mbox{$\frac{1}{2}$} \; | \, \omega_{p}(|\bk|) - p \, | \sim  \mbox{$\frac{1}{2}$} \;  |A_p| \; |\bk|^{-2} \quad (|\bk| \rightarrow + \infty), \quad \mbox{with} \quad  |A_p|> 0,
 	\end{equation}  
 according to \eqref{eq.pole} for $\mathfrak{m}_p = 1$ since $p\in \mathcal{P}_s$. 
\\[12pt]
For the rest of the proof, we essentially follow the proof of Lemma \ref{LemProinfty}. \\ [12pt] 
\noindent {\bf Step 2: Estimate of \big($\mathcal{D}(\omega)-|\bk|^2\big)^{-1}$.} 
\\[6pt]
First, due to \eqref{eq.pole} and \eqref{estirhokp}, it is clear that there exists $C>0$ such that , for $|\bk|$ large enough,
$$
\forall \; \tilde{p}\in \mathcal{P}\setminus \{ p\}, \quad \forall \; \omega \in \mathcal{C}_{p, |\bk|}, \quad \big|\,\omega- \omega_{\tilde{p},n}(|\bk|) \big|, \big|\,\omega- \omega_{\tilde{p}}(|\bk|)|, \big|\, \omega- \omega_{\tilde{p},r}(|\bk|)\big| \geq C > 0.
$$
Thus, for $|\bk|$ large enough, the function  $D_{b,|\bk|}(\omega)$ given by the product \eqref{eq.Dbound} satisfies 
\begin{equation}\label{boundDb} \forall \; \omega \in \mathcal{C}_{p, |\bk|}, \quad |D_{b,|\bk|}(\omega)| \geq C \; \rho_{|\bk|} \quad \mbox{thus} \quad |D_{b,|\bk|}(\omega)|^{-1}\lesssim \rho_{|\bk|}^{-1}. \end{equation}
On the other hand, due to \eqref{eq.pole} and \eqref{estirhokp}, 
$$
|\omega-\omega_{\pm \infty}(|\bk|)|=c\, |\bk|+o(|\bk|) \quad \mbox{ uniformly for } \omega \in \mathcal{C}_{p, |\bk|}.
$$
It is then immediate to infer from the definition \eqref{eq.Dfactor} of $D_{|\bk|}(\omega)$ that,  for $|\bk|$ large enough,
\begin{equation}\label{eq.Dboundinvp}
	|D_{|\bk|}(\omega)|^{-1}\lesssim    \rho_{|\bk|}^{-1}   \,  |\bk|^{-2},  \ \forall \; \omega\in \mathcal{C}_{p, |\bk|}. 
\end{equation}
Moreover,  $p\in \mathcal{P}_s$ is a simple zero of the product $Q_e\, Q_m$. Hence, for any $\omega\in \mathcal{C}_{p, |\bk|}$
\begin{equation}\label{eq.boundQps}
|Q_e(\omega)\, Q_m(\omega)| \lesssim  |\omega-p|\lesssim \rho_{ |\bk|}+|\, \omega_p(|\bk|)-p \,| \sim \mbox{$\frac{3}{2}$} \;  |A_p| \; |\bk|^{-2} 
\end{equation}
As a consequence, from the expression \eqref{eq.expressioncalDinv} of $(\mathcal{D}(\omega)-|\bk|^2)^{-1}$, it is clear that 
\begin{equation}\label{boundDcalpm2}
	|(\mathcal{D}(\omega)-|\bk|^2)^{-1}| \lesssim   \rho_{|\bk|}^{-1}  \; |\bk|^{-4},  \quad \forall \; \omega\in \mathcal{C}_{p, |\bk|} . 
\end{equation}
\noindent {\bf Step 3: Estimates of $\mathcal{S}_{|\bk|}(\omega)$ and $\mathcal{V}_{|\bk|}(\omega)$.} \\ [12pt] 
To estimate $\mathcal{S}_{|\bk|}(\omega)$, it remains to estimate $\omega \mu(\omega ) \bbA_e(\omega) -|\bk| \, {\bf e_3} \times \bbA_m(\omega)$. Due to this, we shall be led to distinguish two cases,  $p\in \mathcal{P}_m$ and  $p\in \mathcal{P}_e$,  and show that, for
$|\bk|$ large enough,  
\begin{equation}\label{eq.estimSps}
\left\{	\begin{array}{lllll}
	\mbox{If } p \in {\cal P}_m, & \ \forall \;  \omega\in \mathcal{C}_{p, |\bk|}, & \quad \|\mathcal{S}_{|\bk|}(\omega)  \|\lesssim  \rho^{-1}_{|\bk|} \  |\bk|^{-1}, & \quad  \|\mathcal{V}_{|\bk|}(\omega)  \|\lesssim    |\bk|, & \quad (i) \\ [12pt]
	\mbox{If } p \in {\cal P}_e, & \ \forall \;  \omega\in \mathcal{C}_{p,|\bk|}, & \quad \|\mathcal{S}_{|\bk|}(\omega)  \|\lesssim  \rho^{-1}_{|\bk|} \  |\bk|^{-2}, & \quad  \|\mathcal{V}_{|\bk|}(\omega)  \|\lesssim    |\bk|^2. & \quad (ii) 
	\end{array}  \right. 
\end{equation}
{\it Proof of \eqref{eq.estimSps}(i)}. As $p \in {\cal P}_m \cap {\cal P}_s$, $p \notin {\cal P}_e$ and by $\mathrm{H}_1$, there is a unique index $\ell_0$ such that $p = \pm \omega_{m,\ell_0}$. Thanks to  \eqref{boundoperators}, all the operators $(\bbA_{e,j}(\omega), \dot \bbA_{e,j}(\omega),  \bbA_{m,\ell}(\omega), \dot  \bbA_{m,\ell}(\omega))$ are uniformly bounded along $\mathcal{C}_{p, |\bk|}$ except $(\bbA_{m,\ell_0}(\omega), \dot  \bbA_{m,\ell_0}(\omega)\big)$, as $p$ is a simple pole of $\omega \mapsto \bbA_{m,\ell_0}(\omega)$ and $\omega \mapsto \dot  \bbA_{m,\ell_0}(\omega)$. More precisely, as  $|q_{m,\ell_0}(\omega)|^{-1}\sim|\omega-p|^{-1} \mbox{ when } \omega \rightarrow p$, for $|\bk|$ large enough, 
\begin{equation}\label{eq.qeqmps}
	\forall \; \omega \in \mathcal{C}_{p, |\bk|}, \quad |q_{m,\ell_0}(\omega)|^{-1}\lesssim|\omega-p|^{-1} =  \rho^{-1}_{|\bk|} \lesssim |\bk|^{2}, \quad \mbox{according to \eqref{estirhokp}} 
\end{equation}
Thus, using the bounds \eqref{boundoperators} and the fact that $\mathcal{C}_{p, |\bk|}$ lies in a bounded set, 
\begin{equation} \label{boundA0} 
\forall \; \omega \in \mathcal{C}_{p, |\bk|}, \quad	\|\bbA_{m,\ell_0}(\omega)\|, \|\dot  \bbA_{m,\ell_0}(\omega)\| \lesssim |\bk|^{2}.
	\end{equation} 
Thanks to formulas \eqref{operatorsAeAh} for $(\bbA_{e}(\omega),\bbA_{m}(\omega) )$, we decuce that 
\begin{equation}\label{estiAeAm}
\forall \; \omega \in \mathcal{C}_{p, |\bk|}, \quad \|\bbA_{e}(\omega)\| \lesssim 1, \quad \|  \bbA_{m}(\omega)\| \lesssim |\bk|^{2}.
\end{equation}
Moreover, as $p \in \mathcal{P}_m$ is a simple pole of $\mu(\omega)$ and $\mathcal{C}_{p, |\bk|}$ is uniformly bounded: 
\begin{equation} \label{estimmu} 
|\mu(\omega)| \sim \mu_p  \, |\omega-p|^{-1}, \quad  \mbox{with $\mu _p > 0$}, \quad \mbox{thus} \quad 
|\mu(\omega)| \lesssim |\omega-p|^{-1}\lesssim |\bk|^{2} \mbox{ along } \mathcal{C}_{p, |\bk|},
\end{equation}
which, joined to \eqref{estiAeAm}, gives (the dominant term is the second one) 
\begin{equation} \label{estimAeAmbis} 
\forall \; \omega \in \mathcal{C}_{p, |\bk|}, \quad\| \omega \mu(\omega ) \bbA_e(\omega) -|\bk| \, {\bf e_3} \times \bbA_m(\omega)\| \lesssim |\bk|^{3}.
\end{equation}
The estimate for $\mathcal{S}_{|\bk|}(\omega)$ in \eqref{eq.estimSps}(i) then follows from \eqref{boundDcalpm2},  \eqref{estimAeAmbis}  and \eqref{defS}.\\ [12pt]
For $\mathcal{V}_{|\bk|}(\omega)$, we first observe that $|\mu(\omega)| \sim \mu_p  \, |\omega-p|^{-1}$ when $\omega \rightarrow  p$ also implies that
\begin{equation} \label{boundmu2}
|\mu(\omega)|^{-1} \lesssim |\bk|^{-2} \mbox{ along } \mathcal{C}_{p, |\bk|}.
\end{equation}
On the other hand one sees on formula \eqref{defV} that the dominant term in $\mathcal{V}_{|\bk|}(\omega)$ on $\mathcal{C}_{p, |\bk|}$ is asymptotically proportional to the function
$$
|\bk| \; \mu(\omega)^{-1} \; q_{m,\ell_0}(\omega)^{-1}.
$$
This explains  the estimate for $\mathcal{V}_{|\bk|}(\omega)$ in \eqref{eq.estimSps}(i) since, according to the inequalities \eqref{eq.qeqmps} and \eqref{boundmu2}, $|\mu(\omega)|^{-1} \;| q_{m,\ell_0}(\omega)^{-1} |\lesssim 1$ on  $\mathcal{C}_{p, |\bk|}$.\\ [12pt]
{\it Proof of \eqref{eq.estimSps}(ii)}. The proof is very similar to that of $(i)$ and we shall simply point out the difference with $(i)$. As $p \in {\cal P}_e \cap {\cal P}_s$, $p \notin {\cal P}_m$ and there is a unique index $j_0$ such that $p = \pm \omega_{e,j_0}$. \\ [12pt]
This time, all the operators $(\bbA_{e,j}(\omega), \dot \bbA_{e,j}(\omega),  \bbA_{m,\ell}(\omega), \dot  \bbA_{m,\ell}(\omega))$ are uniformly bounded along $\mathcal{C}_{p, |\bk|}$ except $(\bbA_{e,j_0}(\omega), \dot  \bbA_{e,j_0}(\omega)\big)$ and, similarly to \eqref{estiAeAm}, one shows that 
\begin{equation}\label{estiAeAmbis}
	\|\bbA_{e}(\omega)\| \lesssim |\bk|^{2}, \quad \|  \bbA_{m}(\omega)\| \lesssim 1.
\end{equation}
The difference with $(i)$ is that, this time, $\mu(\omega)$ remains bounded on $\mathcal{C}_{p, |\bk|}$, which is the reason why we have
\begin{equation} \label{estimu2} 
	\| \omega \mu(\omega ) \bbA_e(\omega) -|\bk| \, {\bf e_3} \times \bbA_m(\omega)\| \lesssim |\bk|^{2},
\end{equation}
which leads to the estimate for $\mathcal{S}_{|\bk|}(\omega)$ in \eqref{eq.estimSps}(ii).  \\ [12pt] 
For $\mathcal{V}_{|\bk|}(\omega)$, one sees on formula \eqref{defV} that, this time, the dominant terms in $\mathcal{V}_{|\bk|}(\omega)$ on $\mathcal{C}_{p, |\bk|}$ is now proportional to  
$$
|q_{e,j_0}(\omega)^{-1}| \lesssim |\omega - p|^{-1}\lesssim  | \rho_\bk|^{-1}\lesssim  |\bk|^{2}.
$$
This explains  the estimate for $\mathcal{V}_{|\bk|}(\omega)$ in \eqref{eq.estimSps}(ii).
~\\ [12pt]
\noindent {\bf Conclusion.} The important observation is that, from \eqref{eq.estimSps}, in all cases
$$
\|\mathcal{S}_{|\bk|}(\omega)\| \,  \|\mathcal{V}_{|\bk|}(\omega)  \| \lesssim \rho^{-1}_{|\bk|}.
$$
One concludes by applying \eqref{estiproj} to $\Pi(|\bk|) = \Pi_{p} (|\bk|)$.
\end{proof}  
\noindent We now give the asymptotic expansion of the  eigenvalues $\omega_{p}(|\bk|)$ for large $|\bk|$. As the study of their imaginary part requires to distinguish two cases, for the sake of readability, we delay it to the corollary \ref{CoroIm}.
\begin{Lem}  \label{LemEigens}
	Let $p\in \calP_s$. The eigenvalue $\omega_{p}(|\bk|)$ satisfies the following asymptotic expansion
	\begin{equation}\label{eq.asymptps}
	\omega_{p}(|\bk|)=p+A_{p,2}\,  |\bk|^{-2}+A_{p,4}\,  |\bk|^{-4}+o(|\bk|^{-4}),\ \mbox{ as } |\bk|\to +\infty,
	\end{equation} 
where the complex numbers $A_{p,2}$ and $A_{p,4}$	 are given by
\begin{equation}\label{eq.constCps}
\begin{array}{llll}
\mbox{ if } p=\pm \,  \omega_{e,j_p} \in \calP_e,& \, \displaystyle  A_{p,2}=- \mbox{$\frac{1}{2}$} \, \varepsilon_0   \,p \, \mu(p)  \,  \Omega_{e,j_p}^2 \neq 0&  \mbox{ and } & \displaystyle A_{p,4}=\mbox{$\frac{1}{2}$}\Big(\big(\omega^2   \mu \,h_{e,p}\big)^2\Big)'(p).\\ [12pt]
  \mbox{ if } p=\pm \, \omega_{m,\ell_p} \in \calP_m, & \,  \displaystyle  A_{p,2}=- \mbox{$\frac{1}{2}$} \, \mu_0 \, p\, \varepsilon(p)  \, \Omega_{m,\ell_p}^2 \neq 0 & \mbox{ and } & \displaystyle A_{p,4}=\mbox{$\frac{1}{2}$}\Big(\big(\omega^2   \varepsilon \,h_{m,p}\big)^2\Big)'(p).
 \end{array}
\end{equation}
where for any $\omega \in (\bbC\setminus {\cal P}) \cup  \{p \}$,
\begin{equation}\label{eq.functionh}
		\mbox{ if } p=\pm \,  \omega_{e,j_p} , \quad h_{e,p}(\omega) = (\omega - p) \;  \varepsilon(\omega),  \quad \mbox{ if } p=\pm \,  \omega_{e,j_p} , \quad h_{m,p}(\omega) = (\omega - p) \;  \mu(\omega).
\end{equation}
\end{Lem}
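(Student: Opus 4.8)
The plan is to linearise the dispersion relation in the vicinity of the \emph{simple} pole $p$. Since $p\in\calP_s$ has multiplicity $\mathfrak{m}_p=1$, the function $H_p(\omega):=(\omega-p)\,\mathcal{D}(\omega)$ extends analytically across $p$, with $H_p(p)=\operatorname{Res}_p\mathcal{D}\neq0$, and one may write
$$
\mathcal{D}(\omega)^{-1}=\frac{\omega-p}{H_p(\omega)}=:F_p(\omega),\qquad F_p(p)=0,\qquad F_p'(p)=\frac{1}{H_p(p)}\neq0 ,
$$
so that, for $|\bk|\neq0$, the dispersion relation $\mathcal{D}(\omega)=|\bk|^2$ is equivalent, near $p$, to $F_p(\omega)=|\bk|^{-2}$, an equation with a simple zero at $\omega=p$. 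I would then identify $H_p$ with the functions of the statement: if $p=\pm\,\omega_{e,j_p}\in\calP_e$, then by $(\mathrm{H}_1)$ the index $j_p$ is unique, by $\mathfrak{m}_p=1$ one has $p\notin\calP_e\cap\calP_m$ so that $\mu$ is analytic at $p$, and by $(\mathrm{H}_2)$ $\mu(p)\neq0$; hence $h_{e,p}(\omega)=(\omega-p)\varepsilon(\omega)$ is analytic near $p$ and $H_p(\omega)=\omega^2\,\mu(\omega)\,h_{e,p}(\omega)$. The case $p=\pm\,\omega_{m,\ell_p}\in\calP_m$ is symmetric, with $H_p(\omega)=\omega^2\,\varepsilon(\omega)\,h_{m,p}(\omega)$.

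Next, applying the analytic inverse function theorem to $F_p$ at $p$ (equivalently, Lemma \ref{Lem-implicte-function} with $\mathfrak{m}=1$ and $\zeta=|\bk|^{-2}$, as in Step~1 of the proof of Proposition \ref{prop.dispersioncurves}), I obtain a unique analytic function $\eta\mapsto\phi_p(\eta)$ near $\eta=0$ with $\phi_p(0)=0$ and $F_p\big(p+\phi_p(\eta)\big)=\eta$. For $|\bk|$ large, $\omega_p(|\bk|)$ lies in the neighbourhood of $p$ where this is the only solution tending to $p$, so $\omega_p(|\bk|)=p+\phi_p\big(|\bk|^{-2}\big)$. Because $\phi_p$ is analytic in $\eta=|\bk|^{-2}$, its expansion contains only even powers of $|\bk|^{-1}$, which is exactly why no $|\bk|^{-3}$ term occurs in \eqref{eq.asymptps}; writing $\phi_p(\eta)=A_{p,2}\,\eta+A_{p,4}\,\eta^2+O(\eta^3)$ gives \eqref{eq.asymptps}.

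To determine $A_{p,2}$ and $A_{p,4}$, I would substitute $\omega-p=\phi_p(\eta)$ into $F_p(p+\phi_p(\eta))=\eta$ and match powers of $\eta$. From $F_p(\omega)=(\omega-p)/H_p(\omega)$ one computes $F_p'(p)=H_p(p)^{-1}$ and $F_p''(p)=-2\,H_p'(p)\,H_p(p)^{-2}$; the coefficient of $\eta$ yields $A_{p,2}=H_p(p)$ and the coefficient of $\eta^2$ yields
$$
A_{p,4}=-\frac{F_p''(p)}{2\,F_p'(p)}\,A_{p,2}^{\,2}=H_p(p)\,H_p'(p)=\tfrac12\big(H_p^{\,2}\big)'(p),
$$
which is the stated formula for $A_{p,4}$ (with $H_p=\omega^2\mu\,h_{e,p}$, resp.\ $\omega^2\varepsilon\,h_{m,p}$). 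For the explicit value of $A_{p,2}=H_p(p)$ one uses $\alpha_{e,j_p}=0$, whence $q_{e,j_p}(\omega)=\omega^2-\omega_{e,j_p}^2=(\omega-p)(\omega+p)$, so $h_{e,p}(p)=\lim_{\omega\to p}(\omega-p)\varepsilon(\omega)=-\varepsilon_0\,\Omega_{e,j_p}^2/(2p)$ and therefore $A_{p,2}=p^2\mu(p)\,h_{e,p}(p)=-\tfrac12\,\varepsilon_0\,p\,\mu(p)\,\Omega_{e,j_p}^2\neq0$; symmetrically in the magnetic case with $\varepsilon$ and $\mu$ exchanged.

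The argument is essentially computational, and the two points deserving some care are bookkeeping ones: checking that $H_p$ is genuinely analytic and non-vanishing at $p$ — which is precisely where the irreducibility hypotheses $(\mathrm{H}_1)$–$(\mathrm{H}_2)$ and $p\neq0$ enter — and carrying the coefficient identification to second order to extract $A_{p,4}$. Once the dispersion relation is rewritten in the simple-zero form $F_p(\omega)=|\bk|^{-2}$, there is no genuine analytic obstacle.
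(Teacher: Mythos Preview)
Your proposal is correct and follows essentially the same approach as the paper: both rewrite $\mathcal{D}(\omega)^{-1}=(\omega-p)\,g_p(\omega)$ with $g_p=1/H_p$ analytic and non-vanishing at $p$, then invoke Lemma~\ref{Lem-implicte-function} with $\mathfrak{m}=1$ and $\zeta=|\bk|^{-2}$; your $H_p=\omega^2\mu\,h_{e,p}$ is exactly the paper's $g_p^{-1}$, and your coefficient identification via inverse-function derivatives reproduces the paper's computation $-g_p'(p)/g_p(p)^3=(H_p)'(p)\,H_p(p)=\tfrac12(H_p^2)'(p)$. The only cosmetic difference is that you carry the second-order matching explicitly rather than quoting formula~\eqref{eq.asympexpansion}, and you add the helpful remark that analyticity in $|\bk|^{-2}$ forces the absence of a $|\bk|^{-3}$ term.
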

\begin{proof}
Let $p\in \calP_s$.  Thus, either $p \in \calP_e$ or $p \in \calP_m$.
We assume without a loss of generality that $p \in \calP_e$. The proof is  done  by symmetric arguments if  $p \in \calP_m$. 
\\ [12pt]
\noindent  
We are in the situation  covered by the step 1 of the proof  of Proposition \ref{prop.dispersioncurves},  in the case $\mathfrak{m}_p = 1$, see also  \eqref{eq.polevoisinage} .  As $p$ is a simple pole of $\omega \, \varepsilon(\omega)$, there exists (by $\mathrm{H}_1$) a unique index $1 \leq j_p \leq N_e $ such that $p=\pm \, \omega_{e,j_p}$.
In that case the set  $\{ \omega_{p,n}(|\bk|), 1 \leq n \leq \mathfrak{m}_p\}$ is reduced to one single function denoted  $|\bk|\mapsto \omega_p(|\bk|)$. It is constructed  as the unique branch of solutions of the equation  $\mathcal{D}(\omega)^{-1}=\zeta $ with $\zeta=|\bk|^{-2}$ that converges to $p$ when $\zeta \to 0$. \\ [12pt] 
Next, we make more precise the factorization of $\mathcal{D}(\omega)^{-1} \equiv (\omega^ 2 \, \varepsilon \, \mu)^{-1}$ corresponding to \eqref{eq.polevoisinage} for $\mathfrak{m}_p = 1$.  By definition \eqref{eq.functionh} of $h_{e,p}$, we can write
$$ \mathcal{D}(\omega)^{-1}=(\omega -p) \, g_p(\omega) \quad \mbox{with } g_p(\omega) : =(\omega^2   \mu \,h_{e,p})(\omega)^{-1}.$$
From the definition of $\varepsilon(\omega)$, we also have 
\begin{equation}\label{prophep}  h_{e,p}(\omega)=  \ds \varepsilon_0 \, (\omega -p) \,  \Big(1- \sum_{ j\neq j_p} \frac{\Omega_{e,j}^2}{q_{e,j}(\omega)} \Big)- \varepsilon_0 \,  \frac{\Omega_{e,j_p}^2}{\omega+p}, \quad \mbox{\Big(thus } h_{e,p}(p)= - \varepsilon_0 \,  \frac{\Omega_{e,j_p}^2}{2p}\Big) \end{equation} 
which shows that $g_p$ is analytic in the neighbourhood of $p$ (by  $(\mathrm{H}_2)$, $\mu(p)\neq 0$) and that 
$$g_p(p)=-  2 \, \big(\varepsilon_0   \,p \, \mu(p) \, \Omega_{e,j_p}^2  \big)^{-1}\neq 0.$$
By formula \eqref{eq.asympexpansion}, with  $\zeta=|\bk|^{-2}$,  $z=p$,  $g = g_p$, $\mathfrak{m}=1$ , $ a_{n}^{-1}=g(p)^{-1}$, we get 
\begin{equation}
\omega_{p}(|\bk|)= p-   \varepsilon_0   \,p\,  \mu(p) \,  \frac{ \Omega_{e,j_p}^2 }{2} |\bk|^{-2} -\frac{  g_p'(p) |\bk|^{-4}}{ \, g_p(p)^3}  +o(  |\bk|^{-4} ),  \mbox{ as } |\bk| \to +\infty.
\end{equation}
This is nothing but \eqref{eq.asymptps} since $ \ds - g'_p/ \, g_p^2 = \big(1/g_p\big)' = (\omega^2   \mu \,h_{e,p})'$ by definition of $g_p$. Thus, one gets 
$$-\frac{  g_p'(p)}{ \, g_p(p)^3} = (\omega^2   \mu \,h_{e,p})'(p) (\omega^2   \mu \,h_{e,p})(p)=\frac{1}{2}\Big(\big(\omega^2   \mu \,h_{e,p}\big)^2\Big)'(p).$$ 
Finally, note that, since ${\cal P}_e \cap {\cal Z}_m = \varnothing$,  (see ($\mathrm{H}_2$)), $p \notin {\cal Z}_m$  so that $\mu(p) \neq 0$  thus $A_{p,2} \neq 0$.
\end{proof}
\noindent The behaviour of the imaginary part of $A_{2,p} $ leads us to make the distinction between the critical and non critical configurations. For the ease of the reader, we recall below that the critical case corresponds to one of the following two situations (see definition \ref{Critical_cases}):
\begin{enumerate}
	\item   $\forall \,  \ell \in \{ 1, \ldots, N_m \} , \; \alpha_{m,\ell}=0$ and  $\exists \,j \in \{ 1, \ldots, N_e\} \mid \alpha_{e,j}=0\,   \mbox{  and } \omega_{e,j} \notin \big\{ \omega_{m,\ell} \big\}.$
	\item $\forall \,  j \in \{ 1, \ldots, N_e \} , \; \alpha_{e,j}=0$ and  $\exists \, \ell \in \{ 1, \ldots, N_m\} \mid \alpha_{m,\ell}=0\,   \mbox{  and } \omega_{m,\ell}  \notin \big\{ \omega_{e,j} \big\}.$
\end{enumerate}
\begin{Cor} \label{CoroIm} Following the notations introduced in Lemma \ref{LemEigens}.  In the non critical case,  
		\begin{equation} \label{NonCrit}
		\forall \; p \in  {\cal P}_s,  \quad \operatorname{Im}A_{2,p} < 0.
	\end{equation}	As a consequence, there exists $C>0$ such that, for $|\bk|$ large enough, 
		\begin{equation} \label{estimImNC}
		\forall \; p \in {\cal P}_s, \quad	\operatorname{Im}\omega_{p}(|\bk|)<- \, C \,  |\bk|^{-2}.
	\end{equation}	
In the critical case, in situation 1, ${\cal P}_e \cap {\cal P}_s \neq \varnothing$ and 
	\begin{equation} \label{Crit1}
 \forall \; p \in {\cal P}_e \cap {\cal P}_s, \quad \operatorname{Im}A_{2,p} = 0 \mbox{ and } \operatorname{Im}A_{4,p} < 0, \quad 	\forall \;  p \in {\cal P}_m \cap {\cal P}_s, \quad \operatorname{Im}A_{2,p} < 0,
	\end{equation}	
while, in situation 2, ${\cal P}_m \cap {\cal P}_s \neq \varnothing$ and 
\begin{equation} \label{Crit2}
	\forall \; p \in {\cal P}_m \cap {\cal P}_s, \quad \operatorname{Im}A_{2,p} = 0 \mbox{ and } \operatorname{Im}A_{4,p} < 0, \quad 	\forall \;  p \in {\cal P}_e \cap {\cal P}_s, \quad \operatorname{Im}A_{2,p} < 0.
\end{equation}	
As a consequence, there exists $C>0$ such that, for $|\bk|$ large enough 
	\begin{equation} \label{estimImC}
\forall \; p \in {\cal P}_s, \quad	\operatorname{Im}\omega_{p}(|\bk|)<- \, C \,  |\bk|^{-4}.
\end{equation}	
	\end{Cor}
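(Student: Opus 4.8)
Everything will be read off the expansion $\omega_p(|\bk|)=p+A_{2,p}|\bk|^{-2}+A_{4,p}|\bk|^{-4}+o(|\bk|^{-4})$ of Lemma \ref{LemEigens}: it suffices to determine the sign of $\operatorname{Im}A_{2,p}$ and, when this vanishes, of $\operatorname{Im}A_{4,p}$, and then to keep the dominant term. Since $\mathcal{P}_s$ is finite, all the implied constants $C$ can be taken uniform over $p\in\mathcal{P}_s$, so $\operatorname{Im}A_{2,p}<0$ for all $p$ gives \eqref{NonCrit}$\Rightarrow$\eqref{estimImNC}, and ``$\operatorname{Im}A_{2,p}<0$, or else $\operatorname{Im}A_{2,p}=0$ and $\operatorname{Im}A_{4,p}<0$'' for all $p$ gives \eqref{Crit1}--\eqref{Crit2}$\Rightarrow$\eqref{estimImC}. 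So the whole proof is a sign computation on the coefficients \eqref{eq.constCps}.

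\textbf{Sign of $\operatorname{Im}A_{2,p}$.} Fix $p\in\mathcal{P}_s$. Since $\mathcal{P}_d=\calP_e\cap\calP_m\cap\bbR$, the point $p$ lies in exactly one of $\calP_e$, $\calP_m$; say $p=\pm\omega_{e,j_p}\in\calP_e$ (the other case being symmetric), with $\alpha_{e,j_p}=0$ forced by Remark \ref{Rem-roots} (a real root of $q_{e,j_p}$), and $p\notin\calP_m$. From \eqref{eq.constCps}, $\operatorname{Im}A_{2,p}=-\tfrac12\varepsilon_0\,\Omega_{e,j_p}^2\,\operatorname{Im}\!\big(p\,\mu(p)\big)$, and since $p$ is real and not a pole of $\mu$, formula \eqref{eq.positvity} gives $\operatorname{Im}\!\big(p\mu(p)\big)=\mu_0\,p^2\sum_{\ell}\Omega_{m,\ell}^2\alpha_{m,\ell}/|q_{m,\ell}(p)|^2\ge 0$, with equality if and only if every $\alpha_{m,\ell}$ vanishes. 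Hence $\operatorname{Im}A_{2,p}\le 0$, and $\operatorname{Im}A_{2,p}<0$ unless all $\alpha_{m,\ell}=0$. The key observation is that ``all $\alpha_{m,\ell}=0$'' together with $\alpha_{e,j_p}=0$ and $p\notin\calP_m$ (i.e. $\omega_{e,j_p}\notin\{\omega_{m,\ell}\}$) is exactly situation 1 of Definition \ref{Critical_cases}. Therefore: in the non-critical case we get \eqref{NonCrit}; in the critical case, situation 1 forces $\calP_e\cap\calP_s\ne\varnothing$ (the index produced by the critical condition yields a simple real pole of $\varepsilon$ which is not a pole of $\mu$, using $(\mathrm{H}_1)$), and it gives $\operatorname{Im}A_{2,p}=0$ for $p\in\calP_e\cap\calP_s$ while, since not all $\alpha_{e,j}$ vanish by \eqref{WD}, $\operatorname{Im}A_{2,p}<0$ for $p\in\calP_m\cap\calP_s$; situation 2 is symmetric. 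This settles all the $A_{2,p}$ claims in \eqref{NonCrit}, \eqref{Crit1}, \eqref{Crit2}.

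\textbf{Sign of $\operatorname{Im}A_{4,p}$ in the critical case.} Take $p=\pm\omega_{e,j_p}\in\calP_e\cap\calP_s$ in situation 1 (so all $\alpha_{m,\ell}=0$). Write $F:=\omega^2\mu\,h_{e,p}=(\omega-p)\,\mathcal{D}$, which is analytic at $p$, so $A_{4,p}=\tfrac12\big(F^2\big)'(p)=F(p)F'(p)$. Since all $\alpha_{m,\ell}=0$, the rational function $\omega\mapsto\omega^2\mu(\omega)$ has real coefficients and is real, finite and nonzero at $p$ (finiteness from $p\notin\calP_m$, nonvanishing from $(\mathrm{H}_2)$, which excludes $p\in\calZ_m$); moreover $h_{e,p}(p)$ is real by \eqref{prophep}. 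Hence $\operatorname{Im}F(p)=0$ (consistent with $\operatorname{Im}A_{2,p}=0$) and $\operatorname{Im}F'(p)=p^2\mu(p)\,\operatorname{Im}h_{e,p}'(p)$, so $\operatorname{Im}A_{4,p}=p^4\mu(p)^2\,h_{e,p}(p)\,\operatorname{Im}h_{e,p}'(p)$. Differentiating \eqref{prophep} at $\omega=p$ gives $h_{e,p}'(p)=\varepsilon_0\big(1+\Omega_{e,j_p}^2/(4p^2)-\sum_{j\ne j_p}\Omega_{e,j}^2/q_{e,j}(p)\big)$ (the finite part of $\varepsilon$ at $p$), whence, using $\operatorname{Im}\big(1/q_{e,j}(p)\big)=-\alpha_{e,j}p/|q_{e,j}(p)|^2$ and the fact that the $j=j_p$ term is absent because $\alpha_{e,j_p}=0$, $\operatorname{Im}h_{e,p}'(p)=\varepsilon_0\,p\sum_{j\ne j_p}\Omega_{e,j}^2\alpha_{e,j}/|q_{e,j}(p)|^2$. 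Combining with $h_{e,p}(p)=-\varepsilon_0\Omega_{e,j_p}^2/(2p)$ yields $\operatorname{Im}A_{4,p}=-\tfrac12\,\varepsilon_0^2\,\Omega_{e,j_p}^2\,p^4\,\mu(p)^2\sum_{j\ne j_p}\Omega_{e,j}^2\alpha_{e,j}/|q_{e,j}(p)|^2$, which is strictly negative because \eqref{WD} provides at least one $\alpha_{e,j}>0$ and that index necessarily differs from $j_p$. The case $p\in\calP_m\cap\calP_s$ in situation 2 is handled by the symmetric computation. This proves \eqref{Crit1}--\eqref{Crit2}; then \eqref{estimImC} follows from \eqref{eq.asymptps}, since each $p\in\calP_s$ contributes either $\operatorname{Im}\omega_p(|\bk|)\sim\operatorname{Im}A_{2,p}|\bk|^{-2}$ with $\operatorname{Im}A_{2,p}<0$, or $\operatorname{Im}\omega_p(|\bk|)\sim\operatorname{Im}A_{4,p}|\bk|^{-4}$ with $\operatorname{Im}A_{4,p}<0$, so in both cases $\operatorname{Im}\omega_p(|\bk|)<-C|\bk|^{-4}$ for $|\bk|$ large, uniformly in $p\in\calP_s$.

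\textbf{Main obstacle.} The delicate part is the last one: identifying $h_{e,p}'(p)$ with the finite part of $\varepsilon$ at the pole $p$ and extracting its imaginary part cleanly — in particular noticing that the potentially singular $j=j_p$ term drops out precisely because $\alpha_{e,j_p}=0$ — and then verifying that the surviving sum is genuinely nonempty, i.e. that weak dissipation really supplies an index $j\ne j_p$ with $\alpha_{e,j}>0$. The $A_{2,p}$ computation and the deduction of \eqref{estimImNC}, \eqref{estimImC} from the asymptotics are routine by comparison.
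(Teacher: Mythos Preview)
Your proof is correct and follows essentially the same route as the paper's: both read the signs of $\operatorname{Im}A_{2,p}$ directly from \eqref{eq.constCps} via \eqref{eq.positvity}, identify the vanishing case with the critical configuration of Definition~\ref{Critical_cases}, and in that case compute $\operatorname{Im}A_{4,p}=p^4\mu(p)^2\,h_{e,p}(p)\,\operatorname{Im}h_{e,p}'(p)$ using \eqref{prophep} and the expression \eqref{eq.Imhprimee} for $\operatorname{Im}h_{e,p}'(p)$. One small remark: your aside that ``the $j=j_p$ term is absent because $\alpha_{e,j_p}=0$'' is slightly off --- that term is absent from the sum in $h_{e,p}'(p)$ simply because $h_{e,p}$ is defined in \eqref{prophep} with the $j_p$ contribution split off as $-\varepsilon_0\Omega_{e,j_p}^2/(\omega+p)$, whose derivative at $p$ is the real number $\varepsilon_0\Omega_{e,j_p}^2/(4p^2)$ --- but this does not affect your argument.
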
 
\begin{proof}{\bf Non critical case.} We prove \eqref{NonCrit} for $p \in {\cal P}_s \cap {\cal P}_e$, the case  $p \in {\cal P}_s \cap {\cal P}_m$ is similar.\\ [12pt]  According to formula \eqref{eq.constCps}, we simply have to check that $\operatorname{Im} (p \, \mu(p)) > 0$. Thanks to  \eqref{eq.positvity} , this will be true as soon as,  at least one index $\ell$, $\alpha_{m,\ell}> 0$. However, $p \in {\cal P}_s \cap {\cal P}_e$ means that  $p = \pm \, \omega_{e,j}$ for some $j$ with $\alpha_{e,j}= 0$ and also that $\omega_{e,j} \notin \big\{ \omega_{m,\ell} \big\}$ (since $p$ is a simple pole). Thus, if all $\alpha_{m,\ell}$ vanished, we would be in the situation 1 of the critical case, which is excluded. 
	\\ [12pt]
	Of course, \eqref{estimImNC} is a direct consequence of \eqref{NonCrit} . \\ [12pt]
{\bf Critical case.}  
We prove \eqref{Crit1} in situation 1, the proof of \eqref{Crit2} in situation 2 being similar. \\ [12pt] 
First  ${\cal P}_e \cap {\cal P}_s$ is non empty since any $\omega_{e,j}$ such that $\alpha_{e,j} = 0 \mbox{  and } \omega_{e,j} \notin \big\{ \omega_{m,\ell} \big\}$ (a set which is itself non empty by definition of the situation 1) belongs to ${\cal P}_e \cap {\cal P}_s$. \\ [12pt] Let $p \in {\cal P}_e \cap {\cal P}_s$. In situation 1,  all $\alpha_{m,\ell}$ vanish thus, by \eqref{eq.positvity}, $\mu(\omega)$ and $\mu'(\omega)$ are real-valued  for $\omega \in \R\setminus  \calP_m$. In particular, $ \operatorname{Im}  \, \big(p \, \mu(p)\big) = 0$ which  implies  by \eqref{eq.constCps} that $\operatorname{Im}A_{2,p} = 0$. 
Next, as $h_{e,p}(p)\in \R$ by \eqref{prophep}, using \eqref{eq.constCps} again yields
\begin{equation}\label{ImAp4}
A_{p,4}= (\omega^2   \mu \,h_{e,p})'(p) (\omega^2   \mu \,h_{e,p})(p) \quad \Rightarrow \quad\operatorname{Im}  \, A_{p,4} =    p^4  \, \mu^2(p) \, h_{e,p}(p) \  \operatorname{Im}  h'_{e,p}(p) . 
\end{equation}
Moreover, $p \in {\cal P}_e \cap {\cal P}_s$ implies $p = \pm \, \omega_{e,k}$ for some $k$ such that  $\alpha_{e,k} = 0$. However, from  the formulas for $\mu(\omega)$ \eqref{eq.permmitivity-permeabiity},  the polynomials  $q_{e,j}$ (\ref{eq.polynom}) and  $h_{e,p}$ (\ref{prophep}), one computes that
\begin{equation}\label{eq.Imhprimee}
	\ds h_{e,p}(p)= - \,  \varepsilon_0 \,  \frac{\Omega_{e,k}^2}{2p}  \quad \mbox{and} \quad 	\operatorname{Im} h_{e,p}'(p)= \varepsilon_0 \, p \sum_{j\neq k} \frac{\alpha_{e,j} \, \Omega_{e,j}^2}{ |q_{e,j}(p)|^2} . 
\end{equation}
The weak dissipation condition \eqref{WD} implies that at least one $\alpha_{e,j}$ for $j \neq k$  is  positive so that $ h_{e,p}(p )\,\operatorname{Im} h_{e,p}'(p) < 0$, implying by \eqref{ImAp4} that  $\operatorname{Im} A_{p,4} < 0$.\\ [12pt]
Finally, for $p \in {\cal P}_s \cap {\cal P}_m$, the proof of $\operatorname{Im}  \, A_{p,2} < 0$ uses  \eqref{eq.constCps}, \eqref{eq.positvity} and the fact that at least one $\alpha_{e,j}$ for $j \neq k$  is positive. The proof of  \eqref{Crit1} is thus complete. \\ [12pt]
Finally \eqref{estimImC} is easily deduced from \eqref{eq.asymptps},  \eqref{Crit1}  and  \eqref{Crit2} after  remarking that, for  $|\bk|$ large enough, $|\bk|^{-4} \lesssim |\bk|^{-2}$ which implies $- |\bk|^{-2} \lesssim - |\bk|^{-4}$ (details are again left to the reader).
\end{proof} 
Thus, proceeding as in the proof of Lemma \ref{LemEstiinfty}, one shows using Lemma \ref{LemPros}  and Corollary  \ref{CoroIm} the following result.
\begin{Lem}\label{LemEstis}
There exists $k_+>0$ and $ C>0$  such that the function  $\bbU_s(\bk,t)$   defined by   \eqref{eq.decompositionfourterm} and \eqref{eq.constantinf}  satisfies the following estimates: 
\begin{enumerate}
\item If the system  \eqref{planteamiento Lorentz} is in a non-critical configuration, then
	\begin{equation} \label{polynomial_decayncr-US}
	|\bbU_s(\bk,t)| \lesssim  \rme^{-\frac{C\, t}{|\bk|^2}} |\bbU_0(\bk)|,  \quad \forall t \geq 0 \  \mbox{  and }  \ \forall \; |\bk|\geq k_+ .
	\end{equation}
	\item If the  system   \eqref{planteamiento Lorentz} is in a critical configuration, then  
	\begin{equation} \label{polynomial_decaycr-US}
	|\bbU_s(\bk,t)| \lesssim  \rme^{-\frac{C\, t}{|\bk|^4}} \ |\bbU_0(\bk)|,  \quad \forall t \geq 0  \ \mbox{  and }  \ \forall \; |\bk|\geq k_+.
	\end{equation}
\end{enumerate}
\end{Lem}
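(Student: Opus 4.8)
The plan is to mimic exactly the argument already used for $\bbU_\infty(\bk,t)$ in Lemma~\ref{LemEstiinfty}, transported to the term $\bbU_s(\bk,t)$. Recall from the decomposition \eqref{decompU}(ii) that
$$
\bbU_s(\bk,t)= \sum_{p\in\mathcal{P}_s} \rme^{-\rmi\,\omega_p(|\bk|)\,t}\, \mathcal{R}_{\bk}^*\,\Pi_p(|\bk|)\,\mathcal{R}_{\bk}\,\bbU_0(\bk).
$$
Since $\mathcal{R}_{\bk}$ and $\mathcal{R}_{\bk}^*$ are unitary, taking norms gives
$$
|\bbU_s(\bk,t)| \leq \sum_{p\in\mathcal{P}_s} \rme^{\operatorname{Im}(\omega_p(|\bk|))\,t}\,\|\Pi_p(|\bk|)\|\,|\bbU_0(\bk)|,\quad\forall\,t\geq0.
$$
So the proof reduces to two ingredients, both already in hand: a uniform bound $\|\Pi_p(|\bk|)\|\lesssim 1$ for $|\bk|$ large, which is precisely the content of Lemma~\ref{LemPros}; and an upper bound on $\operatorname{Im}(\omega_p(|\bk|))$ for $|\bk|$ large, supplied by Corollary~\ref{CoroIm}. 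The set $\mathcal{P}_s$ being finite, the sum over $p$ contributes only a multiplicative constant, so it does not affect the form of the estimate.

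First I would fix $k_+>0$ large enough so that simultaneously: the branches $\omega_p(|\bk|)$, $p\in\mathcal{P}_s$, exist and are well separated (Proposition~\ref{prop.dispersioncurves}), the projector bounds of Lemma~\ref{LemPros} hold, and the imaginary-part estimates of Corollary~\ref{CoroIm} are valid. Then, in the \textbf{non-critical case}, \eqref{estimImNC} gives a constant $C>0$ with $\operatorname{Im}\omega_p(|\bk|)<-C\,|\bk|^{-2}$ for every $p\in\mathcal{P}_s$ and all $|\bk|\geq k_+$; plugging this together with $\|\Pi_p(|\bk|)\|\lesssim1$ into the displayed bound yields
$$
|\bbU_s(\bk,t)| \lesssim \rme^{-\frac{C\,t}{|\bk|^2}}\,|\bbU_0(\bk)|,\qquad \forall\,t\geq0,\ \forall\,|\bk|\geq k_+,
$$
which is \eqref{polynomial_decayncr-US}. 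In the \textbf{critical case}, \eqref{estimImC} gives instead $\operatorname{Im}\omega_p(|\bk|)<-C\,|\bk|^{-4}$, and the same substitution produces
$$
|\bbU_s(\bk,t)| \lesssim \rme^{-\frac{C\,t}{|\bk|^4}}\,|\bbU_0(\bk)|,
$$
i.e.\ \eqref{polynomial_decaycr-US}. (Here one uses that $\rme^{\operatorname{Im}(\omega_p(|\bk|))t}\leq \rme^{-C|\bk|^{-2}t}$, resp.\ $\rme^{-C|\bk|^{-4}t}$, for each $p$, and the finiteness of $\mathcal{P}_s$ to absorb the number of terms into the implied constant, taking $C$ to be the smallest of the constants furnished by Corollary~\ref{CoroIm}.)

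Honestly, there is no real obstacle here: all the hard work — the spectral decomposition of $\bbU(\bk,t)$, the Riesz–Dunford bound $\|\Pi_p(|\bk|)\|\lesssim1$, and the delicate sign analysis of $\operatorname{Im}A_{p,2}$ and $\operatorname{Im}A_{p,4}$ distinguishing the critical and non-critical configurations — has already been carried out in Lemmas~\ref{LemPros}, \ref{LemEigens} and Corollary~\ref{CoroIm}. The only point requiring the slightest care is to make sure $k_+$ is chosen large enough to simultaneously validate all three inputs (existence and smoothness of the branches, projector bounds, imaginary-part asymptotics), which is harmless given the convention stated at the beginning of Section~\ref{sec.estmHF} that $k_+$ may always be enlarged. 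Thus the proof is essentially a transcription of that of Lemma~\ref{LemEstiinfty} with $\omega_{\pm\infty}(|\bk|)$ replaced by $\omega_p(|\bk|)$, $\Pi_{\pm\infty}(|\bk|)$ by $\Pi_p(|\bk|)$, and the exponent $2$ by $2$ (non-critical) or $4$ (critical); the remaining details are routine and may be left to the reader.
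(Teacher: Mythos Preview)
Your proposal is correct and matches the paper's own proof exactly: the paper simply states that one proceeds as in the proof of Lemma~\ref{LemEstiinfty}, using Lemma~\ref{LemPros} (uniform bound on $\Pi_p(|\bk|)$) and Corollary~\ref{CoroIm} (imaginary-part estimates \eqref{estimImNC} and \eqref{estimImC}), which is precisely what you have written out.
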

\subsubsection{Estimates of $\bbU_{d}(\bk,t)$ for $|\bk| \gg 1$}  \label{estiUd}
\noindent This time we estimate $\bbU_{d}(\bk,t)$ in  \eqref{eq.decompositionfourterm}  which involves in particular the projectors $\Pi_{p,r} (|\bk|)$.
\begin{Lem} \label{LemProd}
	The projectors  $\Pi_{p,r} (|\bk|)$, $p \in {\cal P}_d, r = 1,2 $ are uniformly bounded for large $|\bk|$.  
\end{Lem}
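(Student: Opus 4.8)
The plan is to repeat, with the appropriate modifications, the Riesz--Dunford argument of Lemma~\ref{LemPros}, applied now to the eigenvalue $\omega(|\bk|)=\omega_{p,r}(|\bk|)$ with $p\in\mathcal{P}_d$ and $r\in\{1,2\}$; I denote by $\mathcal{C}_{p,r,|\bk|}$ the corresponding contour \eqref{defcont}. The new ingredient, compared with the case $p\in\mathcal{P}_s$, is that $\mathcal{P}_d=\calP_e\cap\calP_m\cap\bbR$ (Section~\ref{sec_dispersion-poles}), so $p$ is \emph{simultaneously} a simple pole of $\varepsilon$ and of $\mu$: there are unique indices $j_0,\ell_0$ with $\alpha_{e,j_0}=\alpha_{m,\ell_0}=0$ and $p=\pm\,\omega_{e,j_0}=\pm\,\omega_{m,\ell_0}$, so that $p$ is a double zero of $Q_eQ_m$, a double pole of $\mathcal{D}$, and two branches $\omega_{p,1}(|\bk|),\omega_{p,2}(|\bk|)$ converge to it. First I would fix the scale of $\rho_{|\bk|}$: by \eqref{eq.pole} with $\mathfrak{m}_p=2$, the distances from $\omega_{p,r}(|\bk|)$ to $p$ and to the other branch $\omega_{p,r'}(|\bk|)$ are $\sim|A_p|^{1/2}|\bk|^{-1}$ and $\sim 2\,|A_p|^{1/2}|\bk|^{-1}$ respectively, while all other eigenvalues and all points of ${\cal S}_{\cal T}$ stay at distance $\gtrsim 1$; hence, for $|\bk|$ large, the infimum in \eqref{choicerhok} is attained at $p$ and $\rho_{|\bk|}\sim\tfrac12|A_p|^{1/2}|\bk|^{-1}$, so $\rho_{|\bk|}\lesssim|\bk|^{-1}$, $\rho_{|\bk|}^{-1}\lesssim|\bk|$, the contour encloses neither $p$ nor $\omega_{p,r'}(|\bk|)$, and $|\omega-p|\lesssim|\bk|^{-1}$ on $\mathcal{C}_{p,r,|\bk|}$.

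Next I would estimate $(\mathcal{D}(\omega)-|\bk|^2)^{-1}=Q_e(\omega)Q_m(\omega)/D_{|\bk|}(\omega)$ on $\mathcal{C}_{p,r,|\bk|}$, as in step~2 of the proof of Lemma~\ref{LemPros} but keeping track of the second near-zero factor. Using \eqref{eq.Dfactor}--\eqref{eq.Dbound}: $|\omega-\omega_{\pm\infty}(|\bk|)|\sim c\,|\bk|$; in $D_{\mathrm{b},|\bk|}(\omega)$ the factor $(\omega-\omega_{p,r}(|\bk|))$ equals $\rho_{|\bk|}$, the factor $(\omega-\omega_{p,r'}(|\bk|))$ is $\gtrsim|\bk|^{-1}$, and every other factor is $\gtrsim 1$, whence $|D_{|\bk|}(\omega)|\gtrsim\rho_{|\bk|}\,|\bk|^{-1}\,|\bk|^{2}=\rho_{|\bk|}\,|\bk|$. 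Since $Q_eQ_m$ has a double zero at $p$, $|Q_e(\omega)Q_m(\omega)|\lesssim|\omega-p|^{2}\lesssim|\bk|^{-2}$, so that $|(\mathcal{D}(\omega)-|\bk|^2)^{-1}|\lesssim\rho_{|\bk|}^{-1}\,|\bk|^{-3}$ on $\mathcal{C}_{p,r,|\bk|}$. For $\mathcal{S}_{|\bk|}(\omega)$ from \eqref{defS} it then remains to bound $\omega\mu(\omega)\bbA_e(\omega)-|\bk|\,{\bf e_3}\times\bbA_m(\omega)$: along $\mathcal{C}_{p,r,|\bk|}$ one has $|q_{e,j_0}(\omega)|^{-1},|q_{m,\ell_0}(\omega)|^{-1}\lesssim|\omega-p|^{-1}\asymp\rho_{|\bk|}^{-1}\lesssim|\bk|$ while the other $q$-polynomials are bounded below, so \eqref{boundoperators} and \eqref{operatorsAeAh} give $\|\bbA_e(\omega)\|,\|\bbA_m(\omega)\|\lesssim|\bk|$; combined with $|\mu(\omega)|\lesssim|\omega-p|^{-1}\lesssim|\bk|$ (simple pole) this yields $\|\omega\mu(\omega)\bbA_e(\omega)-|\bk|\,{\bf e_3}\times\bbA_m(\omega)\|\lesssim|\bk|^{2}$, hence $\|\mathcal{S}_{|\bk|}(\omega)\|\lesssim\rho_{|\bk|}^{-1}|\bk|^{-1}$.

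Then I would bound $\mathcal{V}_{|\bk|}(\omega)$ from \eqref{defV}: using $|q_{e,j_0}(\omega)|^{-1},|q_{m,\ell_0}(\omega)|^{-1}\lesssim|\bk|$ and $|\omega\mu(\omega)|^{-1}\lesssim|\omega-p|\lesssim|\bk|^{-1}$, its dominant contributions (those carrying $1/q_{e,j_0}(\omega)$ and $|\bk|/(\omega\mu(\omega)\,q_{m,\ell_0}(\omega))$) are $\lesssim|\bk|$, so $\|\mathcal{V}_{|\bk|}(\omega)\|\lesssim|\bk|$ on $\mathcal{C}_{p,r,|\bk|}$. Consequently $\|\mathcal{S}_{|\bk|}(\omega)\|\,\|\mathcal{V}_{|\bk|}(\omega)\|\lesssim\rho_{|\bk|}^{-1}$ uniformly on the contour, and feeding this into \eqref{estiproj} for $\Pi(|\bk|)=\Pi_{p,r}(|\bk|)$ gives $\|\Pi_{p,r}(|\bk|)\|\lesssim\rho_{|\bk|}\cdot\rho_{|\bk|}^{-1}\lesssim 1$ for $|\bk|$ large, which is the claim. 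The part that needs the most care --- the main obstacle, such as it is --- is the bookkeeping of powers of $|\bk|$: relative to the $\mathcal{P}_s$ case, both $\varepsilon$ and $\mu$ now blow up like $|\bk|$ near $p$ (rather than only one of them), $Q_eQ_m$ vanishes to second order, and $D_{\mathrm{b},|\bk|}$ picks up the extra near-zero factor $(\omega-\omega_{p,r'}(|\bk|))\sim|\bk|^{-1}$ from the second branch converging to $p$; one must check that these competing gains and losses cancel so that the product $\|\mathcal{S}_{|\bk|}(\omega)\|\,\|\mathcal{V}_{|\bk|}(\omega)\|$ still scales exactly like $\rho_{|\bk|}^{-1}$, as in the two preceding lemmas.
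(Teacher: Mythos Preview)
Your proposal is correct and follows essentially the same approach as the paper's own proof: the same Riesz--Dunford contour with $\rho_{|\bk|}\sim\tfrac12|A_p|^{1/2}|\bk|^{-1}$, the same bound $|(\mathcal{D}(\omega)-|\bk|^2)^{-1}|\lesssim\rho_{|\bk|}^{-1}|\bk|^{-3}$, and the same estimates $\|\mathcal{S}_{|\bk|}(\omega)\|\lesssim\rho_{|\bk|}^{-1}|\bk|^{-1}$ and $\|\mathcal{V}_{|\bk|}(\omega)\|\lesssim|\bk|$, combined via \eqref{estiproj}. The power-counting bookkeeping you highlight as the main obstacle is exactly what the paper tracks as well.
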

\begin{proof}
We follow again the approach of Section \ref{orientation} for $\omega|\bk|) = \omega_{p,r}(|\bk|)$, $p\in \mathcal{P}_d, r =1,2$ and denote  $\mathcal{C}_{p,r,|\bk|}$ the corresponding contour (see \eqref{defcont} and figure \ref{fig-contour-proj}). Without any loss of generality, we can restrict ourselves  to $r=1$.  \\ [12pt]
{\bf Step 1: estimate of $\rho_{|\bk|}$}. Thanks to the asymptotic \eqref{eq.pole}, it is clear that the distance from $\omega_{p,1}(|\bk|)$ to any other eigenvalue that is different from $\omega_{p,2}(|\bk|)$ remains bounded from below, for large $|\bk|$, by a positive constant. The same observation holds true for distance from $\omega_{p,1}(|\bk|)$ to any other point of the set ${\cal S}_{\cal T}$ than $p$. \\ [12pt]
Oppositely, the distances $|\omega_{p,1}(|\bk|)-p|$ and $|\omega_{p,1}(|\bk|)-\omega_{p,2}(|\bk)|$  tend to 0 when $|\bk| \rightarrow + \infty$. More precisely, from \eqref{eq.pole} applied with $\mathfrak{m}_p = 2$, since $p \in {\cal P}_d$, one deduces that 
	\begin{equation} \label{distances}
|\, \omega_{p,1}(|\bk|)-p \, | \sim  |A_p|^{\frac{1}{2}} \,  |\bk|^{-1}, \quad | \, \omega_{p,1}(|\bk|)-\omega_{p,2}(|\bk)\, | \sim  2 \, |A_p|^\frac{1}{2} \,  |\bk|^{-1}, \quad (|\bk| \to +\infty )
\end{equation}  
with $|A_p|> 0$. As a consequence, for $|\bk|$ large enough, we have 
 		\begin{equation} \label{estirhokd}
	\rho_{|\bk|} =   \mbox{$\frac{1}{2}$} \; | \, \omega_{p}(|\bk|) - p \, | \sim  \mbox{$\frac{1}{2}$} \;  |A_p|^{\frac{1}{2}} \; |\bk|^{-1} \quad (|\bk| \rightarrow + \infty), \quad \mbox{with} \quad  |A_p|> 0.
\end{equation}  
\noindent 
\hspace*{-0.1cm} {\bf Step 2: Estimate of \big($\mathcal{D}(\omega)-|\bk|^2\big)^{-1}$.} It is very similar to that of in Lemma \ref{LemPros}. One observes thanks to \eqref{eq.pole}  that, when $|\bk| \rightarrow +\infty$,  in the product \eqref{eq.Dfactor} defining $D_{b,|\bk|}(\omega)$,  only two terms, namely $\omega - \omega_{p,1}(|\bk|)$  and $\omega - \omega_{p,2}(|\bk|)$, are not bounded from below when $\omega$ describes $\mathcal{C}_{p,1,|\bk|}$ and $|\bk| \rightarrow + \infty$. Thus for some $C>0$, $\omega \in \mathcal{C}_{p,1,|\bk|}$ and $|\bk|$ large enough 
\begin{equation} \label{boundinf} 
D_{b,|\bk|}(\omega)| \geq C \; \big| \,  \omega - \omega_{p,1}(|\bk|) \, \big| \  \big| \,  \omega - \omega_{p,2}(|\bk|) \, \big| = C \; \rho_{|\bk|}  \,  \big| \,  \omega - \omega_{p,2}(|\bk|) \, \big| .
\end{equation}
By the reverse triangular inequality, along $\mathcal{C}_{p,1,|\bk|}$,
$$ \big| \,  \omega - \omega_{p,2}(|\bk|) \, \big| \geq \big| \,  \omega_{p,1}(|\bk|)  - \omega_{p,2}(|\bk|) \, \big| - \big| \,  \omega - \omega_{p,1}(|\bk|) \, \big| =  \big| \,  \omega_{p,1}(|\bk|)  - \omega_{p,2}(|\bk|) \, \big| - \rho_{|\bk|},
$$ 
thus, using the equivalent \eqref{estirhokd} for $\rho_{|\bk|}$ and $|\omega_{p,1}(|\bk|)-\omega_{p,2}(|\bk)| \sim  2 \, |A_p| \,  |\bk|^{-1}$, 
$$ 
\big| \,  \omega - \omega_{p,2}(|\bk|) \, \big| \geq  \big| \,  \omega_{p,1}(|\bk|)  - \omega_{p,2}(|\bk|) \, \big| - \rho_{|\bk|} \sim  \mbox{$\frac{3}{2}$} \;  |A_p| \; |\bk|^{-1}.
$$
Therefore, from \eqref{boundinf}, we deduce that
$$|D_{b,|\bk|}(\omega)|^{-1}\lesssim \rho_{|\bk|}^{-1} \, |\bk|,  \quad  \forall \; \omega\in \mathcal{C}_{p,1, |\bk|}. $$
Thus, proceeding as in Lemma \ref{LemPros} (observe that by passing from $|D_{b,|\bk|}(\omega)|^{-1}$ to $|D_{|\bk|}(\omega)|^{-1}$, see \eqref{boundDb} and \eqref{eq.Dboundinvp}, one looses   two powers of $|\bk|$),
\begin{equation}\label{eq.Dboundinvpd}
	|D_{|\bk|}(\omega)|^{-1}\lesssim    \rho_{|\bk|}^{-1}   \,  |\bk|^{-1},  \quad  \forall \; \omega\in \mathcal{C}_{p,1, |\bk|} \, . 
\end{equation}
Moreover,  $p\in \mathcal{P}_d$ is a double zero of the product $Q_e\, Q_m$. Hence, for any $\omega\in \mathcal{C}_{p,1, |\bk|}$ 
\begin{equation}\label{eq.boundQpsd}
	|Q_e(\omega)\, Q_m(\omega)| \lesssim  |\omega-p|^2 \leq \big( \rho_{ |\bk|}+|\, \omega_p(|\bk|)-p|\big)^2 \,| \sim \mbox{$\frac{9}{4}$} \;  |A_p| \; |\bk|^{-2} \quad (\mbox{by (\ref{distances}, \ref{estirhokd})}).
\end{equation}
As a consequence, from \eqref{eq.expressioncalDinv}, we finally get by product of \eqref{eq.Dboundinvpd} and \eqref{eq.boundQpsd}, 
\begin{equation}\label{boundDcalpm2d}
	|(\mathcal{D}(\omega)-|\bk|^2)^{-1}| \lesssim   \rho_{|\bk|}^{-1}  \; |\bk|^{-3},  \quad \forall \; \omega\in \mathcal{C}_{p, 1, |\bk|} . 
\end{equation}
\noindent {\bf Step 3: Estimates of $\mathcal{S}_{|\bk|}(\omega)$ and $\mathcal{V}_{|\bk|}(\omega)$}
~\\ [12pt]
As $p \in {\cal P}_m \cap {\cal P}_e$,  there is a unique pair  of indices $\{\ell_0, j_o\}$ such that $p =\pm \omega_{e,j_0} =\pm \omega_{m,\ell_0}$. The situation is a kind of mix between the two situations $(i)$ and $(ii)$ met in the step 3 of the proof of Lemma \ref{LemPros}. \\ [12pt]  Using  \eqref{boundoperators} one sees that all the operators $(\bbA_{e,j}(\omega), \dot \bbA_{e,j}(\omega),  \bbA_{m,\ell}(\omega), \dot  \bbA_{m,\ell}(\omega))$ are uniformly bounded along $\mathcal{C}_{p,1, |\bk|}$ except $(\bbA_{e,j_0}(\omega), \dot  \bbA_{e,j_0}(\omega), \bbA_{m,\ell_0}(\omega), \dot  \bbA_{m,\ell_0}(\omega) \big)$. \\ [12pt]
Proceeding as for \eqref{estiAeAm} in the proof of Lemma \ref{LemPros}, one easily gets (we omit the details)
\begin{equation}\label{estiAeAmd}
	\|\bbA_{e}(\omega)\| \lesssim |\bk|, \quad \|  \bbA_{m}(\omega)\| \lesssim |\bk|.
\end{equation}
As $p$ is a simple pole of $\mu(\omega)$, we obtain, similarly to   \eqref{estimmu}  in the proof of Lemma \ref{LemPros},
\begin{equation} \label{estimmud2} 
\forall \; \omega \in  \mathcal{C}_{p,1, |\bk|}, \quad 
	|\mu(\omega)| \lesssim |\omega-p|^{-1}\lesssim |\bk|,
\end{equation}
which, joined to \eqref{estiAeAmd}, gives 
\begin{equation} \label{estimAeAmbisd} 
	\| \omega \mu(\omega ) \bbA_e(\omega) -|\bk| \, {\bf e_3} \times \bbA_m(\omega)\| \leq |\bk|^{2}.
\end{equation}
Finally using  \eqref{boundDcalpm2d} ans  \eqref{estimAeAmbisd}  in the definition \eqref{defS} of $\mathcal{S}_{|\bk|}(\omega)$, we get 
\begin{equation}\label{eq.estimSpmd}
	\|\mathcal{S}_{|\bk|}(\omega)  \|\lesssim    \rho_{|\bk|}^{-1} \; |\bk|^{-1},  \quad  \forall \; \omega\in \mathcal{C}_{p,1, |\bk|} .
\end{equation}
For $\mathcal{V}_{|\bk|}(\omega)$, we first observe that $|\mu(\omega)| \sim \mu_p  \, |\omega-p|^{-1}$ ($\omega \rightarrow  p$) also implies that  along $\mathcal{C}_{p,1,|\bk|}$,
\begin{equation*} \label{boundmu2bis}
	|\mu(\omega)|^{-1} \lesssim \mu_p^{-1} \, \rho_{|\bk|}^{-1} \lesssim   |\bk|^{-1}  \ \mbox{ as } \ |\bk| \to +\infty.
\end{equation*}
Thus, the function ${|\bk|}\, {\mu(\omega)^{-1}}$is bounded for $|\bk|$ large enough. As a consequence, one sees on  formula \eqref{defV} that $\mathcal{V}_{|\bk|}(\omega)$  blows  with the same rate than $q_{e,j_0}(\omega)^{-1}$ and $q_{m,\ell_0}(\omega)^{-1}$, that is to say proportionally to $\rho_{|\bk|}^{-1}$, in other words  proportionally to $|\bk|$.  Therefore
\begin{equation}\label{eq.estimVpmd}
	\|\mathcal{V}_{|\bk|}(\omega)  \|\lesssim    |\bk|,  \quad  \forall \; \omega\in \mathcal{C}_{p,1, |\bk|} .
\end{equation}
\noindent {\bf Conclusion.} Again, by \eqref{eq.estimSpmd} and \eqref{eq.estimVpmd}, 
$
\|\mathcal{S}_{|\bk|}(\omega)\| \,  \|\mathcal{V}_{|\bk|}(\omega)  \| \leq \rho^{-1}_{|\bk|}
$
and one concludes with \eqref{estiproj} for $\Pi(|\bk|) = \Pi_{p} (|\bk|)$.
\end{proof}

\noindent We now give the asymptotic expansion of the  eigenvalues $\omega_{p,r}(|\bk|)$ for large $|\bk|$.
\begin{Lem} \label{LemEigend}
For $p\in \calP_d$,  then  $p=\pm \, \omega_{e,j_p}=\pm \, \omega_{m,\ell_p}$ for some $(j_p,\ell_p)$ and  
\begin{equation}\label{eq.asymptd}
\omega_{p,r}(|\bk|)=p+ (-1)^r \,  \frac{\Omega_{e,j_p} \, \Omega_{m,\ell_p} }{2\, c} \, |\bk|^{-1} + A_{p,2}  \, |\bk|^{-2}+o(|\bk|^{-2}), \quad \mbox{ as } |\bk| \to +\infty,
\end{equation}
where  the complex number $A_{p,2}$ is given by 
\begin{equation}\label{eq.coeffomegad}
A_{p,2}= \frac{1}{2} \,(\omega^2\, h_{e,j_p} h_{m,\ell_p})'(p),
\end{equation}
where the functions $h_{e,p}$ and $h_{m,p}$ are defined in \eqref{eq.functionh}. Moreover, one has 
\begin{equation}\label{eq.coeffomegadbis}
	\operatorname{Im} \, A_{p,2}< 0.
\end{equation}
\end{Lem}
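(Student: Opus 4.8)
The plan is to mimic the structure of the proof of Lemma~\ref{LemEigens}, but now adapted to the case of a double pole $p \in \calP_d$, where the local factorization of $\mathcal D^{-1}$ involves the square $(\omega-p)^2$ rather than the first power. First I would recall from Section~\ref{sec_dispersion-poles} that $\calP_d = \calP_e \cap \calP_m \cap \bbR$, so a double real pole $p$ is simultaneously a simple pole of $\varepsilon$ and of $\mu$; by $(\mathrm H_1)$ there is a unique pair $(j_p, \ell_p)$ with $p = \pm\,\omega_{e,j_p} = \pm\,\omega_{m,\ell_p}$ and $\alpha_{e,j_p} = \alpha_{m,\ell_p} = 0$. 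Writing $h_{e,p}(\omega) = (\omega-p)\,\varepsilon(\omega)$ and $h_{m,p}(\omega) = (\omega-p)\,\mu(\omega)$ as in \eqref{eq.functionh}, both analytic and nonzero near $p$ (using $(\mathrm H_2)$, which guarantees $p \notin \calZ_e \cup \calZ_m$), one gets the factorization
\begin{equation*}
\mathcal D(\omega)^{-1} = (\omega-p)^2 \, g_p(\omega), \qquad g_p(\omega) := \big(\omega^2\, h_{e,p}(\omega)\, h_{m,p}(\omega)\big)^{-1},
\end{equation*}
with $g_p$ analytic near $p$ and $g_p(p) = \big(p^2\, h_{e,p}(p)\, h_{m,p}(p)\big)^{-1} \neq 0$. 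From \eqref{prophep} (and its magnetic analogue) one computes $h_{e,p}(p) = -\varepsilon_0\,\Omega_{e,j_p}^2/(2p)$ and $h_{m,p}(p) = -\mu_0\,\Omega_{m,\ell_p}^2/(2p)$, so $g_p(p) = \dfrac{4\, p^2}{p^2\, \varepsilon_0 \mu_0\, \Omega_{e,j_p}^2\, \Omega_{m,\ell_p}^2} = \dfrac{4\, c^2}{\Omega_{e,j_p}^2\, \Omega_{m,\ell_p}^2}$, a positive real number.

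Next I would invoke Lemma~\ref{Lem-implicte-function} with $\mathcal G(\omega) = \mathcal D(\omega)^{-1}$, $z = p$, $g = g_p$, $\mathfrak m = 2$, $A = g_p(p)$ and $\zeta = |\bk|^{-1}$, exactly as in Step~1 of the proof of Proposition~\ref{prop.dispersioncurves}. This yields, for $|\bk|$ large, the two branches $\omega_{p,r}(|\bk|)$, $r=1,2$, with the leading asymptotics $\omega_{p,r}(|\bk|) = p + a_{p,r}\,|\bk|^{-1}(1+o(1))$ where $a_{p,r} = g_p(p)^{1/2}\, \rme^{\rmi r\pi} = (-1)^r\, g_p(p)^{1/2} = (-1)^r\, \dfrac{2c}{\Omega_{e,j_p}\Omega_{m,\ell_p}}$. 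Wait --- comparing with \eqref{eq.asymptd} this should be $(-1)^r\, \Omega_{e,j_p}\Omega_{m,\ell_p}/(2c)$, so the correct normalization comes from applying the lemma to $\mathcal G = \mathcal D$ directly near its zero $1/\omega$, or equivalently from noting $g_p(p)^{-1/2} = \Omega_{e,j_p}\Omega_{m,\ell_p}/(2c)$; I would be careful here to track whether the implicit-function lemma is stated for $\mathcal G = \mathcal D^{-1} = \zeta^{\mathfrak m}$ or for the zero of $\mathcal D$ itself, and pick the formulation giving the factor as in \eqref{eq.asymptd}. Then, pushing the expansion provided by \eqref{eq.asympexpansion} of Lemma~\ref{Lem-implicte-function} one order further gives the $|\bk|^{-2}$ coefficient $A_{p,2}$ in terms of $g_p'(p)/g_p(p)$, and a short computation identifies it with $\tfrac12\,(\omega^2\, h_{e,p}\, h_{m,p})'(p)$ as claimed in \eqref{eq.coeffomegad} (the routine differentiation is left to the reader, as in Lemma~\ref{LemEigens}).

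The main point, and the place requiring genuine care, is the sign statement \eqref{eq.coeffomegadbis}, $\operatorname{Im} A_{p,2} < 0$. I would proceed as in the proof of Corollary~\ref{CoroIm}: since $\alpha_{e,j_p} = \alpha_{m,\ell_p} = 0$, the functions $h_{e,p}$ and $h_{m,p}$ are real-valued at $p$ (by \eqref{prophep}), and using the Leibniz rule on $(\omega^2\, h_{e,p}\, h_{m,p})'(p)$ together with the real-valuedness of $\omega^2\, h_{e,p}\, h_{m,p}$ at $p$ we get
\begin{equation*}
\operatorname{Im} A_{p,2} = \tfrac12\, p^2\, h_{e,p}(p)\, h_{m,p}(p)\, \big(\operatorname{Im} h_{e,p}'(p) + \operatorname{Im} h_{m,p}'(p)\big),
\end{equation*}
after absorbing the real factor $p^2 h_{e,p} h_{m,p}$. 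Now $h_{e,p}(p)\, h_{m,p}(p) = \varepsilon_0 \mu_0\, \Omega_{e,j_p}^2 \Omega_{m,\ell_p}^2/(4p^2) > 0$, while as in \eqref{eq.Imhprimee},
\begin{equation*}
\operatorname{Im} h_{e,p}'(p) = \varepsilon_0\, p \sum_{j \neq j_p} \frac{\alpha_{e,j}\,\Omega_{e,j}^2}{|q_{e,j}(p)|^2}, \qquad \operatorname{Im} h_{m,p}'(p) = \mu_0\, p \sum_{\ell \neq \ell_p} \frac{\alpha_{m,\ell}\,\Omega_{m,\ell}^2}{|q_{m,\ell}(p)|^2}.
\end{equation*}
The hard part is to show the bracketed sum of these two quantities has the sign of $-p$ (so that the whole expression is negative): this is exactly where the weak dissipation assumption \eqref{WD} enters. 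Since $\alpha_{e,j_p} = \alpha_{m,\ell_p} = 0$ are the damping coefficients associated with the modes producing the double pole, \eqref{WD} forces at least one other $\alpha_{e,j}$ ($j \neq j_p$) or $\alpha_{m,\ell}$ ($\ell \neq \ell_p$) to be strictly positive, hence at least one of the two sums is strictly positive; combined with the common factor $p$ and the overall sign bookkeeping (treating $p > 0$ and $p < 0$ symmetrically, using that $\calP_d$ is invariant under $\omega \mapsto -\overline\omega = -\omega$), one concludes $\operatorname{Im} A_{p,2} < 0$. I expect the sign-chasing through the factor $p$ and the real/imaginary decomposition of the Leibniz expansion to be the only delicate bookkeeping; everything else is a direct transcription of the machinery already developed for Lemmas~\ref{LemEigeninfty} and~\ref{LemEigens}.
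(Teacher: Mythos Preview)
Your overall strategy matches the paper's: factor $\mathcal D(\omega)^{-1} = (\omega-p)^2 g_p(\omega)$ with $g_p = (\omega^2 h_{e,p} h_{m,p})^{-1}$, apply Lemma~\ref{Lem-implicte-function} with $\mathfrak m = 2$ and $\zeta = |\bk|^{-1}$, and then identify $A_{p,2} = \tfrac12\,(\omega^2 h_{e,p} h_{m,p})'(p)$ via $-g_p'(p)/g_p(p)^2 = (g_p^{-1})'(p)$. The minor hesitation about $a_n$ versus $a_n^{-1}$ is harmless: the asymptotic formula \eqref{eq.asympexpansion} involves $a_n^{-1}$, and with $a_n = (-1)^n\sqrt{g_p(p)}$ you recover $a_n^{-1} = (-1)^n\,\Omega_{e,j_p}\Omega_{m,\ell_p}/(2c)$ as needed.

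There is, however, a genuine algebraic error in your sign argument. Applying the Leibniz rule to $(\omega^2 h_{e,p} h_{m,p})'(p)$ and keeping only imaginary parts (recall $p$, $h_{e,p}(p)$, $h_{m,p}(p)$ are real) gives
\[
\operatorname{Im} A_{p,2} \;=\; \tfrac{p^2}{2}\,\big[\,h_{m,p}(p)\,\operatorname{Im} h_{e,p}'(p) \;+\; h_{e,p}(p)\,\operatorname{Im} h_{m,p}'(p)\,\big],
\]
not your expression with the common factor $h_{e,p}(p)\,h_{m,p}(p)$ pulled out. The two cross-terms carry \emph{different} real prefactors, and this matters for the sign: with your formula one has $h_{e,p}(p)h_{m,p}(p) = \varepsilon_0\mu_0\,\Omega_{e,j_p}^2\Omega_{m,\ell_p}^2/(4p^2) > 0$ while, by \eqref{eq.Imhprimee}, $\operatorname{Im} h_{e,p}'(p) + \operatorname{Im} h_{m,p}'(p)$ has the sign of $+p$, so for $p>0$ your formula would yield $\operatorname{Im} A_{p,2} > 0$. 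With the correct cross-terms, each summand is instead the product of a factor of sign $-p$ (namely $h_{m,p}(p)$ or $h_{e,p}(p)$) with a factor of sign $+p$, hence nonpositive; weak dissipation \eqref{WD} (plus $\alpha_{e,j_p}=\alpha_{m,\ell_p}=0$) then makes at least one strictly negative, and $\operatorname{Im} A_{p,2} < 0$ follows with no further case analysis in $p$.
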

\begin{proof}
Let $p\in \calP_d$. Then (by $\mathrm{H}_1$), there exist two unique indices  $j_p\in \{ 1,\ldots , N_e\}$ and $\ell_p\in \{ 1,\ldots , N_m\}$   such that $p=\pm \omega_{e,j_p}=\pm \omega_{m,\ell_p}$. Then the two functions $|\bk|\mapsto \omega_{p,r}(|\bk|)$  are   defined   in the proof of Proposition \ref{prop.dispersioncurves} for $|\bk|$ large enough  via the Lemma  \ref{Lem-implicte-function} as the two  branches of solutions of the equation  $\mathcal{D}(\omega)^{-1}=\zeta^2 $ with $\zeta=|\bk|^{-1}$ in a vicinity of $p$. Here,  $p\in \calP_e\cap \calP_m$ is a common simple pole of the rational functions   $\omega \varepsilon$ and  $\omega \mu$. More precisely, from the definition \eqref{eq.constCps} of $( h_{e,p}, h_{m,p})$ (see Lemma \ref{LemEigens}) we have $$\mathcal{D}(\omega)= \omega^2 h_{e,p}(\omega)\, h_{m,p}(\omega) \,(\omega-p)^{-2}.$$
The latter expression emphasizes  that $p$ is a double pole of the rational function $\mathcal{D}(\omega)$, thus a double zero of  $\mathcal{D}^{-1}$ :  $\mathcal{D}(\omega)^{-1}=(\omega -p)^2 g(\omega)$  where $g(\omega):=( \omega^2 h_{e,p}(\omega)\, h_{m,p}(\omega) )^{-1}$ is  analytic in the vicinity of $p$ and satisfies (using the expression \eqref{prophep} for $h_{e,p}$ and its equivalent for $h_{m,p}$):  $$g(p)=4 \, c^2\Omega_{e,j_p}^{-2} \, \Omega_{m,\ell_p}^{-2 }>0.$$
Thus, using the asymptotic  formula \eqref{eq.asympexpansion} of Lemma \ref{Lem-implicte-function}  (with  $\zeta=|\bk|^{-1}$,  $z=p$,  $\mathfrak{m}=2$ and $ a_{1}=-\sqrt{g(p)}=-2c /(\Omega_{e,j_p}\,  \Omega_{m,\ell_p})$, $a_{2}=\sqrt{g(p)}$ the two roots of $X^2=g(p)$)    yields
\begin{equation}\label{eq.asympomegad}
\omega_{p,r}(|\bk|)= p+  a_{r}^{-1} |\bk|^{-1}  -\frac{a_r^{-2} g'(p)}{2\,  g(p)} \, |\bk|^{-2} +o(  |\bk|^{-2} ),  \mbox{ as } |\bk| \to +\infty.
\end{equation}
To conclude, it remains to remark  that $a_r^{-2}=g(p)^{-1}$ and $g^{-1}= \omega^2 h_{e,p}\, h_{m,p}$ which gives
\begin{equation}\label{eq.derivepoledouble}
-\frac{a_r^{-2} g'(p)}{  g(p)} =-\frac{ g'(p)}{ g^2(p)}=(g^{-1})'(p)=( \omega^2 h_{e,p}\, h_{m,p})'(p).
\end{equation}
Finally, it remains to show that  $\operatorname{ Im}A_{p,2}<0$.
From  \eqref{eq.Imhprimee} and the expressions \eqref{eq.functionh} of $h_{e,p}$ and $h_{m,p}$, it follows (as $p^2$,  $h_{e,p}(p)$, $h_{m,p}(p )$ are real)  that 
$$
\operatorname{ Im}A_{p,2}= \frac{p^2}{2} \, \big[ h_{m,p}(p) \operatorname{Im} h'_{e,p}(p) +h_{e,p}(p) \operatorname{Im}h'_{m,p}(p)\big].
$$
Finally using  that $ h_{e,p}(p)= - \varepsilon_0\, \Omega_{e,j_p}^2/(2\, p)$ and $ h_{m,p}(p)= - \mu_0\, \Omega_{m,\ell_p}^2/(2\, p)$, the expression \eqref{eq.Imhprimee} for  $ \operatorname{Im}(h'_{e,p}(p))$ and its equivalent form for $\operatorname{Im}(h'_{m,p}(p))$ gives that
$$
\operatorname{ Im}A_{p,2}=-c^{-2}\, p^2\,  \bigg( \frac{ \Omega_{m,\ell_p}^2}{4}  \sum_{j=1, j\neq j_p}^{N_e} \frac{\Omega_{e,j}^2 \alpha_{e,j} }{ |q_{e,j}(p)|^2}  + \frac{ \Omega_{e,j_p}^2}{4}   \sum_{\ell=1, \ell\neq \ell_p}^{N_m} \frac{\Omega_{m,\ell}^2 \alpha_{m,\ell} }{ |q_{m,\ell}(p)|^2} \bigg)
$$
(where $c^{-2}=\varepsilon_0 \, \mu_0$). Thus, this term is negative by the weak dissipation condition \eqref{WD} since at least one coefficient $ \alpha_{e,j}$ or $ \alpha_{m,\ell}$ is positive. Finally, combining \eqref{eq.asympomegad} and \eqref{eq.derivepoledouble} gives  \eqref{eq.asymptd}. 
\end{proof}
\noindent We now estimate the term $\bbU_{d}(\bk,t)$ for large $|\bk|$. Proceeding as in the proof of Lemma \ref{LemEstiinfty}, one shows using Lemmas \ref{LemProd}  and   \ref{LemEigend} the following result.
\begin{Lem}\label{LemEstid}
If $\calP_d\neq\varnothing$, then there exists $k_+>0$ and $C>0$  such that   the function  $\bbU_d(\bk,t)$   defined by   \eqref{eq.decompositionfourterm} and \eqref{eq.constantinf}  satisfies the following estimate
	\begin{equation}\label{eq.estimateUd}
		| \bbU_{d}(\bk,t)| \lesssim    \rme^{-\frac{C\, t}{|\bk|^2}} |\bbU_0(\bk)|,   \quad \forall t \geq 0 \ \mbox{ and }  \ \forall  |\bk|\geq k_+.
	\end{equation}
\end{Lem}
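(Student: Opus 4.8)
The plan is to mimic verbatim the proof of Lemma \ref{LemEstiinfty}, the only structural change being that the relevant eigenvalues are now the $\omega_{p,r}(|\bk|)$ with $p\in\calP_d$, $r=1,2$, whose imaginary parts are given by Lemma \ref{LemEigend} and whose spectral projectors are controlled by Lemma \ref{LemProd}. First I would fix $k_+>0$ large enough so that, simultaneously, Corollary \ref{eq.crit-diag2} applies (diagonalizability of $\bbA_{|\bk|,\perp}$), the conclusions of Lemma \ref{LemProd} and Lemma \ref{LemEigend} hold, and the $o(|\bk|^{-2})$ remainders below can be absorbed. Then $\bbU_d(\bk,t)$ is given by \eqref{decompU}(iii), and since $\mathcal{R}_{\bk},\mathcal{R}_{\bk}^*$ are unitary, the triangle inequality yields
\[
|\bbU_d(\bk,t)| \leq \sum_{p\in\calP_d}\sum_{r=1}^2 \rme^{\operatorname{Im}(\omega_{p,r}(|\bk|))\,t}\,\|\Pi_{p,r}(|\bk|)\|\,|\bbU_0(\bk)|, \qquad \forall\, t\geq 0.
\]

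Next I would extract the imaginary-part estimate from \eqref{eq.asymptd}: since $p$ is real and the coefficient of $|\bk|^{-1}$ in that expansion is real, one has $\operatorname{Im}(\omega_{p,r}(|\bk|)) = (\operatorname{Im}A_{p,2})\,|\bk|^{-2} + o(|\bk|^{-2})$ as $|\bk|\to+\infty$; by \eqref{eq.coeffomegadbis}, $\operatorname{Im}A_{p,2}<0$, so there is $C>0$ with $\operatorname{Im}(\omega_{p,r}(|\bk|)) < -\,C\,|\bk|^{-2}$ for all $p\in\calP_d$, $r=1,2$ and $|\bk|\geq k_+$ (enlarging $k_+$ if necessary since $\calP_d$ is finite). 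Hence $\rme^{\operatorname{Im}(\omega_{p,r}(|\bk|))\,t}\leq \rme^{-C t/|\bk|^2}$ for all $t\geq 0$. Combining this with the uniform bound $\|\Pi_{p,r}(|\bk|)\|\lesssim 1$ of Lemma \ref{LemProd}, and using again that the double sum has a $\bk$-independent number of terms, gives $|\bbU_d(\bk,t)|\lesssim \rme^{-C t/|\bk|^2}\,|\bbU_0(\bk)|$, which is \eqref{eq.estimateUd}.

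There is no real obstacle here: this lemma is a routine packaging of Lemmas \ref{LemProd} and \ref{LemEigend}, exactly parallel to how Lemma \ref{LemEstiinfty} packages Lemmas \ref{LemProinfty} and \ref{LemEigeninfty}. The only points requiring a little care are the choice of a single $k_+$ valid for all three auxiliary statements at once, and the absorption of the $o(|\bk|^{-2})$ term into the constant $C$ (which forces taking $k_+$ sufficiently large). The hypothesis $\calP_d\neq\varnothing$ only serves to make the statement non-vacuous; when $\calP_d=\varnothing$ the term $\bbU_d$ is simply absent from the decomposition \eqref{eq.decompositionfourterm}.
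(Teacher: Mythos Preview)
Your proposal is correct and follows exactly the approach indicated in the paper, which simply states that one proceeds as in the proof of Lemma \ref{LemEstiinfty} using Lemmas \ref{LemProd} and \ref{LemEigend}. You have in fact spelled out the details more carefully than the paper does, including the observation that the $|\bk|^{-1}$ coefficient in \eqref{eq.asymptd} is real and the remark about choosing a single $k_+$ for all auxiliary results.
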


\subsubsection{Estimates of $\bbU_-(\bk,t)$ for $|\bk| \gg 1$} \label{estiUminus}
As announced in Section \ref{orientation}, since we simply want to obtain a ``rough" exponential decay estimate for  $\bbU_-(\bk,t)$, we do not need to separate the analysis in three lemmas as in Sections \ref{estiUinfty}, \ref{estiUs} and \ref{estiUd} but give a direct proof using Riesz-Dunford functional calculus.
\begin{Lem}\label{Estiminus}
There exists  $\delta>0$ and $k_+>0$ such that $\bbU_{-}(\bk,t)$, defined by  \eqref{eq.decompositionfourterm} and \eqref{decompU},  satisfies
	\begin{equation}\label{eq.estimateUm}
		| \bbU_-(\bk,t)| \lesssim \rme^{-\delta\, t} \ | \bbU_0(\bk)|, \quad  \forall \, t\geq 0, \quad  \forall \;  |\bk| \geq k_+.
	\end{equation}
\end{Lem}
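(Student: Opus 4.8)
The plan is to estimate $\bbU_-(\bk,t)$ directly via Riesz--Dunford functional calculus, applied now to the whole group of ``lower'' eigenvalues $\{\omega_{p,n}(|\bk|) : p \in \calP_-, \, n\le \mathfrak{m}_p\}$ at once, rather than eigenvalue by eigenvalue. The crucial point, which makes this term much easier than the previous three, is that all poles $p\in\calP_-$ lie strictly in $\bbC^-$, so by the asymptotics \eqref{eq.pole} the eigenvalues $\omega_{p,n}(|\bk|)$ converge as $|\bk|\to+\infty$ to points $p$ with $\operatorname{Im}p<0$; hence their imaginary parts stay bounded above by a \emph{fixed} negative number $-2\delta<0$ for $|\bk|$ large, and the exponential decay rate does not degenerate. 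So the target is genuinely a uniform estimate, and rough bounds on the resolvent suffice.

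First I would fix $k_+$ large enough (using Proposition \ref{prop.dispersioncurves}) so that for $|\bk|\ge k_+$: (a) all roots of $D_{|\bk|}$ are simple and $\bbA_{|\bk|,\perp}$ is diagonalizable (Corollary \ref{eq.crit-diag2}); (b) the eigenvalues $\omega_{p,n}(|\bk|)$ with $p\in\calP_-$ all satisfy $\operatorname{Im}\omega_{p,n}(|\bk|)\le -2\delta$ for some $\delta>0$ (possible since $\operatorname{Im}p<0$ for $p\in\calP_-$ and $\omega_{p,n}(|\bk|)\to p$); and (c) these eigenvalues stay at distance $\ge c_0>0$ from the set $\sigma(\bbA_{|\bk|,\perp})\setminus\{\omega_{p,n}(|\bk|):p\in\calP_-\}$ and from ${\cal S}_{\cal T}$. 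Condition (c) holds because the branches $\omega_{\pm\infty}(|\bk|)$ diverge, the branches attached to real poles $p\in\calP_s\cup\calP_d$ converge to real points, and the branches attached to $p\in\calP_-$ converge to points of $\bbC^-$, so by the asymptotics \eqref{eq.pole}--\eqref{eq.pminfty} these three families are uniformly separated for $|\bk|\gg1$. Then one takes a fixed contour $\Gamma$ (independent of $|\bk|$), e.g. a finite union of small circles around the points $p\in\calP_-$, or a single Jordan curve in $\bbC^-$ enclosing exactly the $\omega_{p,n}(|\bk|)$, $p\in\calP_-$, and writes
\begin{equation}\label{eq.Uminus-RieszDunford}
\sum_{p\in\calP_-}\sum_{n=1}^{\mathfrak{m}_p}\Pi_{p,n}(|\bk|) = -\frac{1}{2\rmi\pi}\int_{\Gamma} R_{|\bk|}(\omega)\,\rmd\omega,
\end{equation}
and correspondingly, since on $\Gamma$ we have $\operatorname{Im}\omega\le -\delta$ for $|\bk|$ large (as $\Gamma$ is a fixed curve in $\bbC^-$ and encloses only eigenvalues with $\operatorname{Im}\le -2\delta$),
\begin{equation}\label{eq.Uminus-expr}
\bbU_-(\bk,t) = -\frac{1}{2\rmi\pi}\,\mathcal{R}_{\bk}^*\Big(\int_{\Gamma} \rme^{-\rmi\omega t}\,R_{|\bk|}(\omega)\,\rmd\omega\Big)\mathcal{R}_{\bk}\,\bbU_0(\bk).
\end{equation}

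Next I would bound $\|R_{|\bk|}(\omega)\|$ uniformly in $|\bk|\ge k_+$ and $\omega\in\Gamma$. Using the expression $R_{|\bk|}(\omega)=\mathcal{V}_{|\bk|}(\omega)\mathcal{S}_{|\bk|}(\omega)+\mathcal{T}(\omega)$ from Proposition \ref{Prop.res}: $\mathcal{T}(\omega)$ is independent of $|\bk|$ and analytic on $\Gamma$ (which avoids ${\cal S}_{\cal T}$), hence bounded there; for the first term, on the fixed bounded curve $\Gamma$ one has $|\mathcal{D}(\omega)-|\bk|^2|^{-1} = |Q_eQ_m(\omega)|/|D_{|\bk|}(\omega)|$, and by \eqref{eq.Dfactor} the factor $|D_{|\bk|}(\omega)|\gtrsim |\bk|^2\cdot|\bk|^{2(N_e+N_m)-?}$ — more simply, since on $\Gamma$ each factor $|\omega-\omega_{p,n}(|\bk|)|$ is bounded below by a fixed constant, $|\omega-\omega_{\pm\infty}(|\bk|)|\sim c|\bk|$, one gets $|D_{|\bk|}(\omega)|\gtrsim |\bk|^2$, so $|\mathcal{D}(\omega)-|\bk|^2|^{-1}\lesssim |\bk|^{-2}$; and $\|\mathcal{V}_{|\bk|}(\omega)\|\lesssim |\bk|$, $\|\omega\mu(\omega)\bbA_e(\omega)-|\bk|{\bf e_3}\times\bbA_m(\omega)\|\lesssim|\bk|$ on the bounded set $\Gamma$ (the operators $\bbA_e,\bbA_m$ being uniformly bounded there since $\Gamma$ avoids $\calP$). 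Hence $\|\mathcal{V}_{|\bk|}(\omega)\,\mathcal{S}_{|\bk|}(\omega)\|\lesssim |\bk|\cdot|\bk|\cdot|\bk|^{-2}\lesssim 1$, giving $\|R_{|\bk|}(\omega)\|\lesssim 1$ uniformly on $\Gamma$. Plugging into \eqref{eq.Uminus-expr}, using $\sup_{\omega\in\Gamma}|\rme^{-\rmi\omega t}|\le \rme^{-\delta t}$, the unitarity of $\mathcal{R}_{\bk}$, and the finite length of $\Gamma$, yields
\begin{equation*}
|\bbU_-(\bk,t)| \le \frac{|\Gamma|}{2\pi}\,\rme^{-\delta t}\,\sup_{\omega\in\Gamma}\|R_{|\bk|}(\omega)\|\,|\bbU_0(\bk)| \lesssim \rme^{-\delta t}\,|\bbU_0(\bk)|,
\end{equation*}
which is exactly \eqref{eq.estimateUm}. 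The main obstacle — though a mild one compared with Lemmas \ref{LemPros}--\ref{LemEigend} — is arranging that the contour $\Gamma$ can be chosen \emph{independently of} $|\bk|$ while still enclosing precisely the right eigenvalues; this relies on the uniform separation in point (c), i.e. on reading off from the asymptotic expansions of Proposition \ref{prop.dispersioncurves} that the three families of branches do not collide for $|\bk|$ large, together with the fact that $\calP_-\subset\bbC^-$ keeps $\operatorname{Im}\omega_{p,n}(|\bk|)$ bounded away from $0$. Everything else is a routine resolvent estimate on a compact set.
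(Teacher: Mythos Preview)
Your proposal is correct and follows essentially the same approach as the paper: a single fixed contour $\Gamma\subset\bbC^-$ enclosing all of $\calP_-$ (hence, for $|\bk|$ large, all the eigenvalues $\omega_{p,n}(|\bk|)$ with $p\in\calP_-$ and no others), the Riesz--Dunford representation of $\bbU_-(\bk,t)$ as a contour integral of $\rme^{-\rmi\omega t}R_{|\bk|}(\omega)$, and the same resolvent bound $\|R_{|\bk|}(\omega)\|\lesssim 1$ on $\Gamma$ obtained from $\|\mathcal{V}_{|\bk|}\|\lesssim|\bk|$, $\|\mathcal{S}_{|\bk|}\|\lesssim|\bk|^{-1}$, $\|\mathcal{T}\|\lesssim 1$. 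The paper's choice of $\delta$ is simply $\delta=\min_{\omega\in\Gamma}(-\operatorname{Im}\omega)>0$, which matches your $\sup_{\omega\in\Gamma}|\rme^{-\rmi\omega t}|\le\rme^{-\delta t}$; the only cosmetic addition in the paper is the explicit requirement $\Gamma\cap\mathcal{Z}_m=\varnothing$ (equivalently $\Gamma\cap{\cal S}_{\cal T}=\varnothing$), which you also invoke.
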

\begin{proof}
	\noindent 
	We introduce a (positively oriented) simple closed contour $\Gamma$, included in $\bbC^-$ such that  all the poles of $\calP_-$  lies inside $\Gamma$  (see figure \ref{fig-disp-curv-pm}). We denote $\delta$ the distance from $\Gamma$ to the real axis:
	$$\delta=\min \{-\operatorname{Im}(\omega), \, \omega \in \Gamma \}>0. $$
For $|\bk|$ large enough, by  \eqref{eq.pole} and \eqref{eq.pminfty},   $\Gamma$ encloses all eigenvalues $\omega_{p,n}(|\bk|)$ for $p\in \calP_-$ and $n\in \{ 1, \ldots, \mathfrak{m}_p\}$ but no other elements of the spectrum of $\bbA_{|\bk|}$. Then, by the Riesz-Dunford functional calculus, there exists $k_+>0$ such that we have the formula: 
	\begin{figure}[h!]
		\begin{center}
			\includegraphics[scale=0.31]{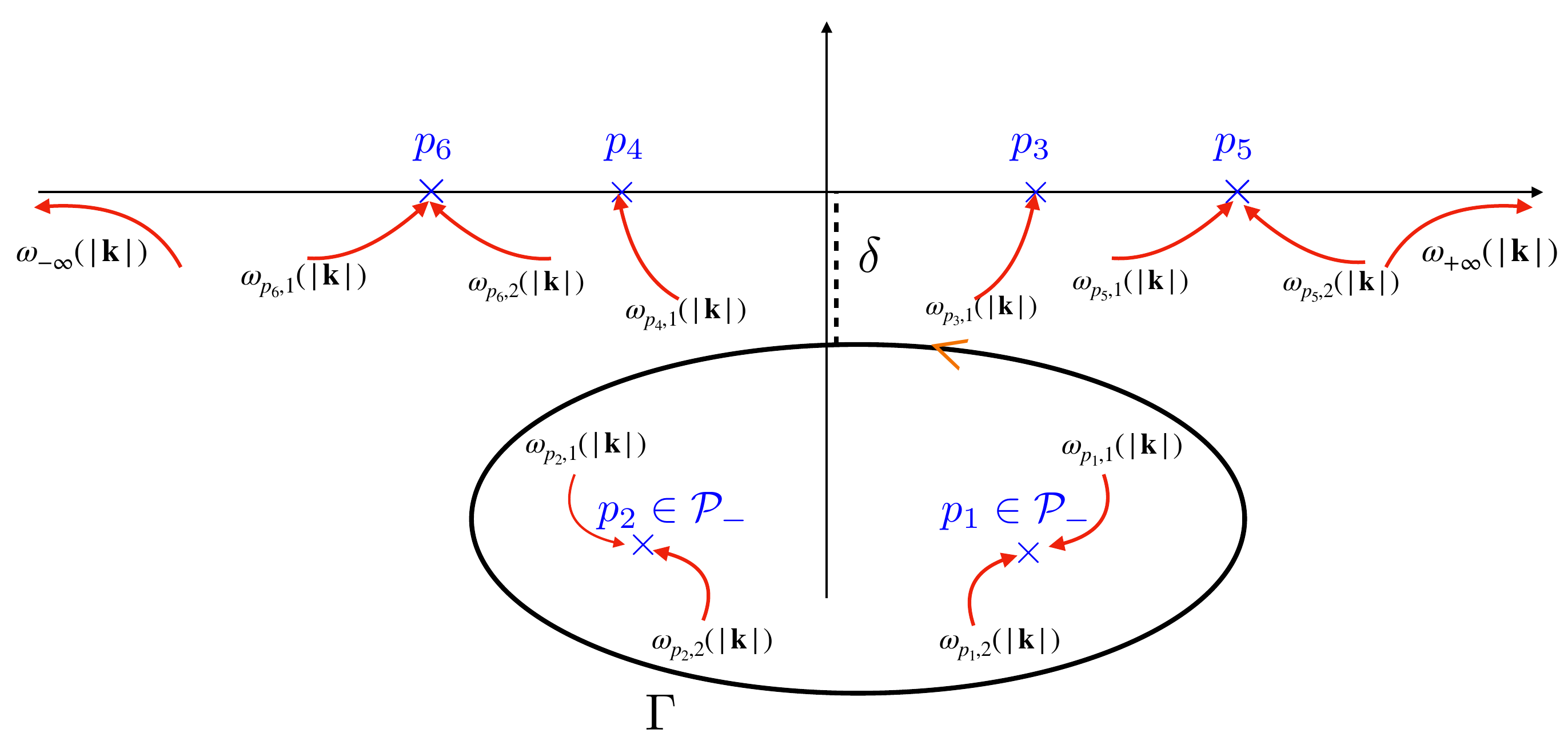}
		\end{center}
		\caption{Contour integration for the estimate of  $\bU_-(\bk,t)$  in the case where $\mathcal{P}_-=\{ p_1, p_2\}$ (corresponding to the figure \ref{fig-disp-curv}).  }
		\label{fig-disp-curv-pm}	
	\end{figure}\\
	\begin{equation}\label{eq.RieszDunford-1}
		\bbU_-(\bk,t)=- \frac{\, \mathcal{R}_{\bk}^*} {2\rmi \pi}\int_{\Gamma}  e^{-\rmi \omega t} \,  R_{|\bk|}(\omega) \, \, \mathcal{R}_{\bk} \bbU_0(\bk) \, \rmd \omega , \ \mbox{ for } |\bk|\geq k_+.
	\end{equation}
	As $\mathcal{R}_{\bk}$ is unitary, it leads to
	\begin{equation}\label{eq.Um1}
		| \bbU_-(\bk,t)| \lesssim \rme^{-\delta\, t} \, \Big(\max_{\omega \in \Gamma}\|  R_{|\bk|}(\omega) \| \Big) \ | \bbU_0(\bk)|.
	\end{equation}
 It remains now to estimate  $R_{|\bk|}(\omega)$ from its expression 
	$R_{|\bk|}(\omega)= \mathcal{V}_{|\bk|}(\omega) \mathcal{S}_{|\bk|}(\omega)+\mathcal{T}(\omega)$ in Proposition \ref{Prop.res} (choosing $\Gamma$ such that $\Gamma\cap \mathcal{Z}_m=\varnothing$). The more involved step concerns the esimate of $\mathcal{S}_{|\bk|}(\omega)$.
	\\[6pt]
	\noindent {\bf Step 1: Estimate of $\mathcal{S}_{|\bk|}(\omega)$.} 
	The expression of $\mathcal{S}_{|\bk|}(\omega)$, given in \eqref{defS}, involved the term $(\mathcal{D}(\omega)-|\bk|^2)^{-1}$. We first bound this term  by using \eqref{eq.expressioncalDinv}, \eqref{eq.Dfactor} and \eqref{eq.Dbound}.      
From the asymptotic behaviour \eqref{eq.pole}, it follows that $D_{\mathrm{b},|\bk|}(\omega)$ given by \eqref{eq.Dbound} saitisfies:
	\begin{equation}\label{eq.Dbd}
		|D_{\mathrm{b},|\bk|}(\omega)^{-1}|\lesssim 1, \quad \forall \;  \omega \in \Gamma.
	\end{equation}
	From the asymptotic expansion \eqref{eq.pminfty}, $|\omega-\omega_{\pm \infty}(|\bk|)|^{-1}=c^{-1}|\bk|^{-1}+o(|\bk|^{-1})$.  Hence, it yields with  \eqref{eq.Dfactor} and  \eqref{eq.Dbd}:
	\begin{equation}\label{eq.estimD}
		|D_{|\bk|}(\omega)^{-1}|\lesssim |\bk|^{-2}, \quad \mbox{ and thus with \eqref{eq.expressioncalDinv}}\quad |(\mathcal{D}(\omega)-|\bk|^2)^{-1}|\lesssim |\bk|^{-2} \quad \forall \, \omega \in \Gamma.
	\end{equation}
	On the other hand, from the definitions \eqref{operatorsApAm} and \eqref{operatorsAeAh}, it is immediate that 
	\begin{equation}\label{eq.estimSoppart}
		\big\| - \omega \mu(\omega ) \bbA_e(\omega) +|\bk| \, {\bf e_3} \times \bbA_h(\omega) \big\| \lesssim |\bk| , \quad  \forall\,  \omega \in \Gamma.
	\end{equation}
	Combining \eqref{eq.estimD} and  \eqref{eq.estimSoppart} with \eqref{defS} yields
	\begin{equation}\label{eq.opSnorm}
		\| \mathcal{S}_{|\bk|}(\omega)\| \lesssim |\bk|^{-1},  \ \forall \omega \in \Gamma  \mbox{ and } |\bk| \geq k_+.
	\end{equation}
	\noindent {\bf Step 2: Final estimate.}
	From the expression \eqref{defV} and \eqref{defT} of $\mathcal{V}_{|\bk|}(\omega)$ and $\mathcal{T}(\omega)$ we deduce that
	\begin{equation}\label{eq.opVnorm}
		\| \mathcal{V}_{|\bk|}(\omega)\| \lesssim |\bk| , \quad  \  \|\mathcal{T}(\omega)\|\lesssim 1, \quad    \forall \; \omega \in \Gamma ,
	\end{equation}
		thus by \eqref{eq.expressresolv} and \eqref{eq.opSnorm}, $
	\|\mathcal{R}_{|\bk|}(\omega)\|\lesssim 1, \,   \forall \, \omega \in \Gamma
	$ which we substitute  into \eqref{eq.Um1} to get \eqref{eq.estimateUm}.
\end{proof}
\subsubsection{The global estimates} \label{Estiglobale} 
\begin{Thm}\label{HF-estm}
There exists $k_+>0$  such that for $|\bk|\geq k_+$, the spatial Fourier components $|\bbU(\bk,t)|$ of the solution   of \eqref{eq.schro} with initial condition $\bU_0\in \mathcal{H}_{\perp}$ satisfy the following estimates:
\begin{enumerate}
\item If the Maxwell system is in a non-critical configuration, then  there  $\exists \; C,\,\widetilde{C}>0$  such that
	\begin{equation} \label{polynomial_decayncr-HF}
	|\bbU(\bk,t)| \leq   \widetilde{C}  \, \rme^{-\frac{C\, t}{|\bk|^2}}  \, |\bbU_0(\bk)|,  \quad \forall  \, t \geq 0    .
	\end{equation}
	\item If the Maxwell system is in a critical configuration, then  there $\exists  \; C,\,\widetilde{C}>0$ such that
	\begin{equation} \label{polynomial_decaycr-HF}
	|\bbU(\bk,t)| \leq \widetilde{C} \,  \rme^{-\frac{C\, t}{|\bk|^4}} \ |\bbU_0(\bk)|,  \quad \forall  \, t \geq 0      .
	\end{equation}
\end{enumerate} 
\end{Thm}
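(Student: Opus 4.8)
The plan is to assemble Theorem~\ref{HF-estm} directly from the estimates already proved in Sections~\ref{estiUinfty}--\ref{estiUminus}. Recall that for $|\bk| \geq k_+$ the operator $\bbA_{|\bk|,\perp}$ is diagonalizable (Corollary~\ref{eq.crit-diag2}), so that by \eqref{eq.decompositionfourterm} the Fourier component of the solution splits as
\begin{equation*}
\bbU(\bk,t)= \bbU_{\infty}(\bk,t)+\bbU_{s}(\bk,t)+\bbU_{d}(\bk,t) + \bbU_-(\bk,t),
\end{equation*}
and it suffices to bound $|\bbU(\bk,t)|$ by the sum of the four individual bounds. First I would fix $k_+$ large enough so that all of Lemmas~\ref{LemEstiinfty}, \ref{LemEstis}, \ref{LemEstid} and \ref{Estiminus} apply simultaneously (using the convention of Section~\ref{sec.estmHF} that $k_+$ may always be enlarged). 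This gives, for $|\bk| \geq k_+$ and $t \geq 0$,
\begin{equation*}
|\bbU_{\infty}(\bk,t)| \lesssim \rme^{-\frac{C_1}{|\bk|^2}t}\,|\bbU_0(\bk)|, \quad |\bbU_d(\bk,t)| \lesssim \rme^{-\frac{C_3}{|\bk|^2}t}\,|\bbU_0(\bk)|, \quad |\bbU_-(\bk,t)| \lesssim \rme^{-\delta t}\,|\bbU_0(\bk)|,
\end{equation*}
while $|\bbU_s(\bk,t)|$ decays like $\rme^{-C_2 t/|\bk|^2}|\bbU_0(\bk)|$ in the non-critical case and like $\rme^{-C_2 t/|\bk|^4}|\bbU_0(\bk)|$ in the critical case (Lemma~\ref{LemEstis}). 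When $\mathcal{P}_d = \varnothing$ the term $\bbU_d$ is simply absent, so no bound is needed there.

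The only genuine point to be careful about is the bookkeeping of the exponents. In the non-critical case all four terms decay at least as fast as $\rme^{-Ct/|\bk|^2}$ for a common $C := \min(C_1,C_2,C_3)>0$ (the term $\rme^{-\delta t}$ is even faster, and since $|\bk|\geq k_+$ one has $\delta \geq \delta k_+^2 /|\bk|^2$ is false in general, but $\delta t \geq (\delta k_+^{-2}) t/|\bk|^{-2}$ is also not what we want --- rather one simply notes $\rme^{-\delta t} \le \rme^{-C t/|\bk|^2}$ is \emph{not} automatic, so instead one observes that $\rme^{-\delta t}\le 1$ and absorbs it, or uses that on $|\bk|\ge k_+$ the quantity $t/|\bk|^2$ need not dominate $t$; the clean route is: for $|\bk|\ge k_+$ there is no upper bound on $|\bk|$, so $\rme^{-\delta t}$ must be handled by noting it is bounded by $\rme^{-C t/|\bk|^2}$ only after replacing $C$ — this does not work. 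The correct statement is simply $\rme^{-\delta t} \le \rme^{-(\delta\wedge C)t/|\bk|^2}$ fails; instead keep $\rme^{-\delta t}$ as is and note it is $\le 1$, hence $\le \rme^{-Ct/|\bk|^2}\cdot \rme^{Ct/|\bk|^2 - \delta t}$... ). In fact the honest and simple fix is: since each of the four terms is bounded by $\rme^{-\frac{C}{|\bk|^2}t}|\bbU_0(\bk)|$ \emph{for the first three} and the fourth by $\rme^{-\delta t}|\bbU_0(\bk)| \le \rme^{-\frac{C'}{|\bk|^2}t}|\bbU_0(\bk)|$ is \textbf{false} in general; rather, one uses that $\bbU_-$ contributes a uniformly (in $|\bk|$) exponentially small term, so in particular $\rme^{-\delta t} \le \widetilde C \, \rme^{-\frac{C}{|\bk|^2}t}$ fails but $\rme^{-\delta t} \lesssim \rme^{-\frac{C}{|\bk|^2}t}$ \emph{does} hold when $C$ is chosen $\le \delta k_+^2$ only if $|\bk|$ is bounded --- which it is not. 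So the right argument is: the slowest of the four decays is $\rme^{-C t/|\bk|^2}$ (resp.\ $\rme^{-Ct/|\bk|^4}$), and the $\bbU_-$ term, decaying uniformly like $\rme^{-\delta t}$, satisfies $\rme^{-\delta t}\le \rme^{-Ct/|\bk|^2}$ precisely because $\delta \ge C/|\bk|^2$ is guaranteed once we \emph{shrink} $C$ so that $C \le \delta\, k_+^2$ --- wait, this still needs $|\bk|\ge k_+$, which gives $C/|\bk|^2 \le C/k_+^2 \le \delta$. Yes: that is correct, since $|\bk| \ge k_+$ implies $1/|\bk|^2 \le 1/k_+^2$.) Thus, after possibly shrinking $C>0$ so that $C \le \delta\,k_+^2$, all four exponentials are dominated by $\rme^{-\frac{C}{|\bk|^2}t}$ in the non-critical case and by $\rme^{-\frac{C}{|\bk|^4}t}$ in the critical case (using additionally $|\bk|^{-4}\le k_+^{-2}|\bk|^{-2}$ for the terms $\bbU_\infty,\bbU_d,\bbU_-$ whose decay is at rate $|\bk|^{-2}$ or better, so that $\rme^{-C't/|\bk|^2}\le \rme^{-C't k_+^2 \cdot k_+^{-2}/|\bk|^{2}}$... more simply $\rme^{-C't/|\bk|^2} = \rme^{-C' |\bk|^2 \cdot t/|\bk|^4} \le \rme^{-C' k_+^2 t/|\bk|^4}$ since $|\bk|^2 \ge k_+^2$).

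Putting these observations together, summing the four bounds and collecting the implied constants into a single $\widetilde C$ and the exponential rate into a single $C$, yields \eqref{polynomial_decayncr-HF} in the non-critical case and \eqref{polynomial_decaycr-HF} in the critical case. The main obstacle is purely organizational --- getting the exponents right when combining a term that decays at rate $|\bk|^{-2}$ (or uniformly) with one that decays only at rate $|\bk|^{-4}$ --- and this is resolved by the elementary inequality $|\bk|^{-2} \ge k_+^2\,|\bk|^{-4}$ valid for $|\bk| \ge k_+$, which allows all faster-decaying terms to be reabsorbed into the slowest one at the cost of shrinking the constant $C$. There is no new analytic content: the theorem is a clean corollary of the component-wise estimates of the four preceding subsections.

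\begin{proof}
Fix $k_+ > 0$ large enough that Corollary~\ref{eq.crit-diag2} and all of Lemmas~\ref{LemEstiinfty}, \ref{LemEstis}, \ref{LemEstid} and \ref{Estiminus} hold for $|\bk| \geq k_+$ (recall our convention that $k_+$ may always be increased). For such $\bk$, the decomposition \eqref{eq.decompositionfourterm} gives
\begin{equation}\label{eq.sumfourterms}
|\bbU(\bk,t)| \leq |\bbU_{\infty}(\bk,t)| + |\bbU_{s}(\bk,t)| + |\bbU_{d}(\bk,t)| + |\bbU_-(\bk,t)|, \quad \forall \; t \geq 0,
\end{equation}
with the convention that the term $\bbU_d$ is omitted if $\mathcal{P}_d = \varnothing$.

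\noindent\textbf{Non-critical case.} By Lemmas~\ref{LemEstiinfty}, \ref{LemEstis}(i) and \ref{LemEstid}, there exist constants $C_1, C_2, C_3 > 0$ such that, for $|\bk| \geq k_+$ and $t \geq 0$,
\begin{equation*}
|\bbU_{\infty}(\bk,t)| \lesssim \rme^{-\frac{C_1 t}{|\bk|^2}}\,|\bbU_0(\bk)|, \quad |\bbU_{s}(\bk,t)| \lesssim \rme^{-\frac{C_2 t}{|\bk|^2}}\,|\bbU_0(\bk)|, \quad |\bbU_{d}(\bk,t)| \lesssim \rme^{-\frac{C_3 t}{|\bk|^2}}\,|\bbU_0(\bk)|,
\end{equation*}
and by Lemma~\ref{Estiminus} there is $\delta > 0$ with $|\bbU_-(\bk,t)| \lesssim \rme^{-\delta t}\,|\bbU_0(\bk)|$. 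Set $C := \min\big(C_1, C_2, C_3, \delta\, k_+^2\big) > 0$. Since $|\bk| \geq k_+$ implies $|\bk|^{-2} \leq k_+^{-2}$, we have $\delta \geq C\,|\bk|^{-2}$, hence $\rme^{-\delta t} \leq \rme^{-\frac{C t}{|\bk|^2}}$; and trivially $\rme^{-\frac{C_i t}{|\bk|^2}} \leq \rme^{-\frac{C t}{|\bk|^2}}$ for $i = 1,2,3$. Summing the four bounds in \eqref{eq.sumfourterms} and collecting the implied constants into a single $\widetilde{C} > 0$ yields \eqref{polynomial_decayncr-HF}.

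\noindent\textbf{Critical case.} This time Lemma~\ref{LemEstis}(ii) gives $|\bbU_{s}(\bk,t)| \lesssim \rme^{-\frac{C_2 t}{|\bk|^4}}\,|\bbU_0(\bk)|$, while the estimates for $\bbU_{\infty}$, $\bbU_{d}$ and $\bbU_-$ are unchanged, decaying respectively like $\rme^{-C_1 t/|\bk|^2}$, $\rme^{-C_3 t/|\bk|^2}$ and $\rme^{-\delta t}$. For $|\bk| \geq k_+$ one has $|\bk|^{2} \geq k_+^2$, so $|\bk|^{-2} \geq k_+^2\,|\bk|^{-4}$, and therefore
\begin{equation*}
\rme^{-\frac{C_1 t}{|\bk|^2}} \leq \rme^{-\frac{C_1 k_+^2 t}{|\bk|^4}}, \qquad \rme^{-\frac{C_3 t}{|\bk|^2}} \leq \rme^{-\frac{C_3 k_+^2 t}{|\bk|^4}}, \qquad \rme^{-\delta t} \leq \rme^{-\frac{\delta k_+^4 t}{|\bk|^4}},
\end{equation*}
the last inequality because $|\bk|^{-4} \leq k_+^{-4}$. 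Setting $C := \min\big(C_1 k_+^2,\, C_2,\, C_3 k_+^2,\, \delta\, k_+^4\big) > 0$, all four terms in \eqref{eq.sumfourterms} are bounded by $\rme^{-\frac{C t}{|\bk|^4}}\,|\bbU_0(\bk)|$ up to a multiplicative constant; collecting these constants into $\widetilde{C} > 0$ gives \eqref{polynomial_decaycr-HF}.
\end{proof}
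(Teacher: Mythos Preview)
Your proof is correct and follows exactly the same approach as the paper: combine the decomposition \eqref{eq.decompositionfourterm} with Lemmas~\ref{LemEstiinfty}, \ref{LemEstis}, \ref{LemEstid}, \ref{Estiminus}, and use the elementary inequalities $|\bk|^{-2}\le k_+^{-2}$ and $|\bk|^{-2}\ge k_+^2\,|\bk|^{-4}$ (for $|\bk|\ge k_+$) to absorb the faster-decaying terms into the slowest one. Although your preliminary discussion is somewhat meandering, the formal proof you give is clean and in fact spells out the exponent bookkeeping more explicitly than the paper's one-line justification ``$e^{-\delta t}\le e^{-C_1|\bk|^{-2}t}\le e^{-C_2|\bk|^{-4}t}$ for $|\bk|$ large enough''.
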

\begin{proof} The inequalities  \eqref{polynomial_decayncr-HF} and \eqref{polynomial_decaycr-HF} follow immediately  from the expression \eqref{eq.decompositionfourterm} of  $|\bbU(\bk,t)|$ for  $|\bk|$ large enough and the estimates of the four terms  $|\bbU_{\infty}(\bk,t)|$, $|\bbU_s(\bk,t)|$, $|\bbU_d(\bk,t)| $ and $|\bbU_-(\bk,t)|$ given respectively by  Lemmas \ref{LemEstiinfty}, \ref{LemEstis}, \ref{LemEstid} and \ref{Estiminus} (since for $|\bk|$ large enough, $t\geq 0$ and $C_1,C_2>0$: $e^{- \delta\, t}\leq   e^{-C_1\, |\bk|^{-2} t}\leq    e^{- C_2\, |\bk|^{-4} t}$). 
\end{proof}

\noindent We prove in the following result that the estimates of Theorem \ref{HF-estm} are optimal for an infinite family of well chosen  initial conditions $\bU_0\in \mathcal{H}^{m}_{\perp,\operatorname{HF}}$ for any fixed  $m>0$.
\begin{Thm}\label{HF-estm-opt}
Let $m, \, k_+>0$ and  $\phi: \bbR^+ \mapsto \bbR$  be any a   measurable and bounded function satisfying:
\begin{equation}\label{eq.phik}
\operatorname{supp} \phi \subset  [k_+,+\infty[ \mbox{ and  }  0< |\phi(|\bk|)|\lesssim \, (1+|\bk|^2)^{-s/2}  \mbox{ for } |\bk| >k_+ +1  \mbox{ and } s>\frac{3}{2}+m . 
\end{equation}
\begin{enumerate}
\item If the Maxwell system is in a non-critical configuration,  one defines (for $k_+$ sufficiently large) an initial  condition $\bU_0$  of \eqref{eq.schro} via its Fourier transform:
\begin{equation}\label{eq.defUOkopt}
\bbU_0(\bk)= \mathcal{F}(\bU_0)(\bk)=  \phi(|\bk|) \; \mathcal{R}_{\bk}^* \, \frac{\mathcal{V}_{|\bk|}\big(\omega_{+\infty}(|\bk|)\big)  \mathbf{e}_{1}}{|\mathcal{V}_{|\bk|}\big(\omega_{+\infty}(|\bk|)\big)  \bf{e}_1|}, \quad \forall \, \bk\in \R^{3,*}.
\end{equation}
Then, $\bU_0\in  \mathcal{H}^{m}_{\perp,\operatorname{HF}}$ and     $\exists \; C,\,\widetilde{C}>0$  such that  the associated  solution $\bU$ of \eqref{eq.schro}  satisfies
	\begin{equation} \label{polynomial_decayncr-HF2}
	  \widetilde{C}  \, \rme^{-\frac{C\, t}{|\bk|^2}} \, |\bbU_0(\bk)|\leq |\bbU(\bk,t)|,  \quad \forall \, t \geq 0   \   \mbox{ and }\ \forall \, \bk\in \R^{3,*}.
	\end{equation}
	\item If the Maxwell system is in a critical configuration, by Corollary  \ref{CoroIm}, there exists (at least one) $p\in \calP_s$ such that
	\begin{equation}\label{eq.branchcirtique}
	\operatorname{Im}\omega_p(|\bk|)=\operatorname{Im}A_{p,4} \,  |\bk|^{-4}+ o(|\bk|^{-4}), \mbox{ as } \ |\bk|\mapsto+ \infty, \mbox{ with } \operatorname{Im}A_{p,4} < 0.
	\end{equation}
	Defining (for $k_+$ sufficiently large) as an initial  condition $\bU_0$  for \eqref{eq.schro} by
\begin{equation}\label{eq.defUOkoptcr}
\bbU_0(\bk)= \mathcal{F}(\bU_0)(\bk)= \phi(|\bk|) \; \mathcal{R}_{\bk}^* \, \frac{\mathcal{V}_{|\bk|}\big(\omega_{p}(|\bk|)\big)  \mathbf{e}_{1}}{|\mathcal{V}_{|\bk|}\big(\omega_{p}(|\bk|)\big)  \bf{e}_1|}, \quad \forall \, \bk\in \R^{3,*},
\end{equation}	
then, $\bU_0\in  \mathcal{H}^{m}_{\perp,\operatorname{HF}}$ and   $\exists \; C,\,\widetilde{C}>0$  such that the  associated     solution $\bU$ of \eqref{eq.schro} satisfy
	\begin{equation} \label{polynomial_decaycr-HF2}
	 \widetilde{C} \,  \rme^{-\frac{C\, t}{|\bk|^4}} \ |\bbU_0(\bk)| \leq |\bbU(\bk,t)|,  \quad \forall \, t \geq 0      \mbox{ and }  \bk\in \R^{3,*}.
	\end{equation}
\end{enumerate}
In other words, the  estimates \eqref{polynomial_decayncr-HF} and \eqref{polynomial_decaycr-HF} are optimal for an infinite family of solutions.
\end{Thm}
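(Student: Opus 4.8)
The plan is to exploit the resolvent decomposition of Proposition \ref{Prop.res}: the operator $\mathcal{V}_{|\bk|}(\omega)$ maps $\bC_{\perp}$ into $\bC_{\perp}^N$, and when $\omega$ is an eigenvalue of $\bbA_{|\bk|,\perp}$, its range lies in the corresponding eigenspace. More precisely, for an eigenvalue $\omega(|\bk|)$ with associated rank-two spectral projector $\Pi(|\bk|)$ given by the Riesz--Dunford formula \eqref{eq.RieszDunford}, the factorization $R_{|\bk|}(\omega)=\mathcal{V}_{|\bk|}(\omega)\mathcal{S}_{|\bk|}(\omega)+\mathcal{T}(\omega)$ shows that $\operatorname{Ran}\Pi(|\bk|)\subset \operatorname{Ran}\mathcal{V}_{|\bk|}(\omega(|\bk|))$ (the $\mathcal{T}$ term being analytic hence not contributing to the contour integral, and the residue of $\mathcal{S}_{|\bk|}$ at the simple pole $\omega(|\bk|)$ being a scalar times evaluation). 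Consequently, the unit vector $\bbU_0(\bk)$ defined by \eqref{eq.defUOkopt} (resp. \eqref{eq.defUOkoptcr}) is, up to the scalar $\phi(|\bk|)$ and the unitary $\mathcal{R}_{\bk}^*$, an \emph{eigenvector} of $\bbA_{|\bk|,\perp}$ for the eigenvalue $\omega_{+\infty}(|\bk|)$ (resp. $\omega_{p}(|\bk|)$). Hence $\rme^{-\rmi\bbA_{|\bk|,\perp}t}$ acts on it by pure multiplication by $\rme^{-\rmi\omega(|\bk|)t}$, and therefore $|\bbU(\bk,t)|=\rme^{\operatorname{Im}\omega(|\bk|)t}\,|\bbU_0(\bk)|$ exactly (using that $\mathcal{R}_{\bk}$ is unitary and \eqref{eq.refsolutionfourier2}).

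\textbf{Key steps.} First, I would verify that $\bU_0\in \mathcal{H}^m_{\perp,\operatorname{HF}}$: the support condition is immediate from $\operatorname{supp}\phi\subset[k_+,+\infty[$; divergence-freeness follows because $\mathcal{V}_{|\bk|}(\omega)\bbX\in\bC_{\perp}^N$ by construction (each block involves only $\bbX$, $\mathbf{e}_3\times\bbX$, with $\bbX\in\bC_{\perp}$, which are orthogonal to $\bk$ after applying $\mathcal{R}_{\bk}^*$, cf. \eqref{trucmuche}); and the $\bH^m$ bound follows from the decay hypothesis $|\phi(|\bk|)|\lesssim(1+|\bk|^2)^{-s/2}$ with $s>3/2+m$, once one checks that the normalized vector $\mathcal{V}_{|\bk|}(\omega(|\bk|))\mathbf{e}_1/|\mathcal{V}_{|\bk|}(\omega(|\bk|))\mathbf{e}_1|$ has modulus $1$ (trivially) and that the denominator does not vanish for $|\bk|$ large --- this needs $\mathcal{V}_{|\bk|}(\omega_{+\infty}(|\bk|))\mathbf{e}_1\neq 0$, which one reads off \eqref{defV}: the first block is exactly $\mathbf{e}_1\neq 0$. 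Second, I would establish the eigenvector property rigorously via the contour integral: one picks the contour $\mathcal{C}_{|\bk|}$ of \eqref{defcont}--\eqref{choicerhok} around $\omega(|\bk|)$, writes $\Pi(|\bk|)\bbU_0 = -\frac{1}{2\rmi\pi}\int_{\mathcal{C}_{|\bk|}}\mathcal{V}_{|\bk|}(\omega)\mathcal{S}_{|\bk|}(\omega)\bbU_0\,\rmd\omega$, and since $\mathcal{S}_{|\bk|}(\omega)$ has a simple pole at $\omega(|\bk|)$ (the factor $(\mathcal{D}(\omega)-|\bk|^2)^{-1}$ contributing $(\omega-\omega(|\bk|))^{-1}$ times something analytic, as $D_{|\bk|}$ has simple roots by Proposition \ref{prop.dispersioncurves}), the residue theorem gives $\Pi(|\bk|)\bbU_0 = \mathcal{V}_{|\bk|}(\omega(|\bk|))\,\big[\text{Res}\,\mathcal{S}_{|\bk|}\big]\bbU_0$, a vector proportional to $\mathcal{V}_{|\bk|}(\omega(|\bk|))\mathbf{e}_1$, i.e. proportional to $\mathcal{R}_{\bk}\bbU_0(\bk)$ itself. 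Third, since $\bbU_0(\bk)\neq 0$ and $\Pi(|\bk|)$ applied to it gives a vector of the same direction, either it is an eigenvector or $\Pi(|\bk|)\bbU_0(\bk)=0$; one rules out the latter by checking the residue scalar is nonzero (it is $\omega\mu(\omega)\,\bbA_e(\omega)$-type coefficient evaluated at $\omega(|\bk|)$, nonzero for $|\bk|$ large by the asymptotics). Fourth, conclude: $\rme^{-\rmi\bbA_{|\bk|,\perp}t}\big(\mathcal{R}_{\bk}\bbU_0(\bk)\big)=\rme^{-\rmi\omega(|\bk|)t}\,\mathcal{R}_{\bk}\bbU_0(\bk)$, hence $|\bbU(\bk,t)|=\rme^{\operatorname{Im}\omega(|\bk|)t}|\bbU_0(\bk)|$, and the lower bounds \eqref{polynomial_decayncr-HF2}, \eqref{polynomial_decaycr-HF2} follow from $\operatorname{Im}\omega_{+\infty}(|\bk|)\sim -A_{2,\infty}|\bk|^{-2}/(2c^2)$ (Lemma \ref{LemEigeninfty}) resp. $\operatorname{Im}\omega_p(|\bk|)\sim \operatorname{Im}A_{p,4}|\bk|^{-4}$ (Corollary \ref{CoroIm}, \eqref{eq.branchcirtique}), giving $\rme^{\operatorname{Im}\omega(|\bk|)t}\geq \widetilde{C}\rme^{-Ct/|\bk|^2}$ resp. $\geq\widetilde{C}\rme^{-Ct/|\bk|^4}$ for suitable constants once $|\bk|\geq k_+$ with $k_+$ enlarged if needed.

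\textbf{Main obstacle.} The delicate point is the \emph{nonvanishing} of $\Pi(|\bk|)\bbU_0(\bk)$, i.e. verifying that $\mathcal{V}_{|\bk|}(\omega(|\bk|))\mathbf{e}_1$ actually lies in (and spans a line inside) the two-dimensional eigenspace rather than being accidentally annihilated by the residue operator $\mathcal{S}_{|\bk|}$-residue acting from the other side. One must carefully track, in the product $\mathcal{V}_{|\bk|}(\omega)\mathcal{S}_{|\bk|}(\omega)$, that the scalar residue of $(\mathcal{D}(\omega)-|\bk|^2)^{-1}$ at $\omega(|\bk|)$ combines with $\omega\mu(\omega)\bbA_e(\omega)\mathbf{e}_1$ (and the $|\bk|\mathbf{e}_3\times\bbA_m(\omega)\mathbf{e}_1$ term) to produce a \emph{nonzero} coefficient in front of $\mathbf{e}_1$; this is where one invokes the explicit asymptotic expansions of $\bbA_e,\bbA_m$ along the relevant branch and the fact that $\omega(|\bk|)\notin\mathcal{S}_{\cal T}$ for $|\bk|$ large (so $\omega\mu(\omega)$, the $q_{e,j}$'s etc. are finite and nonzero in the limit, by $(\mathrm{H}_2)$). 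A secondary, more routine difficulty is confirming the $\bH^m$-membership estimate \eqref{eq.phik} is sharp enough --- this amounts to $\int_{\R^3}(1+|\bk|^2)^m|\phi(|\bk|)|^2\,\rmd\bk<\infty$, which the condition $s>3/2+m$ guarantees after passing to spherical coordinates (the $|\bk|^2\,\rmd|\bk|$ Jacobian accounts for the extra $3/2$).
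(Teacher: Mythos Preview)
Your proposal is correct but takes a more circuitous route than the paper. You propose to establish that $\mathcal{V}_{|\bk|}(\omega(|\bk|))\mathbf{e}_1$ is an eigenvector by applying the spectral projector $\Pi(|\bk|)$ via the Riesz--Dunford contour integral, computing the residue of $\mathcal{S}_{|\bk|}(\omega)$ at the simple pole $\omega(|\bk|)$, and checking this residue coefficient is nonzero. The paper bypasses all of this: it invokes directly the computation of Step~(ii) in the proof of Proposition~\ref{Prop.spec} (Appendix~\ref{sec-app-spec2}), which shows by elementary substitution into the system (\ref{systemEH},\ref{systemP},\ref{systemM}) with $\bbF=0$ that the eigenspace is \emph{exactly} the range of $\mathcal{V}_{|\bk|}(\omega(|\bk|))$ on $\bC_\perp$, i.e.
\[
\ker\big(\bbA_{|\bk|,\perp}-\omega(|\bk|)\,\mathrm{Id}\big)=\mathcal{V}_{|\bk|}\big(\omega(|\bk|)\big)(\bC_\perp).
\]
So $\mathcal{R}_{\bk}\bbU_0(\bk)$ is an eigenvector by construction, and the identity $|\bbU(\bk,t)|=\rme^{\operatorname{Im}\omega(|\bk|)t}|\bbU_0(\bk)|$ follows immediately. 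Your ``main obstacle'' (nonvanishing of the residue scalar) therefore never arises: one does not need to pass through $\Pi(|\bk|)$ at all, since the eigenvector property is verified by direct plugging into the eigenvalue equations, already carried out in the appendix. Both approaches lead to the same conclusion and the rest of your argument (membership in $\mathcal{H}^m_{\perp,\mathrm{HF}}$, the asymptotics of $\operatorname{Im}\omega(|\bk|)$ from Lemma~\ref{LemEigeninfty} and Corollary~\ref{CoroIm}) matches the paper.
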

\begin{proof} We separate it in two steps. \\ [10pt]
\noindent {\bf  Step 1.  Proof of the lower  bound \eqref{polynomial_decayncr-HF2}:}\\[4pt]
From Proposition \ref{prop.dispersioncurves}, we know for $|\bk|>k_+$,  $|\bk|\mapsto \omega_{{+\infty}}(|\bk|)$ is well defined and satisfy (by Lemma \ref{LemEigeninfty})  the asymptotic expansion \eqref{eq.estimVpm}. 
\\ [12pt]
Furthermore,  the proof of  Proposition \ref{Prop.spec} in appendix \ref{sec-app-spec2},  shows that any  $\bbU$ in the two-dimensional eigenspace $\operatorname{ker}\big(\bbA_{|\bk|}-\omega_{+\infty}(|\bk|) \, \mathrm{Id}\big)$  is of the form $\bbU=\big(\bbE,\bbH, \bbP, \dot \bbP, \bbM, \dot \bbM\big)$ for some $\bbE\in \bC_{\perp}$ and $\big(\bbH, \bbP, \dot \bbP, \bbM, \dot \bbM\big)$  deduced from $\bbE$ by formula $(\ref{EV1}, \ref{EV2})$, with $\omega=\omega_{+\infty}(|\bk|)$. \\ [12pt]
This can be expressed with the help of  the  operator $\mathcal{V}_{|\bk|}(\omega) \in {\cal L}(\bC_{\perp}, \bC_{\perp}^N)$, see \eqref{defV},  as follows 
\begin{equation}\label{eq.noyauAmodk}
 \operatorname{ker}(\bbA_{|\bk|}-\omega_{+\infty}(|\bk|)\, \mathrm{Id})=\mathcal{V}_{|\bk|}\big(\omega_{+\infty}(|\bk|)\big)\big(\bC_{\perp}).
\end{equation}
Let $m \in \N^*$. Let us  define $\bU_0$  via its Fourier transform $\bbU_0(\bk)$ and formula \eqref{eq.defUOkopt} where $\mathcal{R}_{\bk}$, see \eqref{eq.defRkvect}, is  unitary.
For $s >3/2 + m$, $\bk \mapsto (1+|\bk|^2)^{m/2} \, \bbU_0(\bk)\in \bL^2(\bbR^3)^N$, thus $\bU_0 \in \bH^m(\bbR^3)^N$ and therefore belongs by construction to $\mathcal{H}^{m}_{\perp,\operatorname{HF}}$. \\ [12pt]
Next, thanks to formula \eqref{eq.refsolutionfourier2} for $\bbU(\bk,t)$, \eqref{eq.defUOkopt} and \eqref{eq.noyauAmodk}, one has $$\bbU(\bk,t)=\rme^{-\rmi \bbA_{\bk} t} \, \bbU_0(\bk) = \rme^{- \rmi  \omega_{{+\infty}}(|\bk|) \, t} \, \bbU_0(\bk)  \quad \mbox{ and thus }\quad
|\bbU(\bk,t)| =\rme^{-\operatorname{Im}\omega_{{+\infty}}(|\bk|)\, t} \, | \bbU_0(\bk) |.
$$
Moreover, using Lemma \ref{LemEigeninfty} and more precisely the expansion \eqref{eq.asymptinf} for $\omega_{+\infty}(|\bk|)$, one knows that for large enough $|\bk|$, with $A_{2,\infty} > 0$ defined by  \eqref{eq.constantinf}, 
$$-\frac{A_{2,\infty}}{ c^2\, |\bk|^2} \leq \operatorname{Im}\omega_{+\infty}(|\bk|)<0 \quad (\mbox{simply because } 1/3 < 1/2).
$$
We thus get
$   \ds \rme^{-\frac{A_{2,\infty}\, t}{ c^2  |\bk|^2}} |\bbU_0(\bk)| \lesssim  |\bU(\bk,t)|,
$
which  achieves the proof of \eqref{polynomial_decayncr-HF2}.  \\[10pt]
\noindent {\bf  Step 2    Proof of the lower  bound \ \eqref{polynomial_decaycr-HF2}:}\\[4pt]
If the Maxwell system is in the critical configuration, to show that  the estimate \eqref{polynomial_decaycr-HF}, one only needs to define $\bU_0$ by \eqref{eq.defUOkoptcr} (instead of  \eqref{eq.defUOkopt})
replacing $\omega_{\infty}(|\bk|)$ by $\omega_{p}(|\bk|)$ satisfying  \eqref{eq.branchcirtique} (instead of  \eqref{eq.asymptinfIm}).
\end{proof}

\section{Asymptotic analysis for small spatial frequencies $|\bk|\ll 1$}\label{sec.LF}
As the reader can expect, the structure of this section is similar to  that of Section  \ref{sec.HF} with three subsections:
\begin{itemize} 
	\item Section \ref{sec.LFdispcurv}: Asymptotics of dispersion curves for $0<|\bk| \ll 1$.
	\item  Section \ref{spec-decomp-zero}: Spectral decomposition of the solution for $0<|\bk| \ll 1$.
	\item  Section \ref{sec.estmLF}: Large time estimate of $\bbU(\bk,t)$ for $0<|\bk| \ll 1$.
\end{itemize}
This section is however shorter than Section  \ref{sec.HF} because much less particular cases appear.

\subsection{Asymptotics of the  dispersion curves  for $|\bk|\ll 1$}\label{sec.LFdispcurv}
This section is the counterpart of Section \ref{sec.HFdispcurv} for low spatial frequencies.
In other words,  we focus  here on long time estimates of the low (spatial) frequency components  $\bbU(\bk,t)$ (see \eqref{eq.refsolutionfourier}) of the solution. As explained in  Section \ref{sec-main-lines-analysis}, the decay of $\bbU(\bk,t)$ for $|\bk|\ll 1$  is related to the analysis of the solutions of the dispersion relation \eqref{eq.disp} at low frequencies. Roughly speaking, as $|\bk|^2\to 0$ when $|\bk|\to 0$, the solutions of   \eqref{eq.disp} must satisfy $|\mathcal{D}(\omega)|\to 0$ as $|\bk|\to0$. Thus,
 they converge to  a zero $z\in \mathcal{Z}  \cup\{ 0\}$ of the  rational function $\mathcal{D}$.
Thus, as one can rewrite $\mathcal{D}$ in the vicinity of  a zero $z$ of multiplicity $\mathfrak{m}_z$ as 
\begin{equation}\label{eq.zerovoisinage}
	\mathcal{D}(\omega)=(\omega-z)^{\mathfrak{m}_z} g(\omega) \mbox{ with $g$ analytic in a vicinity of $z$ and }  g(z)=A_{z}\neq 0.
\end{equation}
This leads to the following proposition (the equivalent of Proposition \ref{prop.dispersioncurves} in Section \ref{sec.HFdispcurv}).
\begin{Pro}\label{prop.dispersioncurvesLF}
There exists $k_->0$  such that for $0<|\bk|\leq k_-$, the solutions of the rational dispersion relation  \eqref{eq.disp} (or of its equivalent polynomial form \eqref{eq.disppolynom})  are all simple. These solutions form $N$ distinct  branches which  are $C^{\infty}$-smooth functions of  $|\bk| \in (0, k_-]$. These branches are characterized by their asymptotic  expansion for small $|\bk|$. More precisely, if  $z\in \mathcal{Z}\cup \{0\}$ is a zero of multiplicity $\mathfrak{m}_z$ of $\mathcal{D}$, there exists $\mathfrak{m}_z$ distinct branches  of solutions $ \omega_{z,n},\, n=1,\ldots, \mathfrak{m}_z$  of  \eqref{eq.disp}  satisfying
\begin{equation}\label{eq.zero}
	\omega_{z,n}(|\bk|)=z+  a^{-1}_{z,n} \;  |\bk|^{\frac{2}{\mathfrak{m}_z}} \, (1+o(1)), \quad a_{z,n}=|A_z|^{1/\mathfrak{m}_z} \; \rme^{\rmi \frac{\theta_z}{\mathfrak{m}_z}}  \, \rme^{\frac{2 \, \rmi n\pi}{\mathfrak{m}_z}} \quad (|\bk| \to 0 ),
\end{equation}
where $A_z$ is defined in  \eqref{eq.zerovoisinage} and $\theta_z\in (-\pi, \pi]$ is the  principal argument of $A_z$. 
\end{Pro}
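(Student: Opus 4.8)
The plan is to mirror, almost verbatim, the structure of the proof of Proposition \ref{prop.dispersioncurves} from Section \ref{sec.HFdispcurv}, exploiting the fact that the role of ``poles of $\mathcal{D}$'' for $|\bk|\gg1$ is now played by ``zeros of $\mathcal{D}$'' for $|\bk|\ll1$, and that the branches going to $\infty$ no longer occur because $|\bk|^2\to0$ forces $|\mathcal{D}(\omega)|\to0$, hence all limit points stay in the finite set $\mathcal{Z}\cup\{0\}$. The main technical engine is again the implicit-function-type result of Lemma \ref{Lem-implicte-function} in the appendix.

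First I would fix a zero $z\in\mathcal{Z}\cup\{0\}$ of multiplicity $\mathfrak{m}_z$ and write $\mathcal{D}(\omega)=(\omega-z)^{\mathfrak{m}_z}g(\omega)$ as in \eqref{eq.zerovoisinage}, where $g$ is analytic near $z$ with $g(z)=A_z\neq0$; analyticity of $g$ at $z$ follows from assumption $(\mathrm{H}_2)$, which guarantees that the zeros of $\varepsilon$ are not poles of $\mu$ and conversely, so no pole of $\mathcal{D}$ collides with $z$ (for $z=0$ the relevant statement is \eqref{eq.equivalentzero}). Solving $\mathcal{D}(\omega)=|\bk|^2$ near $z$ then amounts to solving $(\omega-z)^{\mathfrak{m}_z}g(\omega)=\bigl(|\bk|^{2/\mathfrak{m}_z}\bigr)^{\mathfrak{m}_z}$, which is exactly the hypothesis of Lemma \ref{Lem-implicte-function} with $\mathcal{G}=\mathcal{D}$, the point $z$, the analytic factor $g$, integer $\mathfrak{m}=\mathfrak{m}_z$, constant $A=A_z$ and parameter $\zeta=|\bk|^{2/\mathfrak{m}_z}$. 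The lemma yields, for $|\bk|$ small enough, $\mathfrak{m}_z$ distinct $C^\infty$ branches $|\bk|\mapsto\omega_{z,n}(|\bk|)$, $n=1,\dots,\mathfrak{m}_z$, with the leading-order asymptotics \eqref{eq.zero}, the factor $\rme^{2\rmi n\pi/\mathfrak{m}_z}$ coming from the $\mathfrak{m}_z$-th roots of unity and $\rme^{\rmi\theta_z/\mathfrak{m}_z}$ from the principal argument of $A_z$ (note the inverse $a_{z,n}^{-1}$ appears because $\mathcal{D}$ itself, not $\mathcal{D}^{-1}$, has the zero, so one extracts a root of $A_z$ directly rather than of $A_z^{-1}$ as was done for poles).

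Next I would carry out the counting argument. Summing $\mathfrak{m}_z$ over all $z\in\mathcal{Z}\cup\{0\}$ gives exactly $N=2+2N_e+2N_m$, since $\mathcal{Z}\cup\{0\}$ is the zero set of the degree-$N$ numerator $\omega^2P_eP_m$ of the irreducible form of $\mathcal{D}$ (Section \ref{sec_dispersion-zeros}). The asymptotics \eqref{eq.zero} show that branches attached to distinct zeros are distinct for $|\bk|$ small (they converge to different limits), and that branches attached to the same zero are distinct because the roots-of-unity factors are pairwise different; hence we have produced $N$ pairwise distinct solutions of \eqref{eq.disp}. Since \eqref{eq.disp} is equivalent to the polynomial equation $D_{|\bk|}(\omega)=0$ of degree $N$ (cf. \eqref{eq.disppolynom}), there are no others, and in particular all $N$ roots are simple for $0<|\bk|\le k_-$, with $k_-$ chosen as the minimum over the finitely many $z$ of the radii produced by Lemma \ref{Lem-implicte-function}. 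The smoothness in $|\bk|$ on $(0,k_-]$ is inherited directly from the lemma.

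I do not expect a genuine obstacle here: the argument is a faithful transcription of the high-frequency case with ``zero'' replacing ``pole.'' The only points requiring mild care are (i) checking analyticity of $g$ at $z$, i.e. that $z$ is not simultaneously a zero and a pole of $\mathcal{D}$, which is precisely what $(\mathrm{H}_2)$ (together with the explicit computation \eqref{eq.equivalentzero} for $z=0$) provides, and (ii) making sure the branch parametrisation is by $|\bk|$ and not by $\zeta=|\bk|^{2/\mathfrak{m}_z}$, so that the claimed $C^\infty$-regularity in $|\bk|$ is genuine — again this is already handled by the way Lemma \ref{Lem-implicte-function} is stated and was used in Step 1 of the proof of Proposition \ref{prop.dispersioncurves}. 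Everything else is routine bookkeeping with roots of unity and a degree count.
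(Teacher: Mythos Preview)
Your proposal is correct and follows essentially the same approach as the paper's own proof: apply Lemma \ref{Lem-implicte-function} directly to $\mathcal{G}=\mathcal{D}$ at each zero $z\in\mathcal{Z}\cup\{0\}$ with $\zeta=|\bk|^{2/\mathfrak{m}_z}$, then conclude by the degree-$N$ counting argument. The paper's proof is slightly terser and does not spell out the analyticity check via $(\mathrm{H}_2)$ or the $C^\infty$-in-$|\bk|$ point, but the structure and key lemma are identical.
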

\begin{proof}
\noindent {\bf Step 1: Construction of the branches of solutions $\omega_{z,n}(|\bk|)$.} \\[4pt]
Let $z\in\mathcal{Z}\cup \{0\}$ a zero of multiplicity $\mathfrak{m}_z$. Then, the rational function $\mathcal{D}$ can be factorized as in \eqref{eq.zerovoisinage}. Moreover, the dispersion relation can be rewritten as 
$$
 \mathcal{D}(\omega)=|\bk|^{2}=\big(|\bk|^{\frac{2}{\mathfrak{m}_z}}\big)^{\mathfrak{m}_z}.
$$
Thus, by applying the Lemma \ref{Lem-implicte-function} of the appendix \ref{sec-asymptotic} with
$$
\mathcal{G}(\omega) =  \mathcal{D}(\omega), \quad \mathfrak{m}=\mathfrak{m}_z, \quad  A=A_z \mbox{ and  } \zeta=|\bk|^{2/\mathfrak{m}_z},$$
we deduce the existence of $k_->0$ and $\mathfrak{m}_z$  distinct $C^\infty$  branches of solutions  $|\bk| \in (0, k_-]\mapsto \omega_{z,n}(|\bk|)$ of the dispersion $\mathcal{D}(\omega)=|\bk|^2$. 
Furthermore, for the solutions  $\omega_{z,n}(|\bk|)$, we can use the  asymptotic expansion  \eqref{eq.asympexpansion} of Lemma \ref{Lem-implicte-function} with   $\zeta=|\bk|^{2/\mathfrak{m}_z}$ which gives  \eqref{eq.zero}. \\[12pt]
\noindent {\bf Step 2: Conclusion.}
From the asymptotic  \eqref{eq.zero}, $\omega_{z,n}(|\bk|)$ for  $z\in {\cal Z}\cup \{0 \}$ and $n\in \{1,\ldots, \mathfrak{m}_z\}$  are all distinct  for $|\bk|$ small enough and positive. Since the sum of the $\mathfrak{m}_z$'s, for $z \in {\cal Z}\cup \{ 0\}$ is equal to $2N_e+2N_m+2$, we have constructed $2N_e+2N_m +2$  distinct solutions of \eqref{eq.disp}. As \eqref{eq.disp} is equivalent to  a polynomial equation  of degree $2N_e+2N_m +2$, cf. \eqref{eq.disppolynom}, these solutions are simple roots  and there are no other solutions for small positive $|\bk|$.
\end{proof}

\subsection{Spectral decomposition of the solution for $|\bk| \ll 1$}\label{spec-decomp-zero}
Our goal here is to obtain for small $|\bk|$ a decomposition of $\bbU(\bk,t)$ similar to the one obtained for large $|\bk|$ (see (\ref{eq.decompositionfourterm},\ref{decompU})).\\ [12pt]
For this, we first notice that, combining Proposition \ref{prop.dispersioncurvesLF} and Corollary \ref{eq.crit-diag} immediately yields the following property of  the operator $ \bbA_{|\bk|,\perp}$.
\begin{Cor}\label{eq.crit-diag3}
There exists $k_->0$  such that for $0<|\bk|\leq k_-$, $\bbA_{|\bk|,\perp}$ is diagonalizable on $\bC^N_{\perp}$.
\end{Cor}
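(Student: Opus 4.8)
The statement to prove is Corollary \ref{eq.crit-diag3}: for $0 < |\bk| \leq k_-$, the operator $\bbA_{|\bk|,\perp}$ is diagonalizable on $\bC^N_\perp$.

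The plan is to combine the two results the excerpt explicitly points to: Proposition \ref{prop.dispersioncurvesLF} and Corollary \ref{eq.crit-diag}. This is exactly the low-frequency mirror image of the argument already carried out in the high-frequency regime (Corollary \ref{eq.crit-diag2}). First, I would invoke Proposition \ref{prop.dispersioncurvesLF}: it furnishes a threshold $k_- > 0$ such that for every $\bk$ with $0 < |\bk| \leq k_-$, all solutions of the dispersion relation $\mathcal{D}(\omega) = |\bk|^2$ (equivalently, all roots of the degree-$N$ polynomial $D_{|\bk|}(\omega)$ from \eqref{eq.disppolynom}) are simple. Indeed, the proposition constructs $N = 2N_e + 2N_m + 2$ pairwise distinct branches $\omega_{z,n}(|\bk|)$ indexed by the zeros $z \in \mathcal{Z} \cup \{0\}$ of $\mathcal{D}$ and their multiplicities $\mathfrak{m}_z$, whose distinctness for small positive $|\bk|$ is read off from the asymptotic expansions \eqref{eq.zero}, and since a degree-$N$ polynomial has at most $N$ roots these exhaust all solutions and each is simple.

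Second, I would apply Corollary \ref{eq.crit-diag}, which states that for $\bk \in \R^3 \setminus \{0\}$, $\bbA_{|\bk|,\perp}$ is diagonalizable on $\bC^N_\perp$ if and only if the roots of $D_{|\bk|}$ are simple. Combining this equivalence with the simplicity established in the first step immediately yields the diagonalizability of $\bbA_{|\bk|,\perp}$ for all $\bk$ with $0 < |\bk| \leq k_-$. For completeness one can also recall the dimension count underlying Corollary \ref{eq.crit-diag}: by Proposition \ref{Prop.spec} each distinct eigenvalue of $\bbA_{|\bk|,\perp}$ has geometric multiplicity $2$, and $\dim \bC^N_\perp = 2N$ (see \eqref{trucmuche}), so having $N$ distinct eigenvalues forces each to be semisimple, hence diagonalizability. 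There is essentially no obstacle here: the entire content has been pushed into Proposition \ref{prop.dispersioncurvesLF}, whose proof (already given in the excerpt via Lemma \ref{Lem-implicte-function}) is where the real work lies. The corollary itself is a one-line deduction, and the only thing to be careful about is to state clearly that the same $k_-$ as in Proposition \ref{prop.dispersioncurvesLF} can be used (or shrunk if necessary, following the convention mentioned for $k_+$ in Section \ref{sec.estmHF}).

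\begin{proof}
By Proposition \ref{prop.dispersioncurvesLF}, there exists $k_- > 0$ such that for every $\bk$ with $0 < |\bk| \leq k_-$, the $N = 2N_e + 2N_m + 2$ solutions of the dispersion relation $\mathcal{D}(\omega) = |\bk|^2$, equivalently the $N$ roots of the polynomial $D_{|\bk|}$ defined in \eqref{eq.disppolynom}, are all simple. The conclusion then follows at once from Corollary \ref{eq.crit-diag}: since the roots of $D_{|\bk|}$ are simple, $\bbA_{|\bk|,\perp}$ is diagonalizable on $\bC^N_\perp$. (Alternatively, by Proposition \ref{Prop.spec} each of the $N$ distinct eigenvalues has geometric multiplicity $2$, while $\dim \bC^N_\perp = 2N$ by \eqref{trucmuche}, which forces $\bbA_{|\bk|,\perp}$ to be diagonalizable.)
\end{proof}
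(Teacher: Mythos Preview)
Your proposal is correct and matches the paper's approach exactly: the paper introduces this corollary with the remark that combining Proposition~\ref{prop.dispersioncurvesLF} (simple roots of $D_{|\bk|}$ for small $|\bk|$) and Corollary~\ref{eq.crit-diag} (diagonalizability iff simple roots) immediately yields the result. Your optional dimension-count parenthetical is also precisely the argument behind Corollary~\ref{eq.crit-diag} itself.
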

\noindent To obtain  the decomposition of $\bbU(\bk,t)$, we split (see Section \ref{sec_dispersion-zeros}) the sets of zeros $\{ 0\}\cup\mathcal{Z}\subset \overline{\bbC^-}$  of $\mathcal{D}$  in three disjoint subsets : $  \{ 0\}\cup \mathcal{Z}=\{0 \}\cup \mathcal{Z}_s\cup \mathcal{Z}_-$ where  the multiplicity of $z=0$ is $\mathfrak{m}_0=2$ and
\begin{equation} \label{Zs}
\mathcal{Z}_-:=\mathcal{Z} \cap \bbC^- \mbox{ and }  \mathcal{Z}_s:=\{ z \in \mathcal{Z} \cap \bbR \mid \mathfrak{m}_z=1\}.
\end{equation}
\begin{Rem} \label{remZmoins} [On the set $\mathcal{Z}_-$]
We point out that the weak dissipation condition  \eqref{WD} implies (see \eqref{eq.consequenceWD} and \eqref{eq.positvity}) that at least one element  of $\mathcal{Z}$ lies in $\bbC^-$, thus $\mathcal{Z}_-\neq \varnothing$. \\ [6pt] The strong dissipation condition \eqref{SD} implies that  
all  elements of $\mathcal{Z}$ lie in $\bbC^-$, that is $\mathcal{Z}_-=\mathcal{Z}$.
\end{Rem}
\begin{Rem} \label{remZs} [On the set $\mathcal{Z}_s$]
	According to the analysis of the zeros of  made in Section \ref{sec_dispersion-zeros}, the structure of the set $\mathcal{Z}_s$ strongly relies on the condition \begin{equation} \label{condition} 
\exists \; (j_0,\ell_0) \mbox{ such that }\alpha_{e,j_0}>0 \mbox{ and } \alpha_{m,\ell_0}>0.
\end{equation} 
More precisely,  if \eqref{condition} holds, $\mathcal{Z}_s=\varnothing$. On the contrary, under the weak dissipation condition  \eqref{WD},  if \eqref{condition} does not hold, either all $\alpha_{e,j}$ vanish in which case  $\mathcal{Z}_s=\mathcal{Z}_e$ either all $\alpha_{\ell,m}$ vanish and $\mathcal{Z}_s=\mathcal{Z}_m$.
\end{Rem}
\noindent According to Propositions \ref{Prop.spec}  and \ref{prop.dispersioncurvesLF}, we have the corresponding partition of the spectrum of $\bbA_{|\bk|,\perp}$ for $0<|\bk|\leq k_-$:
\begin{equation} \label{eq.specdispLF} 
\sigma(\bbA_{|\bk|, \perp})=\big\{ \omega_{0,r}(|\bk|) ,\, r=1,2 \big\} \cup \big\{ \omega_{z}(|\bk|), z\in \mathcal{Z}_s \big\} \cup   \big\{\omega_{z,n}(|\bk|),  \ z \in \mathcal{Z}_-, \,  n =1,\ldots, \mathfrak{m}_z  \big \},
\end{equation} 
where, for simplicity,  we write $\omega_{z}(|\bk|)$ instead of $\omega_{z,1}(|\bk|)$ for the simple zeros  $z\in \mathcal{Z}_s$. We point out that ``by symmetry'' of the dispersion relation $\omega_{0,2}(|\bk|)=-\overline{\omega}_{0,1}(|\bk|)$. \\ [12pt] 
For a sketch of the small $|\bk|$ behaviour of the dispersion curves, see figure \ref{fig-disp-curv-LF}.\\ [12pt] 
\begin{figure}[h!]
		\begin{center}
			\includegraphics[scale=0.31]{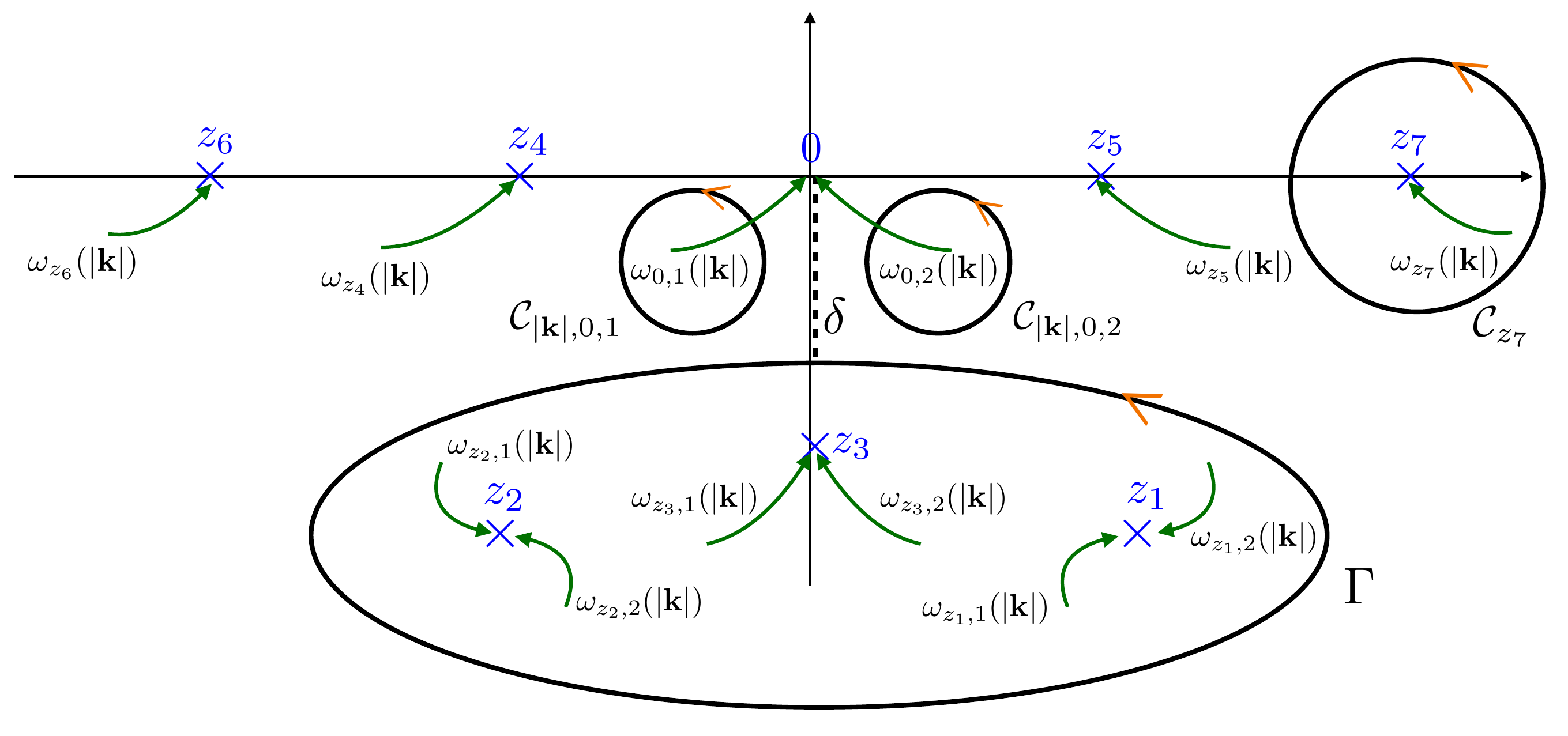}
		\end{center}
		\caption{Sketch of   a configuration of  dispersion curves   for small  values of $|\bk|$ in the case where $\{0\} \cup \mathcal{Z}=\{0\} \cup \mathcal{Z}_-\cup \mathcal{Z}_s$ with $\mathcal{Z}_-=\{ z_1, \,z_2,\, z_3\}$ and  $\mathcal{Z}_s=\{ z_4, z_5, z_6, z_7\}$.  }
\label{fig-disp-curv-LF}	
\end{figure}
\noindent For $0<|\bk|\leq k_-$,  as $\bbA_{|\bk|, \perp}$ is diagonalizable by Corollary \ref{eq.crit-diag3},  $\bC_{\perp}^N$ can be decomposed as
\begin{equation}\label{deq.decompCNperpbis}
 \bC_{\perp}^N= \bigoplus_{\substack {r=1}}^{2}V_{|\bk|,0,r} \bigoplus_{z\in \mathcal{Z}_{s}}V_{|\bk|,z}  \oplus \bigoplus_{z\in \mathcal{Z}_{-}} \mathop{\oplus}_{\substack {n=1}}^{\mathfrak{m}_z}V_{|\bk|,z,n}
\end{equation}
where we have defined the 2D (see below) subspaces 
\begin{equation}
 \left\{ \begin{array}{llll}
      V_{|\bk|,0,r}= \operatorname{ker}\big( \bbA_{|\bk|, \perp} - \omega_{0,r}(|\bk|)\, \mathrm{I}d \big),  \;  r=1, 2  \quad  V_{|\bk|,z}= \operatorname{ker}\big( \bbA_{|\bk|, \perp} - \omega_{z}(|\bk|)\, \mathrm{I}d\big), \\[10pt]
 V_{|\bk|,z,n}=\operatorname{ker}\big( \bbA_{|\bk|, \perp} - \omega_{z,n}(|\bk|) \, \mathrm{I}d \big), \;1\leq n \leq \mathfrak{m}_z,
\end{array} \right.
\end{equation}
where  the above direct sums are (in general) non-orthogonal. Thus, one decomposes uniquely any vector $x\in  \bC_{\perp}^N$ as
 $$
\left| \;  \begin{array} {ll}
 x= \ds \sum_{r=1 }^2 x_{ |\bk|,0,r}+\sum_{z\in \mathcal{Z}_{s}} x_{ |\bk|, z}+\sum_{z\in \mathcal{Z}_-} \sum_{n=1}^{\mathfrak{m}_z}x_{|\bk|, z, n}
\\ [18pt]
 x_{ |\bk|, 0, r} \in V_{|\bk|,0,r}, \ x_{ |\bk|, z}\in V_{|\bk|,z}\  \mbox{ and } x_{ |\bk|, z, n}\in V_{|\bk|,z,n}.
\end{array} \right. $$
Then, we  define  the spectral projectors $\Pi_{0, r}(|\bk|)$ for $r\in \{ 1,2\}$,   $\Pi_z(|\bk|)$ for $z\in \mathcal{Z}_s$ and $\Pi_{p,n}(|\bk|)$, $z\in\mathcal{Z}_-$ and $n\in \{ 1, \ldots, \mathfrak{m}_z \}$  associated to the eigenvalues  $\omega_{0,r}(|\bk|)$, $\omega_z(|\bk|)$ and $\omega_{z,n}(|\bk|)$ by:
\begin{equation}\label{eq.projdefbis}
\Pi_{0, r}(|\bk|)(x)= x_{ |\bk|, 0, r}, \
 \Pi_{z}(|\bk|)(x)=x_{|\bk|, z} \ \mbox{ and } \ \Pi_{z,n}(|\bk|)(x)=x_{|\bk|, z,n}. 
 \end{equation}
 From Proposition \ref{Prop.spec},  the geometric  multiplicity of each  eigenvalues in $\sigma(\bbA_{|\bk|,\perp})$ is two. Thus,  all these projectors  are rank two projectors.  We again emphasize  that the  dissipative operator $\bbA_{|\bk|,\perp}$ is not normal, thus its spectral projectors are not  orthogonal.
\\ [12pt] 
\noindent For $0<|\bk|\leq k_-$,  as $\bbA_{|\bk|, \perp}$ is diagonalizable (by Corollary \ref{eq.crit-diag}), one has
\begin{equation}
\bbA_{|\bk|,\perp}=\sum_{r=1}^2 \omega_{0,r}(|\bk|) \, \Pi_{0,r}(|\bk|) + \sum_{z\in \mathcal{Z}_s}   \omega_{z}(|\bk|) \, \Pi_{z}(|\bk|)+ \sum_{z\in \mathcal{Z}_-}\sum_{n=1}^{\mathfrak{m}_z} \omega_{z,n}(|\bk|)\, \Pi_{z,n}(|\bk|).
\end{equation}
Thus,  for  $0<|\bk|\leq k_-$, the solution $\bbU(\bk,t)$  given by \eqref{eq.refsolutionfourier2} can be expressed for all $ t \geq 0$    as

\begin{equation}\label{eq.decompositionthreeterm}
\bbU(\bk,t)= \bbU_{z,0}(\bk,t)+\bbU_{z,s}(\bk,t) + \bbU_{z,-}(\bk,t)
\end{equation}
where
\begin{equation} \label{decompUbis}
\left\{ \begin{array}{llll}  
\bbU_{z,0}(\bk,t)&=& \ds \sum_{r=1 }^2 \, \rme^{-\rmi \, \omega_{0,r}(|\bk|) \,t} \, \mathcal{R}_{\bk}^* \, \Pi_{0,r}(|\bk|) \, \mathcal{R}_{\bk} \,\bbU_0(\bk), & \quad(i)\\[19pt]
\bbU_{z,s}(\bk,t) &=& \ds \sum_{z\in \mathcal{Z}_s}  \rme^{-\rmi \, \omega_{z}(|\bk|) t} \, \mathcal{R}_{\bk}^* \, \Pi_{z}(|\bk|) \mathcal{R}_{\bk}\,\bbU_0(\bk), &\quad(ii)\\[19pt]
\bbU_{z,-}(\bk,t)&= & \ds \sum_{z\in \mathcal{Z}_-} \sum_{n=1}^{\mathfrak{m}_z} \, \rme^{-\rmi \, \omega_{z,n}(|\bk|) \,t} \, \mathcal{R}_{\bk}^* \, \Pi_{z,n}(|\bk|)\, \mathcal{R}_{\bk} \bbU_0(\bk) . &\quad (iii)
\end{array} \right. 
\end{equation}
\subsection{Estimates of the low frequency components of the solution}\label{sec.estmLF}
\subsubsection{Orientation} \label{orientation-LF}
Proceeding as for large $|\bk|$ in Section \ref{sec.estmHF}, we  estimate successively, in Lemmas \ref{LemEstizero}, \ref{LemEstizeros}  and \ref{EstimZminus}, the  three terms appearing in the decomposition \eqref{eq.decompositionthreeterm}.  An important difference with Section \ref{sec.estmHF} is that we consider bounded values of $|\bk|$, namely $0<|\bk| < k_-$, for some $k_- > 0$, which  will simplify some estimates via compactness arguments.\\ [12pt]
Two cases will have to be distinguished. \\ [12pt]
{\bf Case 1}:  estimates of  $\bbU_{z,0}(\bk,t)$ and $\bbU_{z,s}(\bk,t)$. This will be established in Sections \ref{estiU0} and  \ref{estiUZS},
by bounding separately each of the terms of the sums  in \eqref{decompUbis}(i) and  \eqref{decompUbis}(ii). \\ [12pt] 
 For each  eigenvalue  $\omega(|\bk|) \in \big\{  \omega_{0,r}(|\bk|), \omega_{z}(|\bk|)\}$, we shall first 
 estimate in Lemmas \ref{LemProzerodouble} and \ref{LemProszero}  the corresponding spectral projector $\Pi (|\bk|)$ uniformly in $|\bk|$ for $|\bk|$ small enough and positive. To this aim as in  Section  \ref{sec.estmHF}, we use the Dunford-Riesz functional
 calculus that says that, if $\Gamma_{|\bf k|}$ is a  (positively oriented) close contour in $\mathbb{C}$  that  encloses $\omega(|\bk|)$ but no other eigenvalue of $\bbA_{|\bk|,\perp}$,  so that 
  \begin{equation}\label{eq.RieszDunford2}
 	\Pi(|\bk|)=- \frac{1}{2\rmi \pi}\int_{\Gamma_{|\bf k|}} R_{|\bk|}(\omega) \, \rmd \omega \, .  
 \end{equation}
\noindent  For $ \Pi_{z}(|\bk|) $  in Lemma \ref{LemProszero}, $\omega_z (|\bk|) $ is a simple eigenvalues that converges to $z$ while remaining well separated from the other eigenvalues. Hence, using Proposition \ref{Prop.res}, we can choose  for $\Gamma_{|\bf k|}$ a fixed circle $\Gamma$ (i.e. independent of $|\bk|$), along which the operator $\mathcal{R}_{|\bk|}(\omega)$, by continuity in
$(|\bk|, \omega)$ and compactness of $[0,k_-] \times \Gamma$, remain bounded. Thus, the uniform estimate of $ \Pi_{z}(|\bk|) $ becomes obvious.  \\ [12pt]
For the  projector $\Pi_{0,1}(|\bk|)$ (this is the same for $\Pi_{0,2}(|\bk|)$ ), this is more complicated because the two distinct eigenvalues $ \omega_{0,r}(|\bk|)$ ($r=1,2$) converge both to $0$ while remaining far way from  the other eigenvalues  and any other point of the set ${\cal S}_{\cal T}$ than $0$. For this reason, we shall take as $\Gamma_{|\bf k|}$ the circle ${\cal C}_{0,1,|\bk|}$ centered at $\omega_{0,1}(|\bk|)$ with radius $\rho_{|\bk|}=|\omega_{0,2}(|\bk|)-\omega_{0,1}(|\bk|)|/4$ in order to prevent $\omega_{0,2}(|\bk|)$ from being inside ${\cal C}_{0,1,|\bk|}$. This circle will also not contain $0$ or any other point of  ${\cal S}_{\cal T}$ (by Proposition \ref{prop.dispersioncurvesLF}). As a consequence, in \eqref{eq.RieszDunford2}, $R_{|\bk|}(\omega)$ can be replaced by the product $\mathcal{V}_{|\bk|}(\omega) \, \mathcal{S}_{|\bk|}(\omega)$,  (see formula \eqref{eq.expressresolv} for $R_{|\bk|}(\omega)$ and the definition \eqref{defST} of ${\cal S}_{\cal T}$) leading,  similarly to \eqref{estiproj}, to
\begin{equation}\label{estiproj2}
	\|\Pi(|\bk|)\| \leq \rho_{|\bk|} \, \sup_{\omega \in \mathcal{C}_{ |\bk| }}  \big(\|  \mathcal{V}_{|\bk|}(\omega) \|  \, \| \mathcal{S}_{|\bk|}(\omega)  \|\big) \ .
\end{equation} Moreover the  product  $\|  \mathcal{V}_{|\bk|}(\omega) \|  \, \| \mathcal{S}_{|\bk|}(\omega)  \|$  will blow up when $|\bk|$ tend to $0$ but this will be compensated by the fact that the radius of $ \rho_{|\bk|}$ tends to 0. \\ [12 pt] 
\noindent In a second step, we concentrate on the exponentials appearing in each factor of the terms  involved in $\bbU_{z,0}(\bk,t)$ and $\bbU_{z,s}(\bk,t)$   whose estimates for small  $|\bk|$ rely on the asymptotic expansion of the eigenvalues (and more particularly their imaginary parts), see Lemmas  \ref{LemEigenzo} and \ref{LemEigenzs}.\\[6pt] 
\noindent{\bf Case 2}:  estimate of  $\bbU_{z,-}(\bk,t)$. This term will  be treated as the term $\bbU_{-}(\bk,t)$ (in Section   \ref{estiUminus}), because the set $\big\{ \omega_{z,n}(|\bk|)\}$ will remain far from the real axis for small enough $|\bk|$. Thanks to this property, the exponential decay of $|\bbU_{z,-}(\bk,t)|$ will be, contrarily to the previous terms $\bbU_{z,0}(\bk,t)$ and $\bbU_{z,s}(\bk,t)$, uniform with respect to $|\bk|$, so that it will not contribute in the end to the large time equivalent of $\bU(\cdot,t)$. For proving this, we  shall not use  \eqref{decompUbis}(ii) but an alternative  formula directly issued from the Riesz-Dunford functional calculus.  Compared to Section  \ref{estiUminus}, its estimate in Lemma \ref{EstimZminus} will  be simplified by a compactness argument.

\subsubsection{Estimates of $\bbU_{z,0}(\bk,t)$ for $0<|\bk|\leq k_-$} \label{estiU0}
In the following Lemma, we estimate the spectral projectors $\Pi_{0, r} (|\bk|)$ (for $r=1,2$) using  \eqref{estiproj2} with a  simple closed contour $ \Gamma_{|\bf k|}$ satisfying the two following properties:
\begin{equation} \label{condGamma}   
	\left\{ \begin{array}{lll}
		(i) & \Gamma_{|\bf k|} \mbox{ encloses $\omega(|\bk|)$ but no other eigenvalue of $\bbA_{|\bk|,\perp}$,} \\ [10pt]
		(ii) & \mbox{$\Gamma_{|\bf k|}$  does not enclose any point of the set ${\cal S}_{\cal T}$.}
	\end{array} \right.
\end{equation}
\begin{Lem} \label{LemProzerodouble}
There exists $k_->0$ such that  the spectral projectors  $\Pi_{0,r} (|\bk|)$, $r \in \mathcal\{1,2\}$ are uniformly bounded for $0<|\bk|\leq k_-$.  
\end{Lem}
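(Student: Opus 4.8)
The plan is to follow the strategy laid out in the ``Orientation'' paragraph (Section~\ref{orientation-LF}), mimicking the structure of the proof of Lemma~\ref{LemProinfty} but now exploiting that $|\bk|$ ranges over a bounded interval $(0,k_-]$. First I would fix the contour: for $r=1$ (the case $r=2$ being identical by the symmetry $\omega_{0,2}(|\bk|)=-\overline{\omega_{0,1}(|\bk|)}$), take $\Gamma_{|\bk|}={\cal C}_{0,1,|\bk|}$, the positively oriented circle centered at $\omega_{0,1}(|\bk|)$ of radius $\rho_{|\bk|}:=\tfrac14\,|\omega_{0,2}(|\bk|)-\omega_{0,1}(|\bk|)|$. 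Using the asymptotic expansion \eqref{eq.zero} for the zero $z=0$ of multiplicity $\mathfrak{m}_0=2$, namely $\omega_{0,r}(|\bk|)=a_{0,r}^{-1}\,|\bk|\,(1+o(1))$ with $a_{0,1}^{-1}=-a_{0,2}^{-1}=c_0$ (recall $c_0=(\varepsilon(0)\mu(0))^{-1/2}>0$ from \eqref{eq.equivalentzero}), one gets $\rho_{|\bk|}\sim \tfrac12\,c_0\,|\bk|$ as $|\bk|\to0$, and one checks via Proposition~\ref{prop.dispersioncurvesLF} (simplicity and separation of the branches) and the fact that $0$ is the only point of ${\cal S}_{\cal T}$ near the origin that, for $k_-$ small enough, ${\cal C}_{0,1,|\bk|}$ satisfies \eqref{condGamma}. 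Then \eqref{eq.RieszDunford2} together with the resolvent formula \eqref{eq.expressresolv} (the term ${\cal T}(\omega)$ being analytic inside ${\cal C}_{0,1,|\bk|}$, hence contributing nothing) yields the bound \eqref{estiproj2}: $\|\Pi_{0,1}(|\bk|)\|\le \rho_{|\bk|}\,\sup_{\omega\in{\cal C}_{0,1,|\bk|}}\big(\|\mathcal{V}_{|\bk|}(\omega)\|\,\|\mathcal{S}_{|\bk|}(\omega)\|\big)$.

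The heart of the proof is then estimating $\|\mathcal{V}_{|\bk|}(\omega)\|$ and $\|\mathcal{S}_{|\bk|}(\omega)\|$ on ${\cal C}_{0,1,|\bk|}$. Here one uses that, along this circle, $|\omega|\lesssim |\bk|$ and $|\omega|\gtrsim |\bk|$ (both $\omega_{0,1}(|\bk|)$ and $\rho_{|\bk|}$ are of order $|\bk|$, and the circle stays in the sector where $\operatorname{Re}$ dominates so no cancellation occurs). Since the polynomials $q_{e,j},q_{m,\ell}$ do not vanish at $0$ (their roots are $\pm\omega_{e,j},\pm\omega_{m,\ell}\neq0$), the operators $\bbA_{e,j}(\omega),\dot\bbA_{e,j}(\omega),\bbA_{m,\ell}(\omega),\dot\bbA_{m,\ell}(\omega)$ from \eqref{operatorsApAm}, hence $\bbA_e(\omega),\bbA_m(\omega)$ from \eqref{operatorsAeAh}, are uniformly bounded on ${\cal C}_{0,1,|\bk|}$ for small $|\bk|$. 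The key quantity is $(\mathcal{D}(\omega)-|\bk|^2)^{-1}=Q_e(\omega)Q_m(\omega)/D_{|\bk|}(\omega)$, see \eqref{eq.expressioncalDinv}: one factorizes $D_{|\bk|}(\omega)=\varepsilon_0\mu_0\,(\omega-\omega_{0,1}(|\bk|))(\omega-\omega_{0,2}(|\bk|))\,\widetilde D_{|\bk|}(\omega)$ where $\widetilde D_{|\bk|}$ collects the remaining $N-2$ roots, all of which stay bounded away from $0$; thus $|\widetilde D_{|\bk|}(\omega)|\gtrsim1$, $|\omega-\omega_{0,1}(|\bk|)|=\rho_{|\bk|}$, and $|\omega-\omega_{0,2}(|\bk|)|\gtrsim |\bk|$ on the circle, so $|D_{|\bk|}(\omega)|^{-1}\lesssim \rho_{|\bk|}^{-1}|\bk|^{-1}$. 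Since $Q_eQ_m$ is bounded near $0$ (it tends to $\prod\omega_{e,j}^2\prod\omega_{m,\ell}^2\neq0$), one obtains $|(\mathcal{D}(\omega)-|\bk|^2)^{-1}|\lesssim \rho_{|\bk|}^{-1}|\bk|^{-1}$. Plugging this, together with $|\omega\mu(\omega)|\lesssim|\bk|$ (as $\mu(0)$ is finite) and $|\,|\bk|\,\mathbf{e}_3\times\bbA_m(\omega)|\lesssim|\bk|$, into \eqref{defS} gives $\|\mathcal{S}_{|\bk|}(\omega)\|\lesssim \rho_{|\bk|}^{-1}|\bk|^{-1}\cdot|\bk|=\rho_{|\bk|}^{-1}$; and inspecting \eqref{defV}, where the dangerous prefactor is $|\bk|/(\omega\mu(\omega))$ which is $O(1)$ since $\omega\mu(\omega)$ is of order $|\bk|$ and $\mu(0)\neq0$ by \eqref{eq.equivalentzero}, gives $\|\mathcal{V}_{|\bk|}(\omega)\|\lesssim1$. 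Hence $\rho_{|\bk|}\,\|\mathcal{V}_{|\bk|}(\omega)\|\,\|\mathcal{S}_{|\bk|}(\omega)\|\lesssim \rho_{|\bk|}\cdot1\cdot\rho_{|\bk|}^{-1}=1$, which is exactly the uniform bound $\|\Pi_{0,r}(|\bk|)\|\lesssim1$ claimed.

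The main obstacle I anticipate is the bookkeeping needed to justify the two-sided estimate $|\omega|\asymp|\bk|$ uniformly on ${\cal C}_{0,1,|\bk|}$ and, relatedly, the separation $|\omega-\omega_{0,2}(|\bk|)|\gtrsim|\bk|$: one must be careful that the choice $\rho_{|\bk|}=\tfrac14|\omega_{0,2}(|\bk|)-\omega_{0,1}(|\bk|)|$, combined with the expansion \eqref{eq.zero}, indeed keeps the circle in a region where $\mathcal{D}(\omega)-|\bk|^2$ does not vanish except at $\omega_{0,1}(|\bk|)$ and where $\omega$ does not approach $0$ — this is where the precise constants $a_{0,r}=\pm c_0$ matter, and why $k_-$ must be chosen small. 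The only other subtlety is checking that $0\in{\cal S}_{\cal T}$ (it is, by \eqref{defST}) is excluded from the interior of the circle, which again follows from $|\omega_{0,1}(|\bk|)|\sim c_0|\bk|$ and $\rho_{|\bk|}\sim\tfrac12 c_0|\bk|$ so that $\mbox{dist}(0,{\cal C}_{0,1,|\bk|})\sim\tfrac12 c_0|\bk|>0$. Everything else is a routine repetition of the resolvent estimates already carried out in Section~\ref{sec.estmHF}, now in a bounded frequency window, so the compactness of $(0,k_-]$ removes the need for any uniform-in-$|\bk|$ polynomial control beyond what is written above.
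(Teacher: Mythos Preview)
Your proposal is correct and follows essentially the same approach as the paper's own proof: the same contour ${\cal C}_{0,r,|\bk|}$ with radius $\rho_{|\bk|}=\tfrac14|\omega_{0,1}(|\bk|)-\omega_{0,2}(|\bk|)|$, the same factorization of $D_{|\bk|}(\omega)$ isolating the two roots near $0$, and the same estimates $\|\mathcal{S}_{|\bk|}(\omega)\|\lesssim|\bk|^{-1}$, $\|\mathcal{V}_{|\bk|}(\omega)\|\lesssim1$. The only cosmetic difference is that you keep $\rho_{|\bk|}$ as a separate variable throughout whereas the paper immediately uses $\rho_{|\bk|}\asymp|\bk|$ to express all bounds as powers of $|\bk|$; also, your ``sector'' remark for $|\omega|\gtrsim|\bk|$ is unnecessary since the reverse triangle inequality $|\omega|\ge|\omega_{0,1}(|\bk|)|-\rho_{|\bk|}\sim\tfrac12 c_0|\bk|$ already gives it.
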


\begin{proof}
Let $r\in \{ 1,2 \}$. We follow the approach of  the Section  \ref{sec.estmHF} and introduce as $\Gamma_{|\bf k|}$  the circle  $\mathcal{C}_{0,r,|\bk|}$ (positively oriented) of  center  $\omega_{0,r}(|\bk|)$ and radius
\begin{equation} \label{choicerhokbis}
 	\rho_{|\bk|} =  \mbox{$\frac{1}{4}$} \,  | \, \omega_{0,1}(|\bk|)-\omega_{0,2}(|\bk|)\, |,
 	\end{equation}
(see figure \ref{fig-disp-curv-LF} for an example of contour $\mathcal{C}_{0,r,|\bk|}$ with $r=1,2$.)
\\ [12pt]
From the aymptotics \eqref{eq.zero} applied with $\mathfrak{m}_0 = 2$   ($z=0$ is  a double zero), one deduces that
\begin{equation} \label{distancesbis}
	|\, \omega_{0,1}(|\bk|) \, | \sim  |A_0|^{\frac{1}{2}} \,  |\bk|, \quad | \, \omega_{0,1}(|\bk|)-\omega_{0,2}(|\bk)\, | \sim  2 \, |A_0|^{\frac{1}{2}} \,  |\bk|, \quad (|\bk| \to 0 )
\end{equation}  
so that, for $|\bk|$ small enough, $\Gamma_{|\bf k|}$ does not enclose $\omega_{0,2}(|\bk|)$ nor 0 (and of course any other point of ${\cal S}_{\cal T})$. 
As explained in Section \ref{orientation-LF}, we can apply the inequality \eqref{estiproj2}. 
As  $ \rho_{|\bk|} \lesssim |\bk|$, according to \eqref{choicerhokbis} and \eqref{distancesbis}, we thus obtain 
\begin{equation}\label{estipojero}
	\|\Pi_{0,r}(|\bk|)\| \lesssim |\bk| \, \sup_{\omega \in \mathcal{C}_{ 0, r, |\bk| }}  \big(\|  \mathcal{V}_{|\bk|}(\omega) \|  \, \| \mathcal{S}_{|\bk|}(\omega)  \|\big).
\end{equation} 
It remains to estimate the operators $\mathcal{V}_{|\bk|}(\omega)$ and $\mathcal{S}_{|\bk|}(\omega)$ given by the formulas \eqref{defS} and \eqref{defV}.
 Without any loss of generality, we can restrict ourselves  to  $r=1$ for the rest of the proof.  \\[4pt]
\noindent {\bf Step 1: Estimate of $\big(\mathcal{D}(\omega)-|\bk|^2\big)^{-1}$.}\\[4pt] 
This term is involved  in the expression \eqref{defS} of  $ \mathcal{S}_{|\bk|}(\omega)$.  Let us recall that (see \eqref{eq.expressioncalDinv})
$$\big(\mathcal{D}(\omega)-|\bk|^2\big)^{-1} = Q_e(\omega)\, Q_m(\omega)D_{|\bk|}(\omega)^{-1}.$$
where $D_{|\bk|}(\omega)$ is the polynomial defined by \eqref{eq.disppolynom}.
 Using the proposition \ref{prop.dispersioncurvesLF}, we know that $D_{|\bk|}(\omega)$ can be factorized as the following for $|\bk|$ sufficiently small and positive:
\begin{equation}\label{eq.polynomefact}
D_{|\bk|}(\omega)= \varepsilon_0 \, \mu_0 \, \big(\omega- \omega_{0,1}(|\bk|)\big) \, \big(\omega- \omega_{0,2}(|\bk|)\big) \, \prod_{z\in \mathcal{Z}_s} \big(\omega-\omega_{z}(|\bk|)\big)\, \prod_{z\in \mathcal{Z}_-} \prod_{n=1}^{\mathfrak{m}_z}\big(\omega-\omega_{z,n}(|\bk|)\big).
\end{equation}
Since $\omega_{z}(|\bk|), z \in {\cal Z}_s$ or $\omega_{z,n}(|\bk|), z \in {\cal Z}_-$ converge to $z \neq 0$, cf. \eqref{eq.zero}, the last two products in \eqref{eq.polynomefact} remain bounded from below in modulus when $\omega$ describes $\mathcal{C}_{ p, r, |\bk| } $  and $|\bk|$ is small enough. \\ [12pt]
On the other hand, it is easy to see that for $|\bk|$ small enough ($ \mathcal{C}_{ 0, 1, |\bk|}$ has been chosen for that purpose), 
\begin{equation*} 
\forall \;  \omega \in  \mathcal{C}_{ 0, 1, |\bk| }, \quad |\omega- \omega_{0,1}(|\bk|)|\geq C \, |\bk|, \quad  |\omega- \omega_{0,2}(|\bk|)|\geq C \, |\bk|,
\end{equation*}
Thus from \eqref{eq.polynomefact}, we infer that
\begin{equation}\label{estiD}
\forall \;  \omega \in  \mathcal{C}_{ 0, 1, |\bk| }, \quad |D_{|\bk|}(\omega)|^{-1} \lesssim  |\bk|^{-2}.
\end{equation}
Thus, as the polynomial $Q_e(\omega) Q_m(\omega)$ remains obviously bounded for bounded $\omega$, we deduce that  there exists $k_->0$ such that
\begin{equation}\label{eq.Dinvzero}
\forall \;  \omega \in  \mathcal{C}_{ 0, 1, |\bk| } , \quad |\mathcal{D}(\omega)-|\bk|^2|^{-1} \lesssim  |\bk|^{-2},  \quad \mbox{ for } 0<|\bk|\leq k_-.
\end{equation}
\noindent {\bf Step 2: Estimates of $\mathcal{S}_{|\bk|}(\omega)$ and $\mathcal{V}_{|\bk|}(\omega)$}\\[4pt]
\noindent First, it follows from  their definitions \eqref{operatorsApAm} that $\bbA_{e,j}(\omega)$, $\dot \bbA_{e,j}(\omega)$, $\bbA_{m,\ell}(\omega)$ and $ \dot  \bbA_{m,\ell}(\omega)$ are uniformly bounded in operator norm when $\omega$ does not approach $0$ (indeed $0$ is not a zero of the polynomials $q_{e,j}$ and $q_{m,\ell}$).  Thus $\|\bbA_{e}(\omega\| \lesssim 1$ and $\|\bbA_{m}(\omega\| \lesssim 1$ according to  \eqref{operatorsAeAh}.\\ [12pt]
Moreover,  $\mu(\omega)=\mu(0)+o(1)$ as $\omega \to 0$ and $|\omega|\lesssim |\bk|$ on $\mathcal{C}_{ 0, 1, |\bk| }$, thus,  for $|\bk|$ small enough, 
\begin{equation}\label{op.partSzero}
\forall \;  \omega \in  \mathcal{C}_{ 0, 1, |\bk| }, \quad  \|- \omega \mu(\omega ) \bbA_e(\omega) +|\bk| \, {\bf e_3} \times \bbA_m(\omega)\| \lesssim |\omega| + |\bk| \lesssim |\bk|.
\end{equation}
Combining  \eqref{eq.Dinvzero} and \eqref{op.partSzero} in  \eqref{defS}  yields that there exists  $k_->0$:
\begin{equation}\label{op.Szero}
\forall \;  \omega \in  \mathcal{C}_{ 0, 1, |\bk| }, \quad \|\mathcal{S}_{|\bk|}(\omega)\|\lesssim |\bk|^{-1} , \quad  \mbox{ for }0 <|\bk| \leq k_-.
\end{equation}
Finally, as, along  $\mathcal{C}_{ 0, 1, |\bk| }$, 
$|\omega|\geq \big| \rho_{|\bk|,1}- |\omega_{0,1}(|\bk|)|\big| \sim   \frac{1}{2} \;  |A_0|^{\frac{1}{2}} \; |\bk| $  as $|\bk|\to 0$, we deduce that 
 $$
 \forall \;  \omega \in  \mathcal{C}_{ 0, 1, |\bk| }, \quad  \frac{|\bk|}{|\omega| \, |\mu(\omega)|} \lesssim \frac{|\bk|}{|\omega|} \lesssim 1, 
\quad  \mbox{ for }0 <|\bk| \leq k_-. 
 $$
  Since $|q_{e,j} (\omega)|^{-1}$ and $|q_{m,\ell} (\omega)|^{-1}$ remain bounded along $\mathcal{C}_{ 0, 1, |\bk| }$, by the
  formula \eqref{defV} for $\mathcal{V}_{|\bk|}(\omega)$, 
  \begin{equation}\label{op.Vzero}
\|\mathcal{V}_{|\bk|}(\omega)\|\lesssim 1 , \quad \forall  \; \omega \in  \mathcal{C}_{ 0, 1, |\bk| } \ \mbox{ and } \ 0 <|\bk| \leq k_-.
\end{equation}
\noindent {\bf Conclusion:} 
  Combining \eqref{op.Szero} and  \eqref{op.Vzero} in  \eqref{estipojero} finally  implies  that there exists $k_->0$ such that $\Pi_{0,1} (|\bk|)$ is uniformly bounded for $0<|\bk|\leq k_-$.
\end{proof}
\noindent The next lemma is about the asymptotic expansion (in powers of $|\bk|$) of  $\omega_{0,r}(|\bk|)$ when $|\bk|\to 0$. It is important to push the expansion up to the first appearance of a negative imaginary part, since it will govern the decay  of $\bbU_{z,0}(\bk,t)$ for small values of $|\bk|$. 

\begin{Lem}  \label{LemEigenzo}
	For $r\in \{1,2\}$, the eigenvalue $\omega_{0,r}(|\bk|)$ satisfies the following asymptotic expansions, with $c_0=(\varepsilon(0)\mu(0))^{-1/2}>0$,
	\begin{equation}\label{eq.asymptz0}
	\omega_{z,r}(|\bk|)= (-1)^r \, c_0 \,   |\bk|  - \mbox{$\frac{1}{2}$}(\varepsilon \mu)'(0) \, c_0^4  \,  |\bk|^{2}    + o(|\bk|^{2}),\ \mbox{ as } |\bk|\to 0,
	\end{equation}  
with moreover 
\begin{equation}\label{eq.const0}
\operatorname{Im} \, \big( - (\varepsilon \mu)'(0) \big) <0.
 \end{equation}
\end{Lem}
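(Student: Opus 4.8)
The statement is a direct analogue of Lemma~\ref{LemEigeninfty} (large $|\bk|$) in the low-frequency regime, so I would follow exactly the same scheme. The point $z=0$ is a double zero of $\mathcal{D}$ (multiplicity $\mathfrak{m}_0=2$, cf.~\eqref{eq.equivalentzero}), with $\mathcal{D}(\omega)=(\omega-0)^2\,g(\omega)$ where $g(\omega)=\varepsilon(\omega)\mu(\omega)$ is analytic near $0$ and $g(0)=\varepsilon(0)\mu(0)=c_0^{-2}\neq 0$. Thus the two branches $\omega_{0,r}(|\bk|)$, $r=1,2$, are constructed in Proposition~\ref{prop.dispersioncurvesLF} via Lemma~\ref{Lem-implicte-function} applied with $\mathcal{G}=\mathcal{D}$, $\mathfrak{m}=2$, $A=A_0=c_0^{-2}$, $\zeta=|\bk|$, and they satisfy $\omega_{0,r}(|\bk|)=a_r^{-1}\,|\bk|\,(1+o(1))$ with $a_r$ the two square roots of $g(0)=c_0^{-2}$, i.e.\ $a_r^{-1}=(-1)^r c_0$. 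The first term in \eqref{eq.asymptz0} is then immediate; the whole job is to compute the $|\bk|^2$-coefficient and to check that its imaginary part is negative.

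\textbf{Key steps.} First I would record, using the analyticity of $|\bk|\mapsto\omega_{0,r}(|\bk|)$ (Proposition~\ref{prop.dispersioncurvesLF}) and the leading term above, an ansatz $\omega_{0,r}(|\bk|)=(-1)^r c_0\,|\bk|+B_r\,|\bk|^2+O(|\bk|^3)$ for an unknown coefficient $B_r$. Second, I would substitute this into the dispersion relation $\omega^2\,\varepsilon(\omega)\mu(\omega)=|\bk|^2$, i.e.\ $\omega^2 g(\omega)=|\bk|^2$, expand $g$ by Taylor at $0$ as $g(\omega)=g(0)+g'(0)\omega+O(\omega^2)=c_0^{-2}+(\varepsilon\mu)'(0)\,\omega+O(\omega^2)$, and identify coefficients of $|\bk|^2$ and $|\bk|^3$. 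Matching $|\bk|^2$ just recovers $c_0^2 g(0)=1$; matching $|\bk|^3$ gives a linear equation for $B_r$ of the form $2\,(-1)^r c_0\,g(0)\,B_r+c_0^3 g'(0)=0$, whence $B_r=-\tfrac12\,g'(0)\,c_0^4=-\tfrac12\,(\varepsilon\mu)'(0)\,c_0^4$ independently of $r$. This is exactly \eqref{eq.asymptz0}. The remaining details (carrying the $O(\omega^3)$ remainders through, justifying that $\omega_{0,r}=O(|\bk|)$) are routine and I would leave them to the reader, as done in Lemma~\ref{LemEigeninfty}. Alternatively, and perhaps more cleanly, one can mimic the proof of Lemma~\ref{LemEigeninfty} verbatim: set $g_0(\omega)=\varepsilon(\omega)\mu(\omega)$, apply \eqref{eq.asympexpansion} of Lemma~\ref{Lem-implicte-function} with $z=0$, $\mathfrak{m}=2$, $a_r=(-1)^r c_0^{-1}$, which directly yields $\omega_{0,r}(|\bk|)=(-1)^r c_0\,|\bk|-\dfrac{a_r^{-2}g_0'(0)}{2\,g_0(0)}\,|\bk|^2+o(|\bk|^2)$, and then simplify $a_r^{-2}g_0'(0)/g_0(0)=c_0^2\,(\varepsilon\mu)'(0)\cdot c_0^2=(\varepsilon\mu)'(0)\,c_0^4$ using $a_r^{-2}=c_0^2=g_0(0)^{-1}$.

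\textbf{The sign condition.} For \eqref{eq.const0} I would compute $(\varepsilon\mu)'(0)=\varepsilon'(0)\mu(0)+\varepsilon(0)\mu'(0)$. Since $\varepsilon(0)>0$ and $\mu(0)>0$ are real (see \eqref{eq.equivalentzero}), it suffices to show $\operatorname{Im}\bigl(\varepsilon'(0)\bigr)\le 0$, $\operatorname{Im}\bigl(\mu'(0)\bigr)\le 0$, with at least one strict, under the weak dissipation assumption \eqref{WD}. Differentiating \eqref{eq.permmitivity-permeabiity}--\eqref{eq.polynom} and evaluating at $\omega=0$: from $q_{e,j}(\omega)=\omega^2+\rmi\alpha_{e,j}\omega-\omega_{e,j}^2$ one gets $q_{e,j}(0)=-\omega_{e,j}^2$ and $q_{e,j}'(0)=\rmi\alpha_{e,j}$, hence $\bigl(\Omega_{e,j}^2/q_{e,j}\bigr)'(0)=-\Omega_{e,j}^2 q_{e,j}'(0)/q_{e,j}(0)^2=-\rmi\,\Omega_{e,j}^2\,\alpha_{e,j}/\omega_{e,j}^4$, so that $\varepsilon'(0)=\varepsilon_0\,\rmi\sum_j \Omega_{e,j}^2\alpha_{e,j}/\omega_{e,j}^4$, i.e.\ $\operatorname{Im}\varepsilon'(0)=\varepsilon_0\sum_j \Omega_{e,j}^2\alpha_{e,j}/\omega_{e,j}^4\ge 0$, and similarly $\operatorname{Im}\mu'(0)=\mu_0\sum_\ell \Omega_{m,\ell}^2\alpha_{m,\ell}/\omega_{m,\ell}^4\ge 0$. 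Therefore $\operatorname{Im}\bigl((\varepsilon\mu)'(0)\bigr)=\mu(0)\operatorname{Im}\varepsilon'(0)+\varepsilon(0)\operatorname{Im}\mu'(0)\ge 0$, and it is strictly positive because \eqref{WD} forces at least one $\alpha_{e,j}$ or $\alpha_{m,\ell}$ to be positive; this gives $\operatorname{Im}\bigl(-(\varepsilon\mu)'(0)\bigr)<0$, which is \eqref{eq.const0}. (One can also read this off from the Herglotz property: $\omega\mapsto\omega\varepsilon(\omega)$ being Herglotz and real-analytic near $0$ with $\varepsilon(0)>0$ forces the next term to have nonnegative imaginary part; but the direct computation above is the cleanest and makes the role of \eqref{WD} transparent.) I expect the only mild subtlety to be bookkeeping the signs and the factors of $c_0$; there is no real obstacle, the lemma being an exact low-frequency mirror of Lemma~\ref{LemEigeninfty}.
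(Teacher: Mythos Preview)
Your proposal is correct and follows essentially the same route as the paper: apply Lemma~\ref{Lem-implicte-function} with $z=0$, $\mathfrak{m}=2$, $g=\varepsilon\mu$, $\zeta=|\bk|$, read off the second-order coefficient from \eqref{eq.asympexpansion}, and then compute $\operatorname{Im}\varepsilon'(0)$, $\operatorname{Im}\mu'(0)$ explicitly to invoke the weak dissipation assumption~\eqref{WD}. One small slip: in the sign discussion you wrote ``it suffices to show $\operatorname{Im}\varepsilon'(0)\le 0$, $\operatorname{Im}\mu'(0)\le 0$'' but you clearly meant $\ge 0$ (and that is what you then correctly prove and use).
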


\begin{proof}
Using \eqref{eq.equivalentzero}, one has $\mathcal{D}(\omega)=\omega^2 g(\omega)$ where the rational function $g=\varepsilon(\cdot) \,  \mu(\cdot)$ is analytic in the vicinity of $0$ and satisfies  $g(0)=\varepsilon(0) \mu(0)>0$. Thus applying Lemma \ref{Lem-implicte-function} with  $z=0$, $\mathfrak{m}_0=2$, $\zeta=|\bk|$, $a_1=-g(0)^{1/2}=-c_0^{-1}$ and $a_2=c_0^{-1}$ , the asymptotic formula  \eqref{eq.asympexpansion} yields:
 $$
 \omega_{z,r}(|\bk|)=a_{r}^{-1}   |\bk|^{1} -\frac{g'(0)}{2\,g(0)^2} |\bk|^{2}    + o(|\bk|^{2}), \quad \mbox{ as } |\bk| \to 0.
 $$
This is precisely the asymptotic \eqref{eq.asymptz0}. It remains to prove \eqref{eq.const0}.\\ [12pt]
As  $\mu(0)>0$ and $\varepsilon(0)>0$,  it is equivalent to show that  $ \operatorname{Im}g'(0)=\mu(0) \operatorname{Im}\varepsilon'(0)+ \varepsilon(0)\operatorname{Im}\mu'(0)$ is positive. The Taylor expansion of the rational functions  $\varepsilon$ and $\mu$ (see  \eqref{eq.permmitivity-permeabiity} and  \eqref{eq.polynom}) gives
$$
\operatorname{Im}\varepsilon'(0)=\varepsilon_0 \, \sum_{j=1}^{N_e} \frac{ \alpha_{e,j}\, \Omega_{e,j}^2 }{\omega_{e,j}^4} \geq 0 \  \mbox{ and }  \ \operatorname{Im}\mu'(0)= \mu_0 \, \sum_{\ell=1}^{N_m} \frac{  \alpha_{m,\ell}\, \Omega_{m,\ell}^2}{\omega_{m,\ell}^4} \geq 0.
$$
Furthermore,  by the weak dissipation condition \eqref{WD} at least one coefficient $\alpha_{e,j}$ or $\alpha_{e,m}$ is positive, thus one has  $\operatorname{Im}\varepsilon'(0)>0$ or $\operatorname{Im}\mu'(0)>0$ and we can conclude. 
\end{proof}

\noindent Combining the Lemma \ref{LemProzerodouble} and Lemma \ref{LemEigenzo} (formula  \eqref{eq.asymptz0} and  \eqref{eq.const0}) and proceeding as in the proof  of  Lemma \ref{LemEstiinfty} gives the following estimate for $\bbU_{z,0}(\bk,t)$.
 
 \begin{Lem}\label{LemEstizero}
There exists $k_->0$ and $C>0$  such that   the function  $\bbU_{z,0}(\bk,t)$   defined by  \eqref{eq.decompositionthreeterm} and \eqref{decompUbis},  satisfies the following estimate
	\begin{equation}\label{eq.estimateU0}
		| \bbU_{z,0}(\bk,t)| \lesssim    \rme^{- C \, |\bk|^2\,  t } \, |\bbU_0(\bk)|,   \quad \forall t \geq 0 \ \mbox{ and }    \ 0<|\bk|\leq k_-.
	\end{equation}
\end{Lem}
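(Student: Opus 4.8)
The plan is to combine the two ingredients that are already at our disposal: the uniform bound on the spectral projectors $\Pi_{0,r}(|\bk|)$ for small $|\bk|$ established in Lemma \ref{LemProzerodouble}, and the asymptotic expansion of the eigenvalues $\omega_{0,r}(|\bk|)$ with the sign of the imaginary part of the leading correction term from Lemma \ref{LemEigenzo}. Indeed, from the defining formula \eqref{decompUbis}(i),
\begin{equation*}
\bbU_{z,0}(\bk,t)= \sum_{r=1}^2 \rme^{-\rmi\,\omega_{0,r}(|\bk|)\,t}\,\mathcal{R}_{\bk}^*\,\Pi_{0,r}(|\bk|)\,\mathcal{R}_{\bk}\,\bbU_0(\bk),
\end{equation*}
and since $\mathcal{R}_{\bk}$ and $\mathcal{R}_{\bk}^*$ are unitary on $\bC^N$ (they preserve the norm $|\cdot|$), the triangle inequality gives immediately
\begin{equation*}
|\bbU_{z,0}(\bk,t)| \le \sum_{r=1}^2 \rme^{\operatorname{Im}\,\omega_{0,r}(|\bk|)\,t}\,\|\Pi_{0,r}(|\bk|)\|\,|\bbU_0(\bk)|.
\end{equation*}

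The next step is to control each exponential. By Lemma \ref{LemEigenzo}, for $r\in\{1,2\}$ one has $\omega_{0,r}(|\bk|)=(-1)^r c_0|\bk| - \tfrac12(\varepsilon\mu)'(0)c_0^4|\bk|^2+o(|\bk|^2)$ as $|\bk|\to 0$, so that $\operatorname{Im}\,\omega_{0,r}(|\bk|) = -\tfrac12\operatorname{Im}\big((\varepsilon\mu)'(0)\big)c_0^4|\bk|^2 + o(|\bk|^2)$; note that this leading coefficient is the same for both $r$. By \eqref{eq.const0} the quantity $-\operatorname{Im}\big((\varepsilon\mu)'(0)\big)$ is strictly positive, hence there exist a constant $C_0>0$ (for instance any $C_0<\tfrac12\,|\operatorname{Im}((\varepsilon\mu)'(0))|\,c_0^4$) and $k_->0$ such that $\operatorname{Im}\,\omega_{0,r}(|\bk|)\le -C_0|\bk|^2$ for all $0<|\bk|\le k_-$ and $r=1,2$. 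Combined with the uniform bound $\|\Pi_{0,r}(|\bk|)\|\lesssim 1$ on the same range of $|\bk|$ from Lemma \ref{LemProzerodouble} (shrinking $k_-$ if necessary so that both lemmas apply simultaneously), we get
\begin{equation*}
|\bbU_{z,0}(\bk,t)| \lesssim \rme^{-C_0|\bk|^2 t}\,|\bbU_0(\bk)|, \qquad \forall\, t\ge 0,\ \ 0<|\bk|\le k_-,
\end{equation*}
which is exactly \eqref{eq.estimateU0} with $C=C_0$.

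There is essentially no obstacle here: this lemma is a bookkeeping assembly of the two substantial results that precede it, in direct analogy with the proof of Lemma \ref{LemEstiinfty} in the high-frequency setting (where Lemma \ref{LemEigeninfty} and Lemma \ref{LemProinfty} play the roles of Lemma \ref{LemEigenzo} and Lemma \ref{LemProzerodouble}). The only mild point of care is to make sure that a \emph{single} choice of $k_->0$ works for both the diagonalizability/projector-boundedness statement and for the sign estimate on $\operatorname{Im}\,\omega_{0,r}$, which is handled by the standing convention that $k_-$ may be decreased as needed; and to observe that the $o(|\bk|^2)$ remainder can be absorbed by choosing $C_0$ strictly below the true leading coefficient. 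I would simply write the two displays above, invoke Lemmas \ref{LemProzerodouble} and \ref{LemEigenzo}, and remark that the argument is identical in spirit to that of Lemma \ref{LemEstiinfty}, leaving the elementary details of the remainder absorption to the reader as is done elsewhere in the paper.
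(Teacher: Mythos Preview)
Your proof is correct and follows exactly the approach indicated in the paper, which simply refers back to the proof of Lemma \ref{LemEstiinfty} after invoking Lemmas \ref{LemProzerodouble} and \ref{LemEigenzo}. One small sign slip: \eqref{eq.const0} reads $\operatorname{Im}\big(-(\varepsilon\mu)'(0)\big)<0$, i.e.\ $-\operatorname{Im}\big((\varepsilon\mu)'(0)\big)$ is strictly \emph{negative} (equivalently $\operatorname{Im}\big((\varepsilon\mu)'(0)\big)>0$), not positive as you wrote; this is precisely what makes the leading coefficient $-\tfrac12\operatorname{Im}\big((\varepsilon\mu)'(0)\big)c_0^4$ of $\operatorname{Im}\,\omega_{0,r}(|\bk|)$ negative and your conclusion $\operatorname{Im}\,\omega_{0,r}(|\bk|)\le -C_0|\bk|^2$ valid.
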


\subsubsection{Estimates of $\bbU_{z,s}(\bk,t)$ for $0<|\bk|\leq k_-$} \label{estiUZS}
\noindent This time we estimate $\bbU_{z,s}(\bk,t)$ in  \eqref{eq.decompositionthreeterm}  which involves in particular the projectors $\Pi_{z} (|\bk|)$.
\begin{Lem} \label{LemProszero}
There exists $k_->0$ such that  the spectral projectors  $\Pi_{z} (|\bk|)$, $z \in \mathcal{Z}_s$ are uniformly bounded for $0<|\bk|\leq k_-$.  
\end{Lem}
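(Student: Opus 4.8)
The plan is to imitate, in a simplified setting, the proof of Lemma \ref{LemPros}: we use the Riesz–Dunford representation \eqref{eq.RieszDunford2} for $\Pi_z(|\bk|)$ along a \emph{fixed} contour, and then conclude by a compactness argument rather than by the delicate $|\bk|$-dependent estimates needed for large $|\bk|$. The key point is that for $z \in \mathcal{Z}_s$, the eigenvalue $\omega_z(|\bk|)$ is simple (by Proposition \ref{prop.dispersioncurvesLF}) and, by the asymptotics \eqref{eq.zero} applied with $\mathfrak{m}_z = 1$, one has $\omega_z(|\bk|) = z + O(|\bk|^2) \to z$ as $|\bk| \to 0$, while $z$ is a real simple zero of $\mathcal{D}$ distinct from $0$ and from every other element of $\sigma(\bbA_{|\bk|,\perp}) \cup \mathcal{S}_{\cal T}$.

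First I would fix, for each $z \in \mathcal{Z}_s$, a radius $r_z > 0$ so small that the closed disk $\overline{B(z, r_z)}$ contains no point of $\mathcal{S}_{\cal T} = \mathcal{P} \cup \mathcal{Z}_m \cup \{0\}$ other than possibly $z$ itself, and no other zero of $\mathcal{D}$; this is possible since $\mathcal{S}_{\cal T} \cup (\mathcal{Z}\cup\{0\})$ is finite. I would then let $\Gamma_z$ be the positively oriented circle $\{\omega : |\omega - z| = r_z\}$, which is independent of $|\bk|$. By the asymptotics \eqref{eq.zero} (with $\mathfrak{m}_z=1$) and Proposition \ref{prop.dispersioncurvesLF}, there exists $k_- > 0$ such that for all $0 < |\bk| \leq k_-$: (i) $\omega_z(|\bk|)$ lies inside $\Gamma_z$, and (ii) $\Gamma_z$ encloses and meets no other eigenvalue of $\bbA_{|\bk|,\perp}$ (the other eigenvalues $\omega_{z',n}(|\bk|)$, $\omega_{0,r}(|\bk|)$, $\omega_{z',n}(|\bk|)$ converge to points $z' \neq z$ or to $0$, all outside $\overline{B(z,r_z)}$). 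Hence \eqref{eq.RieszDunford2} gives
\begin{equation*}
\Pi_z(|\bk|) = -\frac{1}{2\rmi\pi} \int_{\Gamma_z} R_{|\bk|}(\omega)\, \rmd\omega, \quad 0 < |\bk| \leq k_-.
\end{equation*}

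Next I would use Proposition \ref{Prop.res}: on $\Gamma_z$, since $\Gamma_z \cap \mathcal{S}_{\cal T} = \varnothing$ and $\Gamma_z \cap S(|\bk|) = \varnothing$, the resolvent is $R_{|\bk|}(\omega) = \mathcal{V}_{|\bk|}(\omega)\mathcal{S}_{|\bk|}(\omega) + \mathcal{T}(\omega)$, and moreover the full expression \eqref{eq.expressresolv} has only removable singularities at $z$, so that $\omega \mapsto R_{|\bk|}(\omega)$ is analytic and jointly continuous in $(|\bk|,\omega) \in [0, k_-] \times \Gamma_z$ (here I use that $\mathcal{D}(\omega) - |\bk|^2$ stays away from $0$ on $\Gamma_z$, uniformly for $|\bk| \leq k_-$ after possibly shrinking $k_-$, since $\mathcal{D}$ does not vanish on the compact set $\Gamma_z$ which avoids all zeros of $\mathcal{D}$). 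The set $[0,k_-] \times \Gamma_z$ being compact, $\|R_{|\bk|}(\omega)\|$ attains a finite maximum $M_z$ there, whence
\begin{equation*}
\|\Pi_z(|\bk|)\| \leq \frac{1}{2\pi}\, \mathrm{length}(\Gamma_z)\, M_z = r_z\, M_z, \quad 0 < |\bk| \leq k_-.
\end{equation*}
Taking the largest of the finitely many $M_z$ and the smallest of the $k_-$'s over $z \in \mathcal{Z}_s$ yields the uniform bound. The only subtle point — and the one I expect to be the main obstacle to state cleanly — is the \emph{joint} continuity/boundedness of $R_{|\bk|}(\omega)$ in $(|\bk|,\omega)$ near $|\bk| = 0$, i.e. justifying that the removable singularities of $\mathcal{T}(\omega)$ and the factor $(\mathcal{D}(\omega)-|\bk|^2)^{-1}$ cause no trouble on the fixed contour $\Gamma_z$; this requires observing that $\Gamma_z$ has been chosen to avoid $\mathcal{S}_{\cal T}$ and all zeros of $\mathcal{D}$, so $\mathcal{D}(\omega) - |\bk|^2 = \mathcal{D}(\omega)(1 - |\bk|^2/\mathcal{D}(\omega))$ stays bounded below in modulus for $|\bk|$ small. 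Everything else is a routine compactness argument, in contrast with the high-frequency Lemma \ref{LemPros} where the contour had to shrink with $|\bk|$.
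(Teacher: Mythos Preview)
Your approach is exactly the one the paper sketches in Section \ref{orientation-LF} and then leaves to the reader: a fixed $|\bk|$-independent contour $\Gamma_z$ around $z$, the Riesz--Dunford formula \eqref{eq.RieszDunford2}, and a compactness argument on $[0,k_-]\times\Gamma_z$ exploiting the joint continuity of $R_{|\bk|}(\omega)$. Your handling of the only delicate point---that $\mathcal{D}(\omega)-|\bk|^2$ stays bounded away from zero on $\Gamma_z$ uniformly in $|\bk|\in[0,k_-]$ because $\Gamma_z$ avoids all zeros of $\mathcal{D}$---is correct and completes the argument.
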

 \begin{proof} The proof of this lemma has already been sketched in Section \ref{orientation-LF} (paragraph about $\Pi_{z} (|\bk|)$). The details are left to the reader. 
 \end{proof}
\noindent We now give the asymptotic expansion of the  eigenvalues $\omega_{z}(|\bk|)$ for  small $|\bk|$.
\begin{Lem}  \label{LemEigenzs}
	Let $z\in \calZ_s$. The eigenvalue $\omega_{z}(|\bk|)$ satisfies the following asymptotic expansion 
	\begin{equation}\label{eq.asymptzs}
	\omega_{z}(|\bk|)=z+ A_{z}  |\bk|^{2}+o(|\bk|^{2}),\ \mbox{ as } |\bk|\to 0,
	\end{equation}  
	where  the complex number $A_z$ is defined by (two disjoint cases have to be distinguished):
	\begin{equation}\label{eq.constCzs}
\mbox{ if } z\in \calZ_e,  \quad  A_{z}=\frac{(\omega \varepsilon)'(z)^{-1} }{z \mu(z)}, \quad  \displaystyle \mbox{ if } z\in \calZ_m,\quad  A_{z}=\frac{\big(\omega \mu\big)'(z)^{-1} }{z \varepsilon(z)},  
 \end{equation}
and satisfies in all cases $\operatorname{Im} \, A_z < 0$.
 \end{Lem}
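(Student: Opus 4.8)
~\\ [6pt]
The plan is to apply Lemma \ref{Lem-implicte-function} (the implicit-function-type result of the appendix) exactly as in the proofs of Lemmas \ref{LemEigens} and \ref{LemEigenzo}, but now around a \emph{simple} real zero $z$ of $\mathcal{D}$ rather than a pole or the double zero at $0$. First I would note that, by the analysis of Section \ref{sec_dispersion-zeros}, $z \in \mathcal{Z}_s$ is a simple zero of $\mathcal{D}$, so writing $\mathcal{D}(\omega) = (\omega - z)\, g(\omega)$ with $g$ analytic near $z$ and $g(z) = A_z^{-1} \neq 0$ is legitimate, where $A_z^{-1}$ is what one computes below. Since solving $\mathcal{D}(\omega) = |\bk|^2$ near $z$ is of the form covered by Lemma \ref{Lem-implicte-function} with $\mathfrak{m} = 1$, $\zeta = |\bk|^2$, and $A = g(z)$, the lemma produces the unique $C^\infty$ branch $\omega_z(|\bk|)$ converging to $z$, together with the leading-order expansion $\omega_z(|\bk|) = z + g(z)^{-1}\, |\bk|^2 (1 + o(1))$, which is precisely \eqref{eq.asymptzs} once we identify $A_z = g(z)^{-1}$.

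Next I would compute $g(z)$ explicitly. Assume first $z \in \mathcal{Z}_e$, i.e. $\varepsilon(z) = 0$ (the case $z \in \mathcal{Z}_m$ is symmetric). Since $\mathcal{D}(\omega) = \omega\,\varepsilon(\omega)\cdot \omega\,\mu(\omega)$ and $z$ is a simple zero of $\omega \mapsto \omega\,\varepsilon(\omega)$ (it is a simple zero of $\varepsilon$ by Section \ref{sec_dispersion-zeros} and $z \neq 0$), while $z$ is not a zero of $\omega \mapsto \omega\,\mu(\omega)$ by $(\mathrm{H}_2)$ (since $\mathcal{Z}_e \cap \mathcal{P}_m = \varnothing$ means $z \notin \mathcal{P}_m$, and $z \notin \mathcal{Z}_m$ because $\mathcal{Z}_s$ consists of real zeros of only one of the two factors), we get, by differentiating $\omega\,\varepsilon(\omega)$ at $z$,
$$
\mathcal{D}(\omega) = (\omega - z)\,\big( (\omega\varepsilon)'(z) + o(1) \big)\, \omega\,\mu(\omega), \quad \omega \to z,
$$
hence $g(z) = (\omega\varepsilon)'(z)\, z\, \mu(z)$ and therefore $A_z = g(z)^{-1} = \big( (\omega\varepsilon)'(z)\, z\, \mu(z)\big)^{-1}$, which is \eqref{eq.constCzs}. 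Note $(\omega\varepsilon)'(z) \neq 0$ because $z$ is a simple zero, and $\mu(z) \neq 0$ by $(\mathrm{H}_2)$, so $A_z$ is well-defined and nonzero.

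Finally, for the sign of $\operatorname{Im} A_z$: in the scenario where $\mathcal{Z}_s = \mathcal{Z}_e \neq \varnothing$ (Remark \ref{remZs}), all $\alpha_{e,j} = 0$, so $\varepsilon$ and $\varepsilon'$ are real on $\R \setminus \mathcal{P}_e$ — in particular $(\omega\varepsilon)'(z) \in \R$ and $\varepsilon(z) = 0 \in \R$. Thus $\operatorname{Im} A_z$ has the same sign as $\operatorname{Im}\big( (z\mu(z))^{-1}\big) = -\,|z\mu(z)|^{-2}\operatorname{Im}(z\mu(z))$, and since the weak dissipation condition \eqref{WD} forces at least one $\alpha_{m,\ell} > 0$ (because all $\alpha_{e,j}$ vanish), formula \eqref{eq.positvity} gives $\operatorname{Im}(z\mu(z)) > 0$, hence $\operatorname{Im} A_z < 0$. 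The case $z \in \mathcal{Z}_m$ is handled by exchanging the roles of $\varepsilon$ and $\mu$. The main obstacle, such as it is, is bookkeeping: one must carefully track which of the four mutually exclusive situations of Section \ref{sec_dispersion-zeros} one is in, to be sure that the factor not vanishing at $z$ ($\mu$ here) is the one that carries all the dissipation and hence has strictly positive imaginary part at $z$; everything else is a routine first-order Taylor expansion.
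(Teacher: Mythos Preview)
Your approach is essentially identical to the paper's: apply Lemma \ref{Lem-implicte-function} with $\mathfrak{m}=1$ to $\mathcal{D}(\omega)=|\bk|^2$ near the simple zero $z$, identify $g(z)=(\omega\varepsilon)'(z)\cdot z\,\mu(z)$, and then analyze $\operatorname{Im} A_z$ using Remark~\ref{remZs} and \eqref{eq.positvity}.

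There is, however, a small gap in the sign argument. You establish only that $(\omega\varepsilon)'(z)$ is \emph{real} (and nonzero, since $z$ is simple), and then assert that $\operatorname{Im} A_z$ has the same sign as $\operatorname{Im}\big((z\mu(z))^{-1}\big)$. That inference requires $(\omega\varepsilon)'(z)>0$: if this real number were negative, the sign of $\operatorname{Im} A_z$ would be reversed and the conclusion would fail. The paper closes this by observing that when all $\alpha_{e,j}=0$, the function $\omega\mapsto\omega\,\varepsilon(\omega)$ is real-valued on $\R\setminus\mathcal{P}_e$ and, from the explicit Lorentz formula \eqref{eq.permmitivity-permeabiity}, one computes directly that $(\omega\,\varepsilon)'(\omega)>0$ everywhere on $\R\setminus\mathcal{P}_e$. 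Inserting that one line makes your argument complete.
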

\begin{proof}
The elements of $\mathcal{Z}_s$ are simple zeroes ($\operatorname{m}_z=1$) of the rational function $\mathcal{D}$, it means that either $z\in \calZ_e$ or $z\in \calZ_m$. We will prove the Lemma by assuming that  $z\in \calZ_e$, the proof for   $z\in \calZ_m$ can be done with obvious ``symmetric arguments''. \\[6pt]
\noindent Using Proposition \ref{prop.dispersioncurvesLF} and more precisely the asymptotic expansion \eqref{eq.zero} (with here $\mathfrak{m}_z=1$) yields \eqref{eq.asymptzs} with $A_{z}=g(z)^{-1}$ where $g$ is defined by \eqref{eq.zerovoisinage}, that is to say
$$g(\omega)=\frac{\mathcal{D}(\omega)}{\omega-z}=\frac{\omega \varepsilon(\omega)}{\omega-z}  \, \omega \mu(\omega) \  \mbox{ for } \omega \in \mathbb{C} \setminus (\calP\cup \{ z\} ).$$
Since $z\in \calZ_e$ is a  real  simple zero of $\omega \varepsilon(\cdot)$ thanks to assumption $(\mathrm{H}_2)$, i.e. $\calZ_e \cap \calP_m=\varnothing$, it is not a zero nor a pole  $\omega \mu(\cdot)$.
Thus, $g$ can indeed be extended analytically for $\omega = z$ with 
$$
g(z)=(\omega \varepsilon)'(z) \,  z \mu(z).
$$
Thus, one has $ \ds A_{z}=\frac{1}{g(z)}=\frac{1}{z \mu(z) (\omega \varepsilon)'(z)}$ that is \eqref{eq.constCzs}.\\ [12pt] 
We  show now that $\operatorname{Im} \, A_z < 0$.  As $z\in  \mathcal{Z}_e$,  one observes on one hand  (see remark \ref{remZs}) that  $\mathcal{Z}_s= \mathcal{Z}_e$  and that all the $\alpha_{e,j}$ vanish. Thus, 
$\omega \varepsilon(\omega)$ is real-valued on the real axis  (outside the poles  $\calP_e$ of $\varepsilon(\omega)$) and from expression \eqref{eq.permmitivity-permeabiity}, one easily deduces  that  $(\omega \varepsilon)'(\omega)>0$ on  $\R\setminus \calP_e$. In particular, as $z \in \R\setminus \calP_e$, one has
$$
(\omega \varepsilon)'(z) > 0.
$$ 
On the other hand, as  all the $\alpha_{e,j}$ vanish, one knows by the weak dissipation condition \eqref{WD} that at least one coefficient $\alpha_{m,\ell}>0.$ Therefore  by the formula \eqref{eq.positvity}, $\operatorname{Im}(z \mu(z))>0$. Thus, one deduces that  $$\operatorname{Im}\big ((z \mu(z))^{-1}\big)=-\operatorname{Im}(z \mu(z))/ |z \mu(z)|^2<0.$$ Together with $(\omega \varepsilon)'(z) > 0$,  this implies that $\operatorname{Im}(  A_{z})$ is  negative.
\end{proof}
\noindent Combining the Lemma \ref{LemProszero} and Lemma \ref{LemEigenzs} (formula  \eqref{eq.asymptzs} and  \eqref{eq.constCzs}) and proceeding as in the proof  of Lemma \ref{LemEstiinfty} gives the following estimate for $\bbU_{z,s}(\bk,t)$.
 \begin{Lem}\label{LemEstizeros}
If  $\mathcal{Z}_s\neq \varnothing$, then there exists $k_->0$ and $C>0$  such that   the function  $\bbU_{z,s}(\bk,t)$   defined by defined by  \eqref{eq.decompositionthreeterm} and \eqref{decompUbis},  satisfies the following estimate
	\begin{equation}\label{eq.estimateUZS}
		| \bbU_{z,s}(\bk,t) | \lesssim    \rme^{- C \, |\bk|^2\,  t }  \, |\bbU_0(\bk)|,   \quad \forall t \geq 0 \ \mbox{ and }  \  0<|\bk|\leq k_-.
	\end{equation}
\end{Lem}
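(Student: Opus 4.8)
The plan is to follow exactly the template already used three times in this section (for $\bbU_\infty$, $\bbU_s$ and $\bbU_d$ in the high-frequency regime), adapting it to the small-$|\bk|$ setting. Concretely, the starting point is the decomposition \eqref{decompUbis}(ii), namely
\[
\bbU_{z,s}(\bk,t)= \sum_{z\in \mathcal{Z}_s}  \rme^{-\rmi \, \omega_{z}(|\bk|) t} \, \mathcal{R}_{\bk}^* \, \Pi_{z}(|\bk|) \,\mathcal{R}_{\bk}\,\bbU_0(\bk),
\]
valid for $0<|\bk|\le k_-$ by Corollary \ref{eq.crit-diag3} and the partition \eqref{eq.specdispLF} of $\sigma(\bbA_{|\bk|,\perp})$. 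Since $\mathcal{R}_{\bk}$ and $\mathcal{R}_{\bk}^*$ are unitary, one has
\[
|\bbU_{z,s}(\bk,t)| \le \sum_{z\in \mathcal{Z}_s} \rme^{\operatorname{Im}(\omega_{z}(|\bk|))\, t}\, \|\Pi_{z}(|\bk|)\|\; |\bbU_0(\bk)|,
\]
so the estimate \eqref{eq.estimateUZS} follows at once from two ingredients: a uniform bound $\|\Pi_{z}(|\bk|)\|\lesssim 1$ for $0<|\bk|\le k_-$, and a bound $\operatorname{Im}\omega_z(|\bk|) \le -C\,|\bk|^2$ for $|\bk|$ small and positive.

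First I would invoke Lemma \ref{LemProszero}, which already gives the uniform boundedness of the spectral projectors $\Pi_z(|\bk|)$ on $\mathcal{Z}_s$ for $0<|\bk|\le k_-$. Next I would invoke Lemma \ref{LemEigenzs}, which provides the asymptotic expansion $\omega_z(|\bk|)=z+A_z\,|\bk|^2+o(|\bk|^2)$ with $\operatorname{Im}A_z<0$; taking imaginary parts yields $\operatorname{Im}\omega_z(|\bk|)=\operatorname{Im}(A_z)\,|\bk|^2+o(|\bk|^2)$, so that for any $C\in(0,-\operatorname{Im}A_z)$ there is $k_->0$ (shrinking it if necessary, and taking a common value over the finitely many $z\in\mathcal{Z}_s$) with $\operatorname{Im}\omega_z(|\bk|)<-C\,|\bk|^2$ on $(0,k_-]$. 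Substituting these two facts into the displayed inequality and using that $|\mathcal{Z}_s|$ is finite (absorbed into the implicit constant of $\lesssim$) gives \eqref{eq.estimateUZS}. This is literally the argument written out for $\bbU_\infty$ in the proof of Lemma \ref{LemEstiinfty}, transposed to the low-frequency zeros; the statement's own phrasing ("proceeding as in the proof of Lemma \ref{LemEstiinfty}") confirms that no new idea is needed.

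There is no serious obstacle here: the two hard inputs, the uniform projector bound (Lemma \ref{LemProszero}, itself reduced by a compactness argument on $[0,k_-]\times\Gamma$ via the resolvent formula of Proposition \ref{Prop.res}) and the sign of $\operatorname{Im}A_z$ (Lemma \ref{LemEigenzs}, which uses the weak dissipation condition \eqref{WD} together with the Herglotz positivity \eqref{eq.positvity} and Remark \ref{remZs}), have already been established. The only mild point of care is bookkeeping: one must take $k_-$ small enough to simultaneously ensure diagonalizability (Corollary \ref{eq.crit-diag3}), validity of Lemma \ref{LemEigenzs}'s expansion, and $\operatorname{Im}\omega_z(|\bk|)<-C|\bk|^2$ for every $z\in\mathcal{Z}_s$, and to observe that the hypothesis $\mathcal{Z}_s\ne\varnothing$ is exactly what makes the sum nonempty (when $\mathcal{Z}_s=\varnothing$ the term $\bbU_{z,s}$ is simply absent, cf. Remark \ref{remZs}). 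All remaining details are routine and can be left to the reader, as the paper does elsewhere.
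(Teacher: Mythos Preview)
Your proposal is correct and follows exactly the same approach as the paper: combine the uniform projector bound of Lemma~\ref{LemProszero} with the asymptotic expansion and sign information $\operatorname{Im}A_z<0$ from Lemma~\ref{LemEigenzs}, and then argue as in the proof of Lemma~\ref{LemEstiinfty}. The paper itself states the proof in precisely these terms, without further detail.
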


\subsubsection{Estimates of $\bbU_{z,-}(\bk,t)$ for $0<|\bk|\leq k_-$} \label{estiUZminus}
As announced in Section \ref{orientation-LF}, since we simply want to obtain a ``rough" exponential decay estimate for  $\bbU_{z,-}(\bk,t)$. We  give a direct proof  of it using Riesz-Dunford functional calculus.
\begin{Lem}\label{EstimZminus}
There exists  $\delta>0$ and $k_->0$ such that $\bbU_{z,-}(\bk,t)$, defined by  \eqref{eq.decompositionthreeterm} and \eqref{decompUbis},   satisfies
	\begin{equation}\label{eq.estimateUm2}
		| \bbU_{z,-}(\bk,t)| \lesssim \rme^{-\delta\, t} \ | \bbU_0(\bk)|, \quad  \forall \, t\geq 0, \quad  \forall \;  0<|\bk| \leq k_-.
	\end{equation}
\end{Lem}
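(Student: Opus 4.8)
The plan is to mimic the proof of Lemma \ref{Estiminus} (the high-frequency analogue for $\bbU_-$), but now exploiting the fact that we work on the \emph{compact} range $0<|\bk|\le k_-$, which replaces several asymptotic estimates by simple continuity–compactness arguments. First I would isolate the eigenvalues $\omega_{z,n}(|\bk|)$ with $z\in\mathcal{Z}_-$, $1\le n\le\mathfrak{m}_z$. By Remark \ref{remZmoins}, $\mathcal{Z}_-\ne\varnothing$, and by Proposition \ref{prop.dispersioncurvesLF} (the asymptotics \eqref{eq.zero}) each branch satisfies $\omega_{z,n}(|\bk|)\to z$ as $|\bk|\to0$ with $z\in\bbC^-$, i.e. $\operatorname{Im}z<0$. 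Hence there exist $k_->0$ and a (positively oriented) fixed simple closed contour $\Gamma\subset\bbC^-$, independent of $|\bk|$, such that for all $0<|\bk|\le k_-$ the contour $\Gamma$ encloses exactly the eigenvalues $\{\omega_{z,n}(|\bk|):z\in\mathcal{Z}_-,\,n\le\mathfrak{m}_z\}$ and no other point of $\sigma(\bbA_{|\bk|,\perp})$; moreover $\Gamma$ can be chosen so that $\Gamma\cap\mathcal{S}_{\mathcal{T}}=\varnothing$ (recall $\mathcal{S}_{\mathcal{T}}=\mathcal{P}\cup\mathcal{Z}_m\cup\{0\}$ is finite). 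Set $\delta:=\operatorname{dist}(\Gamma,\bbR)=\min\{-\operatorname{Im}\omega:\omega\in\Gamma\}>0$.

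Then the Riesz–Dunford functional calculus, combined with \eqref{eq.refsolutionfourier2} and the diagonalizability from Corollary \ref{eq.crit-diag3}, gives for $0<|\bk|\le k_-$
\begin{equation}\label{eq.RieszDunford-2}
\bbU_{z,-}(\bk,t)=-\frac{\mathcal{R}_{\bk}^*}{2\rmi\pi}\int_{\Gamma}\rme^{-\rmi\omega t}\,R_{|\bk|}(\omega)\,\mathcal{R}_{\bk}\,\bbU_0(\bk)\,\rmd\omega.
\end{equation}
Since $\mathcal{R}_{\bk}$ is unitary and $|\rme^{-\rmi\omega t}|\le\rme^{-\delta t}$ on $\Gamma$ for $t\ge0$, this yields
\begin{equation}\label{eq.Um-2}
|\bbU_{z,-}(\bk,t)|\lesssim\rme^{-\delta t}\,\Big(\max_{\omega\in\Gamma}\|R_{|\bk|}(\omega)\|\Big)\,|\bbU_0(\bk)|.
\end{equation}
It then remains to bound $\max_{\omega\in\Gamma}\|R_{|\bk|}(\omega)\|$ uniformly in $|\bk|\in(0,k_-]$. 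Here I would use the explicit expression $R_{|\bk|}(\omega)=\mathcal{V}_{|\bk|}(\omega)\mathcal{S}_{|\bk|}(\omega)+\mathcal{T}(\omega)$ from Proposition \ref{Prop.res}, noting that on the fixed compact set $\Gamma$ (which avoids $\mathcal{S}_{\mathcal{T}}$ and, for $|\bk|$ small, avoids $S(|\bk|)$), the maps $(|\bk|,\omega)\mapsto\mathcal{V}_{|\bk|}(\omega)$, $\mathcal{S}_{|\bk|}(\omega)$, $\mathcal{T}(\omega)$ are continuous and hence bounded on the compact set $[0,k_-]\times\Gamma$ (for $\mathcal{S}_{|\bk|}$ one observes that $(\mathcal{D}(\omega)-|\bk|^2)^{-1}$ is continuous and nonvanishing there since $\mathcal{D}(\omega)\ne|\bk|^2$ on $\Gamma$ for $|\bk|$ small, $\mathcal{D}$ being nonzero on $\Gamma\subset\bbC^-$ away from its zeros). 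Therefore $\|R_{|\bk|}(\omega)\|\lesssim1$ uniformly on $\Gamma$ for $0<|\bk|\le k_-$, and \eqref{eq.Um-2} gives \eqref{eq.estimateUm2}.

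The main (mild) obstacle is purely bookkeeping: one must choose $\Gamma$ and $k_-$ simultaneously so that for \emph{every} $0<|\bk|\le k_-$ the contour separates the ``bad'' eigenvalues $\omega_{z,n}(|\bk|)$, $z\in\mathcal{Z}_-$, from the remaining spectrum $\{\omega_{0,r}(|\bk|)\}\cup\{\omega_z(|\bk|):z\in\mathcal{Z}_s\}$, which accumulate at $0$ or at the real points of $\mathcal{Z}_s$, and also keeps $\Gamma$ away from $\mathcal{S}_{\mathcal{T}}$; this is guaranteed by the uniform convergence in \eqref{eq.zero} on the branches and the finiteness of $\mathcal{S}_{\mathcal{T}}$, exactly as in the proof of Lemma \ref{Estiminus}. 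Once the geometry is set up, everything else is a routine compactness argument, which is why — as in Section \ref{estiUminus} — a rough estimate suffices, the decay being uniform in $|\bk|$ and hence not contributing to the polynomial rate.
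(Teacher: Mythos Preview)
Your proposal is correct and follows essentially the same route as the paper: a fixed contour $\Gamma\subset\bbC^-$ enclosing $\mathcal{Z}_-$, the Riesz--Dunford integral representation \eqref{eq.RieszDunford-2}, and a compactness argument on $[0,k_-]\times\Gamma$ to bound the resolvent uniformly. The only cosmetic difference is that the paper bounds $\|R_{|\bk|}(\omega)\|$ directly by continuity of $(|\bk|,\omega)\mapsto R_{|\bk|}(\omega)$ on the compact set, without passing through the decomposition $R_{|\bk|}=\mathcal{V}_{|\bk|}\mathcal{S}_{|\bk|}+\mathcal{T}$; your version works equally well but is a touch longer than necessary.
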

\begin{proof}
The proof is similar to that of Lemma \ref{Estiminus} and even shorter since $|\bk|$ is  bounded here.
We introduce a fixed (positively oriented)  simple closed contour $\Gamma$, included in $\bbC^-$ such that  all the zeros of $\calZ_-$  lie inside $\Gamma$  (see figure \ref{fig-disp-curv-LF}). 
We choose $\Gamma$ such that $\Gamma\cap \mathcal{P}=\varnothing$.
We denote by  
\begin{equation}\label{eq.distLF}
\delta=\min \{-\operatorname{Im}(\omega), \, \omega \in \Gamma \}>0
\end{equation}
 the distance from $\Gamma$ to the real axis.  \\[12pt]
 \noindent For $|\bk|$ positive and small enough, $\bbU_{z,-}(\bk,t)$ is on one hand well-defined by   \eqref{eq.decompositionthreeterm} and \eqref{decompUbis} and on the other hand, by Proposition \ref{prop.dispersioncurvesLF},
$\Gamma$ encloses all eigenvalues $\omega_{p,n}(|\bk|)$ for $z\in \calZ_-$ and $n\in \{ 1, \ldots, \mathfrak{m}_z\}$ but no other elements of the spectrum of $\bbA_{|\bk|, \perp}$. Thus by the Riesz-Dunford functional calculus, one gets that there exists $k_->0$ such that
\begin{equation}\label{eq.ineqUZm}
\bbU_{z,-}(\bk,t)=- \frac{\, \mathcal{R}_{\bk}^*} {2\rmi \pi}\int_{\Gamma}  e^{- \rmi \omega t} \,  R_{|\bk|}(\omega) \, \, \mathcal{R}_{\bk} \bbU_0(\bk) \, \rmd \omega , \ \mbox{ for }0 <|\bk|\leq k_-,
\end{equation}
where  the resolvent $R_{\bk}(\omega)$ is well-defined by Proposition  \ref{Prop.res}   for $( |\bk|,\omega) \in K= [0,k_-] \times \Gamma $.  Moreover,   the function   $(|\bk|, \omega)\mapsto R_{\bk}(\omega) $, valued in $\mathcal{L}(\bC_{\perp})$, is continuous  on the compact  $K$.
Then, using the fact that $\mathcal{R}_{\bk}$ is unitary and the definition \eqref{eq.distLF} of $\delta$ (which implies that $|\rme^{-\rmi \omega t}|\leq \rme^{- \delta \,  t} $ on $\Gamma$), it follows from  \eqref{eq.ineqUZm} that
$$
| \bbU_{z,-}(\bk,t)| \leq C \,  \rme^{-\delta\, t} \ | \bbU_0(\bk)|, \  \forall \; |\bk|\in (0, k_-] \ \mbox{  where } C=  \max_{(|\bk|, \omega) \in K} \|R_{|\bk|}(\omega)\|.
$$

\end{proof}
\subsubsection{The global estimate} \label{EstiglobaleLF} 

\begin{Thm}\label{LF-estm}
There exists $k_->0,$ $C, \, \widetilde{C}> 0$  such that for $0<|\bk|\leq k_-$, the spatial Fourier components $|\bbU(\bk,t)|$ of the solution   of \eqref{eq.schro} with initial condition $\bU_0\in \mathcal{H}_{\perp}$ satisfy:
\begin{equation} \label{polynomial_decay-LF}
|\bbU(\bk,t)| \leq   \widetilde{C}  \, \rme^{-C |\bk|^2\, t} |\bbU_0(\bk)|,  \quad \forall t \geq 0    .
\end{equation}
\end{Thm}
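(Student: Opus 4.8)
The statement is the low-frequency counterpart of Theorem \ref{HF-estm}, and the plan is to assemble it from the three estimates already isolated in Section \ref{sec.estmLF}, exactly as Theorem \ref{HF-estm} was assembled from Lemmas \ref{LemEstiinfty}, \ref{LemEstis}, \ref{LemEstid} and \ref{Estiminus}. First I would fix $k_->0$ small enough that all of Corollary \ref{eq.crit-diag3}, Lemma \ref{LemEstizero}, Lemma \ref{LemEstizeros} and Lemma \ref{EstimZminus} simultaneously apply; since there are finitely many lemmas, each giving its own admissible threshold, one simply takes the minimum of these thresholds (recall the convention in Section \ref{sec.estmHF} that the value of $k_-$ may be shrunk from one use to the next). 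For $0<|\bk|\le k_-$ the operator $\bbA_{|\bk|,\perp}$ is diagonalizable on $\bC_\perp^N$, so the spectral decomposition \eqref{eq.decompositionthreeterm}--\eqref{decompUbis} of $\bbU(\bk,t)$ is valid and we may estimate $|\bbU(\bk,t)|$ by the triangle inequality as the sum of the three contributions $|\bbU_{z,0}(\bk,t)|$, $|\bbU_{z,s}(\bk,t)|$ and $|\bbU_{z,-}(\bk,t)|$.

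\textbf{Key steps.} The heart of the argument is then just combining the three bounds. By Lemma \ref{LemEstizero} there is $C_0>0$ with $|\bbU_{z,0}(\bk,t)|\lesssim \rme^{-C_0|\bk|^2 t}|\bbU_0(\bk)|$; by Lemma \ref{LemEstizeros} (vacuously true, with any constant, if $\mathcal{Z}_s=\varnothing$) there is $C_s>0$ with $|\bbU_{z,s}(\bk,t)|\lesssim \rme^{-C_s|\bk|^2 t}|\bbU_0(\bk)|$; and by Lemma \ref{EstimZminus} there are $\delta>0$, $k_->0$ with $|\bbU_{z,-}(\bk,t)|\lesssim \rme^{-\delta t}|\bbU_0(\bk)|$. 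The only point requiring a word is that the first two decay in $|\bk|^2 t$ while the third decays in $t$; but for $0<|\bk|\le k_-$ one has $\delta t \ge \delta k_-^{-2}\,|\bk|^2 t$, so setting $C:=\min\{C_0,\,C_s,\,\delta k_-^{-2}\}>0$ gives $\rme^{-\delta t}\le \rme^{-C|\bk|^2 t}$ as well as $\rme^{-C_0|\bk|^2 t}\le \rme^{-C|\bk|^2 t}$ and $\rme^{-C_s|\bk|^2 t}\le \rme^{-C|\bk|^2 t}$ on the whole range $0<|\bk|\le k_-$, $t\ge 0$. Summing the three bounds and absorbing the three implicit constants into a single $\widetilde{C}>0$ (the number of terms being bounded by $2+|\mathcal{Z}_s|+\sum_{z\in\mathcal{Z}_-}\mathfrak{m}_z$, which depends only on $N_e,N_m$) yields
\begin{equation*}
|\bbU(\bk,t)|\le \widetilde{C}\,\rme^{-C|\bk|^2 t}\,|\bbU_0(\bk)|,\qquad \forall\, t\ge 0,\ \forall\, 0<|\bk|\le k_-,
\end{equation*}
which is \eqref{polynomial_decay-LF}.

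\textbf{Main obstacle.} There is essentially no obstacle at the level of Theorem \ref{LF-estm} itself — it is pure bookkeeping once the three lemmas of Section \ref{sec.estmLF} are in place. The genuinely substantive work has already been done upstream: the worst decay rate among the low-frequency branches is the $|\bk|^2$ rate coming from the zero $z=0$ (via $\omega_{0,r}(|\bk|)\sim \pm c_0|\bk|$ and $\operatorname{Im}(-( \varepsilon\mu)'(0))<0$ in Lemma \ref{LemEigenzo}) and, when present, from the simple real zeros $z\in\mathcal{Z}_s$ (Lemma \ref{LemEigenzs}), while the uniform control of the non-orthogonal spectral projectors $\Pi_{0,r}(|\bk|)$ and $\Pi_z(|\bk|)$ (Lemmas \ref{LemProzerodouble}, \ref{LemProszero}) is what prevents the non-normality of $\bbA_{|\bk|,\perp}$ from spoiling the exponential bound. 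The one mild subtlety to flag in writing the proof is to make explicit that Lemma \ref{LemEstizeros} is only invoked when $\mathcal{Z}_s\ne\varnothing$, and that when $\mathcal{Z}_s=\varnothing$ the corresponding term is simply absent from \eqref{eq.decompositionthreeterm}; the resulting bound is unaffected.
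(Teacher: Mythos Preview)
Your proposal is correct and follows exactly the paper's approach: the paper's proof is a one-line reference stating that the argument is similar to that of Theorem \ref{HF-estm} using Lemmas \ref{LemEstizero}, \ref{LemEstizeros} and \ref{EstimZminus}, and you have faithfully unpacked precisely that combination (including the elementary observation that $\rme^{-\delta t}\le \rme^{-(\delta k_-^{-2})|\bk|^2 t}$ on the range $0<|\bk|\le k_-$).
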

\begin{proof}
The proof is similar to that  of Theorem \ref{HF-estm} using Lemmas  \ref{LemEstizero}, \ref{LemEstizeros}  and \ref{EstimZminus}. \end{proof} 
\noindent  We prove in the following result that the estimates of Theorem \ref{LF-estm}  is optimal for an infinite family of well chosen  initial conditions $\bU_0\in \mathcal{H}^{p}_{\perp,\operatorname{LF}}$ for any fixed  $p\geq 0$.
 \begin{Thm}\label{LF-estm-opt}
Let $p\geq 0$  and $ k_->0$ and  $\phi: \bbR^+ \mapsto \bbR$  be any    measurable and bounded function satisfying
\begin{equation}\label{eq.phik-LF}
\operatorname{supp} \phi \subset  [0,k_-] \mbox{ and  }  0< |\phi(|\bk|)|\lesssim |\bk|^p \, \mbox{ for } \, |\bk| \leq k_- .
\end{equation}
If the initial condtion $\bU_0$  of \eqref{eq.schro} is defined (for $k_-$ small enough) via its Fourier transform:
\begin{equation}\label{eq.defUOkoptLF}
\bbU_0(\bk)= \mathcal{F}(\bU_0)(\bk)=  \phi(|\bk|) \; \mathcal{R}_{\bk}^* \, \frac{\mathcal{V}_{|\bk|}\big(\omega_{0,1}(|\bk|)\big)  \mathbf{e}_{1}}{|\mathcal{V}_{|\bk|}\big(\omega_{0,1}(|\bk|)\big)  \bf{e}_1|}, \quad \forall \, \bk\in \R^{3,*},
\end{equation}
then $\bU_0\in  \mathcal{H}^{p}_{\perp,\operatorname{LF}}$ and     $\exists \; C,\,\widetilde{C}>0$  such that  the associated  solution $\bU$ of \eqref{eq.schro}  satisfy
	\begin{equation} \label{polynomial_decayncr-LF2}
	  \widetilde{C}  \, \rme^{C\, t\, |\bk|^2} \, |\bbU_0(\bk)|\leq |\bbU(\bk,t)|,  \quad \forall \, t \geq 0      \mbox{ and } \forall \, \bk\in \R^{3,*}.
	\end{equation}
	In other words, the  estimate   \eqref{polynomial_decay-LF} is optimal for an infinite family of solutions.
	\end{Thm}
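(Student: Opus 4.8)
\noindent The plan is to mimic Step~1 of the proof of Theorem~\ref{HF-estm-opt}, replacing the diverging branch $\omega_{+\infty}(|\bk|)$ by the low frequency branch $\omega_{0,1}(|\bk|)$ and using the asymptotics of Lemma~\ref{LemEigenzo} in place of those of Lemma~\ref{LemEigeninfty}. First I would check that $\bU_0$ given by \eqref{eq.defUOkoptLF} is well defined and belongs to $\mathcal{H}^p_{\perp,\operatorname{LF}}$. By Proposition~\ref{prop.dispersioncurvesLF}, for $k_-$ small enough the branch $|\bk|\mapsto\omega_{0,1}(|\bk|)$ is defined and $C^\infty$ on $(0,k_-]$; since $\omega\mapsto\mathcal{V}_{|\bk|}(\omega)$ is rational and $\bk\mapsto\mathcal{R}_{\bk}$ is measurable, $\bbU_0$ is a measurable $\bC^N$-valued function of $\bk$, supported in $\overline{B(0,k_-)}$ because $\operatorname{supp}\phi\subset[0,k_-]$. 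As the first component of $\mathcal{V}_{|\bk|}(\omega)\bbX$ equals $\bbX$ (see \eqref{defV}), one has $\mathcal{V}_{|\bk|}(\omega_{0,1}(|\bk|))\mathbf{e}_1\neq 0$, so the normalisation makes sense and $|\bbU_0(\bk)|=|\phi(|\bk|)|\lesssim|\bk|^p$ on $\overline{B(0,k_-)}$. Hence each scalar component $v$ of $\bU_0$ satisfies $|\bk|^{-p}(1+|\bk|^p)\,|\widehat v(\bk)|\lesssim 1+k_-^p$ on $\overline{B(0,k_-)}$ and vanishes outside, so $\bU_0\in\boldsymbol{\cal L}_p(\R^3)^N$; moreover $\bU_0\in\bL^2(\R^3)^N$ since $|\bk|^{2p}$ is integrable near the origin, and $\bU_0\in\mathcal{H}_\perp$ because $\mathcal{R}_{\bk}^*$ maps $\bC_{\perp}^N$ onto $\bC_{\bk,\perp}^N$ (see \eqref{trucmuche}), so that $\bbU_0(\bk)\in\bC_{\bk,\perp}^N$.

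Next I would identify $\mathcal{R}_{\bk}\bbU_0(\bk)$ as an eigenvector of $\bbA_{|\bk|,\perp}$. The proof of Proposition~\ref{Prop.spec} (as already used in \eqref{eq.noyauAmodk}) shows that the eigenspace of $\bbA_{|\bk|,\perp}$ associated with $\omega_{0,1}(|\bk|)$ equals $\mathcal{V}_{|\bk|}(\omega_{0,1}(|\bk|))(\bC_{\perp})$; by \eqref{eq.defUOkoptLF}, $\mathcal{R}_{\bk}\bbU_0(\bk)=\mathcal{R}_{\bk}\mathcal{R}_{\bk}^*\,\bbU_0(\bk)$ is a scalar multiple of $\mathcal{V}_{|\bk|}(\omega_{0,1}(|\bk|))\mathbf{e}_1$ and hence lies in that eigenspace. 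Consequently, by \eqref{eq.refsolutionfourier2},
\begin{equation*}
\bbU(\bk,t)=\mathcal{R}_{\bk}^*\,\rme^{-\rmi\,\bbA_{|\bk|,\perp}\,t}\,\mathcal{R}_{\bk}\bbU_0(\bk)=\rme^{-\rmi\,\omega_{0,1}(|\bk|)\,t}\,\bbU_0(\bk),
\end{equation*}
so that $|\bbU(\bk,t)|=\rme^{\operatorname{Im}\omega_{0,1}(|\bk|)\,t}\,|\bbU_0(\bk)|$ for $0<|\bk|\leq k_-$, while both sides vanish for $|\bk|>k_-$.

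Finally I would insert the asymptotics. By Lemma~\ref{LemEigenzo}, with $c_0=(\varepsilon(0)\mu(0))^{-1/2}>0$,
\begin{equation*}
\operatorname{Im}\omega_{0,1}(|\bk|)=\frac{1}{2}\,\operatorname{Im}\!\big(-(\varepsilon\mu)'(0)\big)\,c_0^4\,|\bk|^2+o(|\bk|^2),\qquad|\bk|\to 0,
\end{equation*}
and \eqref{eq.const0} gives $\operatorname{Im}\big(-(\varepsilon\mu)'(0)\big)<0$; hence, after shrinking $k_-$ once more, there exists $C>0$ with $\operatorname{Im}\omega_{0,1}(|\bk|)\geq -C\,|\bk|^2$ for $0<|\bk|\leq k_-$. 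This yields $\rme^{-C|\bk|^2 t}\,|\bbU_0(\bk)|\leq|\bbU(\bk,t)|$ for all $t\geq 0$ and all $\bk\in\R^{3,*}$ (trivially when $|\bk|>k_-$), i.e. \eqref{polynomial_decayncr-LF2} with $\widetilde C=1$; together with the upper bound of Theorem~\ref{LF-estm} this shows that the rate $|\bk|^2$ is sharp on this infinite family. The argument is essentially routine once the proof of Theorem~\ref{HF-estm-opt} is in place; the only points requiring some care are the identification of the eigenspace through $\mathcal{V}_{|\bk|}$, borrowed from the proof of Proposition~\ref{Prop.spec}, and the bookkeeping of the $\boldsymbol{\cal L}_p$-membership near $|\bk|=0$, so I do not expect a genuine obstacle.
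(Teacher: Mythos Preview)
Your proposal is correct and follows exactly the approach the paper indicates: the paper's own proof merely states that $\bU_0\in\mathcal{H}^p_{\perp,\operatorname{LF}}$ is clear from \eqref{eq.phik-LF}--\eqref{eq.defUOkoptLF}, quotes the asymptotic $\operatorname{Im}\omega_{0,1}(|\bk|)=\operatorname{Im}\big(-(\varepsilon\mu)'(0)\big)|\bk|^2+o(|\bk|^2)$ from Lemma~\ref{LemEigenzo}, and then says the rest is ``similar to the proof of Theorem~\ref{HF-estm-opt}'' with details left to the reader. You have in fact supplied more detail than the paper does (the eigenspace identification via $\mathcal{V}_{|\bk|}$ and the $\boldsymbol{\cal L}_p$ bookkeeping), and your inclusion of the constant $\tfrac{1}{2}c_0^4$ in the imaginary-part asymptotic is more accurate than the paper's abbreviated formula.
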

 \begin{proof}
Using \eqref{eq.phik-LF} and \eqref{eq.defUOkoptLF}, it is clear that $\bU_0\in  \mathcal{H}^{p}_{\perp,\operatorname{LF}}$. The rest of the proof     uses the following  asymptotic expansion of $\operatorname{Im}\omega_{0,1}(|\bk|)$ (given by Lemma \ref{LemEigenzo}): $$\operatorname{Im}\omega_{0,1}(|\bk|)= \operatorname{Im} \, \big( -(\varepsilon \mu)'(0) \big)\, |\bk|^2+o(|\bk|^2), \ \mbox{ as }\ |\bk|\to 0 \ \mbox{ with } \ \operatorname{Im} \, \big( - (\varepsilon \mu)'(0) \big) <0$$  and is similar to the proof of Theorem \ref{HF-estm-opt}. Therefore, the details are left to the reader.
 \end{proof}

\section{Estimates of ``mid-range frequencies'' components of the solution}\label{sec_mid-frequencies}
The goal is to prove (see Theorem \ref{Th.intermediatefreq})  that for intermediate frequency  components, i.e. for $0<k_-\leq|\bk|\leq k_+$, the Fourier components of the solution  $\bbU(\bk,t)$ decay exponentially with a uniform exponent  depending only on the  compact set $K=[k_-,k_+]$. The proof  is based on three key ingredients: the Dunford  decomposition in linear algebra, the following perturbation Lemma and  finally a compactness argument.
\begin{Lem}\label{Pro-perturb}
Let $\bbM\in \mathcal{L}(\bbC_{\perp}^N)$. We assume that there  exists two constants $C, \alpha>0$ such that
$$
\|\rme^{t \, \bbM} \| \leq C e^{-\alpha t},  \quad \forall t \geq 0.
$$
Then,  for  any perturbation $\Delta \in  \mathcal{L}(\bbC_{\perp}^N)$, one has the following estimate:
$$
\|\rme^{t \, (\bbM+\Delta)} \| \leq C e^{(-\alpha+ C\, \|\Delta\| )\, t},  \quad \forall t \geq 0.
$$
 \end{Lem}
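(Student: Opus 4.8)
The statement is the standard bounded-perturbation estimate for semigroups on a finite-dimensional space, and the natural tool is the Duhamel (variation-of-constants) formula combined with a Grönwall argument. First I would write, for the one-parameter groups generated by $\bbM$ and $\bbM + \Delta$ on $\bC_{\perp}^N$, the identity
\begin{equation*}
	\rme^{t(\bbM+\Delta)} = \rme^{t\bbM} + \int_0^t \rme^{(t-s)\bbM} \, \Delta \, \rme^{s(\bbM+\Delta)} \, \rmd s, \quad \forall \, t \geq 0,
\end{equation*}
which is checked by differentiating the right-hand side in $t$ (it solves the same linear ODE $\frac{\rmd}{\rmd t} X(t) = (\bbM+\Delta) X(t)$ with $X(0)=\mathrm{Id}$) or, equivalently, by differentiating $s \mapsto \rme^{(t-s)\bbM} \rme^{s(\bbM+\Delta)}$ and integrating from $0$ to $t$.

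Next I would take operator norms. Introducing the auxiliary quantity $\varphi(t) := \rme^{\alpha t} \, \|\rme^{t(\bbM+\Delta)}\|$ and using the hypothesis $\|\rme^{\tau\bbM}\| \leq C\rme^{-\alpha\tau}$ for all $\tau \geq 0$, the Duhamel identity gives
\begin{equation*}
	\|\rme^{t(\bbM+\Delta)}\| \leq C\rme^{-\alpha t} + \int_0^t C\rme^{-\alpha(t-s)} \, \|\Delta\| \, \|\rme^{s(\bbM+\Delta)}\| \, \rmd s,
\end{equation*}
and multiplying through by $\rme^{\alpha t}$ yields $\varphi(t) \leq C + C\|\Delta\| \int_0^t \varphi(s) \, \rmd s$. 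Since $\varphi$ is continuous (hence locally bounded) on $\R^+$, Grönwall's lemma applies and gives $\varphi(t) \leq C \rme^{C\|\Delta\| t}$, i.e.
\begin{equation*}
	\|\rme^{t(\bbM+\Delta)}\| \leq C \, \rme^{(-\alpha + C\|\Delta\|)t}, \quad \forall \, t \geq 0,
\end{equation*}
which is exactly the claimed bound.

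\textbf{Main obstacle.} There is no serious obstacle here: the only points requiring a word of care are the justification of the Duhamel identity (an elementary ODE computation, legitimate since everything is finite-dimensional and the exponentials are entire functions of $t$) and the applicability of Grönwall, for which one needs $\varphi$ to be finite on every bounded interval — this is immediate because $t \mapsto \rme^{t(\bbM+\Delta)}$ is a continuous $\mathcal{L}(\bC_{\perp}^N)$-valued function. I would present the argument in this order: Duhamel formula, norm estimate, change of unknown to $\varphi$, Grönwall. The whole proof is short and self-contained, and it is precisely the estimate that will later be combined with the Dunford decomposition and a compactness argument over $K = [k_-, k_+]$ in Section~\ref{sec_mid-frequencies}.
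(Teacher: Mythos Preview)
Your proof is correct and follows exactly the approach indicated in the paper, which simply notes that the result is proved via the Duhamel formula and Gr\"onwall's lemma (referring to \cite{Bec-Jol-02}, Lemma~1.6, or \cite{Hin-10}, Proposition~4.2.18). Your write-up supplies the details that the paper omits, but the strategy is identical.
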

 \begin{proof}
The proof, based on the Duhamel formula and the Gr\"{o}nwall's lemma, is done  e.g. in lemma 1.6 page 97 of  \cite{Bec-Jol-02} or in the Proposition 4.2.18  page 406 of  \cite{Hin-10}.
\end{proof}
\begin{Thm}\label{Th.intermediatefreq}
Let $K=[k_-, k_+]$ be a compact interval of $\bbR^{+,*}$,  then  there exist  two constants $C>0$ and $\beta>0$ (depending only on $K$) such that the spatial Fourier components $|\bbU(\bk,t)|$ of the solution   of \eqref{eq.schro} with initial condition $\bbU_0\in \mathcal{H}_{\perp}$ satisfy
\begin{equation}\label{eq.freqintermestm}
|  \bbU( \bk,t)| \leq C\,  e^{-\beta t}  \, |\bbU_0( \bk)|, \ \forall  \;t\geq 0   \mbox{ and  a.e. }   \bk \in \bbR^3 \mid   |\bk| \in K .
\end{equation}
\end{Thm}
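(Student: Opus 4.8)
The plan is to combine the announced linear-algebra tools (Dunford decomposition plus the perturbation Lemma~\ref{Pro-perturb}) with a compactness argument over the compact interval $K=[k_-,k_+]$. First I would fix $|\bk| \in K$ and recall that, by the estimate \eqref{loc_spectrum} established in Section~\ref{sec_dispersion-1}, all eigenvalues of $\bbA_{|\bk|,\perp}$ lie in $\bbC^-$; hence the quantity
\begin{equation*}
\alpha(|\bk|) := \min \{ -\operatorname{Im}\omega \mid \omega \in \sigma(\bbA_{|\bk|,\perp}) \}
\end{equation*}
is strictly positive for each $|\bk| \in K$. The key point will be to get a \emph{uniform} lower bound $\alpha(|\bk|) \geq 2\beta > 0$ for $|\bk| \in K$, together with a uniform bound on $\|\rme^{-\rmi\bbA_{|\bk|,\perp}t}\|$ of the form $C \rme^{-\beta t}$. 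Since the matrices are not normal, controlling the exponential requires controlling the (generalized) spectral projectors and nilpotent parts as well, not merely the eigenvalues.

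The heart of the argument is a local statement: for each $|\bk_0| \in K$, there is a neighbourhood $I_{|\bk_0|}$ of $|\bk_0|$ in $\bbR^{+,*}$ and constants $C_{|\bk_0|}, \beta_{|\bk_0|} > 0$ such that $\|\rme^{-\rmi\bbA_{|\bk|,\perp}t}\| \leq C_{|\bk_0|}\rme^{-\beta_{|\bk_0|}t}$ for all $|\bk| \in I_{|\bk_0|}$ and $t \geq 0$. To prove this, I would use the Dunford (Jordan–Chevalley) decomposition of the fixed matrix $-\rmi\bbA_{|\bk_0|,\perp} = \bbM + \bbN$ with $\bbM$ diagonalizable, $\bbN$ nilpotent, $\bbM\bbN = \bbN\bbM$, and $\sigma(\bbM) = -\rmi\,\sigma(\bbA_{|\bk_0|,\perp}) \subset \{\operatorname{Re}z < 0\}$. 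Then $\|\rme^{t\bbM}\| \leq C\rme^{-\alpha(|\bk_0|)t}$ after enlarging $C$ to absorb the finitely many powers of $t$ coming from $\rme^{t\bbN}$ (here one uses that $\bbN^N = 0$ in the fixed finite dimension $2N$, and that $t^j\rme^{-\alpha t/2} \lesssim 1$). Writing $-\rmi\bbA_{|\bk|,\perp} = (-\rmi\bbA_{|\bk_0|,\perp}) + \Delta_{|\bk|}$ with $\Delta_{|\bk|} := -\rmi(\bbA_{|\bk|,\perp} - \bbA_{|\bk_0|,\perp})$, the continuity of $|\bk| \mapsto \bbA_{|\bk|,\perp}$ (clear from the explicit formula \eqref{eq.defAk}, since only the entries multiplying $|\bk|$ vary) gives $\|\Delta_{|\bk|}\| \to 0$ as $|\bk| \to |\bk_0|$. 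Applying Lemma~\ref{Pro-perturb} with $\alpha = \alpha(|\bk_0|)$ yields $\|\rme^{-\rmi\bbA_{|\bk|,\perp}t}\| \leq C\rme^{(-\alpha(|\bk_0|) + C\|\Delta_{|\bk|}\|)t}$, and choosing $I_{|\bk_0|}$ small enough that $C\|\Delta_{|\bk|}\| \leq \alpha(|\bk_0|)/2$ gives the local claim with $\beta_{|\bk_0|} = \alpha(|\bk_0|)/2$.

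Finally, $K$ being compact, I would cover it by finitely many such intervals $I_{|\bk_0^{(1)}|}, \dots, I_{|\bk_0^{(r)}|}$, set $C := \max_i C_{|\bk_0^{(i)}|}$ and $\beta := \min_i \beta_{|\bk_0^{(i)}|} > 0$, so that $\|\rme^{-\rmi\bbA_{|\bk|,\perp}t}\| \leq C\rme^{-\beta t}$ for all $|\bk| \in K$. To conclude I would invoke formula \eqref{eq.refsolutionfourier2}, $\bbU(\bk,t) = \mathcal{R}_{\bk}^*\,\rme^{-\rmi\bbA_{|\bk|,\perp}t}\,\mathcal{R}_{\bk}\,\bbU_0(\bk)$, and the fact that $\mathcal{R}_{\bk}$ and $\mathcal{R}_{\bk}^*$ are unitary, to obtain $|\bbU(\bk,t)| \leq C\rme^{-\beta t}|\bbU_0(\bk)|$ for a.e.\ $\bk$ with $|\bk| \in K$, which is \eqref{eq.freqintermestm}. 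The main obstacle I anticipate is making the uniformity rigorous: one must be careful that the constant $C$ in the bound on $\|\rme^{t\bbM}\|$ (which a priori depends on $|\bk_0|$ through the conditioning of the diagonalizing basis and the size of the nilpotent part) does not blow up — this is handled precisely by the local-then-finite-cover structure, since within each $I_{|\bk_0|}$ one compares to the \emph{single} fixed matrix $\bbA_{|\bk_0|,\perp}$ rather than trying to diagonalize the whole family at once (which could fail, e.g.\ at crossings of dispersion curves). A secondary point to check is that the neighbourhoods $I_{|\bk_0|}$ can be taken inside $\bbR^{+,*}$, which is automatic since $k_- > 0$.
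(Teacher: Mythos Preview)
Your proposal is correct and follows essentially the same route as the paper: fix a point $k_0\in K$, use the Dunford decomposition together with $\sigma(\bbA_{k_0,\perp})\subset\bbC^-$ to get a pointwise exponential bound $\|\rme^{-\rmi\bbA_{k_0,\perp}t}\|\le C_{k_0}\rme^{-\alpha_{k_0}t}$, apply Lemma~\ref{Pro-perturb} with $\Delta=-\rmi(\bbA_{|\bk|,\perp}-\bbA_{k_0,\perp})$ to propagate this bound to a neighbourhood of $k_0$, extract a finite subcover of $K$, and conclude via \eqref{eq.refsolutionfourier2} and the unitarity of $\mathcal{R}_{\bk}$. The only minor imprecision is that after absorbing the polynomial from $\rme^{t\bbN}$ you obtain a bound of the form $C\rme^{-\alpha' t}$ with some $0<\alpha'<\alpha(|\bk_0|)$ (e.g.\ $\alpha'=\alpha(|\bk_0|)/2$), not exactly $\alpha(|\bk_0|)$; this is harmless since only positivity of the final $\beta$ matters.
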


\begin{proof}
Let $k \in K$ be fixed. 
By virtue of   \eqref{loc_spectrum},  $\sigma(\bbA_{k, \perp})$ is included in the lower half-plane  $\mathbb{C}^-$, thus it follows from the Dunford Decomposition of $\bbA_{ k, \perp}$ (see  e.g. Corollary 2.26  page 106 of \cite{Hin-10}) that there exist $\alpha_k>0$ and $C_{k} >0$ such that
 \begin{equation}\label{eq.ineqinter}
 \|\rme^{-  \rmi \bbA_{k, \perp} t} \| \leq C_{k} \; e^{-\alpha_{k} t},  \quad  \forall \; t\geq 0.
 \end{equation}
As the function $|\bk| \mapsto -\rmi  \, \bbA_{|\bk|, \perp}$ is clearly continuous from $\bbR^{+,*}$ to $\mathcal{L}(\bC_{\perp}^N )$,
There exists an open interval $(k-\eta_{k},k+\eta_{k})$  such that 
\begin{equation}\label{eq.perutbsize}
\|-\rmi (\bbA_{|\bk|, \perp}- \bbA_{k, \perp}) \|< \frac{\alpha_{k}}{ 2\,  C_{k}}, \quad  \forall \; |\bk|\in (k-\eta_{k},k+\eta_{k})\cap K.
\end{equation}
Hence, applying  Lemma \ref{Pro-perturb}  with  $$\bbM=- \rmi  \, \bbA_{k, \perp} \ \mbox{ and } \ \Delta=- \rmi \, (\bbA_{|\bk|, \perp}- \bbA_{k, \perp}) $$ yields the following inequality for all $ t\geq 0 \ \mbox{ and } \forall |\bk| \in (k-\eta_{k},k+\eta_{k})\cap K $: 
\begin{equation}\label{eq.estmlocal}
\|\rme^{-  \rmi \bbA_{ |\bk|, \perp} t} \| \leq C_{k} \, e^{-\beta_{k} t} \quad \mbox{with}  \quad \beta_{k}=\alpha_{k} -C_{k} \|\Delta\|.
\end{equation}
Thus, one deduces   immediately from  \eqref{eq.perutbsize} and \eqref{eq.estmlocal} that
\begin{equation}\label{eqeq.estmlocal2}
\|\rme^{-  \rmi \bbA_{ \bk, \perp} t} \| \leq C_{k} \,  e^{-\frac{\alpha_{k}}{2}t}, \quad \forall \; t\geq 0  \ \mbox{ and } \  \forall \;  |\bk|\in (k-\eta_{k},k+\eta_{k})\cap K.  
\end{equation}
As the compact $K\subset \displaystyle \bigcup_{k\in K} (k-\eta_{k},k+\eta_{k}),$  there exists $N>0$ and $k_1, k_2, \ldots k_N\in K$ such that
\begin{equation}\label{eq.compactfiniterecovering}
K \subset \displaystyle \bigcup_{1\leq n\leq N}(k_n-\eta_{k_n},k_n+\eta_{k_n}).
\end{equation}
With $\alpha_{k_i}>0$ and $C_{k_i}>0$ given by \eqref{eq.ineqinter} for $k=k_i$, we define
$$C=\displaystyle \max_{i=1,\ldots, N} C_{k_i}>0\mbox{ and  } \quad  \alpha= \min_{i=1,\ldots, N} \alpha_{k_i}>0.  $$
Then, combining  \eqref{eqeq.estmlocal2} and  \eqref{eq.compactfiniterecovering} yields
$$
\|\rme^{-  \rmi \bbA_{| \bk|, \perp} t} \| \leq C e^{-\frac{\alpha}{2} t},  \quad \ \forall \; t\geq 0 \ \mbox{ and } \ \forall \; |\bk| \in K.
$$
As  the operator $\mathcal{R}_{\bk}$ is unitary, one finally deduces, with \eqref{eq.refsolutionfourier2}, the estimate \eqref{eq.freqintermestm}.
\end{proof}

\section{Proof of the main Theorems  of the paper}\label{proof-general-cases}

\subsection{Proof of Theorem \ref{thm_Lorentz} (decay rate estimates)}\label{sec-final-step}
This section constitutes  the last step of the proof of Theorem  \ref{thm_Lorentz}. For this final stage, we follow the approach developed in the proofs of Theorems 2.4 and 4.4 of  \cite{cas-jol-ros-22}.
For the sake of readability, we recall here the arguments, which rely on Plancherel's identity and the three decay estimates proved for high, low and intermediate frequencies  respectively in Theorems \ref{HF-estm}, \ref{LF-estm} and  \ref{Th.intermediatefreq}.\\[12pt]
\noindent 
Let $k_-$ and $k_+$ be two fixed positive real numbers satisfying $k_-<k_+$ such that the estimates \eqref{polynomial_decayncr-HF}  and \eqref{polynomial_decay-LF} of Theorems \ref{HF-estm} and \ref{LF-estm} hold. \\[12pt]
\noindent Using the Plancherel identity, any solution of \eqref{eq.schro}  with initial condition $\bU_0\in \mathcal{H}_{\perp}$ satisfies 
\begin{eqnarray}\label{eq.Plancherel}
\|\bU(t)\|_{\mathcal{H}}^2
&=&\int_{|\bk|<k_-}|\bbU(\bk,t)|^2 \, \rmd\bk+\int_{k_-\leq |\bk|\leq k_+}|\bbU(\bk,t)|^2 \, \rmd\bk+\int_{k_+<|\bk|}|\bbU(\bk,t)|^2 \, \rmd\bk.
\end{eqnarray}

\noindent {\bf Step 1: proof of  the convergence result \eqref{convergenceL}}\\[4pt]
In a non critical configuration,  using the low frequency  estimate \eqref{polynomial_decay-LF}, the intermediate frequency estimate \eqref{eq.ineqinter} and  the high frequency estimate \eqref{polynomial_decayncr-HF}  of  $\bbU(\bk,t)$, one  sees that  there  exist  three constants $C_1>0$, $C_2>0$ and $C_3>0$ (independent of $\bU$)  such that
\begin{equation}\label{eq.globestim}
\|\bU(t)\|_{\mathcal{H}}^2 \lesssim  \int_{|\bk|<k_-}  \rme^{-C_1  |\bk|^2 \, t } \,  |\bbU_0(\bk)|^2  \, \rmd\bk +   \rme^{-C_2  \, t } \, \|\bU_0 \|^2_{\mathcal{H}}+  \int_{ k_+<|\bk| }  \rme^{-C_3  |\bk|^{-2} \, t } \,  |\bbU_0(\bk)|^2  \, \rmd\bk.
\end{equation}
When $t\to +\infty$, the second term in \eqref{eq.globestim} converges exponentially to $0$,  whereas the first  and  third terms tend to $0$ by the Lebesgue's dominated convergence Theorem. \\ [12pt] 
In a critical configuration,  one obtains  \eqref{convergenceL} by the same reasoning  only after replacing the high frequency estimate \eqref{polynomial_decayncr-HF} by
 \eqref{polynomial_decaycr-HF} (thus it consists to substitute the factor $ \rme^{-C_3  |\bk|^{-2} \, t }$ by  $ \rme^{-C_3  |\bk|^{-4} \, t }$ in the second integral of   \eqref{eq.globestim}).\\[3pt]

 \noindent {\bf Step 2: Estimate of the low frequency term in \eqref{eq.globestim}}\\[12pt]
We assume now that  the initial condition $\bU_0\in \mathcal{H}_{\perp} \cap \boldsymbol{\cal L}_{p}^N $  for some  integer $p\geq 0$. \\ [12pt]
\noindent Thus, by  definition on the ${\boldsymbol{\cal L}_p^N}$ norm (see \eqref{normcalp}), 
$	|\bbU_0(\bk)| \leq |\bk|^p \;   \|{\bU}_0\|_{\boldsymbol{\cal L}_p^N}$ for a.e. $\bk\in \R^3$
and consequently
\begin{eqnarray}\label{eq.finalestimateLF}
  \int_{|\bk|<k_-}  \rme^{-C_1  |\bk|^2 \, t } \,  |\bbU_0(\bk)|^2  \, \rmd\bk 
&   \leq & \Big( \int_{\bbR^3}  \rme^{-C_1  |\bk|^2 \, t } \, |\bk|^{2p}  \, \rmd \bk\Big) \,  \big\| \mathbf{U}_0 \big\|_{\boldsymbol{\cal L}_p^N}^2.
\end{eqnarray}
Since,  with the change of variable $\mathbf{\xi }= \sqrt{C_1\, t} \; \bk$ ,
	$$
	\int_{\R^3} |\bk|^{2p}\, e^{- C_1 \, |\bk|^2 t} \; \rmd \bk = 
	\frac{1}{(C_1\, t)^{p + \frac{3}{2}}} \; \int_{\R^3} |\xi|^{2p}\, e^{- |\xi|^2} \; \rmd \mathbf{\xi } \equiv \frac{C(p)}{ t^{p + \frac{3}{2}}} ,
	$$
 one concludes from  \eqref{eq.finalestimateLF} that
\begin{equation}\label{eq.finalestimateLF-2}
  \int_{|\bk|<k_-}  \rme^{-C_1  |\bk|^2 \, t } \,  |\bbU_0(\bk)|^2  \, \rmd\bk 
  \leq  \frac{C(p)}{ t^{p + \frac{3}{2}}}  \  \big\| \mathbf{U}_0 \big\|_{\boldsymbol{\cal L}_p^N}^2.
\end{equation}

 \noindent {\bf Step 3: Estimate of the high frequency term in \eqref{eq.globestim}}\\[6pt]
 {\bf In a non critical configuration:}\\[6pt]
We assume  here that  the Maxwell's system is in a non critical configuration and that the initial condition $\bU_0\in  \mathcal{H}_{\perp}  \cap  \bH^m(\bbR^3)^N$ for some $m>0$.
We estimate here the third term of   \eqref{eq.globestim}  corresponding to the high frequency contribution.
To this aim, we denote by $\langle\bk\rangle$ the quantity: $\langle\bk\rangle:=(1+|\bk|^2)^{1/2}$. Then,   using the fact that $e^{- C_3 |\bk|^{-2} \,t} \leq e^{- C_3 \langle\bk\rangle^{-2} \,t}$, we rewrite  this third term  as follows (we simply introduce artificially the factor $\langle\bk\rangle^m/t^m$)
\begin{equation} \label{eq.LebesguesHm}
 \begin{array}{lll}
 \displaystyle  \int_{ k_+<|\bk| }  \rme^{-C_3  |\bk|^{-2} \, t } \,  |\bbU_0(\bk)|^2  \, \rmd\bk & \leq & \ds t^{-m} \int_{{ k_+<|\bk| }} \langle\bk\rangle^{2m}  |\bbU_0(\bk)|^2 \;  \Big( \frac{t}{\langle\bk\rangle^2}\Big)^m \; e^{- C_3 \,\langle\bk\rangle^{-2}  \,t} \; \rmd \bk \nonumber \\ [18pt] 
& \leq & \ds   t^{-m} \int_{\R^3} \langle\bk\rangle^{2m}  \;   |\bbU_0(\bk)|^2  \,  F_m\big(t/\langle\bk\rangle^{2}\big) \; \rmd \bk,
\end{array}
\end{equation} 
where we have set $F_m(r) := r^{m} \, e^{- C_3 \, r}, r \geq 0$ which satisfies: $\displaystyle\sup_{r\geq 0} F_m(r)=\tilde{C}_m:= (m/(C_3 \,\mathrm{e}))^{m}$. \\ [2pt]
Thus, by the Fourier characterization of Sobolev norms:
\begin{equation}\label{eq.finalestHFNcr}
\int_{ k_+<|\bk| }  \rme^{-C_3  |\bk|^{-2} \, t } \,  |\bbU_0(\bk)|^2  \, \rmd\bk \lesssim  \tilde{C}_m \,t^{-m} \|\bU_0\|^2_{\bH^m(\bbR^3)^N}, \quad \forall \; t > 0.
\end{equation}

 \noindent {\bf In a  critical configuration:}\\[6pt]
We assume  here that  the Maxwell's system is in a  critical configuration with again an initial condition $\bU_0\in  \mathcal{H}_{\perp} \cap  \bH^m(\bbR^3)^N$.
The proof is similar to that of the estimate \eqref{eq.finalestHFNcr}. We simply enlightens the differences.
 In a critical configuration, one has to replace  the  estimate \eqref{polynomial_decayncr-HF} by 
 \eqref{polynomial_decaycr-HF}. Thus, in  \eqref{eq.globestim}, one has to substitute the factor $ \rme^{-C_3  |\bk|^{-2} \, t }$ by  $ \rme^{-C_3  |\bk|^{-4} \, t }$ in the second integral of  \eqref{eq.globestim}. Then, one has
 \begin{eqnarray*}
 \displaystyle  \int_{ k_+<|\bk| }  \rme^{-C_3  |\bk|^{-4} \, t } \,  |\bbU_0(\bk)|^2  \, \rmd\bk & \leq & \ds t^{-m/2} \int_{{ k_+<|\bk| }} \langle\bk\rangle^{2m}  |\bbU_0(\bk)|^2 \;  \Big( \frac{t}{\langle\bk\rangle^4}\Big)^{m/2} \; e^{- C_3 \, \langle\bk\rangle^{-4}  \,t} \; \rmd \bk \nonumber \\ [12pt] 
& \leq & \ds   t^{-m/2} \int_{\R^3} \langle\bk\rangle^{2m}  \;   |\bbU_0(\bk)|^2  \,  F_{m/2}\big(t/\langle\bk\rangle^{4}\big) \; \rmd \bk.\end{eqnarray*}
Setting  $\tilde{C}_{m/2}:= \, \displaystyle\sup_{r\geq 0} F_{m/2}(r)$, this yields, by definition \eqref{Hmnorm} of the $\bH^m(\bbR^3)^N$ norm,
\begin{equation}\label{eq.LebesguesHmCR}
\int_{ k_+<|\bk| }  \rme^{-C_3  |\bk|^{-2} \, t } \,  |\bbU_0(\bk)|^2  \, \rmd\bk \leq \tilde{C}_{m/2} \;t^{-m/2} \; \|\bU_0\|^2_{\bH^m(\bbR^3)^N}, \quad \forall \; t > 0.
\end{equation}

\noindent {\bf  Step 4: Proof of the estimate  \eqref{polynomial_decayncr} and \eqref{polynomial_decaycr}}. \\[4pt]
In the non critical case (resp. the critical case), it suffices to substitute   \eqref{eq.finalestimateLF-2} and  \eqref{eq.finalestHFNcr} (resp.    \eqref{eq.finalestimateLF-2} and \eqref{eq.LebesguesHmCR}) into \eqref{eq.globestim} to obtain  \eqref{polynomial_decayncr} (resp. \eqref{polynomial_decaycr}).

\subsection{Proof of Theorem \ref{thm-optm} (optimality decay rate estimates)}\label{sec-optim}

\subsubsection{Optimality of the  high frequency  polynomial decay rate}\label{sec-optim-HF}
~\\ [-6pt]
{\bf Determination of  $\gamma_m^{\operatorname{HF}}$ in an non-critical configuration}\\[12pt]
\noindent We assume first that the Maxwell system is  in a non-critical configuration.\\[6pt]
\noindent First, we emphasize that  for any initial conditions $\bU_0 \in \mathcal{H}^{m}_{\perp, \operatorname{HF}}$ (see \eqref{eq.HperpHF} for the definition of  $ \mathcal{H}^{m}_{\perp, \operatorname{HF}}$) which does not contain any  low frequency  Fourier components (since as $\bU_0 \in \mathcal{H}^{m}_{\perp, \operatorname{HF}}$, $\operatorname{supp}(\bbU_0)\subset \bbR^3 \setminus B(0, k_+)$), we have the upper-bound given  by  \eqref{eq.globestim} and \eqref{eq.finalestHFNcr}, namely
\begin{equation}\label{upper-bound}
\|\bU(t)\|^2_{\mathcal{H}}\lesssim \displaystyle  \int_{ k_+<|\bk| }  \rme^{-C_3  |\bk|^{-4} \, t } \,  |\bbU_0(\bk)|^2  \, \rmd\bk \lesssim \frac{ \|\bU_0 \|_{ \bH^m(\R^3)^N}^2}{t^m}, \ \quad  \forall \,t>0.
\end{equation}
\noindent Clearly, \eqref{upper-bound} implies that $\gamma_m^{\operatorname{HF}}$  (defined  by \eqref{eq.gammaHF}) exists and satisfies  $\gamma_m^{\operatorname{HF}}\geq m$. To show that $\gamma_m^{\operatorname{HF}}=m$, we construct for any $\varepsilon>0$ an initial condition $\bU_{0,\varepsilon}\in \mathcal{H}^{m}_{\perp, \operatorname{HF}}$ such that  the associated solution of \eqref{eq.schro} satisfies
$$
\|\bU_{\varepsilon}(t)\|^2_{\mathcal{H}} \ \geq \frac{C}{t^{m+\varepsilon}},\, \quad \mbox{for some } C >0, \quad \forall t\geq1 .
$$
\noindent  To this aim,  according to Theorem \ref{HF-estm-opt}, we choose  $\bU_{0,\varepsilon}$  in the form \eqref{eq.defUOkopt}, more precisely
\begin{equation}\label{eq.defUOkopteps}
\left\{ 	\begin{array}{ll}
\ds	\bbU_{0,\varepsilon(\bk)}=  \phi_\varepsilon(|\bk|) \; \mathcal{R}_{\bk}^* \, \frac{\mathcal{V}_{|\bk|}\big(\omega_{+\infty}(|\bk|)\big)  \mathbf{e}_{1}}{|\mathcal{V}_{|\bk|}\big(\omega_{+\infty}(|\bk|)\big)  \bf{e}_1|}, 
\\ [18pt]
\ds \phi_{\varepsilon}(|\bk|) =  (1+|\bk|^2)^{-(\frac{3}{4}+\frac{m}{2}+\frac{\varepsilon}{2})} \; \mbox{ if } |\bk| \geq k_+, \mbox{ and }  \phi_{\varepsilon}(|\bk|) = 0 \mbox{ otherwise},
	\end{array} \right.
\end{equation}
where the exponent  $\frac{3}{4}+\frac{m}{2}+\frac{\varepsilon}{2}$ has been chosen just above in order to ensure the $H^m$ regularity of $\bU_{0,\varepsilon}$ (see \eqref{eq.phik}). As moreover $\bbU_{0,\varepsilon(\bk)}$ is supported in $\{|\bk| \geq k_+ \}$, $\bU_{0,\varepsilon}\in \mathcal{H}^{m}_{\perp, \operatorname{HF}}$ . \\ [12pt]
Therefore, applying the inequality \eqref{polynomial_decayncr-HF2} of Theorem  \ref{HF-estm-opt}, we get, as $|\bbU_{0,\varepsilon}(\bk)| =  \phi_\varepsilon(|\bk|) $, 
\begin{equation}\label{eq.optimal3}
\rme^{-\frac{2\,C\, t}{ |\bk|^2}}  (1+|\bk|^2)^{-(\frac{3}{2}+m+\varepsilon)} \lesssim  |\bbU_{\varepsilon}(\bk,t)|^2,   \   \mbox{ for a.e. } \bk\in \bbR^3 \mid  \; |\bk|\geq k_+ \mbox{ and } \, \forall \; t\geq 0.
\end{equation}
By Plancherel identity,  $ \ds 
\|\bU_{\varepsilon}(t)\|^2_{\mathcal{H}}=\int_{k_+<|\bk|}| \bbU_{\varepsilon}(\bk,t )|^2 \, \mathrm{d}\bk
$.
Therefore, using \eqref{eq.optimal3}, one gets 
\begin{equation}\label{eq.optimal4}
I_{\varepsilon}(t)=\int_{k_+<|\bk|}  \rme^{-\frac{2 C\, t}{ |\bk|^2}}  (1+|\bk|^2)^{-(3/2+ m+\varepsilon) }  \, \mathrm{d}\bk   \lesssim     \|\bU_{\varepsilon}(t)\|^2_{\mathcal{H}}.
\end{equation}
that is to say, the  integrand in \eqref{eq.optimal4} only depends on $|\bk|$, 
$$
I_{\varepsilon}(t)=4\pi \int_{k_+}^{+\infty}  \rme^{-\frac{2 \, C\, t}{ |\bk|^2}}  (1+|\bk|^2)^{-(3/2+  m+ \varepsilon) } \, |\bk|^2 \,  \mathrm{d}|\bk| .
$$
Using the change of variable $\xi =\sqrt{t}/|\bk|$ in $I_{\varepsilon}(t)$ with $ |\bk|^2 \,  \mathrm{d}|\bk| = -t^{\frac{3}{2}} \, \xi^{-4} \, \rmd \xi$   yields
\begin{eqnarray*}
I_{\varepsilon}(t) = 4\pi \, t^{\frac{3}{2}} \int_{0}^{\frac{\sqrt{t}}{k_+} } \rme^{- 2\,C\, \xi^{2} }\;  \xi^{3+2m+2\varepsilon} \; \big( \xi^{2} +t\big)^{-(3/2+  m+ \varepsilon) }\;  \frac{\mathrm{d}\xi}{\xi^4}   .  
\end{eqnarray*}
Then, using $\xi^2\leq t/k_+^2$ and thus that  $\xi^{2} +t \leq (1+k_+^{-2})  \, t $ on the domain of integration gives
$$
I_{\varepsilon}(t) \geq  \frac{{C}_m^\varepsilon(t)}{t^{m+\varepsilon}}, \quad  {C}_m^\varepsilon(t) :=  4 \pi  
(1+k_+^{-2})^{-(m + \frac{3}{2} + \varepsilon)} \int_{0}^{\frac{\sqrt{t}}{k_+} }  \frac{\rme^{-2\, C\, \xi^2} }{ \xi^{1-2m-2\varepsilon}}\,   \, \mathrm{d}\xi.
$$ 
As ${C}_m^\varepsilon(t) > 0$ is a strictly increasing, restricting ourselves to $t \geq 1$, \eqref{eq.optimal4} leads to
$$
\forall \; t\geq 1, \quad  \frac{{C}_m^\varepsilon(1)}{t^{m+\varepsilon}} \leq I_{\varepsilon}(t) \lesssim \|\bU_{\varepsilon}(t)\|^2_{\mathcal{H}}.
$$ 
Thus,  $\gamma_{m}^{HF}\leq m+\varepsilon$, for all $\varepsilon>0$ and this yields $\gamma_{m}^{HF}\leq m$. Thus, one concludes that $\gamma_{m}^{HF}=m$.\\[4pt]

\noindent {\bf Determination of  $\gamma_m^{\operatorname{HF}}$ in a  critical configuration}\\[12pt]
\noindent We assume now that the Maxwell system is in a critical configuration.
 \\[6pt]
 The proof is very similar to that of the non critical case and we shall only point out the differences.
First the inequality  \eqref{eq.LebesguesHmCR} shows that $\gamma_m^{\operatorname{HF}} \geq m/2$. \\ [12pt] 
To prove the reverse inequality, we proceed as for the non critical case.
The main difference lies in the choice of the initial data $\bU_{0,\varepsilon}$  which is now  chosen as in \eqref{eq.defUOkoptcr}, more precisely 
\begin{equation}\label{eq.defUOkoptcreps}
	\bbU_{0,\varepsilon}(\bk)= \mathcal{F}(\bU_0)(\bk)= \phi_\varepsilon(|\bk|) \; \mathcal{R}_{\bk}^* \, \frac{\mathcal{V}_{|\bk|}\big(\omega_{p}(|\bk|)\big)  \mathbf{e}_{1}}{|\mathcal{V}_{|\bk|}\big(\omega_{p}(|\bk|)\big)  \bf{e}_1|}, \quad \forall \, \bk\in \R^{3,*},
\end{equation}	
for some $p\in \calP_s$ such that $\omega_{p}(|\bk|)$ satisfies \eqref{eq.branchcirtique} (see again Theorem  \ref{HF-estm-opt}) and  the function $\phi_{\varepsilon}$ defined in  \eqref{eq.defUOkopteps}. In that case, we have to apply the inequality  \eqref{polynomial_decaycr-HF2}, instead of \eqref{polynomial_decayncr-HF2}, which leads to (instead of \eqref{eq.optimal3})
\begin{equation}\label{eq.optimal3cr}
	\rme^{-\frac{2\,C\, t}{ |\bk|^4}}  (1+|\bk|^2)^{-(\frac{3}{2}+m+\varepsilon)} \lesssim  |\bbU_{\varepsilon}(\bk,t)|^2,   \   \mbox{ for a.e. } \bk\in \bbR^3 \mid  \; |\bk|\geq k_+ \mbox{ and } \, \forall \; t\geq 0.
\end{equation}
the main difference with being \eqref{eq.optimal3} being that $|\bk|^{2}$ is replaced be $|\bk|^{4}$ in the exponential.\\ [12pt]
The rest of the calculations follows the same lines as in the non critical case modulo trivial adaptations : for instance the
change of variable  $\xi =\sqrt{t}/|\bk|$ is replaced by $\xi =\sqrt{t}/|\bk|^2$. The remaining details are left to the reader.

\subsubsection{Optimality of the  low frequency  polynomial decay rate}\label{sec-optim-LF}
We compute here the exponent $\gamma_p^{\operatorname{LF}}$ (defined  by \eqref{eq.gammaLF}) for $p\geq 0$ . The approach is similar to that  used in Section \ref{sec-optim-HF}  to compute $\gamma_m^{\operatorname{HF}}$.\\[12pt]
\noindent First, we underline that for any initial condition $\bU_0\in   \mathcal{H}^{p}_{\perp, \operatorname{LF}}$ (see \eqref{eq.HperpLF} for the definition of  $ \mathcal{H}^{p}_{\perp, \operatorname{LF}}$) which does not contain any  high frequency  Fourier components (since the support of  $\bbU_0$ is included in $\overline{  B(0, k_-)}$), we have the upper-bounds  \eqref{eq.globestim} and  \eqref{eq.finalestimateLF-2}, namely,
\begin{equation}\label{eq.finalestimateLF-3}
 \|\bU(t)\|^2_{\mathcal{H}}\lesssim \int_{|\bk|<k_-}  \rme^{-C_1  |\bk|^2 \, t } \,  |\bbU_0(\bk)|^2  \, \rmd\bk 
  \lesssim  \frac{C(p)}{ t^{p + \frac{3}{2}}}  \  \big\| \mathbf{U}_0 \big\|_{\boldsymbol{\cal L}_p^N}^2.
\end{equation}
Clearly, \eqref{eq.finalestimateLF-3} implies that $\gamma_p^{\operatorname{LF}}$   exists and satisfies  $\gamma_p^{\operatorname{LF}}\geq p+3/2$.
To show that $\gamma_p^{\operatorname{LF}}\leq p+3/2$, we construct an initial condition $\bU_{0}\in \mathcal{H}^{p}_{\perp, \operatorname{LF}}$ such that  
\begin{equation}\label{eq.finalestimateLF-4}
\|\bU(t)\|^2_{\mathcal{H}} \ \geq \frac{C(p)}{t^{p+\frac{3}{2}}},\, \quad \mbox{for some } C >0, \quad \forall t\geq1 .
\end{equation}
\noindent  To this aim,  according to Theorem \ref{LF-estm-opt}, we choose  $\bU_{0}$  in the form \eqref{eq.defUOkoptLF}, more precisely
\begin{equation}\label{eq.defUOkopteps-cr}
\left\{ 	\begin{array}{ll}
\ds	\bbU_{0}(\bk)=  \phi(|\bk|) \; \mathcal{R}_{\bk}^* \, \frac{\mathcal{V}_{|\bk|}\big(\omega_{0,1}(|\bk|)\big)  \mathbf{e}_{1}}{|\mathcal{V}_{|\bk|}\big(\omega_{0,1}(|\bk|)\big)  \bf{e}_1|}, 
\\ [18pt]
\ds \phi(|\bk|) = |\bk|^p\; \mbox{ if } |\bk| \leq k_-, \; \; = 0 \mbox{ otherwise.}
	\end{array} \right.
\end{equation}
Therefore, applying the inequality \eqref{polynomial_decay-LF} of Theorem  \ref{LF-estm-opt}, we get as $|\bbU_{0}(\bk)| = \phi(|\bk|) $:
\begin{equation}\label{eq.optimalLF}
 \rme^{-2\,C |\bk|^2\, t}  |\bk|^{2p} \lesssim  |\bbU(\bk,t)|^2,   \  \mbox{ for a.e. }   \bk\in \bbR^3 \mid  \; |\bk|\leq k_- \mbox{ and } \, \forall \; t\geq 0.
\end{equation}
By virtue of Plancherel identity:  $ \ds 
\|\bU(t)\|^2_{\mathcal{H}}=\int_{|\bk|<k_-}| \bbU(\bk,t )|^2 \, \mathrm{d}\bk
$,  it follows from \eqref{eq.optimalLF} that
\begin{equation}\label{eq.optimalLF2}
\int_{|\bk|<k-}\rme^{-2\,C |\bk|^2\, t}  \, |\bk|^{2p} \rmd \bk = 4\pi \, \int_{0}^{k_-} \rme^{-2\,C |\bk|^2\, t}  \, |\bk|^{2p}\, |\bk|^2 \rmd |\bk| \lesssim  \|\bU(t)\|_{\mathcal{H}}^2.
\end{equation}
Hence, using the change of variable $\mathbf{\xi }= \sqrt{2 \,C\, t} \; |\bk|$ , one obtains
$$
\frac{1}{(2\, C\, t)^{p + \frac{3}{2}}} \; \int_{0}^{ \sqrt{t} \, k_-}\xi^{2p+2}\, e^{- \xi^2} \; \rmd \mathbf{\xi } \lesssim   \|\bU(t)\|_{\mathcal{H}}^2.
$$
It leads to \eqref{eq.finalestimateLF-4} and more precisely to
$$
\frac{C(k_-, p)}{t^{p+\frac{3}{2}}} \lesssim   \|\bU(t)\|_{\mathcal{H}}^2  \quad \mbox{ with } \ C(k_-, p)=\frac{1}{(2\, C)^{p + \frac{3}{2}}} \int_{0}^{k_-}\xi^{2p+2}\, e^{- \xi^2} \; \rmd \mathbf{\xi },  \quad \forall \; t\geq 1.
$$
Thus, $\gamma_{p}^{LF}\leq p+3/2$ and one concludes that $\gamma_{p}^{LF}=p+3/2$.\\[4pt]

\subsection{Extensions of  Theorems \ref{thm_Lorentz} and  \ref{thm-optm}}\label{sec-extension-results}

\subsubsection{The case of a  non dispersive electric  or  magnetic media}\label{sec-extension1}
For some passive dispersive materials  (e.g.  in metals, see  for instance \cite{Mai-07}), the magnetization of the material is weak so that for the corresponding physical model, the magnetic permeability $\mu$ is assumed to be  a constant  function of the frequency: $\mu(\omega)=\mu_0$ whereas  the permittivity $\varepsilon(\omega)$ is given by a generalized Lorentz models.  This situation corresponds to the case $N_m=0$ and $N_e\geq 1$ in  formula \eqref{eq.permmitivity-permeabiity}.
In this setting, we would like to mention that our results sill hold  and   to point out  the simplifications  in their proof.\\[12pt]
\noindent Assume that $N_m=0$ and $N_e\geq 1$   in the evolution problem \eqref{eq.schro}.  It implies in particular that  $\bU=(\bE,\bH, \bP, \dot{\bP})$. 
Adapting the definition \ref{Critical_cases} to this particular case, the Maxwell's system is in a critical configuration if and only if one coefficient  $\alpha_{e,j}$ vanishes (which imposes that $N_e\geq 2$ since by \eqref{WD}  at least one coefficient $\alpha_{e,j'}$ has to be positive). Thus, it is in a non-critical configuration if  only if all the coefficients $\alpha_{e,j}$ are positive.
\\[12pt] 
 \noindent Under the same assumptions, the Theorems \ref{thm_Lorentz} and  \ref{thm-optm} still hold. The proof of these results remain the same as that developed in Sections \ref{sec-modal-analysis} to \ref{proof-general-cases} with the following two simplifications: 
 \begin{enumerate}
\item Concerning the  Section \ref{sec.HF},   $\mathcal{P}_s=\mathcal{P}_e\cap \bbR\neq \varnothing$  if and only if one coefficient  $\alpha_{e,j}$ vanishes which is precisely  the definition of  a critical configuration.
 Thus,  in  the Section \ref{estiUs} for the  estimate of the term $\bbU_s(\bk, t)$, one just needs to consider the critical configuration in Corollary \ref{CoroIm} and Lemma \ref{LemEstis}.  \\ [8pt]
As $\mathcal{P}_d=\varnothing$, $\bbU_d(\bk, t)=0$ in \eqref{decompU} and one  can skip  the Section \ref{estiUd} devoted to the estimate of this term.   
 \item In the Section \ref{sec.LF}, since by \eqref{WD}  at least one coefficient  $\alpha_{e,j}$ is positive, we deduce from  \eqref{eq.positvity} and \eqref{eq.consequenceWD}, that the set  $\mathcal{Z}_s$ of real zeros (see \eqref{Zs}) is empty : $\mathcal{Z}_s=\mathcal{Z}_e\cap \bbR= \varnothing$.  Hence, $\bbU_{z,s}(\bk, t)=0$ in \eqref{decompUbis} and one can skip  the Section \ref{estiUZS} devoted to the estimate of of this term.   
 \end{enumerate}
  \noindent   We finally point out all our results still hold in the ``symmetric case''  (more rare in physical models) where $N_e=0$ and $N_m\geq 1$ with obvious changes  compared to the above discussion.

\subsubsection{The case of bounded domains}
In \cite{cas-jol-ros-22} (Section 4.1), we show how to generalize our Lyapunov estimates technique to the case of bounded propagation domains  under the strong dissipation assumption \eqref{SD}.
Similarly, we show here how to extend our modal approach to bounded electromagnetic cavities but under the weak dissipation assumption  \eqref{WD}. \\[6pt]
\noindent One considers here  the  Maxwell's evolution system (\ref{planteamiento Lorentz},\ref{CI})  in a bounded Lipschitz-continuous domain $\Omega\subset \R^3$  completed with perfectly conducting  boundary conditions
\begin{equation} \label{PEC}
\bE \times {\bf n}=0,  \quad  \mbox{ where $\bf n$ is the outward normal vector  to $\partial \Omega$.}
\end{equation}
\noindent We introduce  the functional spaces (see  e.g. \cite{Dau-1990})
$$
\begin{array}{lr}
H(\mbox{rot} ;\Omega)=\{ {\bf u }\in \bL^2(\Omega) \mid \nabla \times {\bf u} \in \bL^2(\Omega)  \},\ \hspace*{0.2cm}  H_0(\mbox{rot} ;\Omega)=\{ {\bf u}  \in H(\mbox{rot} ;\Omega)\mid {\bf u} \times {\bf n }=0 \mbox{ on }\partial \Omega \}, \\[6pt]
H(\mbox{div} \,0 ;\Omega)=\{ {\bf u }\in \bL^2(\Omega) \mid  \nabla\cdot  {\bf u}=0  \},  \hspace*{0.7cm} H_0(\mbox{div} \,0 ;\Omega)=\{ {\bf u }\in H(\mbox{div} \,0 ;\Omega) \mid   {\bf u}\cdot {\bf n}=0  \mbox{ on } \partial \Omega \}.
\end{array}
$$
In this setting,  one can rewrite the Maxwell's system as the evolution equation \eqref{eq.schro} with the Hilbert space $\mathcal{H}=\bL^2(\Omega)^N$ (where $\bL^2(\Omega)= L^2(\Omega)^3 $) and a maximal dissipative unbounded operator $\bbA: D(\bbA)\subset \mathcal{H}\to \mathcal{H}$ defined by the relation \eqref{eq.defHamil} but with domain $$D(\bbA)=H_0(\mbox{rot};\Omega) \times H(\mbox{rot};\Omega) \times \bL^2(\Omega)^{N_e} \times \bL^2(\Omega)^{N_e} \times \bL^2(\Omega)^{N_m}  \times \bL^2(\Omega)^{N_m}.$$
Similarly, one introduces a reduced evolution equation on the Hilbert space
$$
\mathcal{H}_{\perp}=H(\mbox{div} \,0 ;\Omega)\times H_0(\mbox{div} \,0 ;\Omega)\times H(\mbox{div} \,0 ;\Omega)^{N_e}\times  H(\mbox{div} \,0 ;\Omega)^{N_e} \times H_0(\mbox{div} \,0 ;\Omega)^{N_m}  \times H_0(\mbox{div} \,0 ;\Omega)^{N_m} 
$$
with a reduced  maximal dissipative operator $\bbA_{\perp}:D(\bbA_{\perp})\subset \mathcal{H}_{\perp} \to  \mathcal{H}_{\perp} $   defined on its domain
$D(\bbA_{\perp})=\mathcal{H}_{\perp}\cap D(\bbA)$ (see (\ref{eq.stabAperp}, \ref{eq.reducA})). We point out that $\operatorname{ker}(\bbA_{\perp})$ has a finite dimension.  Moreover, $\bbA_{\perp}$  is  injective (or equivalently $\operatorname{ker}(\bbA_{\perp})^{\perp}=  \mathcal{H}_{\perp}$) if  $\Omega$ is simply connected. 
\\[4pt]
\noindent As in \cite{cas-jol-ros-22}, the main difference in the analysis, with respect to the case of $\R^3$, is  that the use of the Fourier transform in 
space has to be replaced by an adequate modal expansion. More precisely, we introduce the closed subspace of 
$\bL^2(\Omega)\times \bL^2(\Omega)$ given by $$\mathcal{H}_0 := H(\mbox{div} \,0 ;\Omega)\times H_0(\mbox{div} \,0 ;\Omega).$$
Then (see e.g. \cite{Dau-1990} chapter IX),  the  operator $\mathcal{M}: D(\mathcal{M})\subset \mathcal{H}_0\mapsto \mathcal{H}_0  $, defined by, 
\begin{equation*} \label{op_Maxwell}
\mathcal{M} \begin{pmatrix} \bf u \\ \bf v \end{pmatrix}  = \mathrm{i} \; \begin{pmatrix} \nabla \times {\bf v} \\ - \nabla \times \bf u \end{pmatrix} , \quad \forall \;  ({\bf u}, {\bf v})  \in D(\mathcal{M}) := \mathcal{H}_0\; \cap  \; 
{\big(H_0(\mbox{rot};\Omega)\times H(\mbox{rot} ;\Omega) \big)},
\end{equation*}
is self-adjoint.
One proves, see e.g.  \cite{Dau-1990} chapter IX, that, as  $ D(\mathcal{M})$ is compactly embedded in  $\bL^2(\Omega)\times \bL^2(\Omega)$,  $\mathcal{M}$ has a compact resolvent and a finite dimensional kernel (which reduces to $\{0 \}$ if  $\Omega$ is simply connected).
From the theory of self-adjoint operators with compact resolvent, and the symmetries of Maxwell's equations, which imply that the spectrum of $\mathcal{M}$ is symmetric with respect to the origin,  one can construct a countable infinity of normalized modes indexed by $ p \in \N^*$:
\begin{equation*} \label{cavity modes2}
\big( \pm \, k_p, ({\bf u}_p^\pm,  {\bf v}_p^\pm)\big) \in \R \times  D(\mathcal{M}) \ \mid \ \mathcal{M}  \begin{pmatrix} \bf u_p^\pm \\ \bf v_p^\pm \end{pmatrix} =\pm\, k_p\begin{pmatrix} \bf u_p^\pm \\ \bf v_p^\pm,  \end{pmatrix}  \mbox{ with } \ k_p > 0 
\end{equation*} 
where $(k_p)$ is an increasing sequences satisfying  $k_p \rightarrow + \infty$ (as $p\to \infty$).
These cavity modes form an orthonormal basis of $\operatorname{ker}(\mathcal{M})^{\perp}$.
Using this basis, one decomposes the components of the solution $\bU=(\bE, \bH, \bP, \dot{\bP}, \bM,\dot{\bM} )$  of  the  equation  \eqref{eq.schro} with an initial condition $\bU_0\in \mathcal{H}_{\perp}$ as follows:
$$
{\bf E}(\cdot,t) = \sum_\pm \sum_{p=1}^{+\infty}  \; \mathbb{E}(\pm k_p; t) \;  {\bf u}_p^\pm, \quad  {\bf H}(\cdot,t) = \sum_\pm \sum_{p=1}^{+\infty}  \; \mathbb{H}(\pm k_p; t) \;  {\bf v}_p^\pm,
$$
and the auxiliary field ${\bf P}_j$  and ${\bf M}_{\ell}$   accordingly 
$$
{\bf P}_j(\cdot,t) = \sum_\pm \sum_{p=1}^{+\infty}  \; \mathbb{P}_{j}(\pm k_p; t) \;  {\bf u}_p^\pm, 
\ {\bf M}_{\ell}(\cdot,t) = \sum_\pm \sum_{p=1}^{+\infty}  \; \mathbb{M}_{\ell}(\pm k_p; t) \; {\bf v}_p^\pm.
$$
Injecting these modal  expansions in the equation \eqref{eq.schro} yields  that  $$\bbU(\pm k_p;t)=\big( \mathbb{E}(\pm k_p; t), \mathbb{H}(\pm k_p; t),  \mathbb{P}(\pm k_p; t ), \dot{\mathbb{P}}(\pm k_p; t ), \mathbb{M}_\ell(\pm k_p; t ),\dot{\mathbb{M}}_\ell(\pm k_p; t )\big) \in \bbC^N$$
satisfies the following system of ODE's indexed by $\pm k_p$ (which plays the same role here as  the Fourier parameter $\bk$):
\begin{equation*}\label{eq.cauchypbkp}
 \frac{ \rmd \bbU(\pm k_p,t)}{\rmd  t} + \rmi \, \bbA_{\perp, \pm k_p} \bbU(\pm k_p,t)=0 \quad  \mbox{ for } t\geq 0    \ \mbox{ with }\
\bbU( \pm k_p,0)=\bbU_0(\pm k_p).
\end{equation*}
where   $\bbA_{\perp, \pm k_p}\in \mathcal{L}(\bbC^{N})$ is defined  for all  $\bbU =(\bbE, \bbH, \bbP, \dot{\bbP},\bbM , \dot{\bbM})\in
 \bbC^N$ as $\bbA_{\bk}$ in \eqref{eq.defAk} but by substituting in the two first lines $-(\bk \times \bbH)/\varepsilon_0$ and $(\bk \times \bbE)/ \mu_0$ by $\pm \, k_p  \, \bbH /\mu_0$ and $\pm \, k_p \,  \bbE/\varepsilon_0$.\\[12pt]
One also remarks  on  \eqref{eq.defAk} that $\bbA_{\perp,  k_p}$ and $ \bbA_{\perp, - k_p}$ are unitarily equivalent via the unitary map $$\mathcal{R}: (\bbE, \bbH, \bbP, \dot{\bbP},  \bbM, \dot{\bbM}) \in \bbC^{N}\longmapsto (\bbE,- \bbH, \bbP, \dot{\bbP}, - \bbM,- \dot{\bbM}) \in \bbC^{N} .$$
Thus, one only has to estimate  the solution $\rme^{-\rmi \bbA_{\perp,  k_p}t} \, \bbU_0(k_p)$ of \eqref{eq.cauchypbkp} for $k_p>0$. 
A computation similar to that in the Appendix \ref{sec-app-spec2} shows the spectrum $\sigma(\bbA_{k_p})$ is given by the set of solutions  of the dispersion relation 
\begin{equation*}
\omega^2 \varepsilon(\omega) \mu(\omega)=k_p^2.
\end{equation*}

\noindent As the set $\{  k_p, \, p \in \mathbb{N}^* \}$ is discrete and contains only non-zero elements, one does not need to derive low  frequencies estimates  (as in Section \ref{sec.LF})
and the exponential  uniform decay of the mid-range frequencies is trivial since there are only finitely many of them.  
Thus,  one only  estimates the high frequencies components $\bbU(\pm k_p, t)$  for $k_p\gg 1$. To this aim, one uses the results of Sections \ref{sec-modal-analysis} and  \ref{sec.HF} by replacing $|\bk|$ by $k_p$.\\[12pt]
\noindent 
Let $m$ be a positive integer. One defines the Hilbert space $D(\bbA^m)$ by  $\{ \bU \in \mathcal{H} \mid \bbA^k \bU \in \mathcal{H} \mbox{ for } k=1, \ldots, m \}$  endowed with its usual norm $\|\cdot\|_{D(\bbA^m)}$ defined by 
\begin{equation*}
\|\bU\|_{D(\bbA)^m}^2=\sum_{k=0}^m\|\bbA^k \bU\|^2_{\mathcal{H}}, \   \forall \; \bU \in D(\bbA^m).
\end{equation*}
Then, the main result is as follows. 
\begin{Thm} \label{thm_Lorentz_bounded}
Let assume  that  the Maxwell's system  (\ref{planteamiento Lorentz},\ref{CI},\ref{PEC})  satisfies  the weak dissipation assumption \eqref{WD} and the irreducibility assumptions $(\mathrm{H}_1)$ and $(\mathrm{H}_2)$. Then, for any initial condition $\bU_0\in  \mathcal{H}_{\perp}$ of this system, one has 
	\begin{equation} \label{convergenceLbouned}
	\lim_{t \rightarrow + \infty} \|\bU(t)\|^2_{\mathcal{H}}= 0.
	\end{equation}
In addition,  if the initial data  satisfies 
$$\bU_0 \in  D(\bbA^m) \cap  \operatorname{ker}(\bbA_\perp)^{\perp},  \quad \mbox{for  any integer }m> 0 \quad \mbox{(see remark \ref{rem_reg})} , $$
then, in the non-critical (resp. critical) case , see definition \ref{Critical_cases}, one has, for any $t>0$,
		\begin{equation} \label{polynomial_decay-bounded}
			\|\bU(t)\|^2_{\mathcal{H}} \lesssim \frac{\|\bU_0\|^2_{D(\bbA^m)}}{t^m} ,  \quad  \quad \Big( \mbox{resp. } \|\bU(t)\|^2_{\mathcal{H}} \lesssim \frac{\|\bU_0\|^2_{D(\bbA^m)}}{t^\frac{m}{2}} \;  \Big) .
		\end{equation}
		Moreover, the decay estimate \eqref{polynomial_decay-bounded} are  optimal  since in  the non-critical case 
	\begin{equation*} \label{polynomial_decay-opt-bounded}
		m=\sup \{\gamma \in \bbR^+ \mid   \forall \; \bU_0\in D(\bbA^m) \cap  \operatorname{ker}(\bbA_\perp)^{\perp} ,\, \exists \; C(\bU_0)>0 \mid \|\bU(t)\|^2_{\mathcal{H}} \ \lesssim \frac{C(\bU_0)}{t^{\gamma}}, \ \forall t\geq1 \}
\end{equation*}
and  in the critical case,   the above equality holds  with $m$ replaced by $m/2$ in the left hand side. 		
\end{Thm}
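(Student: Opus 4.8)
The strategy is to mimic the proof of Theorem~\ref{thm_Lorentz} (Sections~\ref{sec-modal-analysis}--\ref{proof-general-cases}), replacing the continuous Fourier variable $|\bk|$ by the discrete sequence of cavity wavenumbers $(k_p)_{p\geq 1}$ and replacing Plancherel's identity by the orthonormal modal expansion in the basis $\{({\bf u}_p^\pm, {\bf v}_p^\pm)\}$. The key simplification, already emphasized in Section~\ref{sec-extension-results}, is that the set $\{k_p\}$ is discrete, bounded away from $0$, and tends to $+\infty$: there is no low-frequency regime, the mid-range regime consists of finitely many values of $p$, and only the high-frequency analysis $k_p \gg 1$ needs to be carried out. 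First I would verify that $\sigma(\bbA_{\perp, k_p})$ is the set of solutions of the dispersion relation $\omega^2\varepsilon(\omega)\mu(\omega)=k_p^2$ (as stated) and that the results of Sections~\ref{sec-spec-prop}, \ref{sec_dispersion}, \ref{sec.HF} apply verbatim with $|\bk|$ replaced by $k_p$: in particular the asymptotic expansions of the eigenvalues (Lemmas~\ref{LemEigeninfty}, \ref{LemEigens}, \ref{LemEigend}, Corollary~\ref{CoroIm}), the uniform boundedness of the spectral projectors (Lemmas~\ref{LemProinfty}, \ref{LemPros}, \ref{LemProd}, \ref{Estiminus}), and hence Theorem~\ref{HF-estm}, which gives, for $k_p \geq k_+$,
\begin{equation*}
|\bbU(\pm k_p,t)| \lesssim \rme^{-C t/k_p^{\,2}} |\bbU_0(\pm k_p)| \quad (\text{non-critical}), \qquad
|\bbU(\pm k_p,t)| \lesssim \rme^{-C t/k_p^{\,4}} |\bbU_0(\pm k_p)| \quad (\text{critical}),
\end{equation*}
while for $k_p < k_+$ (finitely many indices, plus the unitary equivalence $\mathcal{R}$ between $\bbA_{\perp, k_p}$ and $\bbA_{\perp, -k_p}$) one gets a uniform exponential bound $|\bbU(\pm k_p,t)| \lesssim \rme^{-\beta t}|\bbU_0(\pm k_p)|$ directly from \eqref{loc_spectrum} and the Dunford decomposition as in Theorem~\ref{Th.intermediatefreq}, with the crucial point that on $\operatorname{ker}(\bbA_\perp)^\perp$ the zero-eigenvalue contribution is absent.

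\textbf{From modal estimates to the $D(\bbA^m)$ bound.} Next I would express $\|\bU(t)\|^2_{\mathcal{H}}$ via the modal expansion: by orthonormality of the cavity modes,
\begin{equation*}
\|\bU(t)\|^2_{\mathcal{H}} = \sum_{\pm}\sum_{p=1}^{+\infty} |\bbU(\pm k_p,t)|^2.
\end{equation*}
The analogue of the Sobolev norm $\|\bU_0\|_{\bH^m}$ is the graph norm $\|\bU_0\|_{D(\bbA^m)}$: since $\bbA_\perp$ acts on the $\pm k_p$-component essentially as $\bbA_{\perp,\pm k_p}$, and the latter has norm comparable to $k_p$ for large $p$, one has the equivalence $\|\bU_0\|^2_{D(\bbA^m)} \sim \sum_\pm \sum_p \langle k_p\rangle^{2m} |\bbU_0(\pm k_p)|^2$ (up to lower-order terms involving the internal polarization/magnetization variables, which are harmless because $\langle k_p\rangle \to \infty$). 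Then the high-frequency sum is estimated exactly as in Step~3 of Section~\ref{sec-final-step}: in the non-critical case, writing $\rme^{-Ct/k_p^2} \leq \langle k_p\rangle^{2m} t^{-m} F_m(t/\langle k_p\rangle^2)$ with $F_m(r)=r^m\rme^{-Cr}$ bounded, one obtains $\sum_{k_p\geq k_+} |\bbU(\pm k_p,t)|^2 \lesssim t^{-m} \|\bU_0\|^2_{D(\bbA^m)}$; in the critical case the factor $k_p^{-4}$ gives $t^{-m/2}$ instead. Adding the finitely many mid-range terms (which decay like $\rme^{-\beta t}$, hence faster than any power) yields \eqref{polynomial_decay-bounded}. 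The convergence \eqref{convergenceLbouned} for general $\bU_0 \in \mathcal{H}_\perp$ follows by a density/dominated-convergence argument: decompose $\bU_0$ on the modes, use the uniform contraction bound $|\bbU(\pm k_p,t)| \leq |\bbU_0(\pm k_p)|$ together with the pointwise-in-$p$ decay, exactly as in Step~1 of Section~\ref{sec-final-step}.

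\textbf{Optimality.} For the optimality statement, I would adapt the argument of Section~\ref{sec-optim-HF}. The lower bound $\gamma \geq m$ (resp. $m/2$) is \eqref{polynomial_decay-bounded}. For the reverse inequality one constructs, for each $\varepsilon>0$, an initial datum $\bU_{0,\varepsilon} \in D(\bbA^m)\cap\operatorname{ker}(\bbA_\perp)^\perp$ whose $\pm k_p$-component lies in the eigenspace of the slowest-decaying eigenvalue branch (the $\omega_{+\infty}(k_p)$ branch in the non-critical case via $\mathcal{V}_{k_p}(\omega_{+\infty}(k_p))\mathbf{e}_1$, or the critical branch $\omega_p(k_p)$ of Corollary~\ref{CoroIm} with $\operatorname{Im}A_{p,4}<0$ in the critical case, as in Theorem~\ref{HF-estm-opt}), with modal coefficients $\phi_\varepsilon(k_p) = \langle k_p\rangle^{-(3/4+m/2+\varepsilon)}$ — chosen so that $\bU_{0,\varepsilon} \in D(\bbA^m)$, using $\sum_p k_p^{-2s} < \infty$ for $s$ large (Weyl's law $k_p \sim c\, p^{1/3}$ in $\R^3$ guarantees summability). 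Then, since each such component evolves by a pure exponential $\rme^{-\rmi\omega(k_p)t}$, one gets $|\bbU_\varepsilon(\pm k_p,t)| = \rme^{-\operatorname{Im}\omega(k_p)\,t}|\bbU_{0,\varepsilon}(\pm k_p)|$, and a lower bound on $\sum_p |\bbU_\varepsilon(\pm k_p,t)|^2$ comparing the discrete sum with the integral $\int \rme^{-Ct/k^2}\langle k\rangle^{-(3/2+2m+2\varepsilon)}\,k^2\,dk$ (a Riemann-sum / monotonicity argument justified by Weyl's law) reproduces the computation leading to $\gtrsim t^{-(m+\varepsilon)}$, hence $\gamma_m \leq m+\varepsilon$ for all $\varepsilon>0$.

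\textbf{Main obstacle.} The delicate point is the passage from the continuous integral estimates of Sections~\ref{sec.HF} and \ref{sec-final-step} to genuine sums over the discrete spectrum $\{k_p\}$: proving that $\sum_p \langle k_p\rangle^{2m} |\bbU_0(\pm k_p)|^2 \asymp \|\bU_0\|^2_{D(\bbA^m)}$ requires care with the auxiliary $(\bbP_j, \dot\bbP_j, \bbM_\ell, \dot\bbM_\ell)$-components (which are \emph{not} scaled by $k_p$), and the optimality lower bound needs Weyl's asymptotics $k_p \sim c\,p^{1/3}$ both to ensure the constructed datum lies in $D(\bbA^m)$ and to control the discrete sum from below by the corresponding integral. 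The high-frequency spectral analysis itself transfers with no change, so the essential new work is this discrete/continuous dictionary together with the identification of $D(\bbA^m)$ as the correct regularity space replacing $\bH^m$.
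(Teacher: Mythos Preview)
Your proposal is correct and follows essentially the same approach as the paper's outline in Section~\ref{sec-extension-results}: the paper itself notes that, since the cavity spectrum $\{k_p\}$ is discrete and bounded away from $0$, the low-frequency analysis of Section~\ref{sec.LF} is unnecessary, the mid-range part is trivial (finitely many terms), and one only transfers the high-frequency estimates of Sections~\ref{sec-modal-analysis} and~\ref{sec.HF} with $|\bk|$ replaced by $k_p$. Your explicit invocation of Weyl's law $k_p \sim c\,p^{1/3}$ to justify summability in the optimality construction and to compare discrete sums with integrals is a reasonable elaboration that the paper leaves implicit.
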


\begin{Rem} \label{rem_reg}
For $C^{\infty}$ domains $\Omega$, the condition $\bU_0 \in  D(\bbA^m) \cap \operatorname{ker}(\bbA_{\perp})^{\perp}$ (see e.g. \cite{Dau-1990}, chapter IX) is equivalent to $\bU_0\in   \bH^m(\Omega)^N  \cap \operatorname{ker}(\bbA_{\perp})^{\perp}$  for any integer $m>0$ and the  norms $\|\cdot\|_{D(\bbA^m)}$ and $\|\cdot \|_{ \bH^m(\Omega)^N}$ on these spaces are equivalent.
\end{Rem}

\appendix
\renewcommand{\thesection}{\Alph{section}} 
\makeatletter
\renewcommand\@seccntformat[1]{\appendixname\ \csname the#1\endcsname.\hspace{0.5em}}
\makeatother
\section{}\label{sec-appendix}
\subsection{Spectrum   and resolvent of $\bbA_{|\bk|,\perp}$}\label{sec-spectrum}
\subsubsection{Proof of Proposition \ref{Prop.res}:  resolvent of $\bbA_{|\bk|, \perp}$} \label{sec-expression-resolv}  
Given $\bbF=(\be, \bh, \bp, \dot{\bp}, \bm,\dot{\bm})^t\in \bC_{\perp}^N$ and $\omega\in \bbC\setminus (S(|\bk|) \cap {\cal S}_{\cal T}) $, we look at the problem
\begin{equation}\label{eq.res}
\mbox{Find } \bbU = (\bbE, \bbH, \bbP, \dot \bbP, \bbM, \dot \bbP ) \in \bC_{\perp}^N \mbox{ such that} \quad (\bbA_{|\bk|,\perp}- \omega \mathrm{I}) \, \bbU= \bbF, 
\end{equation}
which is equivalent to finding $\bbU = (\bbE, \bbH, \bbP, \dot \bbP, \bbM, \dot \bbP ) \in \bC_{\perp}^N $ solution of the system \\
\begin{equation}\label{systemEH} 
	\hspace*{0.1 cm} 	\left\{  \begin{array}{lll}
		\displaystyle -\frac{|\bk| \, {\bf e_3} \times \bbH}{\varepsilon_0}- \rmi \,  \sum \Omega_{e,j}^2 \,\dot{\bbP}_j - \omega \,  \bbE= \be,  & \quad \quad (i)  \\[12pt] 
		\displaystyle \frac{|\bk| \, {\bf e_3} \times \bbE}{\mu_0}- \rmi \, \sum \Omega_{m,\ell}^2\, \dot{\bbM}_\ell-  \omega  \, \bbH= \bh,  & \quad \quad (ii)
	\end{array} \right.
\end{equation}	
\vspace*{0.1cm} 
\begin{equation}\label{systemP} 
\hspace*{0.1cm} \left\{  \begin{array}{lll}
	\; 	\rmi \,  \dot {\bbP}_j- \omega \, \bbP_j= \bp_j,   & \quad \; \; \;  \; \; (i) \\[12pt]
		- \rmi \, \alpha_{e,j} \, \dot{\bbP}_j- \rmi  \, \omega_{e,j}^2 \, \bbP_j+\rmi \, \bbE - \omega \, \dot{\bbP}_j= \dot{\bp}_j,    & \quad \; \;  \; \; (ii)
\end{array} \right. 
\end{equation}	
\vspace*{0.3cm} 
\begin{equation} \label{systemM}
	\hspace*{0.3 cm}  \left\{  \begin{array}{lll}
\; \rmi \, \dot{ \bbM}_\ell - \omega \bbM_\ell= \bm_\ell , & \quad (i) \\[12pt]
- \rmi \, \alpha_{m,l} \, \dot{\bbM}_{\ell} - \rmi \, \omega_{m,\ell}^2 \, \bbM_{\ell}+\rmi \, \bbH- \omega \, \dot{\bbM}_{\ell}=\dot{\bm}_{\ell} .   & \quad (ii)
\end{array} \right.
\end{equation}	
~\\
We want to show that this problem admits a unique solution $\bbU$ defining $\bbU = R_{|\bk|}(\omega) \, \bbF$. The proof of Proposition \ref{Prop.res} is thus purely computational and reduces to solving explicitly the system (\ref{systemEH}, \ref{systemP}, \ref{systemM}). We provide some details for the ease of the reader. \\[12pt]
Substituting the expression of $\dot \bbP_{j}$ and $\dot \bbM_{\ell}$ given by \eqref{systemP}(i) and \eqref{systemM}(i), i.e.  
\begin{equation}\label{PPdot}
	\dot {\bbP}_j=-\rmi \, \omega \, \bbP_j -\rmi \, \bp_j, \quad \dot {\bbM}_\ell =-\rmi \, \omega  \, \bbM_\ell -\rmi \, \bm_{\ell}
	\end{equation} 
into
\eqref{systemP}(ii) and \eqref{systemM}(ii), we get, 
by definition of $q_{e,j}(\omega)$ and $q_{m,\ell}(\omega)$ and by definition \eqref{operatorsApAm} of the operators $\bbA_{e,j}(\omega), \bbA_{m,\ell}(\omega)$ (the division by  $q_{e,j}(\omega)$ and $q_{m,\ell}(\omega)$ is allowed because $\omega \notin {\cal P}),$
\begin{equation}\label{eq.defP-M}
	\bbP_{j}= - \frac{\bbE}{q_{e,j}(\omega)}+\bbA_{e,j}(\omega) \, \bbF  \quad\mbox{ and } \quad  \bbM_{ \ell}= - \frac{\bbH}{q_{m,\ell}(\omega)}+\bbA_{m,\ell}(\omega) \, \bbF .
\end{equation}
Thus, going back to $\dot \bbP_{j}$ and $\dot \bbM_{\ell}$, via  \eqref{systemP}(i) and \eqref{systemM}(i), gives by definition \eqref{operatorsApAm} of $\dot \bbA_{e,j}(\omega), \dot \bbA_{m,\ell}(\omega)$:
\begin{equation}\label{eq.defdotP-M}
	\dot \bbP_{j}=  \frac{ \rmi \, \omega  \,\bbE}{q_{e,j}(\omega)} + \dot \bbA_{e,j}(\omega) \, \bbF  \quad\mbox{ and } \quad \dot  \bbM_{ \ell}= \frac{ \rmi \, \omega  \,\bbH}{q_{m,\ell}(\omega)} + \dot \bbA_{m,\ell}(\omega) \, \bbF .
\end{equation}
Substituting  $\dot {\bbP}_{j}$, from \eqref{eq.defdotP-M},  into \eqref{systemEH}(ii) we get, by definitions (\ref{eq.permmitivity-permeabiity}, \ref{operatorsApAm}) of $\mu(\omega)$ and  $\bbA_{m}(\omega)$:
\begin{equation} \label{Eq1}
|\bk| \; {\bf e_3 }\times \bbE - \omega \mu(\omega) \, \bbH= \mu_0\, \Big( \bh+ \rmi  \sum\Omega_{m,\ell}^2 \; \dot \bbA_{m,\ell}(\omega) \, \bbF \Big) =  - \, \bbA_{m}(\omega) \, \bbF .
\end{equation}
Similarly, from  $\dot {\bbM}_{\ell}$ in \eqref{eq.defdotP-M}, we obtain, by definitions (\ref{eq.permmitivity-permeabiity}, \ref{operatorsApAm}) of $\varepsilon(\omega)$ and  $\bbA_{e}(\omega)$,  
\begin{equation} \label{Eq2}
-|\bk| \; {\bf e_3} \times \bbH - \omega \, \varepsilon (\omega) \, \bbE=\varepsilon_0 \Big(\, \be+ \rmi\sum\Omega_{e,j}^2  \; \dot{\bbA}_{e,j} (\omega) \, \bbF \Big) =  - \, \bbA_{e}(\omega) \, \bbF .
\end{equation}
In order to eliminate $\bbH$ between \eqref{Eq1} and  \eqref{Eq2}, we perform the combination 
$$|\bk| \, {\bf e_3 }\times\eqref{Eq1} - \omega \mu(\omega) \eqref{Eq2}\, .$$
This gives, since $-|\bk| \, {\bf e_3}\times (|\bk| \, {\bf e_3}  \times \bbE)=  |\bk|^2 \,\bbE$, (as  $\bbU \in \bC_{\perp}^N$, ${\bf e_3} $ and $\bbE$ are orthogonal)
	\begin{equation}\label{eq.defFe}
	\big(\mathcal{D}(\omega)-|\bk|^2\big) \, \bbE= \omega \mu(\omega ) \, \bbA_e(\omega)\bbF - |\bk| \, {\bf e_3} \times \bbA_m(\omega) \bbF.
\end{equation}
	As $\omega\notin  S(|\bk|)$, one has $\mathcal{D}(\omega)-|\bk|^2\neq 0$ and therefore, by definition \eqref{defS} of $\mathcal{S}_{|\bk|}(\omega)$, 
\begin{equation}\label{eq.e3}
	\bbE=\frac{ -|\bk| \, {\bf e_3} \times \bbA_m(\omega) \, \bbF +  \omega \mu(\omega ) \, \bbA_e(\omega) \, \bbF}{\mathcal{D}(\omega)-|\bk|^2} = \mathcal{S}_{|\bk|}(\omega) \, \bbF.
\end{equation}
Thus, since $\omega \mu(\omega)\neq 0$ (as $\omega\notin \mathcal{Z}_m\cup \{ 0\}$),  from \eqref{Eq1} and by definition \eqref{operatorsAeAh} of $\bbA_h(\omega)$:
	\begin{equation}\label{eq.h4}
		\bbH=\frac{1}{\omega \mu(\omega)}  \Big(|\bk| \; {\bf e_3}  \times  \mathcal{S}_{|\bk|}(\omega) \, \bbF+ \bbA_m(\omega) \,  \bbF \Big).
	\end{equation}
Finally, substituting (\ref{eq.defFe}, \ref{eq.h4}) in (\ref{eq.defP-M}, \ref{eq.defdotP-M}) we get  the expression of the other 
	\begin{equation}\label{eq.pm}
		\left\{ \begin{array}{lll}
\ds 	\bbP_{j}= - \frac{\mathcal{S}_{|\bk|}(\omega) \, \bbF}{q_{e,j}(\omega)}+\bbA_{e,j}(\omega) \, \bbF, \quad  & 	\ds \bbM_{\ell} =  \bbA_{m,\ell}(\omega) \, \bbF - \frac{\bbA_m(\omega) \, \bbF + |\bk| \, {\bf e_3}  \times \mathcal{S}_{|\bk|}(\omega) \, \bbF}{ \omega \mu(\omega) q_{m,\ell}(\omega)}, \\ [18pt]
	\ds		\dot \bbP_{j}=  \frac{ \rmi \, \omega  \,\mathcal{S}_{|\bk|}(\omega) \, \bbF}{q_{e,j}(\omega)} + \dot \bbA_{e,j}(\omega) \, \bbF,   \quad &  \ds\dot {\bbM}_{\ell}= \dot  \bbA_{m,\ell}(\omega) \, \bbF + \rmi \;   \frac{\bbA_m(\omega) \, \bbF + |\bk| \, {\bf e_3}  \times \mathcal{S}_{|\bk|}(\omega) \, \bbF}{\mu(\omega) q_{m,\ell}(\omega)} .
	\end{array} 
	\right.
\end{equation}
The reader will then easily verify that the formulas (\ref{eq.defFe}, \ref{eq.h4}, \ref{eq.pm}) lead to the expression (\ref{eq.expressresolv}, \ref{defS}, \ref{defS}, \ref{defT}) given in Proposition \ref{Prop.res} for the resolvent $R_{|\bk|}(\omega)$.
\subsubsection{The Spectrum of $\bbA_{|\bk|,\perp}$ : proof of Proposition \ref{Prop.spec}}\label{sec-app-spec2}
From Proposition \ref{Prop.res}, we already know that
$$
\sigma(\bbA_{|\bk|,\perp}) \subset S(|\bk|) \cup {\cal S}_{\cal T}.
$$
To prove the Proposition   \ref{Prop.spec}, it suffices to show that
$$
\begin{array}{ll} 
{\bf  (i)} &  \quad \sigma(\bbA_{|\bk|,\perp}) \cap {\cal S}_{\cal T} = \varnothing. \\ [12pt]
{\bf  (ii)} & 	\quad \mbox{$S(|\bk|) \subset \sigma(\bbA_{|\bk|,\perp}) $ and for any $\omega \in S(|\bk|) $, $\mbox{dim } \mbox{ker } \big( \bbA_{|\bk|,\perp} - \omega \, \mathrm{I}d   \big) = 2$.}
\end{array}
$$
{\bf Preambulus}. For both steps, we shall use the fact that $\bbU = \big(\bbE, \bbH, (\bbP_j), (\dot \bbP_j), (\bbM_\ell), (\dot \bbP_\ell)	\big)\in \mbox{ker } \big( \bbA_{|\bk|,\perp} - \omega \, \mathrm{I}d  \big)$ means that $\bbU$ is a solution of
(\ref{systemEH}, \ref{systemP}, \ref{systemM})  with $\bbF=(\be, \bh, \bp, \dot{\bp}, \bm,\dot{\bm}) = 0$. Therefore, proceeding as for obtaining \eqref{eq.defP-M} and  \eqref{eq.defdotP-M}, before division by  $q_{e,j}(\omega)$ and $q_{m,\ell}(\omega)$, we deduce from  (\ref{systemP}, \ref{systemM}) that 
\begin{equation}\label{eq.PM}
q_{e,j}(\omega)\, 	\bbP_{j}= -\bbE , \quad  q_{m,\ell}(\omega) \, \bbM_{ \ell}= -\bbH , \quad q_{e,j}(\omega)\, 	\dot \bbP_{j}=  \rmi \, \omega \, \bbE , \quad  q_{m,\ell}(\omega) \, \dot \bbM_{ \ell}=  \rmi \, \omega \, \bbH .
\end{equation}
\noindent {\bf Proof of Step (i)}.  Assume that $\omega \in {\cal S}_{\cal T} = {\cal P}_e \cup {\cal P}_m  \cup {\cal Z}_m \cup \{0\}$.\\ [12pt]
(a) Let $\omega \in {\cal P}_e$ and $\bbU \in \mbox{ker } \big( \bbA_{|\bk|,\perp} - \omega \, \mathrm{I}d   \big)$, for some $j_0$, $q_{e,j_0}(\omega)= 0$, thus, by \eqref{eq.PM}, $\bbE= 0$. 
Thus, by \eqref{eq.PM} and assumption $(\mathrm{H}_1)$  that yields $q_{e,j}(\omega)\neq  0$ jor $j \neq j_0$, we conclude that  $\bbP_{j}= 0$ and $\dot \bbP_{j}= 0$ for any $j \neq j_0$.  \\ [12pt]
Next, according to assumption $(\mathrm{H}_1)$ again,
\begin{itemize} 
	\item Either $q_{m,\ell}(\omega) \neq 0$ for any $m$, we conclude from \eqref{eq.PM}  $\bbM = 0, \dot \bbM = 0$. 
	 Thus, going back to \eqref{systemEH}(ii) with ${\bf h} = 0$, we  have $\bbH=0$ (since $\omega\neq 0$).
	Then, from \eqref{systemEH}(i) with ${\bf e} = 0$, we have $\dot \bbP_{j_0}= 0$. Thus, one gets $\bbP_{j_0}= 0$ since, by \eqref{PPdot} for $ {\bf p} = 0$, $- \rmi \, \omega \,  \bbP_{j_0}= \dot \bbP_{j_0}$ and $\omega \neq  0$ (as $0\notin \calP_e$).

	\item  Or there exists a unique $\ell_0$ for which $q_{m,\ell}(\omega) = 0$. In that case \eqref{eq.PM}  implies that $\bbH= 0$ and $\bbM_\ell=\dot \bbM_{\ell}= 0$ for any $\ell \neq \ell_0$.
	 Finally, going back to \eqref{systemEH} (i) and (ii) with ${\bf e}={\bf h} = 0$, we have $\dot\bbP_{j_0} =\dot \bbM_{\ell_0}= 0$,  thus $\bbP_{\ell_0}= \bbM_{\ell_0}= 0$ since \eqref{PPdot} for ${\bf m} =0$ and $  {\bf p} = 0$ gives  $- \rmi \, \omega \,  \bbP_{j_0}= \dot \bbP_{j_0}$ and  $- \rmi \, \omega \,  \bbM_{\ell_0}= \dot \bbM_{\ell_0}$ with $\omega \neq  0$.
	\end{itemize}
In all cases, $\bbU = 0$, thus $\omega \notin \sigma(\bbA_{|\bk|,\perp}).$ \\ [12pt]
(b) In a symmetric manner, we prove that if $\omega \in {\cal P}_m$, $\omega \notin \sigma(\bbA_{|\bk|,\perp}).$ \\ [12pt]
\noindent Now, for the rest of the proof, we point out that for $\omega \notin \mathcal{P}$, one shows by proceeding exactly as for obtaining \eqref{Eq1} and \eqref{Eq2} that
\begin{equation} \label{EqEH}
(1) \quad 	|\bk| \; {\bf e_3 }\times \bbE - \omega \mu(\omega) \, \bbH=0 , \quad  (2) \quad 	-|\bk| \; {\bf e_3} \times \bbH - \omega \, \varepsilon (\omega) \, \bbE= 0.
\end{equation}
More precisely, one checks that as soon as $\omega\notin \mathcal{P}$,  (\ref{eq.PM}, \ref{EqEH})  is equivalent to (\ref{systemEH}, \ref{systemP}, \ref{systemM}) for $\bbF = 0$.\\[4pt]

\noindent (c)  It remains to look at $\omega \in {\cal Z}_m \cup \{0\}$, i.e. $\omega \,  \mu (\omega) = 0$.
By $(\mathrm{H}_2)$, one has $\mathcal{P}\cap ( {\cal Z}_m \cup \{0\})=\varnothing$. Thus, from \eqref{EqEH}(1) and since $\bbE\cdot {\bf e}_3 = 0$, we deduce that $\bbE = 0$ and it follows  from  \eqref{EqEH}(2) that $\bbH = 0$ (as $\bbH\cdot {\bf e}_3 = 0$). Then, going back to \eqref{eq.PM}, since $\omega \notin {\cal P}$, we deduce that $\bbP = \dot \bbP = 0 \mbox{ and }  \bbM = \dot \bbM = 0$, thus $\bbU = 0$. This proves that $\omega \notin \sigma(\bbA_{|\bk|,\perp}).$ \\ [12pt]
{\bf Proof of Step (ii)}. 
Let $\omega \in S(|\bk|)$ and $\bbU \in \operatorname{ker } \big( \bbA_{|\bk|,\perp} - \omega \, \mathrm{I}d   \big)$. First note that, since  by assumption $\bk\neq 0$ and thus  $\omega \mu(\omega) \neq 0$ (since  $\omega$ satifies the dispersion relation \eqref{eq.spec}), according to \eqref{EqEH}(a), we have
\begin{equation} \label{EV1}
\bbH=\frac{|\bk|\, {\bf e_3}\times \bbE}{\omega \mu(\omega)}, \quad
\end{equation}
so that, it yields with \eqref{eq.PM}:
\begin{equation}  \label{EV2}
 \bbP_{j}= - \frac{\bbE}{q_{e,j}(\omega)}, \quad
 \dot {\bbP}_{j}=\frac{\rmi \,\omega \,\bbE}{q_{e,j}(\omega)}, \quad \bbM_{\ell}= - \frac{|\bk|\, {\bf e_3}\times \bbE}{\omega \, \mu(\omega) \, q_{m,\ell}(\omega)}, \quad \dot {\bbM}_{\ell}=\frac{\rmi \,\omega \,|\bk|\, {\bf e_3}\times \bbE}{\omega \, \mu(\omega) \, q_{m,\ell}(\omega)} .
\end{equation}
Of course  \eqref{EV1} and  \eqref{EV2} mean that $\mbox{dim } \mbox{ker } \big( \bbA_{|\bk|,\perp} - \omega \,\mathrm{I}d   \big) \leq \mbox{dim } \bC_{\perp}= 2$.  \\ [12pt]
To prove (ii), it suffices to check that 
$\forall \; \bbE \in \bC_\perp$, $\bbU = \big(\bbE, \bbH, \bbP, \dot \bbP, \bbM, \dot \bbM\big)\in \operatorname{ker } \big( \bbA_{|\bk|,\perp} - \omega \, \mathrm{Id} \big)$, when $\omega \in S(|\bk|)$ and $\big(\bbH, \bbP, \dot \bbP, \bbM, \dot \bbM\big)$ is given by (\ref{EV1}, \ref{EV2}). Thus, it consists to show that such vectors  $\bbU$ are solutions to (\ref{systemEH}, \ref{systemP}, \ref{systemM}) with $\bbF = 0$. 
This is equivalent (since $\omega\notin\calP$) to checking  (\ref{eq.PM}, \ref{EqEH}). This is rather immediate since \eqref{EV2} is nothing but \eqref{eq.PM} with $\bbH$ given by \eqref{EV1} while  \eqref{EV1} is nothing but \eqref{EqEH}(1).
It remains to check \eqref{EqEH}(2). However, substituting 
\eqref{EV1} into the left hand side  of \eqref{EqEH}(2) gives, as $-|\bk| \, {\bf e_3}\times (|\bk| \, {\bf e_3}  \times \bbE)=  |\bk|^2 \,\bbE$ for $\bbE$ in $\bC_\perp$, 
\begin{equation}\label{eq.eigenE}
	\big(|\bk|^2- \mathcal{D}(\omega) \big) \, \bbE=0,
\end{equation}
which is true for any $\bbE \in \bC_\perp$ since $\mathcal{D}(\omega) = |\bk|^2$  for $\omega \in S(|\bk|)$.

\subsection{Technical result for the analysis of the dispersion curves}\label{sec-asymptotic}
Let $z\in \mathbb{C}$. We are interested in the solutions $\omega\in \mathbb{C}$ of the  nonlinear equation 
\begin{equation} \label{eq_implicite}
 (\omega - z)^{\mathfrak{m}} \; g(\omega) =\zeta^{\mathfrak{m}}, \
\end{equation}
parametrized by $\zeta$, where $g$ is a given analytic function near $z$ satisfying $g(z)\neq 0$.
More precisely, we want to prove  that \eqref{eq_implicite} implicitly  defines $\omega$ in function of $\zeta$, via $\mathfrak{m}$ analytic branches of solutions $\omega_n(\zeta)$, for small $|\zeta|$. 
\begin{Lem}\label{Lem-implicte-function}
Let $\mathcal{G}$ be an analytic function on a domain $\Omega \subset \bbC$ and $z\in \Omega$ a zero of multiplicity $\mathfrak{m}\in\mathbb{N}^*$ of $\mathcal{G}$, so that $\mathcal{G}$ can be rewritten  as
\begin{equation}\label{eq.factoriz}
\mathcal{G}(\omega)=(\omega-z)^{\mathfrak{m}} g(\omega) \mbox{ with $g$ analytic in a vicinity of $z$ and }  g(z)=A\neq 0.
\end{equation}
 Then, there exist  an open neighborhood $U$  of $\zeta=0$  and $\mathfrak{m}$ distinct   analytic functions $\zeta \mapsto \omega_{n}(\zeta)$ for $n=1\ldots, m$ defined  on $U$ such that  
 \begin{equation}\label{eq.branchecondition}
\mathcal{G}(\omega_n(\zeta))=\zeta^\mathfrak{m}\quad\mbox{on $U$. }
\end{equation}
Moreover the functions $\omega_{n}(\zeta)$ have the following Taylor expansion
\begin{equation}\label{eq.asympexpansion}
\omega_{n}(\zeta)=z+  a_{n}^{-1} \zeta   -\frac{a_n^{-2} g'(z)}{\mathfrak{m}\,  g(z)}  \,\zeta^2+O(\zeta^3),  \mbox{ as } \zeta\to 0, \, \mbox{ where } a_{n}=|A|^{1/\mathfrak{m}} \; \rme^{\rmi \,\frac{\theta}{\mathfrak{m}}} \; \rme^{2 \, \rmi \,\frac{n \pi}{\mathfrak{m}}}, 
\end{equation}
$\theta\in (-\pi, \pi]$ is the  principal value of the argument of $A$ and the complex coefficients  $a_{n}$ are the $\mathfrak{m}$ distinct roots of the polynomial  equation $X^{\mathfrak{m}}=A$.\\ [12pt]
Furthermore, one can find an open neighborhood $D$  of $z$  included in $\Omega$ such that the set of solutions of $G(\omega)=\zeta^m$ in $D$ for $\zeta\in U$ is exactly given by $\{ \omega_{n}(\zeta) \mbox{ for } n=1, \ldots, \mathfrak{m}\}$.
\end{Lem}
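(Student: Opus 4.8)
\textbf{Proof plan for Lemma \ref{Lem-implicte-function}.}

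The plan is to reduce \eqref{eq_implicite} to a situation where the classical holomorphic implicit function theorem applies, by extracting an $\mathfrak{m}$-th root. First I would shrink the neighbourhood of $z$ to a disc $D_0 \subset \Omega$ on which $g$ is analytic and nonvanishing (possible since $g(z) = A \neq 0$ and $g$ is continuous). On such a simply connected disc, $g$ admits an analytic $\mathfrak{m}$-th root: there is an analytic function $h$ on $D_0$ with $h(\omega)^{\mathfrak{m}} = g(\omega)$ and, fixing a determination, $h(z) = |A|^{1/\mathfrak{m}} \rme^{\rmi \theta/\mathfrak{m}}$ where $\theta = \operatorname{Arg} A$. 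Setting $\Phi(\omega) := (\omega - z)\, h(\omega)$, equation \eqref{eq_implicite} becomes $\Phi(\omega)^{\mathfrak{m}} = \zeta^{\mathfrak{m}}$, i.e. $\Phi(\omega) = \rho_n\,\zeta$ where $\rho_n = \rme^{2\rmi n\pi/\mathfrak{m}}$, $n = 1, \dots, \mathfrak{m}$, runs over the $\mathfrak{m}$-th roots of unity.

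Next I would invert $\Phi$. Since $\Phi(z) = 0$ and $\Phi'(z) = h(z) \neq 0$, the holomorphic inverse function theorem gives an analytic local inverse $\Psi = \Phi^{-1}$ defined on a disc $B(0,r)$ around $0$, with $\Psi(0) = z$. Then for $\zeta$ small enough that $\rho_n \zeta \in B(0,r)$ for all $n$, I define
\begin{equation}\label{eq.plan-branch}
\omega_n(\zeta) := \Psi(\rho_n \zeta), \qquad n = 1, \dots, \mathfrak{m},
\end{equation}
which are analytic on a neighbourhood $U$ of $0$ and satisfy $\mathcal{G}(\omega_n(\zeta)) = \Phi(\omega_n(\zeta))^{\mathfrak{m}} = (\rho_n \zeta)^{\mathfrak{m}} = \zeta^{\mathfrak{m}}$, which is \eqref{eq.branchecondition}. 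Distinctness for $\zeta \neq 0$ follows because $\Psi$ is injective and the points $\rho_n \zeta$ are pairwise distinct; taking $U$ small, $\omega_1, \dots, \omega_{\mathfrak{m}}$ are distinct functions. For the Taylor expansion, I would write $\Psi(w) = z + c_1 w + c_2 w^2 + O(w^3)$ and determine $c_1, c_2$ by inserting into $\Phi(\Psi(w)) = w$ together with the expansion $\Phi(\omega) = h(z)(\omega - z) + \tfrac12(h(z))'\cdots$; more directly, $c_1 = 1/\Phi'(z) = 1/h(z)$ and $c_2 = -\Phi''(z)/(2\Phi'(z)^3)$. A short computation of $\Phi''(z)$ from $\Phi = (\omega-z)h$ gives $\Phi''(z) = 2 h'(z)$, and differentiating $h^{\mathfrak{m}} = g$ yields $h'(z) = g'(z)/(\mathfrak{m}\, h(z)^{\mathfrak{m}-1}) = g'(z) h(z)/(\mathfrak{m}\, g(z))$. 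Substituting and writing $a_n := \rho_n^{-1} h(z) = |A|^{1/\mathfrak{m}} \rme^{\rmi\theta/\mathfrak{m}} \rme^{2\rmi n\pi/\mathfrak{m}}$ (so that $a_n^{\mathfrak{m}} = A$ and $c_1 \rho_n = a_n^{-1}$) produces exactly \eqref{eq.asympexpansion}.

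Finally, for the last assertion — that on a suitable disc $D \subset \Omega$ around $z$ the branches $\omega_n(\zeta)$ exhaust all solutions of $\mathcal{G}(\omega) = \zeta^{\mathfrak{m}}$ for $\zeta \in U$ — I would argue as follows. Take $D := \Phi^{-1}(B(0,r)) \subset D_0$, an open neighbourhood of $z$. If $\omega \in D$ solves $\mathcal{G}(\omega) = \zeta^{\mathfrak{m}}$, then $\Phi(\omega)^{\mathfrak{m}} = \zeta^{\mathfrak{m}}$, so $\Phi(\omega) = \rho_n \zeta$ for some $n$, and since $\Phi$ is injective on $D$ and $\Psi(\rho_n \zeta) \in D$ also maps to $\rho_n\zeta$, we get $\omega = \Psi(\rho_n\zeta) = \omega_n(\zeta)$; conversely each $\omega_n(\zeta)$ lies in $D$ for $\zeta$ small, so shrinking $U$ once more finishes this. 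The main obstacle — really the only delicate point — is the construction of the analytic $\mathfrak{m}$-th root $h$ of $g$ and the careful bookkeeping of the root of unity $\rho_n$ versus the principal argument $\theta$, so that the stated closed form $a_n = |A|^{1/\mathfrak{m}} \rme^{\rmi\theta/\mathfrak{m}} \rme^{2\rmi n\pi/\mathfrak{m}}$ comes out with the correct normalization; everything else is the standard holomorphic inverse function theorem plus a routine second-order Taylor computation.
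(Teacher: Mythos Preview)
Your proof is correct and takes a genuinely different route from the paper. The paper does \emph{not} extract an $\mathfrak{m}$-th root of $g$; instead, for each $n$ separately it performs the affine change of unknown $\omega = z + a_n^{-1}\zeta(1+\eta)$, rewrites \eqref{eq_implicite} as $H_n(\zeta,\eta):=(1+\eta)^{\mathfrak{m}} g\bigl(z+a_n^{-1}\zeta(1+\eta)\bigr)-g(z)=0$, and applies the holomorphic implicit function theorem in two variables (checking $\partial_\eta H_n(0,0)=\mathfrak{m}\,g(z)\neq 0$) to obtain each branch $\eta_n(\zeta)$; the completeness of the branches is then established by a separate Rouch\'e argument on $\mathcal{G}(\omega)-\zeta^{\mathfrak{m}}$ versus $\mathcal{G}(\omega)$ on a circle about $z$. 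Your approach, by contrast, factors once: taking an analytic $\mathfrak{m}$-th root $h$ of $g$ and setting $\Phi=(\omega-z)h$, you reduce everything to a \emph{single} application of the inverse function theorem, with the $\mathfrak{m}$ branches arising from the roots of unity $\rho_n$ and completeness following directly from the injectivity of $\Phi$ without Rouch\'e. Your argument is more economical and makes the Puiseux structure transparent; the paper's approach is slightly more hands-on but avoids having to construct the root $h$. One small bookkeeping slip: with $\rho_n=\rme^{2\rmi n\pi/\mathfrak{m}}$ you get $a_n=\rho_n^{-1}h(z)=|A|^{1/\mathfrak{m}}\rme^{\rmi\theta/\mathfrak{m}}\rme^{-2\rmi n\pi/\mathfrak{m}}$, not $\rme^{+2\rmi n\pi/\mathfrak{m}}$, but this is only a relabelling of the index $n$ and the set of branches is unchanged.
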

\begin{proof}
We want to find branches of solutions  $\zeta \mapsto \omega_{n}(\zeta)$ of  the equation 
\begin{equation}\label{eq.implicite}
\mathcal{G}(\omega)= \zeta^\mathfrak{m},
\end{equation}
where $\zeta$ is a parameter that lies  in  an open neighborhood of $0$. 
To this aim, fix $1\leq n\leq \mathfrak{m}$ and make the following  change of unknown $\omega \rightarrow  \eta$, for $\zeta \neq 0$:
\begin{equation}\label{eq.substitution}
\eta=-1+ (\omega-z)\, a_n \, \zeta^{-1} \quad \Longleftrightarrow \quad \omega=z+a_n^{-1}\,\zeta \,(1+\eta) 
\end{equation}
where the coefficient $a_n$ is given by \eqref{eq.asympexpansion}.
Therefore by replacing \eqref{eq.substitution} into \eqref{eq.implicite} and  using  \eqref{eq.factoriz} and  the fact  that $g(z)=A=a_n^{\mathfrak{m}}$,
we notice that on  an open neighborhood of $\zeta=0$, finding an analytic branches  $\zeta\mapsto \omega_n(\zeta)$  of solutions of \eqref{eq.implicite} is equivalent to find an analytic branch  $\zeta \mapsto \eta_n(\zeta)$ satisfying 
\begin{equation}\label{eq.implicitfunction}
H_n(\zeta,\eta)= 0 \  \mbox{ with } \  H_n(\zeta,\eta):= (1+\eta)^\mathfrak{m}\,g(z+a_n^{-1}\,\zeta \,(1+\eta))-g(z),
\end{equation}
with $\omega_n$ given in term of $\eta_n$ by \eqref{eq.substitution}.
As $g$ is analytic in a vicinity of $z$, by the Hartogs Theorem (see   \cite{Muj-86}, Theorem 36.8 page 271),  the function $H_n(\zeta,\eta)$ is  well-defined  and analytic in a neighbourhood of $(0,0)$ in $\bbC\times \bbC$ and satisfies $H_n(0,0)=0$. We then want to use the analytic implicit function Theorem (see  \cite{Fri-2002}, Theorem 7.6 page 34) to solve $H_{n}(\zeta,\eta)= 0$. Thus,  one first has to check that $\partial_{\eta }H_{n}(0,0)\neq 0$. Indeed,  from \eqref{eq.implicitfunction}, we compute
$$
\partial_{\eta }H_{n}(\zeta,\eta)= \mathfrak{m}\,(1+\eta)^{\mathfrak{m}-1}\,g(z+a_n^{-1}\,\zeta \,(1+\eta))+(1+\eta)^\mathfrak{m}\,  a_n^{-1}\,\zeta  \, \,g'(z+a_n^{-1}\,\zeta \,(1+\eta))
$$
so that in particular
\begin{eqnarray}\label{eq.derivimpliceta}
	\partial_{\eta }H_{n}(0,0) =  \mathfrak{m}\,  g(z)\neq 0.
\end{eqnarray}
Hence, the  analytic implicit function Theorem proves the existence of two open neighbourhoods of the origin $U_n \subset \bbC$ (the $\zeta$-complex plane) and $V_n  \subset \bbC $ (the $\eta$-complex plane) and of an analytic function $\eta_n (\zeta): U_n\to V_n$   satisfying $\eta(0)=0$ such that 
$$\{ (\zeta, \eta)\in U_n\times V_n \mid H_n(\zeta,\eta_n)=0\}=\{  (\zeta, \eta_n(\zeta)), \, \zeta \in  U_n \}.$$ 
Moreover, one has \begin{equation}\label{eq.etaderivzero}
	\eta'(0)=-\frac{\partial_\zeta H_n(0,0)}{\partial_\eta H_n(0,0)}=-\frac{a_n^{-1} g'(z)}{\mathfrak{m}\,  g(z)}.
\end{equation}
Thus  $\zeta \mapsto \omega_n(\zeta) := z+a_n^{-1}\,\zeta \,(1+\eta_n(\zeta))$ is an analytic branch of solution of  \eqref{eq.implicite} in $U_n$.
Since $\eta_n(0)=0$, as  $\zeta \to 0$, $\eta_n(\zeta)=\zeta \,\eta_n'(0)+O(\zeta^2),$  which, put  in the definition of  $\omega_n(\zeta)$, gives \eqref{eq.asympexpansion}. \\ [12pt]
We point out that, as the coefficients $a_n^{-1}$ in \eqref{eq.asympexpansion} are all distinct, we have  constructed $\mathfrak{m}$ distinct  analytic functions $\zeta \to \omega_{n}(\zeta)$ for $n=1\ldots, \mathfrak{m}$ defined on the open neighborhood $\tilde{U}$, given by the intersection of the sets $U_n$ for $n\in \{ 1,\ldots, \mathfrak{m}\}$,  which satisfy  \eqref{eq.branchecondition} and \eqref{eq.asympexpansion}. \\ [12pt] 
The last point is proven with Rouch\'e Theorem (see for e.g. Theorem 10.43 page 225 of \cite{Rud-87}).  As $g$ does not vanish in the vicinity of $z$, we can find $r>0$ such that, if $D$  is the open disk of radius $r$ and center $z$, $\overline D \subset \Omega$, $g$  is analytic  in an open set containing $\overline D$   and  $g$ does not vanish in   $\overline D$. Hence, by \eqref{eq.factoriz}, the only zero of $G(\omega)$ in $D$ is $z$, with multiplicity $\mathfrak{m}$. Let us define the function $G_{\zeta}:\omega\mapsto G(\omega)-\zeta^m$ which is analytic in $\Omega$ and $\partial D$ the circle of center $z$ and radius $r$. Then, one has:
$$\forall \; \omega \in \partial D,  \quad  |G_{\zeta}(\omega)- G(\omega)|=|\zeta|^m \mbox{ and by \eqref{eq.factoriz}} , \ |G(\omega)|\geq r^m \min_{\omega\in  \partial D} |g(\omega)|>0.$$
Thus for $R=r \, \min_{\omega \in  \partial D } |g(\omega)|^{1/m}$, one gets 
$$ \forall \; |\zeta| < R, \quad \forall \; \omega \in \partial D, \quad \big |G_{\zeta}(\omega)- G(\omega) \big|<|G(\omega)|. $$  
Therefore by the Rouch\'e Theorem, $G$ and  $G_{\zeta}$ have the same number of zeros (counted with multiplicity) in $D$. Hence $G_{\zeta}$ has $\mathfrak{m}$ zeros on $D(z,r)$ for $|\zeta|<R$. Let $D(0,R)$ be the open disk of center $0$ and radius $R$. Thus, if we define $U$ as $\tilde{U}\cap D(0,R)$, then there are no other solutions of $G(\omega)=\zeta^m$ in $D$  when $\zeta \in U$ than $\omega_n(\zeta)$ for $n\in \{1,\ldots, \mathfrak{m} \}$.
\end{proof}
\begin{Rem} The  result of Lemma \ref{Lem-implicte-function} is  related to Puiseux series. Puiseux series have been  derived in \cite{Hin-10,Kat-80,Wel-11} for  the perturbation analysis of eigenvalues of non-selfdjoint matrices. However the results  of \cite{Wel-11} do not apply directly here since  in some situations we are perturbing derogatory eigenvalues (i.e. non simple eigenvalues). Moreover, we need to compute explicitly the two first coefficients of the Puiseux  series which is not done in \cite{Kat-80}, Section 1.2 nor in \cite{Hin-10}, see Corollary 4.2.9.
\end{Rem} 
	
%\section*{References}


\begin{thebibliography}{00}
\bibitem{Bec-Jol-02}
E. B\'ecache  and P. Joly,  On the analysis of B\'erenger's perfectly matched layers for Maxwell's equations. ESAIM: Mathematical Modelling and Numerical Analysis 36 (1) (2002), pp. 87--119.

\bibitem{cas-jol-kach-17} M. Cassier, P. Joly and M. Kachanovska, Mathematical
models for dispersive electromagnetic waves: an overview, Comput. Math. with Appl. 74 (11) (2017), pp. 2639--2928.



\bibitem{cas-mil-17} M. Cassier and G. W. Milton, Bounds on Herglotz functions and fundamental limits of broadband passive quasistatic cloaking, J. Math. Phys.  58 (7) (2017), 071504.

\bibitem{cas-jol-haz-22} 
M. Cassier, C. Hazard and P. Joly, Spectral theory for Maxwell's equations at the interface of a metamaterial. Part II: Limiting absorption, limiting amplitude principles and interface resonance, Commun. Part. Diff. Eq 47 (6) (2022), pp 1217--1295.


\bibitem{cas-jol-ros-22} M. Cassier, P. Joly and L. A. Rossas Mart\'inez, Long time behaviour of the solution of Maxwell's equations in dissipative generalized Lorentz materials (I) A  frequency dependent Lyapunov function approach,  Z. Angew. Math. Phys. 74 (2023), 115.

\bibitem{Daf-70}
C. Dafermos, Asymptotic stability in viscoelasticity, Arch. Ration. Mech. Anal. 37 (1970), pp. 297--308.


\bibitem{Dau-1990}
R. Dautray and J.-L. Lions. Mathematical analysis and numerical methods for science and technology: volume 3 Spectral Theory and Applications, {\it Springer Science \& Business Media}, 1990.

\bibitem{Dan-2015}
V. Danese, P. G.  Geredeli and V. Pata, Exponential attractors for abstract equations with memory and applications to viscoelasticity,
Discrete Contin. Dyn. Syst. 35 (7) (2015), pp. 2881--2904.

\bibitem{Dun-88}
N . Dunford and J. T. Schwartz, Linear operators, part I: general theory, Pure and Applied Mathematics,  Vol. 7,  {\it Interscience Publishers, INC., New York},  1966.



\bibitem{FabrizioMorro87} M. Fabrizio, F. Morro, Mathematical problems of linear viscoelasticity, {\it SIAM Studies in Applied Mathematics}, 1987.

\bibitem{FabrizioMorro97} M. Fabrizio, F. Morro, Thermodynamics of Electromagnetic Isothermal Systems with Memory,  J. Non-Equilib Thermodyn. 22 (1997), pp. 110--128. 


\bibitem{Fig-05}
A.~Figotin and J.~H. Schenker, Spectral theory of time dispersive and dissipative
  systems, J. Stat. Phys. 118~(1) (2005), pp. 199--263.

\bibitem{Hin-10}
D. Hinrichsen and  A. J. Pritchard, Mathematical systems theory I: modelling, state space analysis, stability and robustness. Vol. 48.,  {\it Springer}, 2010.

\bibitem{Fri-2002}
K. Fritzsche and H. Grauert. From holomorphic functions to complex manifolds,  Vol. 213,  {\it New York: Springer}, 2002.
\bibitem{Kat-80}
T. Kato,  Perturbation theory for linear operators, Classics in Mathematics, {\it Springer}, 1980.


\bibitem{Li-2013}J. Li and Y. Huang
Time-Domain Finite Element Methods for Maxwell's Equations in Metamaterials. Springer Series in Computational Mathematics, Vol. 43, {\it Springer, Heidelberg}, 2013.

\bibitem{Mai-07}
S.  A. Maier, Plasmonics: fundamentals and applications,  {\it Springer}, 2007. 


\bibitem{Riv-04}
J. E. Mu\~noz Rivera, M. Grazia Naso and E. Vuk, Asymptotic behaviour of the energy for electromagnetic systems with memory,
Math. Meth. Appl. Sci. 27 (7) (2004),  pp. 819--841.



\bibitem{Muj-86}
J. Mujica, Complex analysis in Banach spaces. {\it Amsterdam: Elsevier Science Publishers}, 1986.


\bibitem{Ng-20}
A. C.S. Ng, Direct integrals of strongly continuous operator semigroups, J. Math. Anal. Appl  489 (2) (2020), 124176.


\bibitem{Nicaise2012}
S. Nicaise, Stabilization and asymptotic behavior of dispersive medium models, Syst. Control Lett. 61 (5) (2012), pp. 638--648.

\bibitem{Nicaise2020} S. Nicaise, C. Pignotti, Asymptotic behavior of dispersive electromagnetic waves in bounded domains, Z. Angew. Math. Phys. 71 (3) (2020), 76.


\bibitem{Nicaise2021} S. Nicaise, Stability and asymptotic properties of dissipative evolution equations coupled with ordinary differential equations,  Math. Control  Relat. F. 13 (1) (2022), pp. 265-302.



\bibitem{Pad-07} 
W. J. Padilla, M. T. Aronsson,C.  Highstrete, M. Lee, A. J. Taylor, R. D. Averitt, Electrically resonant terahertz metamaterials: Theoretical and experimental investigations, Phys. Rev. B 75 (4) (2007), 041102.


\bibitem{Paz-83} 
A. Pazy, Semigroups of linear operators and applications to partial differential equations, Applied mathematical sciences Vol. 44,  {\it Springer}, 1983.
 
\bibitem{Rud-87} 
  W. Rudin, Real and complex analysis, Third Edition, Mathematics series, {\it McGraw-Hill nternational Editions},  1987.   
  
  
 \bibitem{Wel-11} 
  A. Welters, On Explicit Recursive Formulas in the Spectral Perturbation Analysis of a Jordan Block, SIAM J. Matrix Anal. Appl. 32 (1) (2011), pp. 1-22. 

\bibitem{Ziol-01} 
R. W. Ziolkowski and E. Heyman, Wave propagation in media having negative permittivity and permeability., Physical Rev. E 64 (5) (2001), 056625.



\end{thebibliography}
\end{document}